\numberwithin{equation}{section}
\newtheorem{theorem}{Theorem}[section]
\newtheorem{lemma}[theorem]{Lemma}
\newtheorem{assumptions}[theorem]{List of rate assumptions}
\newtheorem{conjecture}[theorem]{Conjecture}
\newtheorem{proposition}[theorem]{Proposition}
\newcommand{\R}{{\mathbb R}}
\newcommand{\calA}{{\mathcal{A}}}
\newcommand{\calE}{{\mathcal{E}}}
\newcommand{\calF}{{\mathcal{F}}}
\newcommand{\calH}{{\mathcal{H}}}
\newcommand{\calS}{{\mathcal{S}}}
\newcommand{\Z}{{\mathbb Z}}
\newcommand{\N}{{\mathbb N}}
\newcommand{\Tr}{{\textup{Tr}}}
\newcommand{\C}{{\mathbb C}}
\title{Suppressed dispersion for a randomly kicked quantum particle in a Dirac comb}
\author{\textbf{Jeremy Thane Clark}\vspace{.2cm}\\ jclark@mappi.helsinki.fi \\ University of Helsinki, Department of Mathematics\\  Helsinki 00014, Finland }
\begin{document}
\maketitle

\begin{abstract}

 I study a model for a massive one-dimensional particle in a singular  periodic potential that is receiving kicks from a gas.  The model is described by a Lindblad equation in which the Hamiltonian is a Schr\"odinger operator with a periodic $\delta$-potential and the noise has a frictionless form arising in a Brownian limit.  I prove that  an emergent Markov process in a semi-classical limit governs the momentum distribution in the extended-zone scheme.  The main result is a central limit theorem for a time integral of the momentum process, which is closely related to the particle's position.  When normalized by $t^{\frac{5}{4}}$, the integral process converges to a time-changed Brownian motion whose rate depends on the momentum process.  The scaling $t^{\frac{5}{4}}$ contrasts with  $t^{\frac{3}{2}}$, which would be expected for the case of a smooth periodic potential or for a comparable classical process.  The difference is a wave effect driven by Bragg reflections that occur when the particle's momentum is kicked near the half-spaced reciprocal lattice.

\end{abstract}

\section{Introduction}

Mathematical models of a quantum particle in a periodic environment have been used to describe an electron in a metal and, more recently, an atom in an optical lattice.  One topic of mathematical and physical interest is the transport behavior for the particle in the periodic environment.  The periodic situation stands in contrast with the random or quasi-periodic situation,  which may exhibit Anderson localization~\cite{Anderson}.  Another important topic is the study of the motion of the particle when acted upon by a static force. Zener predicted an electron in a metal would exhibit some periodic motion when a constant force was applied~\cite{Zener}.  This behavior, called Bloch oscillations, is related to Bragg scattering and has been  observed experimentally in conductor superlattices~\cite{Feldmann}.  More recently, atoms in optical lattices have provided an analogous setting in which it is possible to measure Bloch oscillations with fewer noise effects~\cite{Ben,AndersonJr}.         

The current article studies a suppressed dispersion effect that, like Bloch oscillations, is generated by a combination of outside forcing (in this case from a noise) and Bragg scattering in a periodic potential.    My model concerns a one-dimensional massive quantum particle in a  periodic singular potential that receives random momentum kicks (e.g. from a gas of light particles).  The massive particle effectively does not ``feel" the potential except for infrequent instances when its momentum is kicked near an element of the half-spaced reciprocal lattice of the potential.  Near the lattice values, the particle's momentum has a chance of being  reflected, and these reflections in momentum occur often enough to inhibit the motion of the particle.  I imagine the model to describe an atom in a very singular one-dimensional optical potential.  In the physics literature, the article~\cite{Friedman} discusses  Bragg reflections of atoms from a weak  a one-dimensional optical potentials. The articles~\cite{Birkl,Kunze} report the experimental observation of Bragg scattering in atoms with lower kinetic energy through optical potentials.

My mathematical starting point for modeling the particle is a quantum Markovian dynamics generated by a Lindblad equation.  In the following section, I introduce the dynamics, state the main theorems, discuss some background for the model, and  make  conjectures for a similar model.  Section~\ref{SecProofOutline} contains an outline for the proof of the central limit theorem that is the main mathematical result of this article. Section~\ref{SecQuantum} contains a proof that the probability density of  the extended-zone scheme momentum behaves approximately as  an autonomous Markov process when the Hamiltonian dynamics operates on a faster scale than the noise.  Section~\ref{SecTransition} connects basic facts from the original quantum model to the limiting Markovian dynamics for the momentum process.  Section~\ref{SecClassical} contains the details for the proof sketched in Sect.~\ref{SecProofOutline}.  I show  that a time integral of the momentum process, when properly rescaled, converges in distribution to a variable diffusion process whose diffusion rate depends on the absolute value of the momentum.

\section{Results and discussion}

\subsection{The model and statement of the main results}

Let $\mathcal{B}_{1}\big(L^{2}(\R)\big)$ be the space of trace class operators over the Hilbert space $L^{2}(\R)$.  I begin with a quantum Markovian dynamics in which the state of the particle, as expressed by a density matrix  $\rho_{\lambda,t}\in \mathcal{B}_{1}\big(L^{2}(\R)\big)$, evolves according to a Lindblad equation  
\begin{align}\label{TheModel}
\frac{d}{dt}\rho_{\lambda,t}= -\frac{\textup{i}}{\lambda} \big[P^{2}+V,\rho_{\lambda,t}\big]+\Psi(\rho_{\lambda,t})-\frac{1}{2}\{\Psi^{*}(I),\rho_{\lambda,t} \}
 \end{align}
from the initial state $\rho_{\lambda,0}=\rho$.   In this equation,  $P=-\textup{i}\frac{d}{dx}$ is the momentum operator, $V$ is a periodic $\delta$-potential (i.e. Dirac comb potential) with strength $\alpha>0$ and period $2\pi$, and $\Psi:\mathcal{B}_{1}\big(L^{2}(\R)\big)$ is a completely positive map  describing the noise acting on the system and having the form
\begin{align}\label{TheNoise}
 \Psi(\rho)= \int_{\R} dv\,j(v)\,e^{\textup{i}vX}\, \rho\, e^{-\textup{i}vX},
 \end{align}
where  $ \rho\in \mathcal{B}_{1}\big(L^{2}(\R)\big)$, $X$ is the position operator, and $j(v)\in L^{1}(\R)$ is the rate-density for momentum kicks of size $v$.  I will assume the rates satisfy $j(v)=j(-v)$ and $\sigma=\int_{\R} dv\, j(v)\,v^{2}<\infty$.  In~(\ref{TheModel}), $\Psi^{*}(I)$ is the adjoint map $\Psi^{*}$ evaluated for the identity operator $I$ on $L^{2}(\R)$, and it happens that $\Psi^{*}(I)= \mathcal{R}\, I  $  for $\mathcal{R}= \int_{\R} dv\,j(v)$ in my case.    

Equation~(\ref{TheModel}) describes a quantum particle in dimension one evolving in a potential $V$ and receiving random momentum kicks $v$ with rate-density $j(v)$.  The noise is effectively frictionless, since intuitively, the rate of momentum kicks does not depend on the current momentum of the particle.  This excludes the possibility of energy relaxation in the model, and there is a linear rate of growth for the mean energy of the particle:   
\begin{align}\label{MeanEnergy}
\Tr[\rho_{\lambda,t}(P^{2}+V)]=\Tr[\rho(P^{2}+V)]+ \sigma\,t.
\end{align}

By Bloch theory, the Hamiltonian $H=P^{2}+V$ has continuous spectrum and decomposes through a fiber decomposition of the Hilbert space over the Brillouin zone $\phi \in [-\frac{1}{2},\frac{1}{2})$ as
\begin{align*}
L^{2}(\R)=\int_{[-\frac{1}{2},\frac{1}{2})}^{\oplus}d\phi \mathcal{H}_{\phi},\hspace{2cm} H_{\phi}= H|_{ \mathcal{H}_{\phi}},
 \end{align*}
where the Hilbert spaces $ \mathcal{H}_{\phi}$ are canonically identified with $L^{2}\big( [-\pi,\pi)   \big) $,  and the restriction of  $H$ to the $\phi$-fiber  is a self-adjoint operator $H_{\phi}$.  The operators $H_{\phi}$ have a complete set of eigenvectors $\psi_{n,\phi}$, $n\in \N$ with eigenvalues $E_{n,\phi}$ satisfying
$$ 0\leq  E_{n,\phi}\leq   E_{n+1,\phi},   \hspace{2cm}  E_{n,\phi}\longrightarrow \infty \hspace{.1cm}\text{ as }\hspace{.1cm}n\longrightarrow \infty .     $$ 
Through the extended-zone scheme, the eigenvectors $\psi_{n,\phi}$ can be associated with a collection of eigenkets $|k\rangle_{\scriptscriptstyle{Q} }$ parameterized by $k\in \R$ such that   
$$H=\int_{\R}dkE(k)|k\rangle_{\scriptscriptstyle{Q} }\,{ }_{\scriptscriptstyle{Q} } \langle  k|, $$
where the dispersion relation has the form $E(k)=\mathbf{q}^{2}(k)$ for the anti-symmetric, increasing function $\mathbf{q}:\R\rightarrow \R$ satisfying the Kr\"onig-Penney relation
\begin{align}\label{Energies}
\cos(2\pi k)=\cos\big(2\pi \mathbf{q}(k) \big)+\frac{\alpha}{2\mathbf{q}(k) }\sin\big(2\pi \mathbf{q}(k)\big)  
\end{align}
for $k\in \R-\frac{1}{2}\Z$ and $\mathbf{q}(\frac{n}{2})=\frac{n}{2}$ for $n\in \Z$ (see~(\ref{Bloch}) for the corresponding Bloch functions in the position representation).    The Bloch structure and my conventions are discussed  in Appendix~\ref{AppendixFiber}.  The dispersion relation essentially has the form 
$E(k)\approx |k|^{2}+\frac{\alpha}{2\pi} $ for $|k|\gg 1$ except for values of $k$ in small neighborhoods around the lattice $\frac{1}{2}\Z$, where $E(k)$ makes jumps $E(\frac{n}{2}+)-E(\frac{n}{2}-)\neq 0$.  The kets $| k\rangle_{\scriptscriptstyle{Q} }$ also have discontinuities at values $k\in\frac{1}{2}\Z$:  $\lim_{\epsilon\rightarrow 0}| k-\epsilon\rangle_{\scriptscriptstyle{Q} }\neq | k+\epsilon\rangle_{\scriptscriptstyle{Q} }$.
 \begin{figure}[htb]
        \center{ \hspace{1cm}
        \includegraphics[scale=.7]{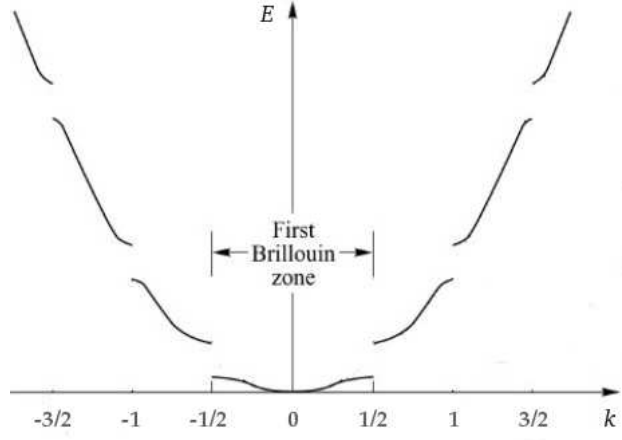} }
        \caption{Qualitative plot of the dispersion relation in the extended-zone scheme.}
 \end{figure}


The first result is concerned with the limiting behavior as $\lambda\rightarrow 0$ of the diagonal distributions  in the extended-zone scheme representation: $ { }_{\scriptscriptstyle{Q}}\langle k|\rho_{\lambda,t} | k \rangle_{\scriptscriptstyle{Q}}=D_{\lambda,t}(k)$.  I show, in a sense defined below, that $D_{\lambda, t}$ converges for small $\lambda$ to the solution $\mathcal{D}_{t}$ of a classical Markov process
\begin{align}\label{Master}
\frac{d}{dt}\mathcal{D}_{t}(k)=L(\mathcal{D}_{t})(k)=\int_{\R}dk^{\prime}\Big(J(k,k^{\prime})\mathcal{D}_{t}(k^{\prime})-J(k^{\prime},k)\mathcal{D}_{t}(k)   \Big),  \quad \quad    \mathcal{D}_{0}(k)=\rho(k,k),
\end{align}
where the rates $J(k,k^{\prime})$ are determined by the rates $j(v)$ and values $\kappa_{v}(k,n)$ (defined below) through the formula
\begin{align}\label{KrissKrossMakesYou}
J(k,k^{\prime}):=\sum_{n\in \Z} j(k-k^{\prime}-n)|\kappa_{k-k^{\prime}-n}(k^{\prime},n)|^{2}.  
\end{align}
The values $\kappa_{v}(k,n)\in \C $ arise as coefficients in the formula  
 \begin{align}\label{DefKappa}
  \sum_{n\in \Z}\kappa_{v}(k,n)  |k+v+n\rangle_{\scriptscriptstyle{Q}}=  e^{\textup{i}vX} |k\rangle_{\scriptscriptstyle{Q}}, \quad \quad k,v\in \R.
  \end{align}
The fact that  $e^{\textup{i}vX} |k\rangle_{\scriptscriptstyle{Q}}$ is a combination of the $|k+v+n\rangle_{\scriptscriptstyle{Q}}$, $n\in Z$ is a consequence of the fiber decomposition.  By the unitarity of $e^{\textup{i}vX}$, the coefficients satisfy  $\sum_{n\in \Z}|\kappa_{v}(k,n)|^{2}=1$, and the process has a constant escape rate: $\mathcal{R}=\int_{\R}dk^{\prime}\,J(k^{\prime},k)$. 
\begin{theorem}[Freidlin-Wentzell/semi-classical limit]   \label{FWLimit}
Let $\rho_{\lambda, t}$ satisfy~(\ref{TheModel}), $\rho_{\lambda, t}(k,k)=D_{\lambda, t}(k)$, and $\mathcal{D}_{t}$ be the solution of~(\ref{Master}).   There exists a $C>0$ such that for all $t>0$     
 $$\|D_{\lambda, t}-\mathcal{D}_{t}\|_{1}\leq C\lambda(1+t) .$$
Also, for $t< \lambda$, the norm of the difference is bounded by a multiple of $t$. 

\end{theorem}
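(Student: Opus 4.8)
The plan is to show that, in the extended-zone representation, the Hamiltonian term in~(\ref{TheModel}) only dephases the off-diagonal coherences of $\rho_{\lambda,t}$ and never feeds into the diagonal density $D_{\lambda,t}$, so that $D_{\lambda,t}$ obeys an equation that is a perturbation of~(\ref{Master}) by a term built from those coherences, and then to control that perturbation using the $1/\lambda$ oscillation. Concretely: write $H=P^{2}+V$, split the generator on the right of~(\ref{TheModel}) as $\mathcal{L}_{\lambda}=-\tfrac{\textup{i}}{\lambda}[H,\cdot\,]+\mathcal{N}$ with $\mathcal{N}(\rho)=\Psi(\rho)-\tfrac{1}{2}\{\mathcal{R}I,\rho\}$ bounded, $\|\mathcal{N}\|\leq 2\mathcal{R}$, and let $\mathcal{P}$ be the passage to the diagonal density $k\mapsto\rho(k,k)$ (a smearing of kernels along the diagonal is needed to make this rigorous on the continuous spectrum, as in Section~\ref{SecQuantum}; I suppress it) and $\mathcal{Q}=\textup{Id}-\mathcal{P}$. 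Since $H$ is diagonal here, $[H,\cdot\,]$ commutes with $\mathcal{P},\mathcal{Q}$ and annihilates the range of $\mathcal{P}$, so projecting~(\ref{TheModel}) onto the diagonal gives
\begin{align*}
\frac{d}{dt}D_{\lambda,t}=\mathcal{P}\mathcal{N}\mathcal{P}(D_{\lambda,t})+\mathcal{P}\mathcal{N}\mathcal{Q}(\rho_{\lambda,t})=L(D_{\lambda,t})+\mathcal{P}\mathcal{N}\mathcal{Q}(\rho_{\lambda,t}),
\end{align*}
where $\mathcal{P}\mathcal{N}\mathcal{P}=L$ --- a direct computation from~(\ref{TheNoise}),~(\ref{DefKappa}) and $\sum_{n}|\kappa_{v}(k,n)|^{2}=1$, to be done in Section~\ref{SecTransition}. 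The same relations give $\int dk\,J(k,k')=\int dk'\,J(k',k)=\mathcal{R}$, so $L$ generates a Markov semigroup with $\|e^{sL}\|_{1\to1}\leq1$ and $\|L\|_{1\to1}\leq 2\mathcal{R}$.

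Next I would compare with~(\ref{Master}). Since $\mathcal{D}_{0}(k)=\rho(k,k)=D_{\lambda,0}(k)$, the difference $\Delta_{t}:=D_{\lambda,t}-\mathcal{D}_{t}$ satisfies $\Delta_{t}'=L\Delta_{t}+\mathcal{P}\mathcal{N}\mathcal{Q}(\rho_{\lambda,t})$ with $\Delta_{0}=0$, and variation of constants followed by one integration by parts in the semigroup gives
\begin{align*}
\Delta_{t}=F_{t}+\int_{0}^{t}ds\,L\,e^{(t-s)L}F_{s},\qquad F_{s}:=\int_{0}^{s}du\,\mathcal{P}\mathcal{N}\mathcal{Q}(\rho_{\lambda,u}),
\end{align*}
hence $\|\Delta_{t}\|_{1}\leq(1+2\mathcal{R}t)\sup_{0\leq s\leq t}\|F_{s}\|_{1}$. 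The trivial bound $\|\mathcal{P}\mathcal{N}\mathcal{Q}(\rho_{\lambda,u})\|_{1}\leq C$ (bounded composite map, $\|\rho_{\lambda,u}\|_{1}=1$) already gives $\|F_{s}\|_{1}\leq Cs$, hence $\|\Delta_{t}\|_{1}\leq Ct$ for $t<\lambda$, which is the last assertion of the theorem. The main bound reduces to the uniform estimate $\sup_{s\geq0}\|F_{s}\|_{1}\leq C\lambda$.

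To get this I would feed into $F_{s}$ the Duhamel representation of the coherences,
\begin{align*}
\mathcal{Q}\rho_{\lambda,u}=e^{-\frac{\textup{i}u}{\lambda}[H,\cdot\,]}\mathcal{Q}\rho+\int_{0}^{u}dr\,e^{-\frac{\textup{i}(u-r)}{\lambda}[H,\cdot\,]}\,\mathcal{Q}\mathcal{N}(\rho_{\lambda,r}),
\end{align*}
interchange the orders of time integration, and evaluate the emerging time integrals of the fast phases. In the extended-zone basis $e^{-\frac{\textup{i}t}{\lambda}[H,\cdot\,]}$ multiplies the $(k,k')$ kernel entry by $e^{-\frac{\textup{i}t}{\lambda}(E(k)-E(k'))}$, and because $\mathcal{N}$ and $L$ are bounded the amplitudes multiplying these phases have bounded variation in the remaining time variable, so integration by parts bounds each such integral by a constant times $\min\!\big(t,\,\lambda/|E(k)-E(k')|\big)$. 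One then integrates the resulting kernel estimates over the kick sizes $v$ against $j\in L^{1}$ (using also $\sigma=\int j(v)v^{2}dv<\infty$) and sums over the shifts $n$ in~(\ref{DefKappa}) (using $\sum_{n}|\kappa_{v}(k,n)|^{2}=1$ and the decay of $\kappa_{v}(k,n)$ in $n$). Away from the resonance set $\{E(k)=E(k')\}$ this yields the gain $\lambda$ and the estimate closes.

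The hard part will be a neighborhood of the resonance set. Since $E$ is even and strictly monotone on each half-line and the coherences feeding the diagonal have $k-k'\in\Z$, resonance forces $k'=-k$ with $2k\in\Z$: the dangerous coherences are those between near-opposite momenta with $k$ near the half-spaced lattice $\tfrac{1}{2}\Z$ --- exactly the Bragg momenta. There the dispersion has vanishing slope (the band edges lie at $\tfrac{1}{2}\Z$), so $E(k)-E(k')$ vanishes to second order, $\lambda/|E(k)-E(k')|$ fails to be integrable, and the crude integration by parts loses all the way down to $O(\lambda^{1/2})$ on that region. Closing the argument there requires a sharper, stationary-phase / van der Corput-type estimate built on the precise local geometry of the dispersion at the band edges, together with the fact that the genuinely resonant coherences occupy a null set of the continuous spectrum. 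This is the one step I expect to be genuinely delicate; it is also the wave mechanism responsible for the anomalous $t^{5/4}$ scaling of the main theorem. Granting $\sup_{s}\|F_{s}\|_{1}\leq C\lambda$, the bound $\|D_{\lambda,t}-\mathcal{D}_{t}\|_{1}\leq C\lambda(1+t)$ follows from $\|\Delta_{t}\|_{1}\leq(1+2\mathcal{R}t)\sup_{s\leq t}\|F_{s}\|_{1}$.
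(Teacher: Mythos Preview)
Your overall strategy --- project onto the diagonal, represent the off-diagonal remainder via Duhamel, and extract a factor of $\lambda$ from time-integration of the fast phases $e^{-\frac{\textup{i}t}{\lambda}(E(k)-E(k'))}$ --- is in the same spirit as the paper's proof, which instead uses the pseudo-Poisson/Dyson representation of $\Phi_{\lambda,t}$ (Lemma~\ref{LindbladTech}) and a telescoping comparison of quantum and classical trajectories to localize the estimate to a single inter-kick time integral. Both routes arrive at the need to bound an expression of the form $\lambda\sum_{M\neq 0}\int dk\,(\cdots)\,|E(k)-E(k+M)|^{-1}$ uniformly.

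However, you have misidentified the hard part. You claim that near $2k\in\mathbb{Z}$ the difference $E(k)-E(k')$ (with $k'-k\in\mathbb{Z}$, $k'\approx -k$) ``vanishes to second order'' because the dispersion has vanishing slope at band edges. This is false for the Dirac comb: at each point of $\tfrac12\mathbb{Z}$ the dispersion $E$ has a \emph{jump}, and as $k\to\tfrac{n}{2}^{\pm}$ one finds $E(k)-E(k-n)\to E(\tfrac{n}{2}^{\pm})-E(\tfrac{n}{2}^{\mp})=\pm g_n$, the $n$th band gap. Since $\inf_n g_n=\alpha/\pi>0$, the relevant phase speeds are uniformly bounded below --- this is precisely the content of~(\ref{PhaseSpeeds}) --- and there is no resonance region requiring stationary-phase or van der Corput refinements. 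Your worry applies to smooth periodic potentials, where indeed $g_n\to0$; it is exactly the non-closing gaps of the Dirac comb that make the $O(\lambda)$ bound work without loss.

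The genuine technical work lies elsewhere: once $|E(k)-E(k+M)|^{-1}$ is bounded, one still has to sum over all lattice shifts $M\in\mathbb{Z}\setminus\{0\}$, and a uniform bound alone gives a divergent series. The paper splits into $|2k+M|\le 1$ (handled by the gap bound and the fact that this region has total measure $\le 2$ in $k$ for each $M$, together with $|\rho(k,k+M)|\le\tfrac12\rho(k,k)+\tfrac12\rho(k+M,k+M)$) and $|2k+M|>1$, where it defines constants $C_M$ and shows $C_M=O(|M|^{-3/2})$. That decay uses both the growth of $|E(k)-E(k+M)|$ with $|M|$ away from near-reflections and the structure of the coefficients $\kappa_v(k,n)$ from Lemma~\ref{BadTerms}. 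Your sketch invokes decay of $\kappa_v(k,n)$ in $n$ only in passing and does not engage with this summability issue; without it the estimate does not close.
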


Let $K_{r}$ be the Markov process satisfying the master equation~(\ref{Master}) and define the integral functional $Y_{t}=\int_{0}^{t}dr K_{r} $.  My main result concerns the limiting distributional behavior for the processes $(\sigma t)^{-\frac{1}{2}}|K_{st}|$ and $t^{-\frac{5}{4}} Y_{st}$, $s\in [0,\,1]$ for $t\gg 1$.  I will make the following technical assumptions on $j(v)$:
\begin{assumptions}\label{Assumptions} \text{ }
There is a $\mu>0$ such that
\begin{enumerate} 

\item $  \int_{\R}dv\,j(v)\,e^{a\, |v|}<\mu$ for some $a>0$,

\item $\sup_{-\frac{1}{4}\leq \theta \leq \frac{1}{4}} \sum_{n\in \Z} j(\theta+\frac{n}{2})<\mu $,    

\item $\inf_{v\in [-1,1]   } j(v)\geq \mu^{-1}$.

\end{enumerate}

\end{assumptions}

  The theorem below states that the processes $|t^{-\frac{1}{2}}K_{st}|$ converge in law as $t\rightarrow \infty$ to the absolute value of a Brownian motion and 
  $t^{-\frac{5}{4}}Y_{st}$ converges to a time-changed Brownian motion whose rate of diffusion emerges as the limit law of  $\nu^{-1} |t^{-\frac{1}{2}}K_{st}|^{3}$ for $\nu:=\mathcal{R}\alpha $.  It is clear from the above statement  that   the process $t^{-\frac{1}{2}}K_{st}$ itself does not behave as a Brownian motion; otherwise the appropriate scaling for $Y_{st}$ would be $t^{\frac{3}{2}}$, and the limiting process would be differentiable rather than diffusive.  
 \begin{theorem}[Main result]\label{Main}
Assume  $j(v)$ satisfies List~\ref{Assumptions}.  Let $K_{t}$ be the Markov process whose probability densities $q_{t}$ satisfy~(\ref{Master}), $q_{0}$ have finite second moment, and $Y_{t}$ be the time integral of $K_{r}$ up to time $t$.   As $t\rightarrow \infty$, there is convergence in law with respect to the Skorokhod metric
$$\Big((\sigma t)^{-\frac{1}{2}}  |K_{st}|,\, \sigma^{-\frac{3}{4}}\nu^{\frac{1}{2}} t^{-\frac{5}{4}}Y_{st}     \Big)\stackrel{\frak{L}}{\Longrightarrow}\Big(|\mathbf{B}_{s}|,\, \int_{0}^{s}d \mathbf{B}^{\prime}_{r}\,|\mathbf{B}_{r}|^{\frac{3}{2}}\Big), \quad \quad s\in [0,\,1],
$$ 
where $\mathbf{B}_{s}$ and $\mathbf{B}_{s}^{\prime}$ are independent standard Brownian motions. 
 \end{theorem}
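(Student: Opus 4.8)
\medskip\noindent\textbf{Proof strategy.}\ The plan is to identify two asymptotically independent sources of randomness hidden in the jump rates~(\ref{KrissKrossMakesYou}) and to run separate invariance principles for the momentum \emph{magnitude} and for a \emph{telegraph-modulated} integral. The ``bulk'' part is the $n=0$ term, where $\kappa_{k-k'}(k',0)$ is close to $1$: stripped of the Bragg corrections, $K_t$ is a mean-zero random walk with increment variance $\sigma=\int_\R v^2 j(v)\,dv$ (using $j(v)=j(-v)$), and it is this part that drives the diffusive growth of $|K_t|$. The ``Bragg'' part is the off-diagonal family $\kappa_v(k,n)$, $n\neq0$, which is appreciable only when $k$ lies within $O(\alpha/|k|)$ of a point of $\tfrac12\Z$; when triggered there the dominant non-trivial channel sends $k$ to its Umklapp partner $k-m\approx-k$, i.e.\ it reverses the sign of the momentum while moving $|K|$ by only $O(1)$. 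The first step is to make this precise using the $\kappa$-estimates of Section~\ref{SecTransition}: one represents $K_t$ as the variance-$\sigma$ walk above, whose sign $\epsilon_t:=\mathrm{sgn}(K_t)$ is flipped by an auxiliary clock of intensity $\nu/|K_t|+o(|K_t|^{-1})$ for $|K_t|$ large, with $\nu=\mathcal R\alpha$, plus residual $O(1)$ changes of $|K|$ at lattice crossings occurring at rate $O(1/|K_t|)$. List~\ref{Assumptions}.1 controls large increments and all moments; List~\ref{Assumptions}.2--3 bound, above and below, the rate at which $K_t$ visits neighbourhoods of $\tfrac12\Z$, pinning the flip intensity to order $1/|K_t|$ and producing the constant $\nu$.

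\noindent Second, I would establish the marginal limit $(\sigma t)^{-1/2}|K_{st}|\Longrightarrow|\mathbf B_s|$. The sign flips do not affect the magnitude, and the $n\neq0$ channels move $|K|$ by $O(1)$ at rate $O(1/|K|)$, contributing only $O(t^{1/4})$ over $[0,t]$; hence $|K_{st}|$ equals, up to $o(t^{1/2})$, the absolute value of the autonomous variance-$\sigma$ mean-zero walk. Donsker's invariance principle in Skorokhod form (tightness from the exponential moments, List~\ref{Assumptions}.1) and the continuous map $x\mapsto|x|$ give the claim, the initial stretch where $|K_t|=O(1)$ collapsing to $0$ after rescaling. (Note $K_t$ itself does \emph{not} converge: it reverses sign $\Theta(\sqrt t)$ times on $[0,t]$, which is precisely why the statement is phrased for $|K|$.)

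\noindent The core is the limit for $Y_{st}=\int_0^{st}K_r\,dr=\int_0^{st}\mathrm{sgn}(K_r)\,|K_r|\,dr$. Fix a mesoscale $\delta=\delta(t)$ with $\sqrt t\ll\delta\ll t$ (say $\delta=t^{3/4}$) and split $[0,st]$ into blocks $I_i=[i\delta,(i+1)\delta)$. On $I_i$, $|K_r|$ varies by only $O(\sqrt{\sigma\delta})=o(k_i)$ about $k_i:=|K_{i\delta}|\sim\sqrt t$, and $\epsilon_r$ behaves to leading order like a stationary $\{\pm1\}$-telegraph of flip rate $\nu/k_i$ making $\nu\delta/k_i\to\infty$ flips per block; so $\int_{I_i}K_r\,dr\approx k_i\int_{I_i}\epsilon_r\,dr$, and from $\mathrm{Var}\big(\int_0^\delta\epsilon_r\,dr\big)=\delta/(\text{flip rate})+O((\text{flip rate})^{-2})$,
\begin{align*}
\E\big[\,\textstyle\int_{I_i}K_r\,dr\mid\mathcal F_{i\delta}\,\big]=O\!\big(\epsilon_{i\delta}\,k_i^{2}/\nu\big),\\
\E\big[\,\textstyle(\int_{I_i}K_r\,dr)^{2}\mid\mathcal F_{i\delta}\,\big]=\tfrac{\delta}{\nu}\,|K_{i\delta}|^{3}\,(1+o(1)).
\end{align*}
The conditional means carry the random sign $\epsilon_{i\delta}$ and form an approximate martingale-difference array summing to $o(t^{5/4})$, so, modulo such errors and inter-block boundary terms, $Y_{st}$ is a martingale with predictable bracket $\tfrac1\nu\sum_i\delta\,|K_{i\delta}|^3\to\tfrac1\nu\int_0^{st}|K_r|^3\,dr$. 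The time-change identity $\tfrac1\nu\int_0^{st}|K_r|^3\,dr=\tfrac{\sigma^{3/2}}{\nu}t^{5/2}\int_0^s\big((\sigma t)^{-1/2}|K_{ut}|\big)^3\,du$, combined with the previous step and the martingale functional central limit theorem, then yields
$$\sigma^{-3/4}\nu^{1/2}\,t^{-5/4}\,Y_{st}\ \Longrightarrow\ M_s,\qquad \langle M\rangle_s=\int_0^s|\mathbf B_u|^{3}\,du,$$
so $M_s=\int_0^s|\mathbf B_u|^{3/2}\,d\mathbf B'_u$. Joint convergence of the pair, and independence of $\mathbf B$ and $\mathbf B'$, follow because $M$ is driven only by the innovations of the flip clock, which --- conditionally on the whole path of $|K|$ --- form a Poisson process independent of the kick sizes that drive the magnitude; a conditional (stable) martingale CLT then exhibits $M$ as a Wiener integral against a Brownian motion $\mathbf B'$ independent of $\mathbf B$.

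\noindent The main obstacle is the first step: proving rigorously that $\mathrm{sgn}(K_r)$ behaves like a telegraph process with the state-dependent intensity $\nu/|K_r|$ and the sharp constant $\nu=\mathcal R\alpha$. This requires (i) control, uniform in the band index $m\sim2|K_r|$, of the amplitudes $\kappa_v(k,n)$ near $\tfrac12\Z$ and of their integral over a single crossing --- where $\mathcal R\alpha$ emerges --- via the Kr\"onig-Penney relation~(\ref{Energies}) and Section~\ref{SecTransition}; (ii) a careful choice of $\delta(t)$ and quantitative control of every error term --- the slow variation of $|K|$ across a block, the inter-block correlations and conditional means of $\epsilon$, and the contribution of $\{|K_r|\le C\}$ (in particular near $s=0$), where the flip rate is $O(1)$ and the telegraph approximation fails but $\int_{\{|K_r|\le C\}}|K_r|\,dr=o(t^{5/4})$; and (iii) tightness of both components in the Skorokhod topology, via Kolmogorov--Chentsov/Aldous-type moment bounds resting again on List~\ref{Assumptions}.
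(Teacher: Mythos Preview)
Your heuristic picture coincides with the paper's own (Sect.~\ref{SecHeuristics}), and the scaling bookkeeping is right, but the actual proof in the paper is organized rather differently from your block decomposition, and one of your steps has a real gap.

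\textbf{Where the paper differs.} The paper does \emph{not} use fixed-length mesoscopic blocks. Instead it parses $[0,t]$ by the random sign-flip times $\tau_m$ themselves (Sect.~\ref{SecProofOutline}), builds the explicit $\widetilde{\mathcal F}^{(t)}_s$-martingale $\mathbf m^{(t)}_s=t^{-5/4}\sum_{m\le\mathbf N_{st}}K_{\tau_m}\big(\Delta\tau_m-\E[\Delta\tau_m\mid\widetilde{\mathcal F}_{\tau_m^-}]\big)$, and shows $Y^{(t)}_s-\mathbf m^{(t)}_s\to0$ (Lem.~\ref{LemMartApprox}). For the magnitude, the paper does not appeal to Donsker on the bulk walk but instead exploits a structural fact inherited from the quantum model: $\mathcal E_r:=E^{1/2}(K_r)$ is a genuine submartingale with $d\langle M\rangle_r/dr\le\sigma$ (Prop.~\ref{Basics}), deduced from the operator submartingale $H^{1/2}_t$ in Prop.~\ref{StochLem}. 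This gives the CLT for $|K_{st}|$ via a submartingale argument (Thm.~\ref{SubMartCLT}) and, crucially, supplies the a priori bound ``$|K_r|$ spends negligible time below $t^{3/8}$'' (Lem.~\ref{EnergyLemma}) that you need but only gesture at. The waiting-time law $\Delta\tau_m/|K_{\tau_m}|\approx\mathrm{Exp}(\nu)$ is proved (Prop.~\ref{TimeFlip}) by a total-variation coupling to an idealized two-channel law (Lem.~\ref{LemTotVar}) and ergodicity of the torus chain $\Theta(K_{t_n})$; this is where the constant $\nu=\mathcal R\alpha$ is pinned down.

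\textbf{The gap.} Your independence argument --- ``conditionally on the whole path of $|K|$, the flip clock is a Poisson process independent of the kick sizes that drive the magnitude'' --- is not correct as stated. A flip is triggered precisely when a L\'evy increment $v$ lands $K_{t_n^-}+v$ near $\tfrac12\Z$; the \emph{same} jump $v$ is what moves the magnitude. There is no auxiliary coin independent of $v$. Consequently the flip innovations and the magnitude increments are correlated at every jump, and a conditional/stable CLT giving an independent $\mathbf B'$ does not follow from your description. The paper resolves this not by any conditional-independence statement but by proving that the \emph{cross} quadratic variation vanishes: $[\mathbf m^{(t)},\mathbf h^{(t)}]_s\to0$ (Lem.~\ref{LemMartApproxQuad}(2)), which in turn rests on the near-decorrelation estimate $\E\big[(|K_\tau|-|K_0|)\,\tau\big]\le C|k|^{1+\zeta}$ of Prop.~\ref{TimeFlip}(5). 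That is the substantive input you are missing. Your block scheme could be made to work, but you would still need an analogue of this decorrelation bound between the per-block magnitude increment and the per-block telegraph integral; without it, the limiting $\mathbf B$ and $\mathbf B'$ need not be independent.
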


\subsection{Further discussion and background}\label{SubSecDisc}

\subsubsection{The Lindblad dynamics,  the noise,  and the Hamiltonian. }
Introductory material on Lindblad equations can be found in~\cite{Fannes}.   Some basic mathematical questions regarding existence and uniqueness of solutions to Lindblad equations with unbounded generators are not completely understood except for specific classes such as those for which the generator is translation covariant~\cite{Holevo}.  Because the Hamiltonian part of the Lindblad equation~(\ref{TheModel}) is unbounded,  the mathematical definition of a  solution to the Lindblad equation is less direct than the bounded case.  I discuss the rigorous definition for the dynamics and related technical issues in Appendix~\ref{AppendixSemigroup}.  In the case discussed here, these issues are not interesting or challenging, since the noise term $\Psi:\mathcal{B}_{1}\big(L^{2}(\R)  \big)$ is bounded and $\Psi^{*}(I)=\mathcal{R}I$ commutes with the Hamiltonian. 

   Consider a L\'evy process with density $j(v)$, and let $t_{1},\dots, t_{n}$ and $v_{1},\dots,v_{n}$ be the Poisson times and increments of the L\'evy process up to time $t$.  The state for the particle $\rho_{\lambda,t}=\Phi_{\lambda,t}(\rho)$ at time $t$ is equal to  
\begin{align}\label{Couture}
 \Phi_{\lambda,t}(\rho)=  \mathbb{E}\Big[  U_{\lambda,t}(\xi)\,   \rho \, U_{\lambda,t}^{*}(\xi) \Big], 
 \end{align}
where the expectation is with respect to the law of the L\'evy process, $\xi=(v_{1},t_{1};\, v_{2},t_{2};\dots )\in (\mathbb{R}\times \mathbb{R}_{+})^{\infty}$ is the full sequence of random events, and the unitary operator $U_{\lambda,t}(\xi):L^{2}(\R)$ is defined by the product
 \begin{align}\label{LaughableMan}
 U_{\lambda,t}(\xi):= e^{-\frac{\textup{i}(t-t_{n }) }{\lambda}  H }e^{\textup{i}v_{n }X}  \cdots  e^{-\frac{i(t_{2}-t_{1})}{\lambda} H} e^{\textup{i}v_{1}X}  e^{-\frac{\textup{i} t_{1}}{\lambda} H}.
 \end{align}
The construction of the maps $\Phi_{\lambda,t}$ thus only depends on the existence of the unitary groups $e^{-irH}$, $r\in \R$. The equation~(\ref{Couture}) implies the trace for the state is preserved such that $\Tr[ \rho_{\lambda,t}]=\Tr[\rho]=1$, since the expression on the right is a convex combination of unitary conjugations of $\rho$.

A noise of the type appearing in~(\ref{TheModel}) was originally introduced as a phenomenological model for the study of wave collapse in quantum mechanics~\cite{Ghirardi}.  It was later derived in~\cite{ESL} starting with a heuristic scattering analysis that was meant to model an interaction of a test particle with a gas in the limit that the test particle has much greater mass than  the gas particles.  This scattering analysis was clarified in~\cite{Sipe}, which yielded a minor correction by a unitless multiplicative factor in the final expression for the jump rates $j(v)$.   Also, the article~\cite{Hellmich} contains a mathematical derivation  for a noise of the form~(\ref{TheModel}) through a singular coupling limit of a simple system-reservoir Hamiltonian dynamics.  The  noise model has been discussed in relation to experimental frameworks in matter-wave optics~\cite{Alicki,Exper} and appears in other discussions of decoherence~\cite{Vacchini}.  See~\cite[Sec.7.1]{VaccHorn} for the connection of the noise with a quantum linear Boltzmann equation in a large mass limit.    
A similar frictionless noise with some spatial dependence recently appeared in~\cite{Kolovsky} to model the dampening of Bloch oscillations for an atom in an optical lattice.

It is clear from the mean energy growth~(\ref{MeanEnergy}) that  the model for the noise is transient in nature.   The classical analog of the noise map 
  $$\hspace{3.5cm}\rho\longrightarrow \Psi(\rho)-2^{-1}\{\rho,\Psi^{*}(I)\},\hspace{2cm}\rho \in \mathcal{B}_{1}\big(L^{2}(\R)\big),$$
 appearing in the Lindblad equation~(\ref{TheModel})  is given by the map
\begin{align}\label{NoiseAnalog}
\Upsilon(x,\,p)\longrightarrow \int_{\R}dv\,j(v)\big(\Upsilon(x,p-v)-\Upsilon(x,p)\big), 
\end{align}
 for  joint  position-momentum densities $\Upsilon\in L^{1}(\R^{2})$.  The association   of~(\ref{NoiseAnalog}) with the quantum noise can be justified by looking at the Wigner representation $W_{\rho}(x,p)$ of $\rho$.  Of course, most quantum noises are not so readily identifiable with classical analogs.  Equation~(\ref{NoiseAnalog}) describes the momentum as receiving random kicks $v$ with rate-density $j(v)$.    Based on this evidence, the momentum undergoes an unbiased random walk, and the ensuing stochastic acceleration explains the mean energy growth in~(\ref{MeanEnergy}).

 The Hamiltonian $H=P^{2}+\alpha\sum_{n\in\Z}\delta(X-2\pi n)$, $\alpha>0$ is defined as a particular self-adjoint extension of the symmetric operator $-\frac{d^{2}}{dx^{2}}$ with domain consisting of all functions $f\in\mathbf{H}^{2,2}(\R)\cap \{g\,|\,g(2\pi n)=0,\,n\in\mathbb{Z} \}$ having two weak derivatives and taking the value $0$ on the lattice $2\pi \Z$.  The domain of the self-adjoint extension is the  space of functions $f\in \mathbf{H}^{2,1}(\R)\cap \mathbf{H}^{2,2}(\R-2\pi \Z)$ that have one weak derivative in the domain $\R$ and two weak derivatives in $\R-2\pi\Z$, and that satisfy  
$$ \alpha f(2\pi n)= \frac{df}{dx}( 2\pi n+ )-\frac{df}{dx}(2\pi n-).   $$
The Dirac comb is a limiting case of the Kr\"onig-Penney model~\cite{Penney}, which is a periodic Schr\"odinger equation in which the potential has the form  $V(x)=\bar{V}\sum_{n\in\Z}1_{[2\pi n-a,\,2\pi n+a]}$ for $a<2\pi$.  The limit connecting them is  $\bar{V}\rightarrow \infty$ with $2\bar{V}a=\alpha$.  The Kr\"onig-Penney model has been used to model the transport of electrons through a crystal.  One computational advantage of these models is that there are closed equations determining the spectral values and the form of the eigenkets.  Both the periodic $\delta$-potential and the Kr\"onig-Penney model are discussed in~\cite{Solve}.  Some general theory regarding the structure of periodic Schr\"odinger equations can be found in~\cite{East,Reed,Asch}.  The articles ~\cite{Firsova,Cai,Scipio} contain recent results on the dispersion of wave packets evolving according a periodic Schr\"odinger equation with a smooth potential.         

\subsubsection{Bragg reflections   }\label{SecBragg}

For a high momentum quantum particle in a periodic potential, the dominant behavior is simply transmission over the potential.  However, there is a lattice $\frac{1}{2}\Z$ of wave frequencies around which the particle is likely to be reflected by the potential.  For my purpose, the difference between a smooth periodic potential and the Dirac comb is  the proximity a high momentum particle must have to a lattice momentum to experience reflections.  These zones are wider for the Dirac comb, and it is possible for the test particle to score accidental reflections in the process of colliding with the gas.  A more immediate indication of a contrast between the Dirac comb and a smooth period potential is in the size of the jumps in the dispersion relation $E(k)$ at momenta $k\in \frac{1}{2}\Z$ (as pictured in Figure~$1$).  Unlike smooth potentials, where the energy gaps vanish as $k\rightarrow \infty$, the gaps for the periodic $\delta$-potential approach the constant value $\frac{\alpha}{\pi}$.   
     
     Due to the spatial translation symmetry of the potential by $2\pi$, the kets $|k\rangle_{\scriptscriptstyle{Q}}$ can be written as discrete combinations of the momentum kets $|k+n\rangle$ for $n\in \Z$:
\begin{align}\label{FromStanToQuasi}
|k\rangle_{\scriptscriptstyle{Q}}=\sum_{n\in \Z}\eta(k,n)|k+n\rangle,  
\end{align}
where $\sum_{n\in \Z}|\eta(k,n)|^{2}=1$.  For $|k|\gg 1$,  the $n=0$ term dominates the sum except when $k$ is close to a lattice point $\frac{n}{2}\in \frac{1}{2}\Z$, in which case the $-n$ term becomes nonnegligible.  When $k\approx \frac{1}{2}n$, the value $k-n\approx -k$ is approximately a reflection of the momentum.  A high-momentum plane wave tuned near a lattice frequency will be driven by the Hamiltonian evolution to oscillate though a cycle of quantum superpositions between the original wave and the reflected wave.  These are called~\textit{Pendell\"osung oscillations}, and the corresponding wave velocities $E'(k)$ are small for  momenta $k$ close enough to an element in  $\frac{1}{2}\Z$ to  exhibit the Pendell\"osung oscillations (see also in Figure-$1$).  A plane wave with a high momentum sufficiently away from any lattice value will transmit nearly freely through the potential with velocity $E'(k)\approx 2k$.  I will refer to the regions near the lattice momenta in which nonnegligible reflection occurs as the~\textit{reflection bands} (see~\cite{Friedman}).  For the Dirac comb potential, the widths of the reflection bands scale as $\propto n^{-1}$ for  $\frac{n}{2}\in \frac{1}{2}\Z $ with $|n|\gg 1$.

\subsubsection{ The small $\lambda$ limit    }\label{SubSecLambda}

 The regime of $\lambda \ll 1$ in~(\ref{TheModel}) should be considered as a semi-classical regime in which phase oscillations generated by the Hamiltonian term occur on a faster time scale than the mean time between collisions with the gas.  As it appears in~(\ref{TheModel}), the parameter $\lambda$ takes the place of $2M\hbar$, where $M$ is the mass of the test particle.  Since the noise arises in a limit in which $M$ is large, the value $\hbar^{-1}$ must be even ``larger" to make $\lambda \ll 1$.   A more honest comparison of relevant scales requires that I include more physical parameters in the model such as the period and strength of the $\delta$-potential, and I  do this for a slightly richer noise model in Sect.~\ref{SecConj}. 

Without the noise, the dynamical evolution generated by the Hamiltonian  in the extended-zone scheme representation is 
\begin{align}\label{LabelItSomething}
  { }_{\scriptscriptstyle{Q}}\langle k'|   e^{-\frac{\textup{i}t}{\lambda}[H,\cdot]}(\rho)    |k\rangle_{\scriptscriptstyle{Q}}=   e^{-\frac{\textup{i}t}{\lambda}E(k')+\frac{\textup{i}t}{\lambda}E(k) }     { }_{\scriptscriptstyle{Q}}\langle k'|   \rho   |k\rangle_{\scriptscriptstyle{Q}}
\end{align}
for $\rho\in \mathcal{B}_{1}\big(L^{2}(\R)\big)$. The noise map $\Psi$ operates in the momentum representation   as
\begin{align}\label{ShyKid}
 \langle k'|   \Psi(\rho)    |k\rangle=  \int_{\R}dv j(v) \langle k'-v|   \rho  |k-v\rangle .
\end{align}
Since the ket $ |k\rangle_{\scriptscriptstyle{Q}}$ is a combination~(\ref{FromStanToQuasi}) of the kets $|k+n\rangle$ for $n\in \Z$,  the equations~(\ref{LabelItSomething}) and~(\ref{ShyKid}) imply that only the values  $  { }_{\scriptscriptstyle{Q}}\langle k'|   \rho   |k\rangle_{\scriptscriptstyle{Q}} $ with  $k'-k\in \Z$ interact dynamically with the diagonal values   $  { }_{\scriptscriptstyle{Q}}\langle k|   \rho   |k\rangle_{\scriptscriptstyle{Q}} $.  The relevant phase velocities are $\lambda^{-1}\big( E(k')-E(k)     \big)$ for $k'-k\in \Z$.   Due to the non-vanishing energy band gaps for the Dirac comb potential, there is a non-zero minimum $\lambda^{-1}\mathbf{s}_{\textup{min}} $ for the phase speeds, where
\begin{align}\label{PhaseSpeeds}
 \mathbf{s}_{\textup{min}} : =\inf_{ \substack{ k\in \R-2^{-1}\Z  \\ k'-k\in \Z  } } \big| E(k')-E(k)\big| >0.
\end{align}   
These phase oscillations thus have period proportional to $\lambda $, which is small compared to the mean time $\mathcal{R}^{-1}$ between momentum kicks.   This provides the mechanism through which the diagonals  $  { }_{\scriptscriptstyle{Q}}\langle k|   \rho_{\lambda, t}   |k\rangle_{\scriptscriptstyle{Q}} $ tend to an autonomous evolution as $\lambda\rightarrow 0$.    


 The $\lambda\rightarrow 0$ limit connecting the diagonal of $\rho_{\lambda,t}$ in the basis of kets diagonalizing $P^{2}+V$  to a Markovian dynamics  is analogous to a  Freidlin-Wentzell limit~\cite[Ch.8]{Wentzell} for a classical Hamiltonian flow perturbed by a weak noise. In the Freidlin-Wentzell limit, a Markovian dynamics emerges on the energy graph in a scaling limit combining a noise of strength $\lambda\ll 1$  and long time intervals $\propto \lambda^{-1}$.   The ``energy graph" is the collection of connected level curves for the Hamiltonian.     
 My dynamics fits the Freidlin-Wentzell limit description with a simple change of time variable $s:=\lambda\,t$:
 \begin{align*}
\frac{d}{dr}\tilde{\rho}_{\lambda,s}= -\textup{i}\big[H,\tilde{\rho}_{\lambda,s}\big]+\lambda\Big(\Psi(\tilde{\rho}_{\lambda,s})-\frac{1}{2}\{\Psi^{*}(I),\tilde{\rho}_{\lambda,s} \}\Big)\quad \text{for}\quad \tilde{\rho}_{\lambda,s}:= \rho_{\lambda,\frac{s}{\lambda} }. 
 \end{align*}
 The connected level curves of the Hamiltonian and the kets $ |k\rangle_{\scriptscriptstyle{Q}}$ have parallel roles in imprinting the Hamiltonian structure on the limiting Markovian dynamics.   A similar study for a time integral of a momentum-related quantity after a Freidlin-Wentzell limit has previously been performed for a classical model in~\cite{Hairer}.  This model  is a Markovian dynamics for a particle in a periodic potential with a white noise, and the authors have shown that the spatial diffusion and  Freidlin-Wentzell limits commute.

  Theorem~\ref{Main} is a statement only about the classical process $K_{t}$ and its time integral $Y_{t}=\int_{0}^{t}dr K_{r}$.  This result has no direct consequence for the original model, and in particular, I have not proven that the position distribution $d_{\lambda,t}(x):=\langle x|\rho_{\lambda,t} |x\rangle$ itself  exhibits  subdiffusion for $\lambda\ll 1$.    However, I believe the suppressed dispersion effect holds in the original model for any finite $\lambda$.   In other words, $d_{\lambda,t}(x)$ has spread $\propto t^{\frac{5}{4}}$ when the Dirac comb is present rather than the scaling $\propto t^{\frac{3}{2}}$ that holds  otherwise (see Appendix~\ref{AppendixNoComb}).   The effect depends on the occasional development of Hamiltonian-generated quantum superpositions between momenta in the reflection bands and their reflected counterparts.  As the particle is kicked out of a reflection band, the quantum superposition collapses into  a  classical superposition.

\subsubsection{Energy submartingales}

I chose the Dirac comb for this study as opposed to some other singular periodic potential because there are convenient closed expressions for the eigenkets $|k\rangle_{\scriptscriptstyle{Q}}$ in the position and momentum representations.  This is helpful for estimating the coefficients $\kappa_{v}(k,n)$ appearing in the definition~(\ref{KrissKrossMakesYou}) of the jump rates for the Markov process $K_{t}$.  Nevertheless, it is easier to perceive certain critical features of the classical process $K_{t}$  by taking a step back into the original quantum framework than through the closed formulas.  For instance, $E^{\frac{1}{2}}(K_{t})$ is a submartingale.  It follows that $E(K_{t})$ is also a submartingale, although it can be additionally learned from the quantum setting that the increasing part of its Doob-Meyer decomposition is $\sigma t$.  That in not surprising  given the mean energy grown~(\ref{MeanEnergy}).  The process   $E^{\frac{1}{2}}(K_{t})$ is useful  for showing Thm.~\ref{Main}, since $|K_{t}|\approx E^{\frac{1}{2}}(K_{t})$ when $|K_{t}|\gg 1$ by the approximately parabolic shape of the dispersion relation, and because the martingale structure allows $E^{\frac{1}{2}}(K_{t})$ to be treated  (see Thm.~\ref{SubMartCLT}).   An equally important application of the submartingale structure is to show that the  process $|K_{r}|$ spends most of the time interval $r\in [0,t]$ with large values on the order $t^{\frac{1}{2}}$ for which the Bragg scattering is dominated by only  transmitted and reflected waves.

The proof that  $E^{\frac{1}{2}}(K_{t})$ is a submartingale is based on  the Heisenberg representation of the equation~(\ref{Couture}).  The stochastic operator-valued process  
$$  H^{\frac{1}{2}}_{\lambda,t}(\xi):=  U_{\lambda,t}^{*}(\xi)\,    H^{\frac{1}{2}} \, U_{\lambda,t}(\xi)  $$   
is a positive submartingale in the sense that for all $f\in \textup{D}(H^{\frac{1}{2}})\subset L^{2}(\R)$,  the process $\langle f| H^{\frac{1}{2}}_{\lambda,t}(\xi) f\rangle $ is a positive submartingale.  Section~\ref{SecTransition} contains the analysis of the various relevant operator martingales.

\subsubsection{Heuristics for the limit theorem }\label{SecHeuristics}

The Kolmogorov equation~(\ref{Master}) determines a pseudo-Poisson process $K_{r}$ whose jump times are determined by a Poisson clock with rate $\mathcal{R}$. The jumps are a sum of two contributions: one coming directly from a particle collision, and another being a lattice-valued Bragg scattering.  More precisely, a jump $w$ from the starting momentum $k\in \R$ is equal to $w=v+n$, where the component $v\in\R$ has density $\mathcal{R}^{-1}j(v)$ and the component $n\in \Z$ has conditional  probabilities $|\kappa_{v}(k,n)|^{2}$ when given $v$ and $k$.  I  refer to these as the \textit{L\'evy} and \textit{lattice} components of the jump.  When $|k|\gg 1 $,  most of  the lattice jump probabilities $|\kappa_{v}(k,n)|^{2}$ are negligible except for $n=0$ and a few values  corresponding to  reflections in momentum.  The high-energy behavior is dominant, since by earlier discussion, $|K_{r}|$ will typically spend most of a time interval $r\in [0,t]$ with $|K_{r}|\propto t^{\frac{1}{2}}\gg 1 $. An idealized picture emerges in which the  process makes a L\'evy jump $v$ from $k$ with an additional optional jump $-\mathbf{n}\in \Z$ for   
  $$  \mathbf{n}=2(k+v-\theta)\quad \text{with} \quad \theta=k+v\,\textup{mod}\, \frac{1}{2},\quad \theta\in \big[-\frac{1}{4},\,\frac{1}{4}\big),            $$
which occurs with probability 
\begin{align}\label{ReflProb}
\textup{R}_{-}\big( 2^{-1}\theta \mathbf{n} \big):=\frac{\alpha^{2}}{8\pi^{2}} \frac{1}{(2^{-1}\theta\mathbf{n})^{2} +\frac{\alpha^{2}}{16\pi^{2}} }. 
\end{align}
If this extra  jump occurs, the resulting value $k+v-\mathbf{n}\approx -k-v$ is approximately a reflection of the value it would have had otherwise.  

The probability~(\ref{ReflProb}) of a momentum reflection decays when $|\theta|\gg |\mathbf{n}|^{-1}$.  Hence, when $|k+v|\gg 1$,  the value $k+v$ must land near a lattice value $\frac{1}{2}\Z$ in order to have a good chance of  reflection.  The integral 
\begin{align}\label{IntRefl}
2\int_{-\frac{1}{4}}^{\frac{1}{4}}d\theta\,\textup{R}_{-}\big( 2^{-1}\theta \mathbf{n} \big)\approx \alpha (\frac{|\mathbf{n}|}{2})^{-1}
\end{align}
 serves as an effective reflection probability  if the particle is dropped randomly in the cell centered around $\frac{\mathbf{n}}{2}$.  The factor of $2$ in~(\ref{IntRefl}) normalizes the integration.  This suggests the number of reflection times $\mathbf{N}_{r}$ is approximately a Poisson process with a rate depending on $K_{r}$ as  $\approx \mathcal{R}\alpha |K_{r}|^{-1}=\nu|K_{r}|^{-1}$ (since the L\'evy jumps occur with rate $\mathcal{R}$).  In other words, the Poisson rate of reflections is inversely proportional to the absolute value of the momentum.  Over a time interval where $|K_{r}|\propto t^{\frac{1}{2}}$,  the average time between reflections will be $\propto t^{\frac{1}{2}}$.  Hence, if $|K_{r}|$ behaves as the absolute value of a random walk, the number of reflections  over a time interval $[0,t]$ will be (at least) on the order of $t^{\frac{1}{2}}$, and  $t^{-\frac{1}{2}}K_{st}$  itself can not have a limiting distribution.   
However, it is reasonable to expect a limit theorem for the  process $Y_{r}$, since the sign-flipping of $K_{r}$ is smoothed by the time integration.  
 Roughly speaking, $Y_{t}$ can be written   
\begin{align}\label{Gargamel}
Y_{t}\approx  \int_{0}^{t}dr\,(-1)^{\mathbf{N}_{r}}|K_{r}|\approx & \sum_{n=0}^{\mathbf{N}_{t}}(-1)^{n}\int_{\tau_{n}}^{\tau_{n+1}}dr|K_{r}|
\nonumber \\ = & t^{\frac{5}{4}}\Big( t^{-\frac{1}{4}} \sum_{n=0}^{\mathbf{N}_{t}}(-1)^{n}t^{-\frac{1}{2}}\int_{\tau_{n}}^{\tau_{n+1}}dr|t^{-\frac{1}{2}} K_{r}|  \Big),
\end{align}
where $\tau_{n}$ are the reflection times. By the considerations above, the integrand $|t^{-\frac{1}{2}} K_{r}|$ and normalized interval $t^{-\frac{1}{2}}(\tau_{n+1}-\tau_{n})$ will both be $\mathit{O}(1)$.   
  Since the number $\mathbf{N}_{t}$ of terms in the sum  is on the order $ \mathit{O}(t^{\frac{1}{2}})$, a scaling factor of $\mathbf{N}_{t}^{\frac{1}{2}}\propto t^{\frac{1}{4}}$ is appropriate if I expect central limit theorem-type cancellation among the summands in~(\ref{Gargamel}).  Hence, it is reasonable to expect $t^{-\frac{5}{4}}Y_{st}$, $s\in [0,1]$ to have a nontrivial diffusive limit.  The diffusion rate will be proportional to $\big|t^{-\frac{1}{2}}K_{st}  \big|^{3}$ with two powers of $\big|t^{-\frac{1}{2}}K_{st}  \big|$ coming from the integrand~(\ref{Gargamel}), and another factor coming from the less frequent reflections (and thus diminished cancellation) that occur when the momentum is large.

\subsubsection{ Related limit theorems }\label{SecButchered}

The article~\cite{Newton} is a classical analog of the current work in which the periodic potential is continuous and the noise is frictionless.    There, the rescaled momentum process $t^{-\frac{1}{2}}K_{st}$, $s\in[0,1]$ converges in distribution to a Brownian motion, and thus the position process simply converges to the integral of a Brownian motion.  Also for a classical model, the articles~\cite{NewtonII,NewtonIII} work to control and characterize a periodic potential as a perturbative contribution to a dissipative dynamics driven by a linear Boltzmann equation in the limit of large mass for the test particle.  The noise there is analogous to the quantum noise discussed in Sect.~\ref{SecConj}.

The study of the limit law of $t^{-\frac{5}{4}}\int_{0}^{st}drK_{r}$ for large $t\in \R_{+}$ fits under the general category of central limit theory for integral functionals of  Markov processes. It is not covered by previous results that I am aware of, since  the process $K_{t}$ is null-recurrent and systematically makes large jumps in the form of sign-flips from arbitrarily high values in phase space (see~\cite{Hopfner} for martingale limit theory relevant to  a broad class of null-recurrent situations).  A simplified version of the problem is given by the following: let the Markov process $K_{t}'$  make jumps at times determined by a Poisson clock with rate $\mathcal{R}$ and transition density $T(k',k)$ from $k$ to $k'$ given by    
$$ T(k',k)= \left\{  \begin{array}{cc} \mathcal{R}^{-1}j(k'-k) \quad   &  \big|\frac{n}{2}\pm(k'-k)   \big|> \frac{\alpha}{ |n|  }\text{ for all }n\in \Z-\{0\},   \\   \quad & \quad \\   (2\mathcal{R})^{-1}\sum j\big(\pm k'\mp k\big)   \quad   & \big| \frac{n}{2} \pm(k'-k)   \big|\leq  \frac{\alpha}{ |n|  }\text{ for some }n\in \Z-\{0\} .  \end{array} \right. 
  $$
In words, the process first makes a jump from $k$ to $k+v$ with density $\mathcal{R}^{-1}j(v)$, and if  $|k+v- \frac{n}{2}  |\leq  \frac{\alpha}{ |n|  }$ for some $n\in \Z-\{0\}$, then the sign either flips or remains the same with equal probability $1/2$.  Note that the process $|K_{t}'|$ has the same law as the absolute value of a L\'evy process with rates $j(v)$.  The limit statement of Thm.~\ref{Main} holds with $K_{t}$ replaced by $K_{t}'$. If  the reflection bands around the lattice values $\frac{n}{2}\in\frac{1}{2}\Z$ in the simplified model above are replaced by bands with diameter $ \alpha  |\frac{n}{2}|^{-2\vartheta }$ for $0<\vartheta<1$, then  the limit law will be
$$\Big((\sigma t)^{-\frac{1}{2}}  |K_{st}'|,\, \sigma^{-\frac{1}{2}-\frac{\vartheta}{2} }\nu^{\frac{1}{2}} t^{-1-\frac{\vartheta}{2}}Y_{st}'     \Big)\Longrightarrow \Big(|\mathbf{B}_{s}|,\, \int_{0}^{s}d \mathbf{B}^{\prime}_{r}|\mathbf{B}_{r}|^{1+\vartheta}\Big), \quad \quad s\in [0,\,1],
$$ 
where $Y_{t}'=\int_{0}^{t}drK_{r}'$.

The Kr\"onig-Penney model lies at the boundary $\vartheta=1$.  I conjecture that if the periodic $\delta$-potential is replaced by a continuous potential, the limiting behavior will agree with the classical case.  The Kr\"onig-Penney model will have an intermediary behavior in which  $t^{-\frac{1}{2}}K_{st}$  converges in law as $t\rightarrow \infty$, but the limiting process is not a Brownian motion  due to a number of random reflections over the interval $s\in[0,1]$.  The limit of the time  integral process will be differentiable rather than diffusive.

\subsubsection{Further mathematical questions}

The first question is whether the original Lindblad model~(\ref{TheModel}) actually exhibits suppressed spatial dispersion for a fixed value of $\lambda$.  Within the current program of passing through a Freidlin-Wentzell limit, there is the mathematical challenge of beginning with a more sophisticated noise that generates energy relaxation for the test particle (see Sect.~\ref{SecConj}).  Also, it would be interesting to compare the Dirac comb with other singular periodic potentials and to see how the situation changes in higher dimensions.

\subsection{Analogous conjectures for a dissipative model}\label{SecConj}

In this section, I introduce a related quantum Markovian dynamics complex enough to include energy relaxation.  This material is presented with the intent to broaden the reader's perspective on the original model.  In other words, this is a continuation of the discussion in the last section and does not concern the mathematical results of this paper.  The model is a one-dimensional version of the quantum linear Boltzmann dynamics discussed in the review~\cite{VaccHorn}, which  models a test particle interacting with a dilute gas of distinguishable particles.  
The one-dimensional case certainly can not be derived from first principles, since even the classical one-dimensional linear Boltzmann equation does not arise in a low density limit from a  microscopic Hamiltonian model for a test particle interacting with a gas.   Analogous mathematical objects in this section to those previously defined will be denoted with a  tilde, and the meaning of symbols introduced here will reset in future sections.        

 Let the state of the particle at time $t\in \R_{+}$ be given by a density matrix $\hat{\rho}_{ t}\in \mathcal{B}_{1}\big(L^{2}(\R)\big)$ whose evolution is determined by the Lindblad equation
\begin{align}\label{TheAltModel}
\frac{d}{dt}\hat{\rho}_{ t}= -\frac{\textup{i} }{\hbar}\big[\hat{H},\hat{\rho}_{ t}\big]+\hat{\Psi}(\hat{\rho}_{ t})-\frac{1}{2}\big\{\hat{\Psi}^{*}(I),\, \hat{\rho}_{ t}  \big\},  
\end{align}
where  $\hat{\rho}_{ 0}=\rho$, and the Hamiltonian $H$ and the completely positive map $\hat{\Psi}$ are defined below.  The Hamiltonian $H$ is a Schr\"odinger operator with Dirac comb potential    
$$\hat{H}=\frac{1}{2M}P^{2}+\alpha\sum_{N\in \Z}\delta\big(X-a N\big),  $$
where $M$ is the mass of the test particle, $a$ is the spatial period, and $\alpha>0$ is the strength of the Dirac comb.  Let $\eta$ be the spatial density of the gas, $m$ be the mass of a single gas particle, and $\mathfrak{R}(p_{rel})$ be the reflection coefficient  determined by the interaction potential between the test particle and a single gas particle.   For the noise map $\hat{\Psi}$, I take     
\begin{align}\label{AltNoiseMap} 
\hat{\Psi}(\rho)= \frac{(1+\frac{m}{M})\eta  }{ 2 m }  \int_{\R}dq\,|q|\,\big|\mathfrak{R}(\frac{q}{2})\big|^{2}\, e^{\textup{i} \frac{q}{\hbar}X}L_{q}(P)\hat{\rho} L_{q}(P)e^{-\textup{i}\frac{q}{\hbar}X},  
\end{align}
where $L_{q}(P)$ is the multiplication operator in the momentum representation with function  
$$ L_{q}(p)= \Big(\mu\big(2^{-1}( 1+\frac{m}{M} )q+\frac{m}{M} p  \big)\Big)^{\frac{1}{2}} \quad \text{ for }\quad  \mu(p)=\frac{e^{-\beta \frac{p^{2}}{2m}} }{\big(2\pi m\beta^{-1}\big)^{\frac{1}{2} }   } .         $$
The operator $\Psi^{*}(I)$ is also a function of the momentum operator given by
$$\hat{\Psi}^{*}(I)=  \frac{(1+\frac{m}{M})\eta }{ 2 m }  \int_{\R}dq\,|q|\,\big|\mathfrak{R}(\frac{q}{2})\big|^{2}\,\mu\big(2^{-1}( 1+\frac{m}{M} )q+\frac{m}{M} P  \big)=:\mathcal{E}(P).  $$
The function $\mathcal{E}(p)$ serves as an escape rate for getting kicked out of the momentum $p$. 

I will assume  the interaction between the test particle and a reservoir particle is a ``hard-point" interaction (i.e. an infinite strength $\delta$-interaction).   For the hard-point case, the reflection coefficient is $\big|\mathfrak{R}(p)\big|^{2}=1$.  When the Dirac comb is set to zero, the distribution in momentum $\langle p|\rho_{\lambda,t}|p\rangle $ converges exponentially fast in $L^{1}$-norm to a Gaussian of width $(\frac{M}{\beta})^{\frac{1}{2}}$.  This is simple  to prove, since it  can be reduced to showing exponential ergodicity for a classical Kolmogorov equation.  I have discussed exponential dissipation in the article~\cite{Boltzmann} for a three-dimensional case (hard-sphere interaction) for the purpose of studying diffusion.

The discussion of the  Hamiltonian $\hat{H}$ is the same as before except for the inclusion of physical constants. The Hamiltonian has a basis of kets    $|p\rangle_{\scriptscriptstyle{Q}}$ with energies given by
 $\hat{E}(p)=\frac{1}{2M}\mathbf{q}^{2}(p)$ for the anti-symmetric, increasing function $\mathbf{q}:\R\rightarrow \R$ determined by the relation
\begin{align*}
\cos\big(\frac{a}{\hbar} p\big)=\cos\big(\frac{a}{\hbar} \mathbf{q}(p) \big)+\frac{\alpha M}{\hbar\mathbf{q}(p) }\sin\big(\frac{ a}{\hbar} \mathbf{q}(p)\big),  \hspace{1.5cm} p\in \R-\frac{  \pi \hbar }{ a}\Z,
\end{align*}
and $\mathbf{q}(\frac{n\pi  \hbar }{ a})=\frac{ n\pi  \hbar }{ a}$ for $n\in \Z$. 
The dispersion relation $\hat{E}(p)$  has jumps $ g_{n}\neq 0 $  at the values $p=\frac{ n \pi \hbar }{ a}$ for  $n\in \Z$, which approach  $\frac{2\alpha}{a}$ as $n$ goes to infinity.

   I will consider the model~(\ref{TheAltModel}) in a limit in which the constants $M,\hbar,\alpha, a$ and a time variable $s\in[0,t_{0}]$ scale with a single parameter $0<\lambda\ll 1$ as follows
\begin{align}\label{Scales}
M=m\lambda^{-1},\quad t=s\lambda^{-1},\quad \hbar=h\lambda^{1+\varrho},\quad \alpha=\alpha_{0}\lambda^{2+\varrho} ,\quad a=a_{0}\lambda^{1+\varrho}  ,   
\end{align}   
 for fixed constants $m$,  $h$, $\alpha_{0}$,  $a_{0}$, $t_{0}$, $\eta$, and an exponent $\varrho>0$.   I will add  ``$\lambda$" as a subscript to the solution $\hat{\rho}_{\lambda,t}$ of the Lindblad equation~(\ref{TheAltModel}) and the escape rate $\mathcal{E}_{\lambda}(p)$ to indicate the parameter dependence.

  Let $\hat{D}_{\lambda,t}(p)=  { }_{\scriptscriptstyle{Q}}\langle p| \hat{\rho}_{\lambda,t } |p\rangle_{\scriptscriptstyle{Q}} $ be the probability density for the momentum given by the diagonal of the density matrix $\hat{\rho}_{\lambda,t}$ in the extended-zone scheme representation.  Define $\hat{\mathcal{D}}_{\lambda,t}\in L^{1}(\R)$ to be the solution of the master equation 
\begin{align} \label{TheLimit}  
\frac{d}{dt}\hat{\mathcal{D}}_{\lambda,t}(p)= \int_{\R}dp'\Big(\hat{J}_{\lambda}(p,p^{\prime})\hat{\mathcal{D}}_{\lambda,t}(p^{\prime} )- \hat{J}_{\lambda}(p^{\prime},p)\hat{\mathcal{D}}_{\lambda,t}(p )   \Big)
\end{align}
with $\hat{\mathcal{D}}_{\lambda,o}(p )=  { }_{\scriptscriptstyle{Q}}\langle p| \hat{\rho} |p\rangle_{\scriptscriptstyle{Q}} $, where the jump rates $\hat{J}_{\lambda}(p,p^{\prime})$ are defined below.  Let $ \hat{\kappa}_{\lambda, q}(p,m)\in \C$ be the coefficients in the equation
\begin{align}
\sum_{n\in \Z}\hat{\kappa}_{\lambda, q}(p,n)|p+q+n\rangle_{\scriptscriptstyle{Q}}  = e^{\textup{i} \frac{qX}{\hbar} } \mu\big(2^{-1}(1+\lambda)q+\lambda P      \big)^{\frac{1}{2}}   | p \rangle_{\scriptscriptstyle{Q}}.
\end{align}
The rates $\hat{J}_{\lambda}(p,p^{\prime})$ have the form
\begin{align}\label{LimitRates} \hat{J}_{\lambda}(p,p^{\prime})= \frac{(1+\lambda)\eta}{2m} \sum_{n\in \Z }| p-p^{\prime}-n    |\, \big| \hat{\kappa}_{\lambda,\, p-p^{\prime}-n}(p^{\prime},n)\big|^{2}.
\end{align}
I will denote the Markovian process whose densities obey the Kolmogorov equation~(\ref{TheLimit}) as $\hat{K}_{t}$, although the reader should note that the process has units of momentum rather than wave number.

The following conjecture is analogous to Thm.~\ref{FWLimit}.  The exponent $\varrho>0$ from~(\ref{Scales}) will only appear in the error of the following theorem.  
\begin{conjecture}{Freidlin-Wentzell/semi-classical limit.}\label{SemiClassical}
Let $\hat{D}_{\lambda,t}$ and $\hat{\mathcal{D}}_{\lambda,t}$ be defined as above.  For $\lambda\ll 1$, then
$$\sup_{r\in[0,\lambda^{-1}t_{0}]}\|\hat{D}_{\lambda,r}-\hat{\mathcal{D}}_{\lambda,r}\|_{1}=\mathit{O}(\lambda^{\varrho}).   $$
\end{conjecture}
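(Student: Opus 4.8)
The plan is to follow the strategy behind Theorem~\ref{FWLimit}, with the work concentrated on keeping the error small over the long Freidlin--Wentzell interval $[0,\lambda^{-1}t_{0}]$ rather than merely linear in time (over that interval the analogue of the $C\lambda(1+t)$ bound of Theorem~\ref{FWLimit} would only give $\mathit{O}(1)$). Pass to the extended-zone representation and split $\hat{\rho}_{\lambda,t}$ into its diagonal part, carrying $\hat{D}_{\lambda,t}$, and the complementary coherences $|p'\rangle_{\scriptscriptstyle Q}\langle p|_{\scriptscriptstyle Q}$; only the closed subsystem consisting of the diagonal together with the coherences between momenta differing by a nonzero reciprocal-lattice vector feeds the diagonal, so everything else is discarded. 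Two structural points set up the comparison. First, on diagonal states the diagonal block of the quantum generator is \emph{exactly} the Kolmogorov generator $L_{\lambda}$ of~(\ref{TheLimit}); this amounts to the identity $\int_{\R}dp'\,\hat{J}_{\lambda}(p',p)=\mathcal{E}_{\lambda}(p)$, which follows from $\sum_{n}|\hat{\kappa}_{\lambda,q}(p,n)|^{2}={}_{\scriptscriptstyle Q}\langle p|\mu\big(\tfrac12(1+\lambda)q+\lambda P\big)|p\rangle_{\scriptscriptstyle Q}$ and the definition of $\mathcal{E}_{\lambda}$. Second, writing $e_{t}:=\hat{D}_{\lambda,t}-\hat{\mathcal{D}}_{\lambda,t}$, the error therefore solves
\begin{align*}
\frac{d}{dt}e_{t}=L_{\lambda}(e_{t})+\mathcal{F}_{\lambda}\big(\hat{\rho}^{\mathrm{off}}_{\lambda,t}\big),\qquad e_{0}=0,
\end{align*}
where $\mathcal{F}_{\lambda}$ extracts the diagonal part of the noise map applied to the coherences; since $L_{\lambda}$ is a Markov generator, $e^{L_{\lambda}u}$ is an $L^{1}$-contraction and it suffices to control $\int_{0}^{r}\|\mathcal{F}_{\lambda}(\hat{\rho}^{\mathrm{off}}_{\lambda,s})\|_{1}\,ds$ for $r\le\lambda^{-1}t_{0}$.

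The source of smallness is the rapid Hamiltonian rotation of the coherence $|p'\rangle_{\scriptscriptstyle Q}\langle p|_{\scriptscriptstyle Q}$ at angular speed $\hbar^{-1}|\hat{E}(p')-\hat{E}(p)|$. From the Kr\"onig--Penney relation and the fact (noted after~(\ref{AltNoiseMap})) that the gaps $g_{n}$ tend to $\tfrac{2\alpha}{a}$ and so do not close, this speed is bounded below by $c\,\hbar^{-1}\hat{\mathbf{s}}_{\textup{min}}$, where $\hat{\mathbf{s}}_{\textup{min}}$ is the analogue of~(\ref{PhaseSpeeds}) for $\hat{E}$ and is of order $\lambda$; hence the speed is $\Theta(\lambda^{-\varrho})$ everywhere, and $\Theta(|p|\lambda^{-\varrho})$ once $|p|\gg1$ because $\hat{E}(p)\approx\tfrac{1}{2M}p^{2}$ there. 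I would insert this into a Duhamel expansion of $\hat{\rho}^{\mathrm{off}}_{\lambda,s}$ about the fast unitary group $e^{-\textup{i}s\hat{H}/\hbar}$ and integrate by parts once in $s$: the boundary terms are coherences of size $\mathit{O}(\lambda^{\varrho})$ — the reciprocal of the slowest phase speed — and the remainder carries a further inverse phase speed. That extra inverse speed is where the location of the momentum matters: the uniform-in-$\lambda$ bound on the mean energy $\Tr[\hat{\rho}_{\lambda,t}\hat{H}]$ (a Lyapunov estimate, using the Maxwellian detailed-balance structure of $\hat{\Psi}$) confines the bulk of $\hat{D}_{\lambda,t}$ to momenta of order $\lambda^{-1/2}$, where the phase speeds are the faster $\Theta(\lambda^{-1/2-\varrho})$. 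Combined with summability of the Maxwellian-weighted rates over the reciprocal lattice (the analogue of List~\ref{Assumptions}(1)--(2)), this keeps the ``driven'' coherences' feedback well below $\lambda^{\varrho}$; the worst case $\lambda^{\varrho}$ is realized only through (a) the coherences already present in $\rho$, which are damped on an $\mathit{O}(1)$ time scale, and (b) coherences created inside the reflection bands, which by the structure of the noise map can be folded back onto the diagonal only by improbably large momentum kicks and so contribute negligibly to $\mathcal{F}_{\lambda}$.

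The step I expect to be hardest is preventing the feedback from accumulating over $[0,\lambda^{-1}t_{0}]$, an interval comparable to the relaxation time of $L_{\lambda}$. The natural remedy is to combine the integration-by-parts bound with geometric ergodicity of $e^{L_{\lambda}u}$ toward its Gaussian stationary state: for the comb-free dynamics this is the elementary Kolmogorov-equation ergodicity invoked in Sect.~\ref{SubSecDisc}, and the lattice jumps only help mixing, but promoting it to a Doeblin/Lyapunov estimate with rate constants tracked in $\lambda$ — using the momentum-Maxwellian lower bound analogous to List~\ref{Assumptions}(3) and the energy as Lyapunov function — is delicate. A secondary point is the degradation of the phase-speed bounds in the low-momentum window $|p|=\mathit{O}(1)$: one must show, again from the energy Lyapunov bound, that $\hat{D}_{\lambda,t}$ assigns that window only $\mathit{O}(\lambda^{1/2})$ mass, so it cannot lift the $L^{1}$ error above $\lambda^{\varrho}$. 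The remaining ingredients — the extended-zone bookkeeping, the identity $\int\hat{J}_{\lambda}(p',p)\,dp'=\mathcal{E}_{\lambda}(p)$, and the oscillatory-integral estimates — are routine adaptations of the corresponding parts of the proof of Theorem~\ref{FWLimit}.
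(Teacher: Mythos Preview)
This statement is presented in the paper as a \emph{conjecture}, not a theorem. The paper says explicitly at the start of Section~\ref{SecConj} that ``this is a continuation of the discussion in the last section and does not concern the mathematical results of this paper,'' and Conjecture~\ref{SemiClassical} is introduced only as ``analogous to Thm.~\ref{FWLimit}.'' There is no proof in the paper to compare your proposal against; the only supporting material is the heuristic scale comparison in the list ``Physical characteristics of the scaling regime.''

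As a strategy, your outline is sensible and you have correctly located the real obstruction: transplanting the proof of Theorem~\ref{FWLimit} would give an error of size $C\lambda(1+t)$, which on the Brownian interval $t\in[0,\lambda^{-1}t_{0}]$ is only $\mathit{O}(1)$, so some mechanism beyond phase averaging is required, and exploiting the dissipative ergodicity of $e^{L_{\lambda}u}$ is the natural candidate. Two cautions. First, the paper's actual argument for Theorem~\ref{FWLimit} does not use your ``$\dot e_{t}=L_{\lambda}e_{t}+\mathcal{F}_{\lambda}(\hat{\rho}^{\mathrm{off}})$'' split; it uses the pseudo-Poisson representation of Lemma~\ref{LindbladTech}(2) and a telescoping sum between Poisson times, which in the dissipative model is obstructed because $\hat{\Psi}^{*}(I)=\mathcal{E}_{\lambda}(P)$ no longer commutes with $\hat{H}$ and the Dyson expansion does not factor cleanly. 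Second, the momentum-dependent multiplier $L_{q}(P)$ is diagonal in the \emph{standard} momentum basis but not in the extended-zone basis, so the ``noise creates coherences'' channel is richer here than in the frictionless case; your diagonal identity $\int\hat{J}_{\lambda}(p',p)\,dp'={}_{\scriptscriptstyle Q}\langle p|\mathcal{E}_{\lambda}(P)|p\rangle_{\scriptscriptstyle Q}$ is correct, but the off-diagonal bookkeeping that feeds $\mathcal{F}_{\lambda}$ will be messier than you suggest. These are not fatal, but they are exactly why the paper leaves this as a conjecture.
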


Next, I state a conjecture analogous to Thm.~\ref{Main}.  Let  $\mathbf{P}_{t}$ be the Ornstein-Uhlenbeck process satisfying $\mathbf{P}_{0}=0$ and the Langevin equation 
\begin{align}\label{Langy}
d\mathbf{P}_{t}= - \gamma \mathbf{P}_{t} +\big( \frac{2m\gamma}{\beta}\big)^{\frac{1}{2}}d\mathbf{B}_{t},   
\end{align}
where  $\mathbf{B}_{t}$ is standard Brownian motion
and  $  \gamma =8\eta(\frac{2}{\pi m \beta})^{\frac{1}{2}}$. Define $\nu=\big(\frac{ 32 m   }{\beta \pi}  \big)^{\frac{1}{2}} \frac{ \alpha_{0}\eta  }{ h  }$.   Conjecture~\ref{ConjConv} states   that the processes $\lambda^{\frac{1}{2}} |\hat{K}_{\frac{s}{\lambda} }|$, $s\in [0,t_{0}]$ converge in law for small $\lambda$ to the absolute value of the Ornstein-Uhlenbeck process above, and the normalized integrals $ \lambda^{\frac{3}{8}}\int_{0}^{\frac{s}{\lambda} }dr\hat{K}_{r} $ converge in law to a variable-rate diffusion.

\begin{conjecture}\label{ConjConv}
Let $\hat{K}_{t}$ be a Markov process whose probability densities obey the master equation~(\ref{TheLimit}) for a fixed $\lambda>0$.  Define the time integral process $\hat{Y}_{t}=\frac{1}{m} \int_{0}^{t}dr\hat{K}_{r}$.  In the limit  $\lambda\rightarrow 0$, there is convergence in law with respect to the Skorokhod metric    
$$ \hspace{4cm} \left(  \lambda^{\frac{1}{2}} |\hat{K}_{\frac{s}{\lambda} }|,\, \lambda^{\frac{3}{8}}  \hat{Y}_{\frac{s}{\lambda} }  \right)\stackrel{\frak{L}}{\Longrightarrow} \Big( |\mathbf{P}_{s}|,\,  \frac{1}{m \nu^{\frac{1}{2}}} \int_{0}^{s}d\mathbf{B}_{r}'|\mathbf{P}_{r}|^{\frac{3}{2}}\Big),  \quad \quad \quad s\in[0,t_{0}],     $$
where  $\mathbf{P}_{t}$ is the Ornstein-Uhlenbeck process~(\ref{Langy}), and $\mathbf{B}_{t}'$  is standard Brownian motion independent of $\mathbf{P}_{t}$.
\end{conjecture}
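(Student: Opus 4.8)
The statement, like Theorem~\ref{Main}, concerns only the classical Markov process $\hat{K}_{t}$ of~(\ref{TheLimit}) and its integral $\hat{Y}_{t}=\frac{1}{m}\int_{0}^{t}\hat{K}_{r}\,dr$, and the plan is to transport the architecture behind Theorem~\ref{Main} to this positive-recurrent setting. Two sources of randomness must be isolated: the \emph{sizes} of the L\'evy-type momentum kicks, which drive the magnitude $|\hat{K}_{r}|$, and the Bernoulli outcomes of the Bragg reflections, which drive the sign of $\hat{K}_{r}$ and, after time-averaging, the integral $\hat{Y}$. As in the frictionless model one may find it convenient to read off structural facts about $\hat{K}_{t}$ --- moment bounds, tightness --- from the quantum dynamics~(\ref{TheAltModel}) rather than from the closed formulas; the essential difference is that the clean submartingale $E^{\frac{1}{2}}(K_{t})$ of the frictionless case is replaced by the energy relaxation built into the Maxwellian-weighted noise~(\ref{AltNoiseMap}), which makes the momentum process positive recurrent with an Ornstein--Uhlenbeck scaling limit.

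First I would establish the marginal statement that $\lambda^{\frac{1}{2}}|\hat{K}_{\frac{s}{\lambda}}|$ converges in law, as a process, to $|\mathbf{P}_{s}|$, via a martingale-problem argument for $\mathbf{P}^{\lambda}_{s}:=\lambda^{\frac{1}{2}}\hat{K}_{\frac{s}{\lambda}}$. Expanding the L\'evy part of the generator of $\hat{K}_{t}$ to second order in the kick size against the weight $\mu\big(2^{-1}(1+\lambda)q+\lambda p\big)$ produces, after the $(\lambda^{\frac{1}{2}},\lambda^{-1})$ rescaling, exactly the drift $-\gamma p$ and diffusion coefficient $2m\gamma/\beta$ of~(\ref{Langy}), with the higher kick moments of order $\mathit{O}(\lambda)$ (a Lindeberg-type condition), while the lattice part of the generator displaces $|\hat{K}_{r}|$ by only $\mathit{O}(|\hat{K}_{r}|^{-1})$ per event at rate $\mathit{O}(|\hat{K}_{r}|^{-1})$ and is therefore asymptotically invisible to the magnitude. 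Tightness would follow from the Lyapunov function $p\mapsto p^{2}$, whose generator drift is negative for $|p|$ large --- this negativity \emph{is} the energy relaxation --- and the limit points are identified as solutions of the (well-posed) martingale problem for $|\mathbf{P}_{s}|$.

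Next I would install the reflection clock and the sign decomposition following the heuristics of Sect.~\ref{SecHeuristics}. One shows that the point process of Bragg reflections of $\hat{K}_{r}$ is, up to a vanishing error, a Cox process with intensity $\nu|\hat{K}_{r}|^{-1}$ in original time, the constant $\nu=\big(\frac{32m}{\beta\pi}\big)^{\frac{1}{2}}\frac{\alpha_{0}\eta}{h}$ coming from the cell-integral~(\ref{IntRefl})-analogue weighted by $|q|\mu(\cdot)$ combined with the scaling~(\ref{Scales}). Writing $\hat{Y}_{\frac{s}{\lambda}}$ as an alternating sum over the inter-reflection blocks exactly as in~(\ref{Gargamel}), the block integrals $B_{n}=\int_{\tau_{n}}^{\tau_{n+1}}|\hat{K}_{r}|\,dr$ are, conditionally on the magnitude path, close to $|\hat{K}_{\tau_{n}}|$ times an exponential variable of rate $\nu|\hat{K}_{\tau_{n}}|^{-1}$ and asymptotically independent across $n$, whereas the signs $(-1)^{n}$ carry only the reflection-Bernoulli randomness, which is asymptotically independent of the magnitude path because a reflection $\hat{K}\to-\hat{K}$ leaves $|\hat{K}|$ unchanged to leading order. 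A martingale central limit theorem for the alternating partial sums $\frac{1}{m}\sum_{n:\tau_{n}\le s/\lambda}(-1)^{n}B_{n}$ --- whose predictable quadratic variation, correctly rescaled, converges to $\frac{1}{m^{2}\nu}\int_{0}^{s}|\mathbf{P}_{u}|^{3}\,du$ (two powers of $|\mathbf{P}|$ from block length times magnitude, a third from the thinning of reflections at large momentum) and whose jumps vanish in the limit --- then yields, jointly with the previous step and a standard nesting-of-filtrations argument showing the limiting Gaussian martingale is independent of $\mathbf{P}$, the asserted Skorokhod convergence to $\big(|\mathbf{P}_{s}|,\,\frac{1}{m\nu^{\frac{1}{2}}}\int_{0}^{s}d\mathbf{B}_{r}'|\mathbf{P}_{r}|^{\frac{3}{2}}\big)$.

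The hard part is uniform control near $\hat{K}=0$. Unlike the null-recurrent process of Theorem~\ref{Main}, which spends almost all of a long interval at momenta of order $t^{\frac{1}{2}}$ where only a transmitted wave and a single reflected wave matter, the Ornstein--Uhlenbeck process visits the region $|\mathbf{P}_{u}|\lesssim\lambda^{\frac{1}{2}}$, that is $|\hat{K}_{r}|=\mathit{O}(1)$, where (a) the parabolic approximation $\hat{E}(p)\approx\frac{p^{2}}{2M}$ to the dispersion fails and the kets $|p\rangle_{\scriptscriptstyle{Q}}$ mix many lattice shifts, (b) the reflection intensity $\nu|\hat{K}_{r}|^{-1}$ is large so the Poissonization of the previous step must be replaced by a dedicated small-momentum estimate, and (c) the identification $|\hat{K}_{r}|\approx\hat{E}^{\frac{1}{2}}(\hat{K}_{r})$ breaks down. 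These excursions ought to be negligible for the rescaled $\hat{Y}$ --- the limiting integrand $|\mathbf{P}_{u}|^{\frac{3}{2}}$ vanishes there, and an occupation-time bound for the Ornstein--Uhlenbeck process gives only $\mathit{O}(\lambda^{\frac{1}{2}})$ time with $|\mathbf{P}_{u}|\lesssim\lambda^{\frac{1}{2}}$ --- but converting this into a proof requires moment bounds for $\hat{K}_{r}$ and for $|\hat{K}_{r}|^{-1}$ that are uniform in $\lambda$ and a quantitative form of the first two steps valid down to $\mathit{O}(1)$ momenta. If one additionally wanted the conclusion to feed back into the quantum model via Conjecture~\ref{SemiClassical}, there would be the further issue --- as in the classical model of~\cite{Hairer} --- of legitimizing the simultaneous passage of the semiclassical limit and the emergent diffusive time-change rather than taking them in sequence.
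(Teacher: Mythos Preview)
The paper does not prove this statement: it is explicitly labeled a \emph{conjecture}, and the surrounding section (Sect.~\ref{SecConj}) opens by saying that the dissipative model ``does not concern the mathematical results of this paper.'' There is therefore no paper proof to compare your proposal against.

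That said, your sketch is a reasonable transcription of the architecture behind Theorem~\ref{Main} to the positive-recurrent setting, and you correctly identify the main obstruction the paper's method would face here. In the frictionless model the key structural input is the submartingale property of $E^{\frac{1}{2}}(K_{t})$ (Prop.~\ref{Basics}), which via Lem.~\ref{EnergyLemma} guarantees that $|K_{r}|$ spends almost all of $[0,t]$ at values $\gg 1$, so that the high-momentum asymptotics of Lem.~\ref{BadTerms} and Prop.~\ref{TimeFlip} are in force for the bulk of the trajectory. With friction, that mechanism is gone: the Ornstein--Uhlenbeck limit visits neighborhoods of zero with positive rate, and the estimates of Lem.~\ref{BadTerms} and Prop.~\ref{TimeFlip} degenerate there exactly as you describe in (a)--(c). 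Your occupation-time heuristic for negligibility is plausible, but turning it into a proof would require replacements for both Lem.~\ref{EnergyLemma} and the submartingale inputs of Sect.~\ref{SecTransition}, which the paper does not supply. In short, your proposal is a credible program rather than a proof, and the paper offers no more than that either.
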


If the Dirac comb is set to zero, then there is the standard limit result  for $\hat{Y}_{t}$ given by the convergence in law    
$$ \hspace{4cm} \left(  \lambda^{\frac{1}{2}}\hat{K}_{\frac{t}{\lambda} }, \,\lambda^{\frac{1}{2} }  \hat{Y}_{\frac{t}{\lambda} }  \right)\Longrightarrow  \Big( \mathbf{P}_{t},\,  \frac{1}{m} \int_{0}^{t}dr\mathbf{P}_{r} \Big),  \quad \quad \quad t\in[0,t_{0}].     $$
The Markov process $\hat{K}_{t}$ without the Dirac comb has the same jump rates as the classical linear Boltzmann equation studied in~\cite{NewtonII}.  The case $\alpha>0$ is subdiffusive, since the spread in position is on the order  $ \lambda^{-\frac{3}{8}}$ rather than  $ \lambda^{-\frac{1}{2}}$ for $\lambda\ll 1$.

\subsubsection{Physical characteristics of the scaling regime}

The following lists the essential characteristics for the regime given by~(\ref{Scales}).  The same mathematical mechanisms outlined in Sect.~\ref{SecHeuristics} should apply for this model, so I focus on a qualitative  comparison of the relevant physical scales.

\begin{enumerate}

\item \textbf{The Brownian limit } 

The scaling of the mass ratio as $\lambda=\frac{m}{M}$ while considering the dynamics over a time interval  $[0, \frac{t_{0}}{\lambda}]$ growing proportionally to $\lambda^{-1}$ is the standard regime for a Brownian limit.   
Since the temperature $\beta^{-1}$ is fixed, the typical speed for a single particle from the reservoir  and the test particle are $(\frac{1}{m\beta})^{\frac{1}{2}}$ and $\lambda^{\frac{1}{2}}(\frac{1}{m\beta})^{\frac{1}{2}}$,  respectively.   Hence, the reservoir particles are moving faster than the test particle by a factor $\lambda^{-\frac{1}{2}}\gg 1$. On the other hand, the  momentum is $(\frac{m}{\beta})^{\frac{1}{2}}$ for a single gas particle    and $\lambda^{-\frac{1}{2}}(\frac{m}{\beta})^{\frac{1}{2}}$ for the test particle.  Individual collisions with gas particles impart momenta that are much smaller than the  typical momentum of the test particle.

\item \textbf{Frequency of phase oscillations versus collisions}

As I described in Section~\ref{SubSecLambda}, the autonomous evolution arising for the densities in the extended-zone scheme representation depend on the noise operating on a comparatively slow scale to the Hamiltonian dynamics.  More precisely, certain phase oscillations driven by the Hamiltonian occur on a smaller time scale than the mean time between collisions.  I characterize the relevant Hamiltonian-driven phase cancellations with the frequency
\begin{align*} 
   \liminf_{p\rightarrow \infty}\inf_{\substack{n\in \Z \\ n\neq 0}   } \frac{1}{\hbar}\big| E(p+n)-E(p)   \big|  = &  \frac{1}{\hbar}\liminf_{n\rightarrow \infty}\hat{g}_{n}\\ = &\frac{2\alpha}{ \hbar  a } =\frac{2\alpha_{0}}{ h a_{0}  }\lambda^{-\varrho} \gg 1  
   \end{align*}
for $\lambda\ll 1$.  I associate the frequency of collisions with the escape rates $\mathcal{E}_{\lambda}(p)$ for $|p|$ on the order  $(\frac{M}{\beta})^{\frac{1}{2}}= (\frac{m}{\beta})^{\frac{1}{2}}\lambda^{-\frac{1}{2}}$:  
\begin{align}\label{Trapeze}
\mathcal{E}_{\lambda}(p)= \frac{( 1+\lambda)\eta }{ 2 m }  \int_{\R}dq|q|\mu\big(2^{-1}( 1+\lambda )q+\lambda p  \big) \approx  \eta\big(  \frac{  8 }{ m\beta \pi  } \big)^{\frac{1}{2}}.  
\end{align}
 Therefore, the phase oscillations occur on a shorter time scale than the collisions for small enough $\lambda$.

\item \textbf{Kinetic energy outweighs potential energy    }

The kinetic energy of the test particle is typically larger than the momentum stored in the potential by a factor of $\lambda^{-1}$.   For this comparison, I associate the typical potential energy with the strength $\alpha$ of the $\delta$-potential divided by the period length: $\frac{\alpha}{a}=\frac{\alpha_{0}}{a_{0}}\lambda$.  This is the potential energy, for instance, in a spatial wave $\phi(x)=(2 L)^{-\frac{1}{2}}1_{[-L,L]}(x)$ in the limit $L\rightarrow \infty$.  The mean kinetic energy is simply $\beta^{-1}$.

\item  \textbf{Reciprocal lattice momenta and the reflection bands  }

 The half-spaced reciprocal lattice of momenta are multiples of $\frac{\pi\hbar}{ a} =\frac{\pi h}{ a_{0}}$.  This is on the same order in $\lambda$ as the typical momentum transfers for collisions $\approx 2(\frac{m}{\beta})^{\frac{1}{2}} $, and so the test particle's momentum has a chance of being kicked out of a given Bloch cell after several collisions.  The reflection band around a lattice momentum $\frac{\pi \hbar }{a }n$ for $n\in \Z$, where nonnegligible probabilities for Bragg reflections may be found, has a width of approximately $\frac{2M\alpha}{\hbar |n|}=\frac{2m\alpha_{0}}{h |n|} $ for high enough $n$ so  that  $\frac{2\pi h}{ a_{0}}\gg \frac{m\alpha_{0}}{2h |n|} $.  By a similar idea as in~(\ref{IntRefl}),  the probability of a reflection when the  particle's momentum is randomly dropped in the interval $[\frac{\pi\hbar (2n-1)}{2a},\frac{\pi\hbar (2n+1)}{2a}] $ around $p= \frac{\pi \hbar n}{a} $ is approximately
 \begin{align} \label{Nip}
 \frac{2M\alpha   }{ \hbar |p|  } = \frac{2m\alpha_{0} }{h|p| }\quad \text{for} \quad \frac{p^{2}}{2M}\gg \frac{\alpha}{a}.    
 \end{align}

\item  \textbf{Bragg reflections are frequent over the time period $[0,\lambda^{-1}t_{0}]$ } 

The frequency of Bragg reflections is equal to the frequency of collisions multiplied by a local averaged probability for reflection after a collision, which depends on the current momentum $p$.  Multiplying~(\ref{Trapeze}) with~(\ref{Nip}), the effective frequency of reflections when the test particle has momentum 
 $|p| \gg ( \frac{M\alpha}{a}  )^{\frac{1}{2}} $ is 
$$ \big(\frac{ 32 m   }{\beta \pi}  \big)^{\frac{1}{2}} \frac{ \alpha_{0}\eta  }{ h |p| }=\frac{\nu}{|p|}  .$$  Since the momentum will typically be found on the scale $|p|\approx \lambda^{-\frac{1}{2}}(\frac{m}{\beta})^{\frac{1}{2}} $, a number of reflections  on the order of $\lambda^{-\frac{1}{2}}$ will occur.

\end{enumerate}

\section{ Overview for proof of Theorem~\ref{Main}    }\label{SecProofOutline}

 In this section, I will state the results that enter directly into the proof of Thm.~\ref{Main}, and then proceed with a presentation of the proof assuming those results.  Thus, this section  concerns only the classical Markovian process $K_{r}$ whose densities obey the Kolmogorov equation~(\ref{Master}).

Recall that   $\mathcal{E}_{r}$ is the square root of the energy at time $r\in \R_{+}$: $E^{\frac{1}{2}}(K_{r})$.  In Sect.~\ref{SecTransition}, I show that $\mathcal{E}_{r}$ is a submartingale.  The following theorem states a central limit theorem for a rescaled version of $\mathcal{E}_{r}$.  Since $E^{\frac{1}{2}}(k)-|k|$ is a bounded function, the convergence of $t^{-\frac{1}{2}}\mathcal{E}_{st}$, $s\in [0,1]$  to the absolute value of a Brownian motion as $t\rightarrow \infty$ is equivalent to the same statement for $t^{-\frac{1}{2}}|K_{st}|$.  Thus, the first component for the convergence stated in Thm.~\ref{Main} follows directly from Thm.~\ref{SubMartCLT}.

\begin{theorem}\label{SubMartCLT}
In the limit $t\rightarrow \infty$, the processes $t^{-\frac{1}{2}}\mathcal{E}_{st}$, $s\in [0,1]$ converge in law to the absolute value of a Brownian motion with diffusion constant $\sigma$. Moreover, the martingale and predictable  components $M_{r}$, $A_{r}$  in the Doob-Meyer decomposition for $\mathcal{E}_{r}$ have  convergence in law for large $t$   given by
$$\hspace{4cm} \big((\sigma t)^{-\frac{1}{2}}M_{st},\,(\sigma t)^{-\frac{1}{2}}A_{st}    \big)   \stackrel{\frak{L}}{\Longrightarrow} \Big(\mathbf{B}_{s},\,\sup_{0\leq r\leq s}-\mathbf{B}_{r}\Big), \hspace{1.5cm}  s\in[0,1],  $$
where $\mathbf{B}$ is a standard Brownian motion.
 The convergences are with respect to the uniform metric.     
\end{theorem}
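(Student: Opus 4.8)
The plan is to prove a functional central limit theorem for the submartingale $\mathcal{E}_r = E^{1/2}(K_r)$ by splitting it through its Doob--Meyer decomposition and treating the two pieces separately: a martingale FCLT for the martingale part, and an identification of the rescaled compensator as the Skorokhod reflection functional of the limiting Brownian motion. Fix the decomposition $\mathcal{E}_r = \mathcal{E}_0 + M_r + A_r$ supplied by Section~\ref{SecTransition}, where $A_r = \int_0^r (\mathcal{G}E^{1/2})(K_u)\,du$ is the compensator ($\mathcal{G}$ denoting the generator of $K$), nondecreasing by the submartingale property, and $M$ a locally square-integrable martingale with predictable bracket $\langle M\rangle_r = \int_0^r \Gamma(K_u)\,du$, where $\Gamma(k) = \int\mu_k(dw)\,(E^{1/2}(k+w)-E^{1/2}(k))^2$ is the carr\'e-du-champ of the jump kernel $\mu_k$ of $K$. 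Everything rests on three estimates, valid for $|k|\gg 1$, that I would derive from the asymptotics $E^{1/2}(k)=|k|+O(|k|^{-1})$, $(E^{1/2})'(k)\to\pm 1$, together with the reflection probabilities~(\ref{ReflProb}) and the closed Kr\"onig--Penney formulas for the $\kappa_v(k,n)$: (a) since $E^{1/2}$ is even and smooth off $\tfrac12\Z$, a lattice (Bragg) jump $k\mapsto k+v-\mathbf{n}\approx -(k+v)$ moves $E^{1/2}$ by the same amount as the plain L\'evy jump $k\mapsto k+v$ up to an error $O(|k|^{-1})$, whence $\Gamma(k)\to\sigma$; (b) for the same reason the leading contributions of the reflecting and non-reflecting outcomes to the drift cancel, so $(\mathcal{G}E^{1/2})(k)=O(|k|^{-\beta})$ for some $\beta>1$ (the detailed estimate gives $\beta=2$; List~\ref{Assumptions}(2) is used to control the sum over $n$, and List~\ref{Assumptions}(3) gives a uniform positive lower bound on $(\mathcal{G}E^{1/2})(k)$ for $|k|=O(1)$, so $A$ genuinely behaves like a boundary local time at the origin); (c) the jumps of $\mathcal{E}_r$ are bounded in size by $|v|+C$ with $v$ the L\'evy increment, so by List~\ref{Assumptions}(1) the largest jump over $[0,t]$ is $O(\log t)$.

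The next step is an occupation-time bound: for each $\epsilon>0$, $t^{-1}\int_0^t \mathbf{1}[\,|K_r|\le\epsilon t^{1/2}\,]\,dr\to 0$ in probability. Heuristically this holds because $\mathcal{E}_r\ge\mathcal{E}_0+M_r$ dominates a martingale whose bracket grows at rate $\approx\sigma$ and so cannot linger near $0$; I would make it quantitative using that $E(K_r)-\sigma r$ is a martingale (equivalently~(\ref{MeanEnergy})) and Doob's inequality, $\mathbb{E}[\sup_{s\le 1}E(K_{st})]\le 4\,\mathbb{E}[E(K_t)]=O(t)$, to bound the expected time spent in $[-\epsilon t^{1/2},\epsilon t^{1/2}]$ by $o(t)$. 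Combining (a) with this bound gives $(\sigma t)^{-1}\langle M\rangle_{st}=(\sigma t)^{-1}\int_0^{st}\Gamma(K_u)\,du\to s$ uniformly on $[0,1]$ in probability (split the integral at $|K_u|=\epsilon t^{1/2}$: on the large set $\Gamma\approx\sigma$, while the complement has negligible measure and $\Gamma$ is bounded); with (c) this verifies the hypotheses of the martingale FCLT (Rebolledo / Jacod--Shiryaev), so $(\sigma t)^{-1/2}M_{st}$ converges in law in $C[0,1]$ to a standard Brownian motion $\mathbf{B}_s$.

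It remains to identify the compensator limit. By (b) and the occupation-time bound, $(\sigma t)^{-1/2}\int_0^{st}\mathbf{1}[\,|K_r|>\epsilon t^{1/2}\,](\mathcal{G}E^{1/2})(K_r)\,dr=O\big(t^{-1/2}\cdot t\cdot(\epsilon t^{1/2})^{-\beta}\big)=o(1)$ since $\beta>1$, so in the limit the nondecreasing process $(\sigma t)^{-1/2}A_{st}$ increases only when $(\sigma t)^{-1/2}\mathcal{E}_{st}$ is arbitrarily small; tightness of $(\sigma t)^{-1/2}A_{st}$ follows from $A_{st}=\mathcal{E}_{st}-\mathcal{E}_0-M_{st}$, the tightness of $M$, and $\mathbb{E}[\sup_{s\le 1}\mathcal{E}_{st}^2]=O(t)$, and path regularity from (c) and the absolute continuity of $A$. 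Passing to any joint subsequential limit $(\zeta_s,\beta_s,\alpha_s)$ of $(\sigma t)^{-1/2}(\mathcal{E}_{st},M_{st},A_{st})$ one obtains $\zeta=\beta+\alpha$ with $\zeta\ge 0$, $\alpha$ nondecreasing, $\alpha_0=0$, $\int_0^1\zeta_s\,d\alpha_s=0$, and $\beta$ a standard Brownian motion; by uniqueness of the Skorokhod reflection map this forces $\alpha_s=\sup_{0\le r\le s}(-\beta_r)$ and $\zeta_s=\beta_s+\sup_{0\le r\le s}(-\beta_r)$, which by L\'evy's theorem has the law of the absolute value of a Brownian motion. Uniqueness of the limit gives convergence of the full family, and since all limits are continuous the convergence upgrades from the Skorokhod to the uniform topology; finally, rescaling $\mathcal{E}$ by $t^{-1/2}=\sigma^{1/2}(\sigma t)^{-1/2}$ produces a reflected Brownian motion with diffusion constant $\sigma$, and the boundedness of $E^{1/2}(k)-|k|$ transfers the statement to $t^{-1/2}|K_{st}|$ (hence to the first component of Theorem~\ref{Main}).

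I expect the main obstacle to be estimate (b) together with the occupation-time bound: both demand quantitative decay at high momenta --- $(\mathcal{G}E^{1/2})(k)=O(|k|^{-\beta})$ with $\beta$ strictly above the Kr\"onig--Penney borderline value $1$ --- extracted from the interplay of the reflection-band widths $\propto|n|^{-1}$ with the Lorentzian reflection probabilities~(\ref{ReflProb}), and confirmation that the null-recurrent process $K_r$ does not spend an anomalous fraction of $[0,t]$ at small momenta. Everything downstream (the martingale FCLT and the Skorokhod/L\'evy identification) is comparatively soft once these structural inputs and the moment and lower-bound conditions of List~\ref{Assumptions} are in hand.
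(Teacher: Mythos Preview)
Your overall architecture---a martingale FCLT for $M$ followed by a Skorokhod-reflection identification of the compensator limit---is sound and is a genuine alternative to what the paper does. For the martingale part the arguments essentially coincide (the paper also works through an auxiliary martingale $m$ whose jumps are dominated by the L\'evy increments, Proposition~\ref{Basics}(3), and whose bracket is shown to be $\approx\sigma t$ via Lemma~\ref{LittleEm}). The paper diverges from you for the compensator: rather than argue that $dA$ is supported near the origin and invoke Skorokhod uniqueness, it compares \emph{squares}. With $G_r=M_r+\sup_{u\le r}(-M_u)$ it writes
\[
t^{-1}\mathcal{E}_{st}^2 - t^{-1}G_{st}^2 \;=\; \mathcal{M}_s^{(t)} + \bigl(\sigma s - \mathcal{A}_s^{(t)}\bigr),
\]
($\mathcal{A}^{(t)}$ the compensator of $t^{-1}G_{\cdot t}^2$), shows both right-hand pieces go to zero, and then reads off $t^{-1/2}(A_{st}-\sup_{r\le st}(-M_r))\to 0$ from the nonnegativity of the left side. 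The point is that this route rests on the \emph{exact} relation $(\mathcal{G}E)(k)=\sigma$ (Proposition~\ref{Basics}(1)) rather than on any sharp pointwise bound for the drift of $E^{1/2}$.

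Your claim (b) is where the risk lies. The estimates the paper actually proves give $\sigma - \tfrac{d}{dr}\langle M,M\rangle_r\le \sigma-\mathcal{V}(K_r)\le \mathbf{a}/(1+|\beta(K_r)|)$ (Proposition~\ref{Basics}(2)--(4)), hence only $(\mathcal{G}E^{1/2})(k)=O\bigl(|k|^{-1}(1+|\beta(k)|)^{-1}\bigr)$; since $\beta(k)=\tfrac12\mathbf{n}(k)\Theta(k)$ vanishes as $k$ approaches $\tfrac12\Z$, this degrades to the borderline $O(|k|^{-1})$ inside the reflection bands---precisely the exponent you flag as critical. A uniform $O(|k|^{-2})$ may well hold (the $\mathcal{V}$ bound is visibly loose, since reflections change $E^{1/2}$ by roughly the same amount as transmissions), but it is not established and would require its own argument. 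What the paper \emph{does} show is that one averaging step over the torus variable $\Theta(K_r)$ recovers the missing decay: after one L\'evy jump, $\Theta$ spreads over $\mathbb{T}$ and $(1+|\beta|)^{-1}$ averages to $O(|k|^{-1}\log|k|)$---this is the mechanism of Lemma~\ref{LittleEm}, which the paper reuses to control its residual term $Z^{(t)}$ in part~(ii). Your Skorokhod route would go through if you replace the pointwise $\beta=2$ claim by this averaged version; as written, (b) is a gap. (The occupation-time heuristic and the jump bound (c) are also looser than what is actually used---compare Lemma~\ref{EnergyLemma} and the $m$/$(M{-}m)$ splitting---but these are secondary to (b).)
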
 
 
 Theorem~\ref{SubMartCLT} only characterizes the behavior for quantities that depend on the absolute value of the momentum process $K_{r}$, and thus the sign-flipping does not play a role.  The concept of ``sign-flips" is only meaningful when the particle has a high momentum from which a L\'evy jump to a momentum with the opposite sign would be unlikely.  It is useful to define a series of stopping times that parse the time interval $[0,t]$ into a series of excursions from the low momentum region.  The following is a list of rough definitions for the notations related to sign-flipping and the time-integral process $Y_{r}=\int_{0}^{r}dvK_{v}$.  More precise definitions are given below, although the details for the definitions are not strictly necessary to understand the structure of the argument in the proof of Thm.~\ref{Main} at the end of this section.  The technical definition for the sign-flip times will not be quite adapted to the original filtration $\mathcal{F}_{r}$.  
 \begin{eqnarray*}
& Y_{s}^{(t)}   & \text{Normalized integral functional: $ Y_{s}^{(t)}=t^{-\frac{5}{4}}\int_{0}^{st}drK_{r}$}\\
&  \tau_{m}\in \R_{+}     &   \text{Time of $m$th sign-flip} \\
 &\mathbf{N}_{r}\in \mathbb{N} \text{\,}    & \text{Number of $\tau_{m}$ up to time $r\in \R_{+}$ }\\
& (\varpi_{n},\,\varsigma_{n}) \subset \R_{+}       &  \text{The $n$th incursion  interval into the low momentum region $\approx |K_{r}|\leq t^{\frac{3}{8}}$ }\\
& (\varsigma_{n},\varpi_{n+1}) \subset \R_{+}       &  \text{The $n$th excursion interval from the  low momentum region }\\
&  \Upsilon_{r} \in \mathbb{N}         &    \text{Number of excursions to have begun by time $r$} \\
&\mathcal{F}_{r}    &    \text{Information up to time $r$ } \\ 
&\widetilde{\mathcal{F}}_{r}    &    \text{Information up to the time of the sign-flip following $r$  }\\
& \mathbf{h}_{s}^{(t)}& \text{Martingale with respect to $\widetilde{\mathcal{F}}_{s}^{(t)}:=\widetilde{\mathcal{F}}_{st}$ approximating $t^{-\frac{1}{2}}M_{st}$ for $t\gg 1$ }\\
& \mathbf{m}_{s}^{(t)}& \text{Martingale with respect to $\widetilde{\mathcal{F}}_{s}^{(t)}$ approximating $Y_{s}^{(t)}$ as $t\gg 1$  }
\end{eqnarray*}

Let $S:\R\rightarrow \{\pm 1\}$ be the sign function.  A \textit{sign-flip} is said to occur at a Poisson time $t_{n}$ if  $S(K_{t_{n}})=S(K_{t_{n+1}})$ and there are an odd number $m$ of sign changes leading up to $t_{n}$: $S(K_{t_{n-r}})=-S(K_{t_{n-r+1}})$ for $r\in [1,m]$ and $S(K_{t_{n-m-1}})=S(K_{t_{n-m}})$.  This definition avoids counting double-flips, which occur frequently in the dynamics and would distort the counting.  The sign-flips are not hitting times, since information from the following Poisson time is required to identify them.  Related matters are discussed in the beginning of Sect.~\ref{SubSecRefl}.  To define the time intervals where sign-flips will be counted,  let the stopping times
$\varsigma_{j},\varpi_{j}$ be given by $\varsigma_{0}=\varpi_{1}=0$ and 
\begin{align*}
\varpi_{j}= \min \{r\in (\varsigma_{j-1},\infty)\,\big|\, |K_{r}|\leq  t^{\frac{3}{8}-\iota}         \},\quad \quad
\varsigma_{j}= \min \{ r\in (\varpi_{j},\infty)\,\big| \, |K_{r}|\geq 2 t^{\frac{3}{8}-\iota}            \},
\end{align*} 
for some $0<\iota\ll 1$.  The intervals $[\varpi_{j},\,\varsigma_{j})$ are the incursions into low momentum for the process $K_{r}$.  Let $\Upsilon_{r}\in \mathbb{N}$ be the number of excursions begun by time $r$: $\max\{j\in \mathbb{N}\,|\,r\geq \varsigma_{j}     \}$.   The times $\tau_{m}$ and increments $\Delta \tau_{m}$ are defined inductively for $m\in \mathbb{N}$ such that  
\begin{itemize}
\item $\tau_{1}=\varsigma_{1}$,
\item $\Delta \tau_{m}$  is the waiting time after  $\tau_{m}$ such that  either a sign-flip occurs or $|K_{r}|$ jumps out of $[\frac{1}{2}|K_{\tau_{m}}|,\frac{3}{2}|K_{\tau_{m}}|]$,
\item $\tau_{m+1}=\tau_{m}+\Delta \tau_{m}$  when $\tau_{m}+\Delta \tau_{m}$ occurs during an excursion, and
\item  $\tau_{m+1}=\varsigma_{j}$ when $\tau_{m}< \varpi_{j}$ and  $\tau_{m}+\Delta \tau_{m} \geq  \varpi_{j}$.  
\end{itemize}
 In most cases, $\tau_{m}$ are sign-flips and $\Delta \tau_{m}=\tau_{m+1}-\tau_{m}$.  
  Let $\mathbf{N}_{r}$ be the number of $\tau_{m}$'s to have occurred up to time $r$. 
 
 I denote the standard filtration generated by the process $K_{r}$ with $\mathcal{F}_{r}=\sigma\big(K_{s}:\, 0\leq s\leq r\big)$.   Let  $\widetilde{\mathcal{F}}_{r}$ be the $\sigma$-algebra given by
$$\widetilde{\mathcal{F}}_{r}=\left\{  \begin{array}{cc} \sigma\big(\Delta\tau_{m},\, K_{s}:\, 0\leq s\leq \tau_{m}+\Delta\tau_{m}\big) &   r\in [\tau_{m},\tau_{m}+\Delta\tau_{m}   ), \\   \quad & \quad \\   \mathcal{F}_{r}   \quad   &  \forall(m): r\notin [\tau_{m},\tau_{m}+\Delta\tau_{m}   ) .   \end{array}\right.    $$
When $r \in [\tau_{m}, \tau_{m}+\Delta\tau_{m})$ for some $m$,  the $\sigma$-algebra $\widetilde{\mathcal{F}}_{r}$ includes knowledge of the time $\tau_{m}+\Delta\tau_{m}$ and all information about the process $K_{r}$ up to time $\tau_{m}+\Delta\tau_{m}$.  Since the sign-flip times are not necessarily adapted by the remark above,  $\widetilde{\mathcal{F}}_{r}$ usually contains some  information from the Poisson time following $\tau_{m}+\Delta\tau_{m}$ to verify that the momentum does not change sign again.   For $\widetilde{\mathcal{F}}_{s}^{(t)}:=\widetilde{\mathcal{F}}_{st} $, define the $\widetilde{\mathcal{F}}_{s}^{(t)}$-adapted martingales
\begin{align*}
 \mathbf{m}_{s}^{(t)}:= & t^{-\frac{5}{4}}\sum_{m=1}^{\mathbf{N}_{st}}\,K_{\tau_{m}}\Big(\Delta\tau_{m}- \mathbb{E}\big[\Delta \tau_{m}\,\big|\, \widetilde{\mathcal{F}}_{\tau_{m}^{-}}\big]   \Big), \\
 \mathbf{h}_{s}^{(t)}:=&  t^{-\frac{1}{2}}\sum_{m=1}^{\mathbf{N}_{st}}\Big(\mathcal{E}_{\tau_{m}+\Delta \tau_{m}}-\mathcal{E}_{\tau_{m}}-  \mathbb{E}\big[\mathcal{E}_{\tau_{m}+\Delta \tau_{m}}-\mathcal{E}_{\tau_{m}}\,\big|\, \widetilde{\mathcal{F}}_{\tau_{m}^{-}}\big]  \Big).
 \end{align*}

 The process $\mathcal{E}_{r}$ in the statement of Lem.~\ref{LemMartApproxQuad} can be equivalently replaced by $|K_{r}|$.  The lemmas below  all have the purpose of placing my convergence questions within the context of martingale theory.  In particular, Part (2) of Lem.~\ref{LemMartApproxQuad} is to establish independence for the copies of Brownian motion $\mathbf{B}$ and $\mathbf{B}'$ in the statement of Thm.~\ref{Main}.  I will postulate the limiting law in a  different way in the proof of Thm.~\ref{Main}.

\begin{lemma}[Martingale approximations]\label{LemMartApprox}
In the limit $t\rightarrow \infty$, there are the following convergences in probability:
\begin{enumerate}
\item $\sup_{0\leq s\leq 1}\big|  \mathbf{h}_{s}^{(t)}-t^{-\frac{1}{2}}M_{st}  \big|\Longrightarrow 0,$
\item $\sup_{0\leq s\leq 1}\big| \mathbf{m}_{s}^{(t)}-Y^{(t)}_{s}   \big|\Longrightarrow 0$.  
\end{enumerate}
 \end{lemma}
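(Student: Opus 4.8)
The plan is to realize each of $\mathbf{h}^{(t)}$ and $\mathbf{m}^{(t)}$ as a sum over the sign-flip cycles $[\tau_m,\tau_{m+1})$ that, after telescoping, reproduces the target process --- $t^{-\frac{1}{2}}M_{st}$ for part~(1) and $Y^{(t)}_s$ for part~(2) --- up to a bounded list of explicit remainder processes, and then to bound each remainder below the relevant normalization. All estimates are made on a good event $\mathcal{G}_t$ with $\mathbb{P}(\mathcal{G}_t)\to 1$, on which: (i) every cycle begun before time $t$ ends in a genuine sign-flip rather than by $|K_r|$ leaving $[\frac{1}{2}|K_{\tau_m}|,\frac{3}{2}|K_{\tau_m}|]$ --- the complementary probability is exponentially small, since from a momentum of order $t^{\frac{1}{2}}$ a sign-flip occurs on time scale $\approx t^{\frac{1}{2}}$ whereas leaving a fixed multiplicative band of the momentum is a random-walk event requiring time $\approx t$; (ii) $\sup_{r\le t}\mathcal{E}_r\le Ct^{\frac{1}{2}}\log t$ and $\max_{m\le\mathbf{N}_t}\Delta\tau_m\le Ct^{\frac{1}{2}}\log t$, from the submartingale maximal bound coming from $\mathbb{E}[\mathcal{E}_t^2]=\mathbb{E}[E(K_0)]+\sigma t=\mathit{O}(t)$ and the (nearly) exponential law of the $\Delta\tau_m$; (iii) $\tau_1\le t^{\frac{7}{8}}$ and the total Lebesgue measure of $\{r\le t:|K_r|\le 2t^{\frac{3}{8}-\iota}\}$ is at most $t^{\frac{7}{8}-\iota'}$ for some $\iota'<\iota$, by first-passage and occupation-time estimates for the momentum, in the quantitative form dictated by the scaling limit $t^{-\frac{1}{2}}\mathcal{E}_{st}\Rightarrow|\mathbf{B}_s|$ of Thm.~\ref{SubMartCLT}.

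\textbf{Part (1).} Write $\mathcal{E}_r=\mathcal{E}_0+M_r+A_r$ ($M_0=A_0=0$) for the Doob--Meyer decomposition of Thm.~\ref{SubMartCLT} and put $\Delta\mathcal{E}_m:=\mathcal{E}_{\tau_m+\Delta\tau_m}-\mathcal{E}_{\tau_m}$. On $\mathcal{G}_t$ the cycle times fail to be consecutive only across the incursion intervals, so telescoping gives
$$\sum_{m=1}^{\mathbf{N}_{st}}\Delta\mathcal{E}_m=\mathcal{E}_{st}-\mathcal{E}_0-\sum_{j\le\Upsilon_{st}}\big(\mathcal{E}_{\varsigma_j}-\mathcal{E}_{\varpi_j}\big)-R^{\mathrm{end}}_s ,$$
with $R^{\mathrm{end}}_s$ the single partial-cycle piece straddling $st$ together with the initial piece over $[0,\tau_1)$; the same identity for $A$ produces remainders $A^{\mathrm{gap}}_s$ and $A^{\mathrm{end}}_s$. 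Substituting both into the definition of $\mathbf{h}^{(t)}_s$ leaves
$$t^{\frac{1}{2}}\mathbf{h}^{(t)}_s=M_{st}-\sum_{j\le\Upsilon_{st}}\big(M_{\varsigma_j}-M_{\varpi_j}\big)-\big(R^{\mathrm{end}}_s-A^{\mathrm{end}}_s\big)-\sum_{m=1}^{\mathbf{N}_{st}}\Big(\mathbb{E}\big[\Delta\mathcal{E}_m\,\big|\,\widetilde{\mathcal{F}}_{\tau_m^-}\big]-(A_{\tau_m+\Delta\tau_m}-A_{\tau_m})\Big) ,$$
so it remains to bound each of the three corrections by $\mathit{o}_{\mathbb{P}}(t^{\frac{1}{2}})$, uniformly in $s$. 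The first is a martingale in $j$ with predictable bracket $\lesssim\sigma\cdot(\text{total incursion time})\lesssim t^{\frac{7}{8}-\iota'}$, so Doob's $L^2$ inequality makes its supremum $\mathit{O}_{\mathbb{P}}(t^{\frac{7}{16}})$. The second, on $\mathcal{G}_t$, is an $M$-increment over at most one cycle (hence $\mathit{O}_{\mathbb{P}}(t^{\frac{1}{4}}\log t)$ uniformly) plus the $M$-increment over $[0,\tau_1)$ (hence $\mathit{O}_{\mathbb{P}}(t^{\frac{3}{8}})$). In the third, writing $\Delta\mathcal{E}_m=(M_{\tau_m+\Delta\tau_m}-M_{\tau_m})+(A_{\tau_m+\Delta\tau_m}-A_{\tau_m})$ splits off a martingale in $m$ --- the $\widetilde{\mathcal{F}}$-centering of the $A$-increments --- with bracket $\lesssim\sum_m\mathbb{E}[(A_{\tau_m+\Delta\tau_m}-A_{\tau_m})^2]=\mathit{o}(t)$ (because $A_{st}=\mathit{O}_{\mathbb{P}}(t^{\frac{1}{2}})$ bounds the sum of increments and each increment is $\mathit{o}(t^{\frac{1}{2}})$ on $\mathcal{G}_t$), and leaves the delicate piece $\sum_m\mathbb{E}\big[M_{\tau_m+\Delta\tau_m}-M_{\tau_m}\,\big|\,\widetilde{\mathcal{F}}_{\tau_m^-}\big]$. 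Because $\widetilde{\mathcal{F}}_{\tau_m^-}$ contains, besides $\mathcal{F}_{\tau_m}$, the first Poisson jump of cycle $m$ --- which is needed to certify that cycle $m-1$ ended in a sign-flip --- this conditional expectation is not identically zero; one peels off that single Poisson step, which is $\widetilde{\mathcal{F}}_{\tau_m^-}$-measurable and $\mathit{O}(1)$ per cycle, and then uses the strong Markov property to present the remaining sum as a martingale with bracket $\mathit{o}(t)$. I expect this interplay between the martingale $M$ and the bounded look-ahead built into $\widetilde{\mathcal{F}}$ to be the main obstacle of the argument.

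\textbf{Part (2).} On $\mathcal{G}_t$, decompose
$$\int_0^{st}K_r\,dr=\int_0^{\tau_1}K_r\,dr+\sum_{m=1}^{\mathbf{N}_{st}-1}\int_{\tau_m}^{\tau_{m+1}}K_r\,dr+\int_{\tau_{\mathbf{N}_{st}}}^{st}K_r\,dr ,$$
split each middle integral into its part over $[\tau_m,\tau_m+\Delta\tau_m)$ and the possible incursion gap, and write $\int_{\tau_m}^{\tau_m+\Delta\tau_m}K_r\,dr=K_{\tau_m}\Delta\tau_m+G_m$ with $G_m:=\int_{\tau_m}^{\tau_m+\Delta\tau_m}(K_r-K_{\tau_m})\,dr$. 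The incursion-gap integrals total $\mathit{O}\big(t^{\frac{3}{8}-\iota}\cdot t^{\frac{7}{8}-\iota'}\big)=\mathit{o}(t^{\frac{5}{4}})$; the endpoint integrals span a short interval on the left ($\tau_1\le t^{\frac{7}{8}}$ with $|K_r|\le 2t^{\frac{3}{8}-\iota}$) and at most one cycle on the right, so both are $\mathit{o}_{\mathbb{P}}(t^{\frac{5}{4}})$. On $\mathcal{G}_t$, $G_m$ is the time integral over cycle $m$ of the centered random walk $r\mapsto K_r-K_{\tau_m}$ --- within a cycle the only moves are the L\'evy kicks, symmetric since $j(v)=j(-v)$, and the $\mathit{O}(1)$ sign-preserving double reflections, again symmetric in net displacement --- so $|G_m|=\mathit{O}_{\mathbb{P}}(t^{\frac{3}{4}}\log t)$, $\mathbb{E}[G_m\mid\widetilde{\mathcal{F}}_{\tau_m^-}]$ is of lower order, and $\sum_{m\le\mathbf{N}_{st}}G_m$ is, up to a small drift, a martingale in $m$ with bracket $\lesssim\mathbf{N}_{st}\,t^{\frac{3}{2}}(\log t)^2=\mathit{O}(t^{2}(\log t)^2)$, hence $\mathit{o}_{\mathbb{P}}(t^{\frac{5}{4}})$ uniformly by Doob. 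Finally, $t^{-\frac{5}{4}}\sum_m K_{\tau_m}\Delta\tau_m$ differs from $\mathbf{m}^{(t)}_s$ by $t^{-\frac{5}{4}}\sum_m K_{\tau_m}\,\mathbb{E}[\Delta\tau_m\mid\widetilde{\mathcal{F}}_{\tau_m^-}]$, and here the sign-flipping does the work: consecutive $K_{\tau_m}$ are approximate momentum reflections of one another, so their signs alternate and $|K_{\tau_{m+1}}|-|K_{\tau_m}|=\mathit{O}_{\mathbb{P}}(t^{\frac{1}{4}})$, while $\mathbb{E}[\Delta\tau_m\mid\widetilde{\mathcal{F}}_{\tau_m^-}]$ equals $\nu^{-1}|K_{\tau_m}|$ to leading order and depends on $K_{\tau_m}$ only through $|K_{\tau_m}|$ up to lower-order corrections. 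Pairing consecutive terms and using $\mathcal{E}_r-|K_r|=\mathit{O}(|K_r|^{-1})$ turns the sum into $\nu^{-1}\sum_l\big(|K_{\tau_{2l}}|+|K_{\tau_{2l+1}}|\big)\big(\mathcal{E}_{\tau_{2l}}-\mathcal{E}_{\tau_{2l+1}}\big)$ plus terms of total size $\mathit{O}_{\mathbb{P}}(t)$; and $\sum_l\big(\mathcal{E}_{\tau_{2l}}-\mathcal{E}_{\tau_{2l+1}}\big)$, after removing the increasing part bounded by $A_{st}=\mathit{O}_{\mathbb{P}}(t^{\frac{1}{2}})$, is a martingale with bracket $\lesssim\sigma t$ and so is $\mathit{O}_{\mathbb{P}}(t^{\frac{1}{2}})$; multiplying by the $\mathit{O}_{\mathbb{P}}(t^{\frac{1}{2}})$ prefactor gives $\mathit{O}_{\mathbb{P}}(t)=\mathit{o}(t^{\frac{5}{4}})$, with the supremum over $s$ controlled by Doob for the martingale pieces and by monotonicity for the increasing ones.
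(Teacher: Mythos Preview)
Your outline for Part~(2) matches the paper's architecture --- strip off incursions and endpoint pieces (this is Lem.~\ref{Crave}), replace each cycle integral by $K_{\tau_m}\Delta\tau_m$, and then kill the drift $\sum_m K_{\tau_m}\,\mathbb{E}[\Delta\tau_m\mid\widetilde{\mathcal{F}}_{\tau_m^-}]$ by sign-alternation pairing (the paper's step~(iii), via Prop.~\ref{TimeFlip}(1)). Your pairing step needs a little more care than ``multiply by the $O_P(t^{1/2})$ prefactor'': the prefactor varies with $l$ and $|K_{\tau_{2l+1}}|$ is not $\mathcal{F}_{\tau_{2l}}$-measurable, so you must split off $|K_{\tau_{2l+1}}|-|K_{\tau_{2l}}|$ and treat the main piece as a discrete stochastic integral; this is exactly how the paper finishes~(iii). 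Also, the probability that some cycle ends by band-exit rather than a sign-flip is not exponentially small but $O(\log|k|/|k|^2)$ per cycle by Prop.~\ref{TimeFlip}(4); the union over $O(t^{1/2})$ cycles still tends to zero, so your good-event reduction survives.

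The genuine gap is your handling of $G_m=\int_{\tau_m}^{\tau_m+\Delta\tau_m}(K_r-K_{\tau_m})\,dr$. You assert $\mathbb{E}[G_m\mid\widetilde{\mathcal{F}}_{\tau_m^-}]$ is of lower order because the L\'evy kicks and double-reflections are symmetric. But the upper limit $\Delta\tau_m$ is the first sign-flip time, a functional of the torus trajectory $\Theta(K_r)$; conditioning on ``no flip yet'' destroys the $v\mapsto-v$ symmetry, since the flip probability depends on $(K_{\tau_m}+\sum v_i)\bmod\tfrac12$ and not symmetrically on $\sum v_i$. Without a quantitative bound on this bias the drift $\sum_m\mathbb{E}[G_m\mid\widetilde{\mathcal{F}}_{\tau_m^-}]$ could in principle be $\sim\mathbf{N}_t\cdot t^{3/4}\sim t^{5/4}$, which is the very scale you must beat. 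The paper spends its steps~(i)--(ii) on this: it coarse-grains each cycle into sub-blocks of $\lfloor|K_{\tau_m}|^\beta\rfloor$ Poisson steps, partial-sums to produce $H_m$, and then invokes the key estimate Lem.~\ref{Fraiser}(2), namely $\big|\mathbb{E}[(|K_{t_M}|-|K_0|)\,\chi(\mathcal{N}_\tau>M)]\big|\le C|k|^{\beta-1+\zeta}$, to obtain $|\mathbb{E}[H_m\mid\widetilde{\mathcal{F}}_{\tau_m^-}]|\le C|K_{\tau_m}|^{1+\zeta}$. That estimate itself rests on comparison with the idealized reflection law (Lem.~\ref{LemTotVar}) and on exponential ergodicity of the torus chain $\Theta(K_{t_n})$; it is not a consequence of $j(v)=j(-v)$. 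You need either to import this coarse-graining and Lem.~\ref{Fraiser}(2), or to produce a genuine reflection symmetry of the joint law of the walk and its flip clock --- the latter does not hold pointwise in $K_{\tau_m}$, only after torus averaging.

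For Part~(1) the paper gives almost no detail, deferring to the analysis in part~(i) of the proof of Thm.~\ref{SubMartCLT}; your telescoping scheme is actually more explicit, and your identification of the one-step look-ahead in $\widetilde{\mathcal{F}}_{\tau_m^-}$ as the delicate point is correct.
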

 \vspace{.2cm}
  
\begin{lemma}[Quadratic variation approximations] \label{LemMartApproxQuad}
In the limit $t\rightarrow \infty$, there are the following convergences in probability:
\begin{enumerate}
\item $\sup_{0\leq s\leq 1}\Big|   \big[\mathbf{m}^{(t)},\mathbf{m}^{(t)}\big]_{s}-\frac{1}{\nu}\int_{0}^{s}dr (t^{-\frac{1}{2} }\mathcal{E}_{rt})^{3}\Big| \Longrightarrow 0$,
\item  $\sup_{0\leq s\leq 1}\Big|   \big[\mathbf{m}^{(t)},\, \mathbf{h}^{(t)}\big]_{s}\Big|  \Longrightarrow 0, $ 
\item $  \sup_{0\leq s\leq 1}\Big|   \big[\mathbf{h}^{(t)},\, \mathbf{h}^{(t)}\big]_{s}-\sigma s\Big|  \Longrightarrow 0   $.
\end{enumerate}
 
 \end{lemma}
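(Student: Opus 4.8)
The plan is to treat all three statements through the discrete-time martingale structure indexed by the sign-flip count $m$. Since $\mathbf{m}^{(t)}$ and $\mathbf{h}^{(t)}$ are sums of martingale differences with respect to the filtration $(\widetilde{\mathcal{F}}_{\tau_{m}^{-}})_{m}$, their optional quadratic variations are the explicit stopped sums
$$\big[\mathbf{m}^{(t)},\mathbf{m}^{(t)}\big]_{s}=t^{-\frac{5}{2}}\sum_{m=1}^{\mathbf{N}_{st}}K_{\tau_{m}}^{2}\big(\Delta\tau_{m}-\mathbb{E}[\Delta\tau_{m}\,|\,\widetilde{\mathcal{F}}_{\tau_{m}^{-}}]\big)^{2},\qquad \big[\mathbf{h}^{(t)},\mathbf{h}^{(t)}\big]_{s}=t^{-1}\sum_{m=1}^{\mathbf{N}_{st}}\big(\Delta\mathcal{E}_{m}-\mathbb{E}[\Delta\mathcal{E}_{m}\,|\,\widetilde{\mathcal{F}}_{\tau_{m}^{-}}]\big)^{2},$$
where $\Delta\mathcal{E}_{m}:=\mathcal{E}_{\tau_{m}+\Delta\tau_{m}}-\mathcal{E}_{\tau_{m}}$, together with the bilinear analogue for $[\mathbf{m}^{(t)},\mathbf{h}^{(t)}]$. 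The first reduction is to replace each optional quadratic variation by its predictable compensator $\langle\cdot\rangle$ — the same sum with each squared (resp.\ product of) increment replaced by its conditional expectation given $\widetilde{\mathcal{F}}_{\tau_{m}^{-}}$. The difference is a martingale whose own quadratic variation is dominated by $\sum_{m}\mathbb{E}[(\text{increment})^{4}\,|\,\widetilde{\mathcal{F}}_{\tau_{m}^{-}}]$, and I would show this is $o(1)$ uniformly in $s$ using: (i) the a priori estimates $\sup_{r\leq t}|K_{r}|\leq Ct^{\frac{1}{2}}\log t$ and $\int_{0}^{t}|K_{r}|^{p}\,dr\leq C_{p}\,t^{1+p/2}$, which follow from Thm.~\ref{SubMartCLT} and the energy submartingale $\mathcal{E}_{r}$; (ii) the bound $\mathbb{E}[(\Delta\tau_{m})^{p}\,|\,\widetilde{\mathcal{F}}_{\tau_{m}^{-}}]\leq C_{p}(|K_{\tau_{m}}|/\nu)^{p}$, because $\Delta\tau_{m}$ is stochastically dominated by an exponential waiting time of rate of order $\nu|K_{\tau_{m}}|^{-1}$ (the effective reflection rate of Sect.~\ref{SecHeuristics}); and (iii) the bound $\mathbb{E}[|\Delta\mathcal{E}_{m}|^{p}\,|\,\widetilde{\mathcal{F}}_{\tau_{m}^{-}}]\leq C_{p}(\sigma|K_{\tau_{m}}|/\nu)^{p/2}$, from the exponential tail of $j$ in List~\ref{Assumptions}(1) and $\Delta\tau_{m}\leq C|K_{\tau_{m}}|/\nu$.

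I would then evaluate the three predictable variations. The key structural fact is that on $[\tau_{m},\tau_{m}+\Delta\tau_{m}]$ the momentum stays in $\big[\frac{1}{2}|K_{\tau_{m}}|,\frac{3}{2}|K_{\tau_{m}}|\big]$ by definition — and with high probability in a far narrower window, since when $|K_{\tau_{m}}|$ is of order $t^{\frac{1}{2}}$ the exit time of such a window is of order $t/\sigma$, while $\Delta\tau_{m}$ is only of order $t^{\frac{1}{2}}/\nu$. Hence the reflection rate $\nu|K_{r}|^{-1}$ is essentially the constant $\nu|K_{\tau_{m}}|^{-1}$ across the interval, so $\Delta\tau_{m}$ is exponential with that rate up to vanishing relative error — this is where the closed probabilities~(\ref{ReflProb}) and the band estimate~(\ref{IntRefl}) are used — giving $\mathbb{E}[\Delta\tau_{m}\,|\,\widetilde{\mathcal{F}}_{\tau_{m}^{-}}]=|K_{\tau_{m}}|\nu^{-1}(1+o(1))$ and $\mathrm{Var}(\Delta\tau_{m}\,|\,\widetilde{\mathcal{F}}_{\tau_{m}^{-}})=(|K_{\tau_{m}}|\nu^{-1})^{2}(1+o(1))$. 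Since $\mathcal{E}_{r}=E^{\frac{1}{2}}(K_{r})$ differs from $|K_{r}|$ by a bounded amount and $\Delta\mathcal{E}_{m}$ is driven by the accumulated L\'evy component of the jumps — variance rate $\sigma$ — while the drift of $\mathcal{E}_{r}$, of order $\sigma|K_{r}|^{-1}$, is negligible against the fluctuation scale $(\sigma\Delta\tau_{m})^{1/2}$, one gets $\mathrm{Var}(\Delta\mathcal{E}_{m}\,|\,\widetilde{\mathcal{F}}_{\tau_{m}^{-}})=\sigma\,\mathbb{E}[\Delta\tau_{m}\,|\,\widetilde{\mathcal{F}}_{\tau_{m}^{-}}](1+o(1))$. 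Finally, conditionally on $\Delta\tau_{m}$ the increment $\Delta\mathcal{E}_{m}$ is centered up to that same negligible drift, so $\mathrm{Cov}(\Delta\tau_{m},\Delta\mathcal{E}_{m}\,|\,\widetilde{\mathcal{F}}_{\tau_{m}^{-}})$ is only of order $\sigma|K_{\tau_{m}}|\nu^{-2}$, i.e.\ $o\big((|K_{\tau_{m}}|/\nu)^{3/2}\big)$ when $|K_{\tau_{m}}|$ is of order $t^{\frac{1}{2}}$.

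It then remains to convert the $m$-sums into $r$-integrals. Using $\Delta\tau_{m}\approx|K_{\tau_{m}}|\nu^{-1}$ one writes $\sum_{m\leq\mathbf{N}_{st}}g(|K_{\tau_{m}}|)\approx\nu\sum_{m}g(|K_{\tau_{m}}|)|K_{\tau_{m}}|^{-1}\Delta\tau_{m}\approx\nu\int_{0}^{st}g(|K_{r}|)|K_{r}|^{-1}\,dr$, the last step a Riemann-sum estimate justified by the continuity in probability of $r\mapsto t^{-\frac{1}{2}}K_{rt}$ (a consequence of Thm.~\ref{SubMartCLT}) and the a priori integral bounds, the replacement error again absorbed into a quadratic-variation estimate. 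With $g(k)=k^{4}$ this gives $\langle\mathbf{m}^{(t)}\rangle_{s}\approx\nu^{-2}t^{-\frac{5}{2}}\,\nu\int_{0}^{st}|K_{r}|^{3}\,dr=\nu^{-1}\int_{0}^{s}\big(t^{-\frac{1}{2}}|K_{rt}|\big)^{3}\,dr$, and replacing $|K_{rt}|$ by $\mathcal{E}_{rt}$ costs $O(t^{-\frac{1}{2}})$ since $E^{\frac{1}{2}}(k)-|k|$ is bounded; this is part~(1). For part~(3), $\langle\mathbf{h}^{(t)}\rangle_{s}\approx\sigma t^{-1}\sum_{m\leq\mathbf{N}_{st}}\mathbb{E}[\Delta\tau_{m}\,|\,\widetilde{\mathcal{F}}_{\tau_{m}^{-}}]\approx\sigma t^{-1}\big(st-\int_{0}^{st}1_{\{|K_{r}|\leq 2t^{\frac{3}{8}-\iota}\}}\,dr\big)\longrightarrow\sigma s$, since the occupation time of the low-momentum region is $o(t)$ — the occupation-time consequence of the submartingale $\mathcal{E}_{r}$ for a process living at scale $t^{\frac{1}{2}}$. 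For part~(2), $\langle\mathbf{m}^{(t)},\mathbf{h}^{(t)}\rangle_{s}=t^{-\frac{7}{4}}\sum_{m\leq\mathbf{N}_{st}}K_{\tau_{m}}\,\mathrm{Cov}(\Delta\tau_{m},\Delta\mathcal{E}_{m}\,|\,\widetilde{\mathcal{F}}_{\tau_{m}^{-}})$, which by the covariance bound and $\mathbf{N}_{st}=O(t^{\frac{1}{2}})$ is $O(t^{-\frac{1}{4}})\to 0$. Throughout, the $O(\Upsilon_{st})$ boundary increments — those $m$ with $\tau_{m+1}=\varsigma_{j}$, where $\Delta\tau_{m}$ is truncated at the start of an incursion — and all contributions with $|K_{\tau_{m}}|$ not of order $t^{\frac{1}{2}}$ are handled with the same occupation-time bound and the crude moment estimates of the first step.

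The step I expect to be the main obstacle is the sharp analysis of the conditional law of $\Delta\tau_{m}$: I must show that the waiting time until the next genuine sign-flip is exponential with rate $\nu|K_{\tau_{m}}|^{-1}$ accurately enough to fix the constant $\nu^{-1}$, not merely its order, while at the same time controlling the true variation of $|K_{r}|$ over the interval, the distinction between genuine sign-flips and the double-flips excluded by the definition, and the occasional truncation of an interval at an incursion boundary. Turning the closed reflection probabilities~(\ref{ReflProb}) and the band width~(\ref{IntRefl}) into a sufficiently precise description of the reflection point process along a typical trajectory — and, relatedly, establishing the approximate conditional independence of $\Delta\tau_{m}$ and $\Delta\mathcal{E}_{m}$ needed for part~(2) — is where the real work lies.
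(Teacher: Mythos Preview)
Your plan for Parts~(1) and~(2) is essentially the paper's own argument: replace the optional quadratic variation by its predictable compensator via a martingale whose second moment is controlled through moment bounds on $\Delta\tau_{m}$, then evaluate the compensator using the near-exponential law of $\Delta\tau_{m}$ with mean $\nu^{-1}|K_{\tau_{m}}|$, and finally convert the $m$-sum to a time integral. The paper packages the required inputs as Prop.~\ref{TimeFlip}: your item~(ii) is its Parts~(1)--(2), your covariance bound for Part~(2) is its Part~(5) (the estimate $\mathbb{E}[(|K_{\tau}|-|K_{0}|)\tau]\leq C|k|^{1+\zeta}$), and the occupation-time input you invoke is Lem.~\ref{EnergyLemma}. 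You have correctly anticipated that establishing these waiting-time estimates---including the distinction between genuine and fake sign-flips and the handling of truncated intervals---is the substantive work; in the paper this occupies all of Sect.~\ref{SubSecRefl}.

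For Part~(3) you take a different route from the paper. You propose a direct computation: $\mathrm{Var}(\Delta\mathcal{E}_{m}\,|\,\widetilde{\mathcal{F}}_{\tau_{m}^{-}})\approx\sigma\,\mathbb{E}[\Delta\tau_{m}\,|\,\widetilde{\mathcal{F}}_{\tau_{m}^{-}}]$ via optional stopping and $\frac{d}{dr}\langle M\rangle_{r}\approx\sigma$, then sum and use that the total excursion time is $st(1+o(1))$. This works, but the paper sidesteps it entirely: since $\mathbf{h}^{(t)}$ is uniformly close to $t^{-1/2}M_{st}$ (Lem.~\ref{LemMartApprox}(1)), and $t^{-1/2}M_{st}$ converges in law to a Brownian motion with rate $\sigma$ (Thm.~\ref{SubMartCLT}) while satisfying the Lindeberg condition (Lem.~\ref{LemLindberg}), one invokes \cite[Cor.~VI.6.7]{Jacod} to get joint convergence of the martingale with its quadratic variation; the limit $[\mathbf{B}]_{s}=\sigma s$ being deterministic, convergence in law upgrades to convergence in probability. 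Your approach is self-contained but longer; the paper's leverages the already-proved CLT and a functional continuity result to avoid any new computation for Part~(3).
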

 \vspace{.2cm}
 
 \begin{lemma}[Lindberg conditions]\label{LemLindberg}
The martingale $\mathbf{m}^{(t)}$ satisfies the Lindberg condition such that as $t\rightarrow \infty$, then
$\mathbb{E}\big[\sup_{0\leq s\leq 1}\big| \mathbf{m}_{s}^{(t)}-\mathbf{m}_{s^{-} }^{(t)}      \big|\big]\rightarrow 0$.   Also, the family $\mathbf{m}_{s}^{(t)}$ for $0\leq s\leq 1$, $t\in \R_{+}$ is  uniformly integrable.  The same statements hold  for $ \mathbf{h}_{s}^{(t)}$. 
 \end{lemma}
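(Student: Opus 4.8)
The plan is to reduce both assertions to two moment estimates for the jump sizes of each martingale and then to obtain those from per-block conditional moment bounds together with the a priori control on the momentum process. As functions of $s\in[0,1]$ the processes $\mathbf m^{(t)}$ and $\mathbf h^{(t)}$ are piecewise constant and jump exactly at the times $s=\tau_m/t$ with $\tau_m\le t$, with jump moduli
\[
\big|\Delta\mathbf m^{(t)}_{\tau_m/t}\big|=t^{-\frac54}|K_{\tau_m}|\,\big|\Delta\tau_m-\mathbb E[\Delta\tau_m\,\big|\,\widetilde{\mathcal F}_{\tau_m^-}]\big|,\qquad
\big|\Delta\mathbf h^{(t)}_{\tau_m/t}\big|=t^{-\frac12}\big|\mathcal E_{\tau_m+\Delta\tau_m}-\mathcal E_{\tau_m}-\mathbb E[\mathcal E_{\tau_m+\Delta\tau_m}-\mathcal E_{\tau_m}\,\big|\,\widetilde{\mathcal F}_{\tau_m^-}]\big|,
\]
so $\sup_{0\le s\le1}|\mathbf m^{(t)}_s-\mathbf m^{(t)}_{s^-}|=\max_{m\le\mathbf N_t}|\Delta\mathbf m^{(t)}_{\tau_m/t}|$ and likewise for $\mathbf h^{(t)}$. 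I would then deduce the lemma from the pair of estimates $\sup_{t\ge1}\mathbb E[\sum_{m\le\mathbf N_t}|\Delta\mathbf m^{(t)}_{\tau_m/t}|^2]<\infty$ and $\mathbb E[\sum_{m\le\mathbf N_t}|\Delta\mathbf m^{(t)}_{\tau_m/t}|^4]\to0$, and their analogues for $\mathbf h^{(t)}$. Indeed, by Jensen the $\ell^4$-bound gives $\mathbb E[\sup_s|\mathbf m^{(t)}_s-\mathbf m^{(t)}_{s^-}|]\le(\mathbb E[\sum_m|\Delta\mathbf m^{(t)}|^4])^{1/4}\to0$, which is the Lindeberg/negligibility statement; and since orthogonality of the increments gives $\mathbb E[(\mathbf m^{(t)}_1)^2]=\mathbb E[\sum_m|\Delta\mathbf m^{(t)}|^2]$, Doob's $L^2$ maximal inequality yields $\mathbb E[\sup_{s\le1}(\mathbf m^{(t)}_s)^2]\le 4\sup_t\mathbb E[\sum_m|\Delta\mathbf m^{(t)}|^2]<\infty$, so the family $\{\mathbf m^{(t)}_s:s\le1,\ t\ge1\}$ is bounded in $L^2$, hence uniformly integrable.

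For a single block $[\tau_m,\tau_m+\Delta\tau_m)$ I would use that $K_{\tau_m}$ is $\widetilde{\mathcal F}_{\tau_m^-}$-measurable (this is the role of the forward-looking filtration), that $|K_{\tau_m}|\ge t^{3/8-\iota}\gg1$, and that $|K_r|$ stays in $[\tfrac12|K_{\tau_m}|,\tfrac32|K_{\tau_m}|]$ until the block ends. On this range the Bragg-reflection intensity is comparable to $\nu/|K_r|$ by the averaged probability~(\ref{IntRefl}), so $\Delta\tau_m$ is, conditionally on $\widetilde{\mathcal F}_{\tau_m^-}$, stochastically dominated by an exponential variable of rate $\gtrsim\nu/|K_{\tau_m}|$, whence $\mathbb E[(\Delta\tau_m)^k\,|\,\widetilde{\mathcal F}_{\tau_m^-}]\le C_k(|K_{\tau_m}|/\nu)^k$. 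For $\mathbf h^{(t)}$ I would also need $\mathbb E[|\mathcal E_{\tau_m+\Delta\tau_m}-\mathcal E_{\tau_m}|^{2k}\,|\,\widetilde{\mathcal F}_{\tau_m^-}]\le C_k(1+\sigma|K_{\tau_m}|/\nu)^k$: splitting $\mathcal E_r=\mathcal E_0+M_r+A_r$ using the submartingale of Section~\ref{SecTransition}, the increasing part satisfies $A_{\tau_m+\Delta\tau_m}-A_{\tau_m}\lesssim\sigma\Delta\tau_m/|K_{\tau_m}|$ (since $\mathcal E_r^2=E(K_r)$ has drift $\sigma$ and $\mathcal E_r\ge\tfrac12|K_{\tau_m}|-O(1)$ on the block), and for $M$ one applies Burkholder--Davis--Gundy with $[M]_{\tau_m+\Delta\tau_m}-[M]_{\tau_m}\lesssim\sigma\Delta\tau_m$ — a reflection $k\mapsto-k$ changes $\mathcal E$ only by $O(1)$, so reflections do not inflate the quadratic variation — while the largest jump of $M$ inside the block, and the rare event that the block terminates by a Lévy jump of size $\gtrsim t^{3/8-\iota}$ out of the confining interval, are absorbed by the exponential integrability in List~\ref{Assumptions}(1).

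To sum the blocks I would invoke the compensation formula: the times $\tau_m$ arrive with predictable intensity $\asymp\nu/|K_r|$ at momentum level $|K_r|$ (the uncounted double-flips are lower order when $|K_r|\gg1$), plus an intensity from the resetting times $\varsigma_j$, whose number up to $t$ is $O(t^{1/8+\iota})$ by the submartingale downcrossing inequality for $\mathcal E_r$ on $[t^{3/8-\iota},2t^{3/8-\iota}]$ and which carry $|K_{\varsigma_j}|\le 3t^{3/8-\iota}$; this gives $\mathbb E[\sum_{m\le\mathbf N_t}f(|K_{\tau_m}|)]\lesssim\nu\,\mathbb E[\int_0^t f(|K_r|)|K_r|^{-1}dr]+o(t^{1/2})$ for increasing $f$ with $f(0)=0$. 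Combining the block estimates with the a priori bounds $\mathbb E[(\sup_{u\le r}|K_u|)^p]\le C_p(1+r)^{p/2}$ — which follow from $\mathbb E[E(K_r)^{p/2}]\lesssim(1+r)^{p/2}$ (using~(\ref{MeanEnergy}), List~\ref{Assumptions}(1), and the martingale structure) and Doob's $L^p$ inequality applied to $\mathcal E_r$ — one obtains, with $f(x)=x^4$ and $f(x)=x^8$, that $\mathbb E[\sum_m|\Delta\mathbf m^{(t)}|^2]\lesssim t^{-5/2}\nu^{-2}\mathbb E[\sum_m|K_{\tau_m}|^4]\lesssim t^{-5/2}\nu^{-1}\int_0^t r^{3/2}dr\asymp\nu^{-1}$ uniformly in $t$ (consistently with $[\mathbf m^{(t)},\mathbf m^{(t)}]_1\to\nu^{-1}\int_0^1(t^{-1/2}\mathcal E_{rt})^3\,dr$ from Lemma~\ref{LemMartApproxQuad}), and $\mathbb E[\sum_m|\Delta\mathbf m^{(t)}|^4]\lesssim t^{-5}\nu^{-4}\mathbb E[\sum_m|K_{\tau_m}|^8]\lesssim\nu^{-3}t^{-1/2}\to0$. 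The same computation for $\mathbf h^{(t)}$, using $\mathbb E[|\Delta\mathbf h^{(t)}_{\tau_m/t}|^{2k}\,|\,\widetilde{\mathcal F}_{\tau_m^-}]\lesssim t^{-k}(1+\sigma|K_{\tau_m}|/\nu)^k$ together with $\mathbb E[\sum_m|K_{\tau_m}|]\lesssim\nu t$ and $\mathbb E[\sum_m|K_{\tau_m}|^2]\lesssim\nu t^{3/2}$, gives $\mathbb E[\sum_m|\Delta\mathbf h^{(t)}|^2]\lesssim\sigma+\nu$ uniformly and $\mathbb E[\sum_m|\Delta\mathbf h^{(t)}|^4]\to0$, which completes the argument.

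The step I expect to be the main obstacle is the per-block control, namely the conditional tail bound on $\Delta\tau_m$ and the conditional $2k$-th moment bound on the energy increment over a block. This forces one to reconcile the double-flip-avoiding definition of a sign-flip with a clean lower bound on the effective reflection rate, to track precisely what the look-ahead $\sigma$-algebra $\widetilde{\mathcal F}_{\tau_m^-}$ does and does not contain (in particular that it pins $K_{\tau_m}$ and that the block is genuinely confined to $[\tfrac12|K_{\tau_m}|,\tfrac32|K_{\tau_m}|]$), and to dispose of the exceptional events — a block ending by a macroscopic Lévy jump, or $|K_r|$ drifting to the edge of the confining interval — via List~\ref{Assumptions}(1). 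By contrast, the a priori moment bounds on $\sup_{u\le r}|K_u|$ and the compensation identity should be comparatively routine once the reflection probabilities of Section~\ref{SecBragg} and the submartingale structure of Section~\ref{SecTransition} are in hand.
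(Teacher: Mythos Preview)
Your proposal is correct and, for the uniform integrability part, follows essentially the same route as the paper: bound the conditional second moment of each block using Prop.~\ref{TimeFlip}(1) (your ``$\mathbb E[(\Delta\tau_m)^k\mid\widetilde{\mathcal F}_{\tau_m^-}]\le C_k(|K_{\tau_m}|/\nu)^k$''), then collapse the sum via the identity $|K_{\tau_m}|\approx\nu\,\mathbb E[\Delta\tau_m\mid\widetilde{\mathcal F}_{\tau_m^-}]$ and $\sum_m\Delta\tau_m\le t$. Your ``compensation formula'' is exactly this trick rephrased.

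For the Lindeberg condition on $\mathbf m^{(t)}$ the paper takes a somewhat different and shorter route: instead of an $\ell^4$-sum it uses the pointwise factorization
\[
\sup_{s}\big|\Delta\mathbf m^{(t)}_s\big|\;\le\;t^{-\frac14}\Big(\sup_{0\le r\le t}t^{-\frac12}|K_r|\Big)^{2}\cdot\sup_{n\le\mathbf N_t}\frac{\Delta\tau_n}{|K_{\tau_n}|},
\]
shows the first factor tends to $0$ via Thm.~\ref{SubMartCLT}, and bounds the second by $t^{1/8}$ with high probability using the \emph{exponential} moment of $\Delta\tau_n/|K_{\tau_n}|$ from Prop.~\ref{TimeFlip}(2) together with $\mathbf N_t\le\mathcal N_t$. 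This avoids the need for $8$th moments of $K_r$; on the other hand your $\ell^4$ approach is more uniform and gives a clean treatment of $\mathbf h^{(t)}$, which the paper dispatches in one sentence by pointing to the analysis of $M_r$ in the proof of Thm.~\ref{SubMartCLT}. Your identification of the main obstacle---the per-block conditional tail and moment bounds, i.e.\ precisely Prop.~\ref{TimeFlip}---is on target.
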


 \vspace{.4cm}
   I will make frequent use of the reference book~\cite{Jacod} in the proof below.
 \vspace{.2cm}
 
 \begin{proof}[Proof of Thm.~\ref{Main}]  All convergence in law will be with respect to the Skorokhod metric.   I will  show that there is convergence in law as $t\rightarrow \infty$ 
 $$\hspace{4cm}\big( |t^{-\frac{1}{2}}K_{st}|,\,Y^{(t)}_{s}    \big) \stackrel{\frak{L}}{\Longrightarrow}  \Big(\mathbf{B}_{s}+\sup_{0\leq r\leq s}-\mathbf{B}_{r}   ,\,\frak{m}_{s}\Big), \hspace{1cm} s\in [0,1],  $$  where $\mathbf{B}$ is Brownian motion with diffusion rate $\sigma$ and $\frak{m}$ is a continuous martingale with quadratic variation processes satisfying
\begin{align}\label{MartProblem}
 [\frak{m},\frak{m} ]_{s}=\frac{1}{\nu}\int_{0}^{s}dr|\mathbf{B}_{r}|^{3} \hspace{1cm}\text{and}\hspace{1cm}   [ \frak{m}, \mathbf{B} ]_{s}=0  
 \end{align}
for all $s\in [0,1]$.  Recall that $\mathbf{B}_{s}+\sup_{0\leq r\leq s}-\mathbf{B}_{r}$ is equal in law to the absolute value of a Brownian motion with rate $\sigma$.   
The above description of the limiting law is equivalent to the construction given in the statement of Thm.~\ref{Main}.  By Part (2) of  Lemma~\ref{LemMartApprox} and the fact that $\big| E^{\frac{1}{2}}(k)-|k| \big|$ is bounded,  I can approximate the tuple  $\big( |t^{-\frac{1}{2}}K_{st}|,\,Y^{(t)}_{s}    \big)$ by $\big( t^{-\frac{1}{2}}\mathcal{E}_{st},\,\mathbf{m}_{s}^{(t)}\big)$.   
Define the process quadruple $Q^{(t)}_{s}=  \big( t^{-\frac{1}{2}}\mathbf{h}_{s}^{(t)},\,t^{-\frac{1}{2}}A_{st},\,\mathbf{m}_{s}^{(t)},\, [\mathbf{m}^{(t)},\mathbf{m}^{(t)}]_{s}   \big)   $, where  $A_{r}$ is the increasing part in the Doob-Meyer decomposition of  $\mathcal{E}_{r}$.   The process   $Q^{(t)}_{s}$ is adapted to the filtration $\widetilde{\mathcal{F}}_{s}^{(t)}$, and the first and third components are martingales with respect to $\widetilde{\mathcal{F}}_{s}^{(t)}$.

By~\cite[Cor.VI.3.33]{Jacod}, the  family of processes   $Q^{(t)}$, $t\in \R_{+}$ is $C$-tight if the components are $C$-tight.  The first component $\mathbf{h}^{(t)}$ of $Q^{(t)}$ can be approximated for large $t$ with $t^{-\frac{1}{2}}M_{st}$ by Part (1) of Lem.~\ref{LemMartApprox}.  The families of processes $t^{-\frac{1}{2}}M_{st}$ and $t^{-\frac{1}{2}}A_{st}$ indexed by $t\in \R_{+}$ are each $C$-tight, since they converge in law to continuous limits by Thm.~\ref{SubMartCLT}.  The fourth component  of $Q^{(t)}$ can be approximated with   $ \frac{1}{\nu}\int_{0}^{s}dr(t^{-\frac{1}{2}}\mathcal{E}_{rt})^{3}$ by Part (1) of Lem~\ref{LemMartApproxQuad}.   The processes $ t^{-\frac{1}{2}}\mathcal{E}_{st}$ converge in law as $t \rightarrow \infty$ to the absolute value of a Brownian motion by Thm.~\ref{SubMartCLT},  and  the functional $F:L^{\infty}([0,1])\rightarrow \R$ defined by $F(f)(s)=\int_{0}^{s}dr f^{3}(r)$ is continuous with respect to the Skorokhod metric;  therefore,  the processes  $ \int_{0}^{s}dr(t^{-\frac{1}{2}}\mathcal{E}_{rt})^{3}$ converge in law  to $ \int_{0}^{s}dr|\mathbf{B}_{r}|^{3}$ for large $t$.  It follows that  $[\mathbf{m}^{(t)}]$, $t\in \R_{+}$ is a $C$-tight family.  The family  of martingales $\mathbf{m}^{(t)}$  must be tight by the $C$-tightness of $[\mathbf{m}^{(t)}]$ and~\cite[Thm.VI.4.13]{Jacod}.  Finally,  $\mathbf{m}^{(t)}$ is $C$-tight by~\cite[Prop.VI.3.26]{Jacod} and the Lindberg condition in  Lemma~\ref{LemLindberg}.

 Consider a sequence $r_{n}\in \R_{+}$  such that $ Q^{(r_{n})}_{s}$ converges in law as $n\rightarrow \infty$ to a limit 
 $$\hspace{3cm}\Big(\mathbf{B}_{s},\, \sup_{0\leq r\leq s}- \mathbf{B}_{r} ,\,\frak{m}_{s},\,\frac{1}{\nu}\int_{0}^{s}dr\big(\mathbf{B}_{r}+ \sup_{0\leq v\leq r}- \mathbf{B}_{v}\big)^{3}  \Big), \hspace{1cm} s\in [0,1].$$ 
The relationships between the first, second, and fourth components are determined by the considerations above and Thm.~\ref{SubMartCLT}.  By definition of $C$-tightness for $ Q^{(t)}$, $t\in \R_{+}$, the third component $\mathbf{m}$ has  continuous trajectories.  The second and fourth components are explicitly determined by the copy of Brownian motion $\mathbf{B}$,  so I will focus on determining the joint law of $(\mathbf{B},\frak{m})$.  Since $\big( \mathbf{h}^{(r_{n})},\,\mathbf{m}^{(r_{n})}   \big)$ is a sequence of martingales and the family of random variables  $ \mathbf{h}^{(r_{n})}_{s}$, $\mathbf{m}^{(r_{n})}_{s}$ for $n\in \mathbb{N}$, $s\in [0,1]$ is uniformly integrable by Lem.~\ref{LemLindberg}, the limit law is a martingale with respect to its own filtration~\cite[Prop.IX.1.12]{Jacod}.   The convergence $ \big( \mathbf{h}^{(r_{n})},\mathbf{m}^{(r_{n})} \big)\stackrel{\frak{L}}{\Longrightarrow} (\mathbf{B},\frak{m})$  with~\cite[Cor.VI.6.7]{Jacod} implies  joint convergence with the quadratic variations
   \begin{align*}
 \left(  \mathbf{h}^{(r_{n})},\,\mathbf{m}^{(r_{n})} ;\, \begin{vmatrix}        [\mathbf{h}^{(r_{n})},\mathbf{h}^{(r_{n})}] ,& [\mathbf{h}^{(r_{n})}, \mathbf{m}^{(r_{n})}] \vspace{.1cm} \\   [\mathbf{m}^{(r_{n})},\mathbf{h}^{(r_{n})}],  &   [\mathbf{m}^{(r_{n})},\mathbf{m}^{(r_{n})}]  \end{vmatrix}  \right) 
  \stackrel{\frak{L}}{\Longrightarrow} \left( \mathbf{B},\,\frak{m};\, \begin{vmatrix}     [\mathbf{B},\mathbf{B}],      &  [\mathbf{B}, \frak{m}]  \vspace{.1cm} \\   [\frak{m}, \mathbf{B}],  &  [\frak{m},\frak{m}]   \end{vmatrix} \right). 
 \end{align*}
I thus have that $ [\frak{m},\frak{m}]_{s}  =\frac{1}{\nu}\int_{0}^{s}dr\big(\mathbf{B}_{r}+ \sup_{0\leq v\leq r}- \mathbf{B}_{v}\big)^{3} $.  Finally, by Lem~\ref{LemMartApproxQuad} the sequence of quadratic variation processes  $[\mathbf{m}^{(r_{n})},\mathbf{h}^{(r_{n})}]_{s}$ converges to zero and hence  $[\frak{m}, \mathbf{B}]=0$.  Therefore, the limiting law for subsequences has been determined uniquely as the one given by~(\ref{MartProblem}), and the proof of  convergence  for  $ \big( |t^{-\frac{1}{2}}K_{st}|,\,Y^{(t)}_{s}    \big) $ is complete.

 \end{proof}

 \section{The Freidlin-Wentzell/semi-classical limit}\label{SecQuantum}

In this section, I prove Thm.~\ref{FWLimit}.  First, I present Lem.~\ref{BadTerms}, which collects some technical estimates  on the probabilities $|\kappa_{v}(k,n)|^{2}$ appearing in the jump rates~(\ref{KrissKrossMakesYou}) for the limiting momentum process.  The estimates pertain to $|k|$ large and include a  technical restriction that the jump $v$ is not too large ($|v|\leq \frac{1}{2}|k|$).  This is an important regime in future sections also, since the particle will stochastically accelerate to high momentum values, and the jumps $v$ will be much smaller by the moment assumption (1) of List~\ref{Assumptions} on the jump rates $j(v)$. There are only a few possible values of $n$ for which $|\kappa_{v}(k,n)|^{2}$  is non-vanishing when $|k|,|k+v|\gg 1$.

  As mentioned in the introduction and is discussed further in Appendix~\ref{AppendixFiber}, the Hilbert space $L^{2}(\R)$ for the test particle has a canonical decomposition into a direct integral $\int_{[-\frac{1}{2},\frac{1}{2})}^{\oplus}$ of copies of $L^{2}\big([-\pi,\pi)\big)$ that are invariant under the Hamiltonian dynamics.  It is useful to relate the kets $|k\rangle_{\scriptscriptstyle{Q}}$ with their associated representations in $L^{2}\big([-\pi,\pi)\big)$.    For $k\in \R-\frac{1}{2}\Z$, the kets $|k\rangle_{\scriptscriptstyle{Q}}$ are identified with Bloch functions  $\widetilde{\psi}_{k}\in L^{2}\big([-\pi,\pi)\big)$ given by
\begin{align}\label{Bloch}
\widetilde{\psi}_{k}(x)= N_{k}^{-\frac{1}{2}}\left\{  \begin{array}{cc} \frac{e^{\textup{i}2\pi (\mathbf{q}(k)-k)  }-1    }{ e^{\textup{i}2\pi(\mathbf{q}(k)+k)}  -1 } e^{-\textup{i} x \mathbf{q}(k) }+e^{\textup{i}2\pi( \mathbf{q}(k)-k) }  e^{\textup{i} x \mathbf{q}(k)  } &  -\pi\leq x\leq 0  ,  \\  \quad & \quad \\ \frac{e^{\textup{i}2\pi(\mathbf{q}(k)-k)  }-1     }{1-e^{-\textup{i}2 \pi(\mathbf{q}(k)+k)} }e^{-\textup{i} x \mathbf{q}(k)  }+   e^{\textup{i}x  \mathbf{q}(k)  }     & 0\leq x< \pi ,   \end{array} \right.  
\end{align}
where $N_{k}>0$ is a normalization constant, and $\mathbf{q}:\R\rightarrow \R$ is  determined by~(\ref{Energies}).    The form of the Bloch functions~(\ref{Bloch}) can be found in~\cite[Sec.III.2.3]{Solve} (under the standard quasimomentum and energy band labeling).  The  Bloch functions formally satisfy $\langle x|k\rangle_{\scriptscriptstyle{Q}} = \widetilde{\psi}_{k}(x)$, and $\widetilde{\psi}_{k}$ is an eigenvector with eigenvalue $E(k)$ for the fiber  Hamiltonian $H_{\phi}$ with $ k=\phi\,\textup{mod}\, 1$.   Analogously, the momentum kets $|k\rangle$ are identified with Bloch functions $\psi_{k}\in L^{2}\big([-\pi,\pi)\big)$ defined by $\psi_{k}(x)=(2\pi)^{-\frac{1}{2}}e^{\textup{i}xk}$.

The following notations are designed for a description of the reflection bands around the lattice momenta $\frac{1}{2}\Z$.  Let $\Theta:\R\rightarrow [-\frac{1}{4},\frac{1}{4})$ and $\mathbf{n}:\R\rightarrow \Z$ be defined through
\begin{align}\label{Barrack}
\Theta(k)=k\,\text{mod}\,\frac{1}{2}, \quad \quad \mathbf{n}(k)=2\big(k-\Theta(k)\big).\quad 
\end{align}
Note that the variables $\Theta(k)$ and $\mathbf{n}(k)$ are \textit{not}  the quasimomentum and energy band, respectively.  Define the set $I(k,v)\subset \Z$ to include the integers $0$, $-\mathbf{n}(k)$, $-\mathbf{n}(k+v)$  and $-\mathbf{n}(k+v)+\mathbf{n}(k)$.  Also define $\beta(k)=\frac{1}{2}\mathbf{n}(k)\Theta(k)$, $ \gamma(\beta)= \big( \beta^{2}+\frac{\alpha^{2}}{16\pi^{2}} \big)^{\frac{1}{2}}-\frac{\alpha}{4\pi}    $, 
$$
\mathbf{r}_{-}(k)=  \frac{1}{1+\big|\frac{|\beta(k)|+\gamma(\beta(k))  }{|\beta(k)|-\gamma(\beta(k)) } \big|^{2}    }, 
$$
and $\mathbf{r}_{+}(k)=1-\mathbf{r}_{-}(k)$. The variable $\beta(k)$ is a dilation of $\Theta(k)$, which is scaled to characterize the limiting local profiles of reflection probabilities for high momenta $k$ near lattice values $\frac{1}{2}\Z$.

 In the following lemma, $S:\R\rightarrow \{\pm 1 \}$ is the sign function, and $\textup{dist}\big(n,\,I(k,v)\big) $ is the smallest distance between $n\in \Z$ and an element in the set $I(k,v)$.

\begin{lemma}\label{BadTerms}
Let $|k|\gg 1$.  There exists a $c>0$ such that for all $|k|$ large enough and $|v|\leq \frac{1}{2}|k|$,  the following inequalities hold: 
\begin{enumerate}
\item 
$$ \sum_{n\notin I(k,v)}|\kappa_{v}(k,n)|^{2}\leq \frac{c}{|k|^{2}}. $$

\item $$\sum_{n\notin I(k,v)} \textup{dist}\big(n,\,I(k,v)\big)  \,|\kappa_{v}(k,n)|^{2}\leq \frac{c}{|k|}.$$

\item
  When $\mathbf{n}(k)\neq \mathbf{n}(k+v)$, 
\begin{align*}
 &
\Big| \big|\kappa_{v}\big(k,-\mathbf{n}(k)\big)\big|^{2}- \mathbf{r}_{-}(k)\mathbf{r}_{+} (k+v )\Big|\leq \frac{c}{|k|\big(1+|\beta(k)|   \big) }, \\
& \Big| \big|\kappa_{v}\big(k,-\mathbf{n}(k+v)\big)\big|^{2}- \mathbf{r}_{+}(k)\mathbf{r}_{-} (k+v )\Big|\leq \frac{c}{|k|\big(1+|\beta(k+v)|   \big)  }, \\
&  \Big| \big|\kappa_{v}\big(k,-\mathbf{n}(k+v)+\mathbf{n}(k)\big)\big|^{2}- \mathbf{r}_{-}(k)\mathbf{r}_{-} (k+v )\Big|\leq \frac{c}{|k|\big(1+|\beta(k)|   \big)\big(1+|\beta(k+v)|   \big)  }, 
\end{align*}
 and when $\mathbf{n}(k)=\mathbf{n}(k+v)$,
\begin{align*}
\Big| \big|\kappa_{v}\big(k,0\big)\big|^{2}- \Big(  \mathbf{r}_{+}^{\frac{1}{2}}(k)\mathbf{r}_{+}^{\frac{1}{2}}(k+v)-S\big(\Theta(k)\Theta(k+v)\big)\mathbf{r}_{-}^{\frac{1}{2}}(k)\mathbf{r}_{-}^{\frac{1}{2}}(k+v)\Big)^{2}\Big|\leq \frac{c}{|k|^{2}}, \\  \Big| \big|\kappa_{v}\big(k,-\mathbf{n}(k)\big)\big|^{2}-\Big(  \mathbf{r}_{+}^{\frac{1}{2}}(k)\mathbf{r}_{-}^{\frac{1}{2}}(k+v)+S\big(\Theta(k)\Theta(k+v)\big)\mathbf{r}_{-}^{\frac{1}{2}}(k)\mathbf{r}_{+}^{\frac{1}{2}}(k+v )   \Big)^{2}\Big| \leq \frac{c}{|k|^{2}}.   
\end{align*}

\item For $m\in \Z$ with $|\frac{m}{2}-k|\leq \frac{1}{2}|k| $ and $m\neq \mathbf{n}(k)$, 
$$ \int_{-\frac{1}{4}  }^{\frac{1}{4} }d\theta\,|\kappa_{\frac{m}{2}+\theta-k}\big(k,-m \big)|^{2}  \leq  \frac{c}{|k|}.  $$

\end{enumerate}

\end{lemma}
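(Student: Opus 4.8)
The plan is to express the coefficients $\kappa_{v}(k,n)$ through the Fourier coefficients $\eta(k,n)$ of the Bloch functions $\widetilde{\psi}_{k}$ of~(\ref{Bloch}) in the fiber $L^{2}\big([-\pi,\pi)\big)$, and to reduce all four estimates to elementary bounds on the $\eta(k,n)$.  Since $\{|k+n\rangle\}_{n\in\Z}$ and $\{|k+n\rangle_{\scriptscriptstyle{Q}}\}_{n\in\Z}$ are two orthonormal bases of the fiber $(k\bmod1)$, relation~(\ref{FromStanToQuasi}) reads $\eta(k,n)=\langle k+n|\,k\rangle_{\scriptscriptstyle{Q}}$ and, inverting the change of basis, $|k+v+j\rangle=\sum_{n}\overline{\eta(k+v+n,\,j-n)}\,|k+v+n\rangle_{\scriptscriptstyle{Q}}$.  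Because $e^{\textup{i}vX}$ translates momentum by $v$ and maps fibers isometrically, $e^{\textup{i}vX}|k\rangle_{\scriptscriptstyle{Q}}=\sum_{m}\eta(k,m)|k+v+m\rangle$, which gives the working identity
\begin{align}\label{WorkKappa}
\kappa_{v}(k,n)=\sum_{m\in\Z}\eta(k,m)\,\overline{\eta(k+v+n,\,m-n)}.
\end{align}
I would also use the orthogonal decomposition $e^{\textup{i}vX}|k\rangle_{\scriptscriptstyle{Q}}=\eta(k,0)|k+v\rangle+\eta(k,-\mathbf{n}(k))|k+v-\mathbf{n}(k)\rangle+R_{k}$ with $\|R_{k}\|^{2}=\sum_{m\notin\{0,-\mathbf{n}(k)\}}|\eta(k,m)|^{2}$.

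The first task is a collection of elementary estimates on $\eta(k,n)$ for $|k|\gg1$, read off directly from~(\ref{Bloch}): each $\eta(k,n)$ is a sum of four integrals $(2\pi N_{k})^{-\frac12}\int e^{\textup{i}x\xi}dx$ over $[-\pi,0]$ and $[0,\pi]$ with detunings $\xi\in\{\mathbf{q}(k)-k-n,\,-\mathbf{q}(k)-k-n\}$.  Expanding the Kr\"onig-Penney relation~(\ref{Energies}) gives $\mathbf{q}(k)-k=\mathit{O}(|k|^{-1})$ away from $\tfrac12\Z$ and, near a lattice point, $\mathbf{q}(k)-k\to0$ and $\mathbf{q}(k)+k-\mathbf{n}(k)\to0$ with the local profile governed by the rescaled detuning $\beta(k)=\tfrac12\mathbf{n}(k)\Theta(k)$; one also checks that $N_{k}$ and the coefficients of~(\ref{Bloch}) stay bounded above and away from $0$ uniformly for $|k|$ large.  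These give that $\eta(k,n)$ is of order $1$ only for $n\in\{0,-\mathbf{n}(k)\}$, that $|\eta(k,n)|\le c\,|k|^{-1}\big(1+\textup{dist}(n,\{0,-\mathbf{n}(k)\})\big)^{-1}$ for other $n$, and that $\big|\,|\eta(k,0)|^{2}-\mathbf{r}_{+}(k)\,\big|$ and $\big|\,|\eta(k,-\mathbf{n}(k))|^{2}-\mathbf{r}_{-}(k)\,\big|$ are $\mathit{O}\big(|k|^{-1}(1+|\beta(k)|)^{-1}\big)$, with $\mathbf{r}_{\pm}$, $\gamma$ arising from the two-mode near-degenerate form of~(\ref{Bloch}) (cf.~\cite[Sec.~III.2.3]{Solve}); a separate bound $|\eta(k',\ell)|\le c\,(1+|\ell|)^{-2}$ for $|\ell|$ large, valid uniformly for $k'$ of bounded size, will handle the low-momentum Bloch functions, whose high-frequency tail is all that enters below.

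Items~(1) and~(2) would then follow from the split of $\kappa_{v}(k,n)$.  Inserting the two-mode expansions of $|k+v\rangle$, $|k+v-\mathbf{n}(k)\rangle$ and of $R_{k}$ and doing a bookkeeping of detunings, one finds that the only indices $n$ at which $\kappa_{v}(k,n)$ can be of order $1$ are exactly the four elements of $I(k,v)$; for $n\notin I(k,v)$ every surviving contribution is $\mathit{O}(|k|^{-1})$ with an extra factor $\big(1+\textup{dist}(n,I(k,v))\big)^{-1}$ (using $|v|\le\tfrac12|k|$, so $|k+v|\asymp|k|$; when instead $k+v+n$ has bounded size---which forces $\textup{dist}(n,I(k,v))\asymp|k|$---the tail bound $|\eta(k+v+n,\cdot)|\lesssim(1+|\cdot|)^{-2}$ gives stronger decay).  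Squaring and summing over $n\notin I(k,v)$ gives $\le c|k|^{-2}$ for~(1), together with $\sum_{n}|{}_{\scriptscriptstyle{Q}}\langle k+v+n|R_{k}\rangle|^{2}\le\|R_{k}\|^{2}\le c|k|^{-2}$ by Bessel's inequality; inserting the weight $\textup{dist}(n,I(k,v))$ and spending one power of the distance decay (with Cauchy-Schwarz for the $R_{k}$ part) gives~(2).  For~(3) I would isolate, for each $n\in I(k,v)$, the term(s) of~(\ref{WorkKappa}) with both $\eta$-factors of order $1$: for $\mathbf{n}(k)\neq\mathbf{n}(k+v)$ there is a single one, all others being $\mathit{O}(|k|^{-1})$ with the stated $\beta$-decay, and substituting the refined asymptotics of the two factors (using $\beta(k'-\mathbf{n}(k'))=-\beta(k')$, so $\mathbf{r}_{\pm}$ is constant along a reflection pair) yields $\mathbf{r}_{\pm}(k)\mathbf{r}_{\pm}(k+v)$; for $\mathbf{n}(k)=\mathbf{n}(k+v)$ the indices $n\in\{0,-\mathbf{n}(k)\}$ each receive two order-$1$ contributions that interfere, giving the squared sum with relative sign $S\big(\Theta(k)\Theta(k+v)\big)$---the sign being the relative phase of the two near-degenerate branches of~(\ref{Bloch})---up to $\mathit{O}(|k|^{-2})$.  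Finally, for~(4) put $v=\tfrac m2+\theta-k$, so $k+v=\tfrac m2+\theta$ has $\mathbf{n}(k+v)=m\neq\mathbf{n}(k)$, $\Theta(k+v)=\theta$, $\beta(k+v)=\tfrac12 m\theta$; then $-m=-\mathbf{n}(k+v)$ is covered by~(3), so $|\kappa_{v}(k,-m)|^{2}\le\mathbf{r}_{-}(\tfrac m2+\theta)+\mathit{O}\big(|k|^{-1}(1+|\tfrac12 m\theta|)^{-1}\big)$, and since $\mathbf{r}_{-}(\tfrac m2+\theta)\lesssim(1+c\,m^{2}\theta^{2})^{-1}$ while $|m|\asymp|k|$ (from $|\tfrac m2-k|\le\tfrac12|k|$), integration over $\theta\in[-\tfrac14,\tfrac14)$ gives $\le c|k|^{-1}$, the error contributing only $\mathit{O}(|k|^{-2}\log|k|)$.

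The hard part will be the second step, and within it the $\mathbf{n}(k)=\mathbf{n}(k+v)$ case of~(3): away from the reciprocal lattice a single Fourier mode of $\widetilde{\psi}_{k}$ dominates and everything is routine, but inside a reflection band one faces a genuinely two-dimensional near-degenerate problem, so one must extract $\mathbf{r}_{\pm}$, $\beta$, $\gamma$ and, crucially, the relative phase $S\big(\Theta(k)\Theta(k+v)\big)$ of the two branches of~(\ref{Bloch}) uniformly in $|k|\gg1$, while keeping the normalization $N_{k}$ bounded above and away from zero uniformly near the lattice points.
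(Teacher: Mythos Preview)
Your overall architecture is exactly the paper's: you work through the change-of-basis identity~(\ref{WorkKappa}), reduce everything to the two-mode approximation $\widetilde{\psi}_{k}\approx \mathbf{r}_{+}^{1/2}(k)\psi_{k}+S(k\Theta(k))\mathbf{r}_{-}^{1/2}(k)\psi_{k-\mathbf{n}(k)}$ of the Bloch function, and handle Parts~(3) and~(4) by reading off the dominant terms with their phases. All of that matches the paper and is correct.

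There is, however, a real gap in your treatment of Part~(2). Your stated pointwise bound $|\eta(k,n)|\le c\,|k|^{-1}\big(1+\textup{dist}(n,\{0,-\mathbf{n}(k)\})\big)^{-1}$, while true, is not sharp enough. Plugging it into~(\ref{WorkKappa}) yields $|\kappa_{v}(k,n)|\lesssim |k|^{-1}(1+d)^{-1}$ with $d=\textup{dist}(n,I(k,v))$, and then $\sum_{n\notin I(k,v)} d\,|\kappa_{v}(k,n)|^{2}\lesssim |k|^{-2}\sum_{n} d(1+d)^{-2}$, which diverges (each of the four centres in $I(k,v)$ contributes a harmonic tail $\sum d^{-1}$). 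The far-field regime $|n|\gg |k|$ is the culprit, and your parenthetical about bounded $|k+v+n|$ does not cover it. What you actually need is the product form that comes directly from~(\ref{Bloch}): for $n\notin\{0,-\mathbf{n}(k)\}$,
\[
|\eta(k,n)|\ \le\ \frac{h}{\big|\mathbf{q}(k)+k+n\big|\,\big|\mathbf{q}(k)-k-n\big|}\ \approx\ \frac{h}{|n|\,|n+\mathbf{n}(k)|}\,,
\]
a product of \emph{two} distances rather than $|k|^{-1}$ times one. Carrying this through~(\ref{WorkKappa}) (and isolating the four special $m$-values $0,-\mathbf{n}(k),n,-n-\mathbf{n}(k+v)$) gives the paper's bound
\[
|\kappa_{v}(k,n)|\ \lesssim\ \big(|n|\wedge|n+\mathbf{n}(k)-\mathbf{n}(k+v)|\big)^{-1}\big(|n+\mathbf{n}(k)|\wedge|n+\mathbf{n}(k+v)|\big)^{-1}=:A^{-1}B^{-1}.
\]
Now $d=A\wedge B$, and since the two pairs of centres are separated by $\gtrsim |k|$ one always has $\max(A,B)\gtrsim |k|$; splitting the sum according to which of $A,B$ is large, using $d\,A^{-2}\le A^{-1}\le 1$ (as $A\ge 1$ for integer $n\notin I(k,v)$), and summing the remaining $B^{-2}$ over a region where the $B$-centres sit at distance $\gtrsim |k|$ gives $\sum B^{-2}=\mathit{O}(|k|^{-1})$. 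This is precisely the mechanism the paper uses, and it is not available from your coarser single-distance bound. Once you make this correction, the rest of your sketch goes through unchanged.
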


\begin{proof} \text{  }\\
\noindent Part (1):\hspace{.15cm} Recall that  $\eta(k,m)$ is defined so that $|k\rangle_{\scriptscriptstyle{Q}}=\sum_{m\in \Z}\eta(k,m)|k+m\rangle$.  By mapping through the fiber decomposition,   this is equivalent to the same equality  with $|k+m\rangle$ and $|k\rangle_{\scriptscriptstyle{Q}}$ replaced by $\psi_{k+m}$ and $\widetilde{\psi}_{k}$, respectively.  Moreover, the equation~(\ref{DefKappa}) defining $\kappa_{v}(k,n)$ is translated to
$$    \sum_{n\in \Z}\kappa_{v}(k,n)\widetilde{\psi}_{k+v+n}=e^{\textup{i}v X} \widetilde{\psi}_{k},      $$
where $X$ is the bounded operator on  $L^{2}\big([-\pi,\pi)  \big)   $ acting as multiplication by the spatial variable $(Xf)(x)=xf(x)$ for $x\in [-\pi,\pi)$.

  Below, I will show there exists a $C>0$ such that
\begin{eqnarray}\label{QuasiToStand} 
\big\| \widetilde{\psi}_{k}-S\big(k \Theta(k)\big)\,\mathbf{r}_{-}^{\frac{1}{2}}(k ) \psi_{k-\mathbf{n}(k)}  -\mathbf{r}_{+}^{\frac{1}{2}}(k)  \psi_{k} \big\|_{2} \leq  C|k|^{-1} . 
\end{eqnarray}
Temporarily assuming~(\ref{QuasiToStand}), I will continue with the proof.   Applying~(\ref{QuasiToStand}) directly and for $k$ replaced by $k':=k-\mathbf{n}(k)$, it follows that 
\begin{align}\label{StandToQuasi}
\big\| \psi_{k'}-S\big(k^{\prime} \Theta(k^{\prime})\big)\,\mathbf{r}_{-}^{\frac{1}{2}}(k^{\prime} ) \widetilde{\psi}_{k^{\prime}-\mathbf{n}(k^{\prime})}-\mathbf{r}_{+}^{\frac{1}{2}}(k^{\prime} ) \widetilde{\psi}_{k^{\prime}}  \big\|_{2}\leq 3C|k^{\prime}|^{-1}, 
\end{align}
where I have used the identity $k=k^{\prime}-\mathbf{n}(k^{\prime})$.

  Multiplying $\widetilde{\psi}_{k}$ by $e^{\textup{i}vX}$,  then with~(\ref{QuasiToStand})
$$ e^{\textup{i}vX}\widetilde{\psi}_{k}= S\big(k \Theta(k)\big)\mathbf{r}_{-}^{\frac{1}{2}}(k)\psi_{k+v-\mathbf{n}(k)} +\mathbf{r}_{+}^{\frac{1}{2}}(k)\psi_{k+v} +\mathit{O}(|k|^{-1}),$$
where the error is with respect to the $L^{2}\big([-\pi,\pi)\big)$ norm. 
By the unitarity of $e^{\textup{i}vX}$, the norm of the error is preserved from~(\ref{QuasiToStand}). Applying~(\ref{StandToQuasi}) for $k^{\prime}=k+v$ and $k'=k+v-\mathbf{n}(k)$, 
\begin{align}\label{Compare}
e^{\textup{i}vX}\widetilde{\psi}_{k}+\mathit{O}(|k|^{-1})= &  \mathbf{r}_{+}^{\frac{1}{2}}(k)\mathbf{r}_{+}^{\frac{1}{2}}(k+v)\widetilde{\psi}_{k+v} \nonumber \\ &  +S\big( k\Theta(k+v) \big)\mathbf{r}_{+}^{\frac{1}{2}}(k)\mathbf{r}_{-}^{\frac{1}{2}}(k+v)\widetilde{\psi}_{k+v-\mathbf{n}(k+v) } \nonumber   \\ &+S\big( k\Theta(k)\big)\mathbf{r}_{-}^{\frac{1}{2}}(k)\mathbf{r}_{+}^{\frac{1}{2}}(k+v-\mathbf{n}(k) )\widetilde{\psi}_{k+v-\mathbf{n}(k)}\nonumber   \\ &-S\big( \Theta(k)\Theta(k+v)  \big)\mathbf{r}_{-}^{\frac{1}{2}}(k)\mathbf{r}_{-}^{\frac{1}{2}}(k+v-\mathbf{n}(k)  )\widetilde{\psi}_{k+v-\mathbf{n}(k+v)+\mathbf{n}(k)}, 
 \end{align}
where I have used $S(k+v-\mathbf{n}(k))=-S(k)$ and $S(k)=S(k+v)$ by the restriction $|v|\leq \frac{1}{2}|k|$.  The minus sign on the fourth term on the right side appears because $k-\mathbf{n}(k)$ has the opposite sign of $k$.  It follows   that 
 $$\sum_{m\notin I(k,k+v)} |\kappa_{v}(k,m)|^{2} = \Big\|e^{\textup{i}vX}\widetilde{\psi}_{k}-\sum_{n\in I(k,k+v)} \kappa_{v}(k,n) \widetilde{\psi}_{k+n}\Big\|_{2}^{2}=\mathit{O}(|k|^{-2}),    $$
which will complete the proof once~(\ref{QuasiToStand}) is established.

Now I  prove~(\ref{QuasiToStand}) by examining $\widetilde{\psi}_{k}$   in the Bloch representation~(\ref{Bloch}).  Note that~(\ref{QuasiToStand}) is a little more precise than is strictly required to prove Part (1), but it will be useful later.  Computing the coefficients $\eta(k,n)$ yields  
\begin{align}\label{Etas}
 \eta(k,n)=\langle \psi_{k+n} |\widetilde{\psi}_{k}\rangle=-\textup{i} N_{k}^{-\frac{1}{2}}\big(e^{2\pi \textup{i}(\mathbf{q}(k)-k)}-1 \big) \Big(\frac{1}{\mathbf{q}(k)+k+n}+\frac{1}{\mathbf{q}(k)-k-n} \Big).  
 \end{align}
It follows from~(\ref{Energies}) that for large enough $k$, the inequality holds: $|\mathbf{q}(k)-k|\leq \frac{\alpha}{\pi|k|}$.  The normalization $N_{k}$ can be written as
\begin{align*}
N_{k}&=\big|e^{2\pi \textup{i}(\mathbf{q}(k)-k)}-1 \big|^{2}\sum_{n}\Big| \frac{1}{\mathbf{q}(k)+k+n}+\frac{1}{\mathbf{q}(k)-k-n}   \Big|^{2}\\ & =4\pi^{2}\Big(1+\Big|\frac{\mathbf{q}(k)-k}{\mathbf{q}(k)-k+2\Theta(k) }\Big|^{2}\Big)+\mathit{O}(|k|^{-2}).   
\end{align*}
All terms from the sum have been absorbed into the error $\mathit{O}(|k|^{-2})$ except for $n=0,-\mathbf{n}(k)$.  The approximation above is due to the equality $\mathbf{q}(k)+k-\mathbf{n}(k)=\mathbf{q}(k)-k+2\Theta(k)$, and since 
\begin{align}\label{TreeFire}
\sum_{n\neq 0,-\mathbf{n}(k)}\Big| \frac{1}{\mathbf{q}(k)-k+2\Theta(k)+(n+\mathbf{n}(k))    }+\frac{1}{\mathbf{q}(k)-k-n}   \Big|^{2}  \leq 2\sup_{ |\theta|\leq \frac{3}{4}    }\sum_{n\neq 0}\frac{1}{|n+\theta|^{2}}  \end{align}
for $|k|$ large enough so that $|\mathbf{q}(k)-k|\leq \frac{1}{4}$.
When multiplied by $\big|e^{2\pi \textup{i}(\mathbf{q}(k)-k)}-1 \big|^{2}=\mathit{O}(|k|^{-2})$, then~(\ref{TreeFire}) is  $\mathit{O}(|k|^{-2})$. 
Thus, the sum $\sum_{n\neq 0,-\mathbf{n}(k)}|\kappa_{v}(k,n)|^{2}$ is $\mathit{O}(|k|^{-2})$, and the special terms $ \eta(k,0)$ and  $\eta\big(k,-\mathbf{n}(k)\big)$ have approximations
\begin{align}\label{BarlyFields}
\eta(k,0)&=\frac{ \Big|\frac{\mathbf{q}(k)-k+2\Theta(k)}{\mathbf{q}(k)-k } \Big|   }{\Big(1+\Big|\frac{\mathbf{q}(k)-k+2\Theta(k)}{ \mathbf{q}(k)-k}\Big|^{2}\Big)^{\frac{1}{2}} }    +\mathit{O}(|k|^{-1}), \nonumber \\  \eta\big(k,-\mathbf{n}(k)\big) & =\frac{ S\big(k\Theta(k)\big)    }{\Big(1+\Big|\frac{\mathbf{q}(k)-k+2\Theta(k)}{\mathbf{q}(k)-k }\Big|^{2}\Big)^{\frac{1}{2}} }+\mathit{O}(|k|^{-1}).     
\end{align}
To complete the proof of~(\ref{QuasiToStand}), I require that the top and bottom expressions on the right side are $\mathbf{r}_{+}^{\frac{1}{2}}(k) +\mathit{O}(|k|^{-1})$ and $S\big(k\Theta(k)\big) \mathbf{r}_{-}^{\frac{1}{2}}(k) +\mathit{O}(|k|^{-1})$, respectively.

By the definitions of $\Theta(k)$, $\mathbf{n}(k)$, and $\beta(k) $, I can write
\begin{align*}
  \Big|\frac{\mathbf{q}(k)-k+2\Theta(k)}{\mathbf{q}(k)-k }\Big|= &  \Big|\frac{\big(\mathbf{q}(k)-\frac{\mathbf{n}(k)}{2}\big)+\Theta(k)}{\big(\mathbf{q}(k)-\frac{\mathbf{n}(k)}{2}\big)-\Theta(k) }\Big|  \\ = & \Big|\frac{\frac{\mathbf{n}(k)}{2}\big(\mathbf{q}\big(\frac{\mathbf{n}(k)}{2}+\frac{2\beta(k)}{\mathbf{n}(k)} \big)-\frac{\mathbf{n}(k)}{2}\big)+\beta(k)}{\frac{\mathbf{n}(k)}{2}(\mathbf{q}\big(\frac{\mathbf{n}(k)}{2}+\frac{2\beta(k)}{\mathbf{n}(k)} \big)-\frac{\mathbf{n}(k)}{2}\big)-\beta(k) }\Big|.
  \end{align*}
 Define the variable $  \gamma_{0}(k)=\frac{n}{2}(\mathbf{q}(\frac{n}{2}+\frac{2\beta(k)}{n} )-\frac{n}{2} ) $ for $n=\mathbf{n}(k)$.  The Kr\"onig-Penney relation~(\ref{Energies}) to second-order for large $|k|$ yields that $\gamma_{0}(k)$ satisfies
\begin{align*}
 \beta^{2}(k) = \gamma_{0}^{2}(k)-\frac{\alpha}{2\pi} \gamma_{0}(k) +\mathit{O}(|k|^{-1}) . 
 \end{align*}
This gives the following asymptotics for $\gamma_{0}(k)$:
\begin{align*}
\gamma_{0}(k)=\left\{  \begin{array}{cc} \frac{\alpha}{2\pi}+\gamma\big(\beta(k)\big)+\mathit{O}(k^{-1}) & \hspace{.5cm}  S(k)\Theta(k)>0 ,  \\    \\ -\gamma\big(\beta(k)\big)+\mathit{O}(|k|^{-1}) & \hspace{.5cm} S(k)\Theta(k)<0 .
    \end{array} \right.  
\end{align*}
 I thus obtain 
\begin{align}\label{JizzJazz}
\Big| \Big(1+\Big|\frac{\mathbf{q}(k)-k+2\Theta(k)}{\mathbf{q}(k)-k }\Big|^{2}\Big)^{-\frac{1}{2}}    -\mathbf{r}_{-}^{\frac{1}{2}}(k) \Big|=\mathit{O}(|k|^{-1}),
\end{align}
where for the case $S(k)\Theta(k)>0$, I have used the identity
   $$\Big| \frac{ \frac{\alpha}{2\pi}+\gamma(\beta) +\beta   }{ \frac{\alpha}{2\pi}  +\gamma(\beta) -\beta   }   \Big|= \Big| \frac{  \gamma(\beta)+\beta  }{ \gamma(\beta)-\beta  }  \Big|. $$
Therefore, combining~(\ref{JizzJazz}) with~(\ref{BarlyFields}), the coefficient $\eta\big(k,-\mathbf{n}(k)\big)$ is equal to  $S\big(k\Theta(k)\big) \mathbf{r}_{-}^{\frac{1}{2}}(k)+\mathit{O}(|k|^{-1})$.

\vspace{.5cm}

\noindent Part (2):\hspace{.15cm} By converting  $\widetilde{\psi}_{k}$ to the momentum basis, operating with $e^{\textup{i}vX}$, and translating back to the Bloch basis $(\widetilde{\psi}_{k+v+n})_{n\in \Z}$, the coefficients $\kappa_{v}(k,n)$ can be written as
\begin{align}\label{Kap} \kappa_{v}(k,n)= \sum_{m\in \Z} \overline{\eta}(k+v+n,m-n)\eta(k,m).     
\end{align}
From~(\ref{Etas}), I have the inequality 
\begin{align}\label{SlyGirl}
|\eta(p,r)|\leq \frac{|\mathbf{q}^{2}(p)-p^{2}| }{\big|\mathbf{q}(p)+p+r\big|\,\big|\mathbf{q}(p)-p-r\big|}\leq \frac{ h}{ \big|\mathbf{q}(p)+p+r\big|\,\big|\mathbf{q}(p)-p-r\big|} , 
\end{align}
where $h:=\sup_{k\in \R}\mathbf{q}^{2}(p)-p^{2}$ and the first inequality follows by using the lower bound $\frac{4\mathbf{q}^{2}(p)|e^{2\pi\textup{i}(\mathbf{q}(p)-p)}-1|^{2}   }{|\mathbf{q}^{2}(p)-p^{2}|^{2}}$ for $N_{p}$ (i.e. from the single term $n=0$ from its sum).  Recall that  for large enough $|p|$, then $|\mathbf{q}(p)-p |\leq \frac{\alpha}{\pi|p|} $.  Roughly speaking, I will find the inequality (\ref{SlyGirl})  useful when $|p|$ is large enough so $|\mathbf{q}(p)-p|\leq \frac{1}{4}$ and $r$ is not $0$ or $-\mathbf{n}(p)$.  In that case, the terms in the denominator have the lower bounds $\big|\mathbf{q}(p)+p+r\big|,\,\big|\mathbf{q}(p)-p-r\big|\geq \frac{1}{4}$.  When~(\ref{SlyGirl}) is not of use, I still have the trivial bound  $|\eta(p,r)|\leq 1$.

For notational ease, I will restrict to the case $\mathbf{n}(k)\neq \mathbf{n}(k+v)$ in this proof.  Let me first focus on the values $n$  such that $ \frac{\alpha }{\pi}|k+v+n|^{-1}\leq \frac{1}{4}$.  Define the constants $a_{j}\in \R$ through
\begin{align*}
a_{j}\equiv a_{j}(k,v,n)=\left\{  \begin{array}{cc} \mathbf{q}(k)-k &  j=1,  \\ -\mathbf{q}(k)-k &  j=2,  \\ \mathbf{q}(k+v+n)-k-v & j=3, \\-\mathbf{q}(k+v+n)-k-v & j=4 .
    \end{array} \right.  
\end{align*}
I will use the inequality~(\ref{SlyGirl}) to bound the sum of the terms $m$ from~(\ref{Kap}) with the special values $m= 0, -\mathbf{n}(k),  n,\,-\mathbf{n}(k+v)-n$ removed:
\begin{align}\label{Giorgi}
\frac{1}{h^{2}} \sum_{\substack{ m\neq 0, -\mathbf{n}(k),   \\  n,\,-\mathbf{n}(k+v)-n  }  }  \big|\overline{\eta}(k+v+n,m-n)\eta(k,m)\big|     \leq \sum_{\substack{ m\neq 0, -\mathbf{n}(k),   \\  n,\,-\mathbf{n}(k+v)-n  }  }\frac{1}{|m-a_{1}||m-a_{2}||m-a_{3}||m-a_{4}|}.
\end{align}
 Let $d>0$  be the largest radius for intervals centered at integer points such that the intervals never contain more than two of the $a_{j}'s$:
$$ d\equiv d(k,v,n)= \sup \big\{ s\in \R_{+}\,\big| \,\forall (m\in \Z):\, |\{ a_{1},a_{2},a_{3},a_{4}  \}\cap [m-s,m+s] |\leq 2    \big\}  .   $$
If no three elements of $\{a_{1},a_{2},a_{3},a_{4}\}$ are within a radius $d>0$ from any single integer, then the  sum~(\ref{Giorgi}) has the following bound 
\begin{align*}
\frac{1}{d^{2}}\sum_{j=1,2,3,4} \sum_{\substack{ m\neq 0, -\mathbf{n}(k),   \\  n,\,-\mathbf{n}(k+v)-n  }  }\frac{1}{|m-a_{j}|^{2}}< \frac{8}{d^{2}}\Big(1+\int_{0}^{\infty}dx\frac{1}{(\frac{1}{4}+x)^{2} }\Big)=\frac{40}{d^{2}}.
\end{align*}
The inequality is a Riemann upper bound using that the distance of the $a_{j}$'s to an integer not equal to $0$, $-\mathbf{n}(k)$,  $n$, or $-\mathbf{n}(k+v)-n $ is $\geq \frac{1}{4}$.  I  claim  that the  
maximal radius $d(k,v,n)$ increases proportionally to $|k|\vee |n|$.  Within the nearest integer, I have $a_{1}\approx 0$, $a_{2}\approx-2k$, $a_{2}\approx n$, and $a_{4}\approx-2(k+v)-n$. For $n$ large enough (i.e. $|n|\geq 6|k|$), then the claim is clearly true, since $|v|\leq \frac{1}{2}|k|$.  For $|n|$ on the order of $|k|$ or smaller, I already have that $0$ and $-2k$ are far apart.  In order to have three of the $a_{j}$'s within a small radius,  then both $n$ and  $-2(k+v)-n$ must be near $0$ or $-2k$.  However, $n$ and  $-2(k+v)-n$ can only reach within a distance $\leq \frac{1}{4}|k|$ of each other if both are at least a distance $\geq\frac{1}{4}|k|$ from  $0$ and $-2k$ (again by the constraint $|v|\leq \frac{1}{2}|k|$).  

For the finitely many values of $n$ such that $ \frac{\alpha}{\pi}|k+v+n|^{-1}> \frac{1}{4}$, I can give almost the same treatment.  For the integers $m$ closest to $-\mathbf{q}(k+v+n)-k-v$ and $\mathbf{q}(k+v+n)-k-v$,  I simply use the bound $|\eta(k+v+n,\,m)|\leq 1$.  The $m$'s closest to those values must be a distance $\geq \frac{1}{4}|k|$ from either $0$ or $-2k$, so $|\eta(k,m)|\leq \frac{8h}{|k|}$.  The remainder of the terms can be treated as in the case above.  In both cases, the sums were bounded by a constant multiple of $(|k|\vee |n| )^{-2}$.      

Now, I deal with the four exceptional terms $m=0,\,-\mathbf{n}(k),\, n,\, -n-\mathbf{n}(k+v)$, where exactly one the terms $\eta(k+v+n,m-n)$  or $\eta(k,\,m)$ is of the form $\eta(p,r)$ for  $r=0,\,-\mathbf{n}(p)$.  For that term,  I  use the inequality $|\eta(p,r)|\leq 1$ rather than~(\ref{SlyGirl}).  For the other term, I apply~(\ref{SlyGirl}) and approximate $2(k+v)\approx \mathbf{n}(k+v) $ and $2k\approx \mathbf{n}(k)$ (and I double the constant $h$ to cover the error resulting from the approximation).  The term $|\overline{\eta}(k+v+n,\,m-n)\eta(k,\,m)|$ is smaller than 
\begin{align*}
\left\{  \begin{array}{cc}  2 h\big(|n||n+\mathbf{n}(k+v) | \big)^{-1} &  m=0 , \\  2 h \big(|n+\mathbf{n}(k)||n+ \mathbf{n}(k+v)-\mathbf{n}(k)  |  \big)^{-1} &  m=-\mathbf{n}(k) , \\  2 h \big(|n||n+\mathbf{n}(k)  | \big)^{-1} & m=n , \\2 h\big(|n+\mathbf{n}(k+v)||n+\mathbf{n}(k+v)-\mathbf{n}(k)  | \big)^{-1} & m=-n-\mathbf{n}(k+v).
    \end{array} \right.  
\end{align*}
All of the four terms are bounded by $2h$ times
\begin{align}\label{Slacker}
   \big(|n|\wedge|n-\mathbf{n}(k)|\big)^{-1}\big(|n+\mathbf{n}(k)|\wedge|n+\mathbf{n}(k+v)|\big)^{-1}.
\end{align}
For values of $n$ near $I(k,v)$, this will be larger than the expression $(|k|\vee |n| )^{-2}$ that bounds the remainder of the terms not in $I(k,v)$, although for larger values of $n$, they have the same order.  Thus,  the values $|\kappa_{v}(k,n)|$ are bounded by some multiple $h^{\prime}$ of~(\ref{Slacker}).

I now have that
\begin{align*}
&\sum_{n\notin I(k,v)}\textup{dist}\big(n,\,I(k,v)\big) \,|\kappa_{v}(k,n)|^{2} \\ &< h^{\prime}  \sum_{n\notin I(k,v)} \textup{dist}\big(n,\,I(k,v)\big) \big(|n|\wedge|n-\mathbf{n}(k)+\mathbf{n}(k+v)|\big)^{-2}\big(|n+\mathbf{n}(k)|\wedge|n+\mathbf{n}(k+v)|\big)^{-2}
\\ & \leq h^{\prime}  \sum_{\substack{ n\notin I(k,v) \\ 4S(k)n\geq -|k| } }\big(|n+\mathbf{n}(k)|\wedge|n+\mathbf{n}(k+v)|\big)^{-2}+ h^{\prime}\sum_{ \substack{n\notin I(k,v) \\  4S(k)n \leq -|k|} } \big(|n|\wedge|n-\mathbf{n}(k)+\mathbf{n}(k+v)|\big)^{-2}.
\end{align*}
Since $|v|\leq \frac{1}{2}|k|$, the above is bounded by the following
\begin{align*}
 2h^{\prime}  \sum_{m=1 }^{\infty}\frac{1}{(\frac{1}{4}|k|+m)^{2}}= \frac{8 h^{\prime} }{|k|}\,\frac{4}{|k|}  \sum_{m=1 }^{\infty}\frac{1}{(1+\frac{4m}{|k|})^{2}} \leq  \frac{8h^{\prime}}{|k|}\int_{0}^{\infty}dx\frac{1}{(1+x)^{2}}=\frac{8 h^{\prime} }{|k|},
\end{align*}
where the inequality is by a Riemann upper bound.  

\vspace{.5cm}

\noindent Part (3): \hspace{.15cm} The analysis from Part (1) gives errors for $|\kappa_{v}\big(k,m\big)|^{2},\,m\in I(k,v)$ of order $\mathit{O}(|k|^{-1})$, but there is additional decay when $\beta(k)$ or $\beta(k+v)$ are large.  All the cases involve similar reasoning, so I focus on the case for $|\kappa_{v}\big(k,-\mathbf{n}(k+v )\big)|^{2}$ with $\mathbf{n}(k+v)\neq \mathbf{n}(k)$.  If there is a $C'$ such that 
\begin{align}\label{Clooney}
 \big|\kappa_{v}\big(k,-\mathbf{n}(k+v )\big)- S\big((k+v)\Theta(k+v)\big)  \mathbf{r}_{-}^{\frac{1}{2}}(k+v)\mathbf{r}_{+}^{\frac{1}{2}}(k) \big|\leq \frac{C'}{|k|}
\end{align}
for some $C'>0$, then 
\begin{align*}
\Big| \big|\kappa_{v}\big(k,-\mathbf{n}(k+v )\big)\big|^{2}-\mathbf{r}_{-}(k+v)\mathbf{r}_{+}(k)\Big|\leq & \frac{2C'}{|k|}\mathbf{r}_{-}^{\frac{1}{2}}(k+v)\mathbf{r}_{+}^{\frac{1}{2}}(k) \\ \leq & \frac{C''}{|k|}  \big(1+|\beta(k+v)|\big)^{-1} .
\end{align*}
The second inequality is for some $C''>0$, since $\mathbf{r}_{-}(k)$ is bounded by a constant multiple of $(1+|\beta(k)|)^{-2}$.  Thus, establishing~(\ref{Clooney})  is sufficient for the proof.

By~(\ref{Kap}) and the triangle inequality,
\begin{align*}
 \big|\kappa_{v}\big(k,-\mathbf{n}(k+v )\big)- \overline{\eta}&\big(k+v-\mathbf{n}(k+v),\mathbf{n}(k+v)\big)\eta(k,0)\big| \\ 
  \leq & \big|\overline{\eta}\big(k+v-\mathbf{n}(k+v),-\mathbf{n}(k)+\mathbf{n}(k+v)\big)\eta\big(k,-\mathbf{n}(k)\big)\big| \\ &+\big|\overline{\eta}(k+v-\mathbf{n}(k+v),0)\eta\big(k,-\mathbf{n}(k+v)\big)\big| \\  &+ \big|\sum_{m\neq I(k,k+v)}\overline{\eta}\big(k+v-\mathbf{n}(k+v),m+\mathbf{n}(k+v)\big)\eta(k,m)       \big|.
 \end{align*}
 By~(\ref{QuasiToStand}), $\sum_{n\neq 0,-\mathbf{n}(k)  }| \eta(k,n)|^{2}=\mathit{O}(k^{-2})$ and with Cauchy-Schwarz the last term above is $\mathit{O}(|k|^{-2})$.  Moreover, by the bounds  in Part (2), the first two terms are each bounded by a multiple $c'>0 $  of $|k|^{-1} $.  For large enough $|k|$, I have
$$\big|\kappa_{v}\big(k,-\mathbf{n}(k+v )\big)- \overline{\eta}\big(k+v-\mathbf{n}(k+v),\mathbf{n}(k+v)\big)\eta(k,0)\big|\leq \frac{3c'}{|k|}.  $$
  By the analysis in Part (1), there is a $C>0$ such that
\begin{align*}
\big|\overline{\eta}\big(k+v-\mathbf{n}(k+v),\mathbf{n}(k+v)\big)\eta(k,0)- S\big((k+v)\Theta(k+v)\big)  \mathbf{r}_{-}^{\frac{1}{2}}(k+v)\mathbf{r}_{+}^{\frac{1}{2}}(k) \big|\leq \frac{C}{|k|}.   
\end{align*}
Putting the above inequalities together, then I obtain~(\ref{Clooney}) for $C'=C+3c'$.

\vspace{.5cm}

\noindent Part (4):\hspace{.15cm} By the bound in Part (3) for $\big|\kappa_{v}\big(k,-\mathbf{n}(k+v)\big)\big|^{2}$ when $v=\frac{m}{2}+\theta-k$,
\begin{align*} \int_{-\frac{1}{4}  }^{\frac{1}{4} }d\theta\,\big|\kappa_{\frac{m}{2}+\theta-k}(k,-m)\big|^{2} & \leq   \int_{-\frac{1}{4}  }^{\frac{1}{4}}d\theta\,\Big(\mathbf{r}_{+}(k)\mathbf{r}_{-}\big(\frac{m}{2}+\theta\big)   +\frac{   c}{|k|\big(1+|\beta(\frac{m}{2}+\theta)| \big)      }\Big)\\ & \leq  \frac{2}{|m|} \int_{-\infty }^{\infty}dw\,\frac{1}{1+\big|\frac{|w|+\gamma(w)   }{|w|-\gamma(w)  }   \big|^{2}  }+\frac{c}{2|k|}\\  &=\mathit{O}(|k|^{-1}),   
\end{align*}
where the order equality uses that $|m|\geq |k|$.

\end{proof}

Let  $T:L^{1}(\R)$ be the trace preserving map with integral kernel $T(k_{1},k_{2})=\mathcal{R}^{-1}J(k_{1},k_{2})$.  The idealized momentum process~(\ref{Master}) has a pseudo Poisson form in which jump times are determined by an outside Poisson clock, and the jump transition densities are given by the operator $T$.  A similar structure holds for the original Lindblad dynamics by (2) of Lem.~\ref{LindbladTech}.

\begin{proof}[Proof of Thm.~\ref{FWLimit}]
Define the map $\rho\rightarrow [\rho]_{\scriptscriptstyle{\textup{D}}}$  from $\mathcal{B}_{1}\big(L^{2}(\R)\big)$ to  $L^{1}(\R)$, which sends a density matrix to its diagonal density in the extended-zone scheme representation $[\rho]_{\scriptscriptstyle{\textup{D}}}(k):={ }_{\scriptscriptstyle{Q}}\langle k|\rho|k\rangle_{\scriptscriptstyle{Q}  }$.  The diagonal map is well-defined by the discussion in Appendix~\ref{SecDiaTech}.   Let $\xi=(t_{1},\cdots ;t_{\mathcal{N} })$ be the sequence of Poisson times less than $t$.   Define the map $\Phi_{t,\xi}^{(\lambda)}:\mathcal{B}_{1}\big(L^{2}(\R)\big)$ as
$$\Phi_{t,\xi}^{(\lambda)}(\rho)=\mathcal{R}^{-\mathcal{N}}    e^{-\frac{\textup{i}(t-t_{\mathcal{N}})}{\lambda}H}  \Psi(   \cdots  e^{-\frac{\textup{i}(t_{2}-t_{1})}{\lambda}H}\Psi(e^{-\frac{\textup{i}}{\lambda}t_{1}H}\rho e^{\frac{\textup{i}t_{1}}{\lambda}H})e^{\frac{\textup{i}(t_{2}-t_{1})}{\lambda}H}\cdots)e^{\frac{\textup{i}(t-t_{\mathcal{N} })}{\lambda}H}.     $$ 
The maps $\Phi_{t,\xi}^{(\lambda)}$ are completely positive and preserve trace for all $\xi$ and $t$.
Similarly to the construction~(\ref{Couture}) of the solution to the Lindblad dynamics, I have that $\rho_{\lambda,t}=\mathbb{E}\big[\Phi_{t,\xi}^{(\lambda)}(\rho)   \big]$ by (2) of Lemma~\ref{LindbladTech}, where $\mathbb{E}$ is the expectation with respect to the Poisson process with rate $\mathcal{R}$ for the sequences $\xi$.  Also $\mathcal{D}_{t}=\mathbb{E}\big[T^{\mathcal{N}(t,\xi)}[\rho]_{\scriptscriptstyle{\textup{D}}}\big]$, where $\mathcal{N}(t,\xi)$ is the value of the Poisson process at time $t$.    

  I can write
\begin{align*}
D_{\lambda,t }-\mathcal{D}_{t}& = \mathbb{E}\Big[  \big[\Phi_{t,\xi}^{(\lambda)}  ( \rho)\big]_{\scriptscriptstyle{\textup{D}}}-   T^{\mathcal{N}(t,\xi)}[\rho]_{\scriptscriptstyle{\textup{D}}} \Big]\\ &=\mathbb{E}\Big[\sum_{n=1}^{\mathcal{N}(t,\xi)}      T^{\mathcal{N}(t,\xi)-n}\big[\Phi_{t_{n},\xi}^{(\lambda)}  ( \rho)\big]_{\scriptscriptstyle{\textup{D}}}-T^{\mathcal{N}(t,\xi)-n+1}\big[\Phi_{t_{n-1},\xi}^{(\lambda)}  ( \rho)\big]_{\scriptscriptstyle{\textup{D}}}          \Big],   
\end{align*}
where I have inserted a telescoping sum.   Since $T$ is contractive in the $1$-norm, $\Phi_{r,\xi}^{(\lambda)}$ is contractive in trace norm, and conjugation by $e^{-\frac{\textup{i}r }{\lambda}H}$  is contractive in trace norm,  I have the second inequality below
\begin{align*}
\| D_{\lambda,t}-\mathcal{D}_{t}\|_{1}  \leq &  e^{-\mathcal{R}t}\sum_{\mathcal{N}=1}^{\infty}\mathcal{R}^{\mathcal{N}}\sum_{n=1}^{\mathcal{N}}\left \|\int_{0\leq t_{1}\cdots t_{\mathcal{N}}\leq t}  T^{\mathcal{N}-n}\big[\Phi_{t_{n},\xi}^{(\lambda)}  ( \rho)\big]_{\scriptscriptstyle{\textup{D}}}-T^{\mathcal{N}-n+1}\big[\Phi_{t_{n-1},\xi}^{(\lambda)}  ( \rho) \big]_{\scriptscriptstyle{\textup{D}}}  \right \|_{1} 
\\ \leq &  e^{-\mathcal{R}t}\sum_{\mathcal{N}=1}^{\infty}\mathcal{R}^{\mathcal{N}}\sum_{n=1}^{\mathcal{N}}\int_{0\leq t_{1}\cdots \leq t_{n-1}\leq t_{n+1}\leq \cdots  t_{\mathcal{N}}\leq t}\\ &\times \sup_{\| \rho\|_{1}=1}     \left \| \int_{t_{n-1}}^{t_{n+1}}dt_{n}\,\Big(    \mathcal{R}^{-1}\big[\Psi(e^{-\frac{\textup{i}(t_{n}-t_{n-1} )}{\lambda}H}\rho e^{\frac{\textup{i}(t_{n}-t_{n-1}) }{\lambda}H})\big]_{\scriptscriptstyle{\textup{D}}} -T[\rho]_{\scriptscriptstyle{\textup{D}}}  \Big) \right  \|_{1}
   \\   = & \Big(e^{-\mathcal{R}t}\sum_{\mathcal{N}=1}^{\infty}\frac{\mathcal{R}^{\mathcal{N}}t^{\mathcal{N}-1}}{(\mathcal{N}-1)!}    \Big)  \sup_{a,b\in \R}\sup_{\| \rho\|_{1}=1}     \left \| \int_{a}^{b}dc\,\Big(    \mathcal{R}^{-1}\big[\Psi(e^{-\frac{\textup{i}c}{\lambda}H}\rho e^{\frac{\textup{i}c}{\lambda}H})\big]_{\scriptscriptstyle{\textup{D}}} -T[\rho]_{\scriptscriptstyle{\textup{D}}}  \Big) \right  \|_{1},
\end{align*}
where  I identify $t_{0}\equiv 0$ and $t_{N+1}\equiv t$ for the boundary terms. 
The first inequality above is the triangle inequality.  With the above and the summation formula $$\mathcal{R}+\mathcal{R}^{2}t =  e^{-\mathcal{R}t}\sum_{\mathcal{N}=1}^{\infty}\mathcal{N} \frac{\mathcal{R}^{\mathcal{N}}t^{\mathcal{N}-1}}{(\mathcal{N}-1)!},   $$
it follows that
\begin{align}\label{Tritip}
 \| D_{\lambda,t}-\mathcal{D}_{t} \|_{1}\leq (\mathcal{R}+\mathcal{R}^{2}t)   \sup_{a,b\in \R_{+} }\sup_{\| \rho\|_{1}=1}  \left \| \int_{a}^{b}dc\,\big(  \mathcal{R}^{-1}\big[\Psi(e^{-\frac{\textup{i}c}{\lambda}H}\rho e^{\frac{\textup{i}c}{\lambda}H})\big]_{\scriptscriptstyle{\textup{D}}}-T[\rho]_{\scriptscriptstyle{\textup{D}}}  \big)  \right \|_{1}.
 \end{align}

The remainder of the proof is concerned with proving that the supremum in~(\ref{Tritip}) is bounded by a multiple of $\lambda$ for $\lambda\ll 1$.   By a direct calculation,
\begin{align}
\int_{a}^{b}dc\,&\big(    \mathcal{R}^{-1}\big[\Psi(e^{-\frac{\textup{i}c}{\lambda}H}\rho e^{\frac{ic}{\lambda}H})\big]_{\scriptscriptstyle{\textup{D}}}(k)-T[\rho]_{\scriptscriptstyle{\textup{D}}}(k)\big)\nonumber  \\   = & -\textup{i}\lambda  \int_{\R}dv\,\frac{j(v)}{\mathcal{R}}\sum_{n\neq m}\kappa_{v}(k-n-v,n)\overline{\kappa}_{v}(k-m-v,m)\rho(k-n-v,k-m-v) \nonumber \\ & \times \frac{e^{-\frac{ib}{\lambda}\big(E(k-n-v)-E(k-m-v)\big)}-e^{-\frac{\textup{i}a}{\lambda}\big(E(k-n-v)-E(k-m-v)\big)}     }{E(k-n-v)-E(k-m-v) }. 
\end{align}
Using the triangle inequality and making a change of variables $k-n-v\rightarrow k$ and $M=n-m$, I have
\begin{align}\label{Bouncer}
\Big\| \int_{a}^{b}dc\, &\big(    \mathcal{R}^{-1}\big[\Psi(e^{-\frac{\textup{i}c}{\lambda}H} \rho e^{\frac{\textup{i}c}{\lambda}H})\big]_{\scriptscriptstyle{\textup{D}}}(k)-T[\rho]_{\scriptscriptstyle{\textup{D}}}(k)\big) \Big\|_{1}\nonumber \\ & \leq \lambda\sum_{M\neq 0} \int_{\R} dk\,\int_{\R}dv\,\frac{j(v)}{\mathcal{R}}\,  \frac{\sum_{n}|\kappa_{v}(k,n)|\,|\kappa_{v}(k+M,n-M)|}{|E(k)-E(k+M) | }|\rho(k,k+M)|\nonumber   \\ & \leq \lambda \big(\frac{2\pi}{\alpha}+\sum_{M\neq 0}C_{M}        \big) ,
\end{align}
where  the values $C_{M}$ are defined as
$$C_{M}=  \sup_{|2k+M|\geq  1 } \int_{\R}dv\,\frac{j(v)}{\mathcal{R}}\,  \frac{\sum_{n}|\kappa_{v}(k,n)|\,|\kappa_{v}(k+M,n-M)|}{|E(k)-E(k+M) | }.$$
The second inequality in~(\ref{Bouncer}) can be found by splitting the integration $\int_{\R}dk$ into the regions $|2k+M|\leq 1$ and $|2k+M|>1$.  This splitting isolates some bad behavior (non-decay for large $M$) occurring in regions of $k$ where $|2k+M|$ is small.  The $C_{M}$'s in~(\ref{Bouncer}) arise by applying Holder's inequality over the integration $|2k+M|>1$ and by the  inequality $|\rho(k_{1},k_{2})|\leq \frac{1}{2}\rho(k_{1},k_{1})+\frac{1}{2}\rho(k_{2},k_{2})$ for the integral kernel of $\rho$.  The kernel inequality follows because $\rho$ is a positive operator.   The $\frac{2\pi}{\alpha}$ term in~(\ref{Bouncer}) comes from the $|2k+M|\leq 1$ integration for which I apply mainly brute force:
\begin{align*}
\sum_{M\neq 0} & \int_{|2k+M|\leq 1 } dk\, \int_{\R}dv\,\frac{j(v)}{\mathcal{R}}\,  \frac{\sum_{n}|\kappa_{v}(k,n)|\,|\kappa_{v}(k+M,n-M)|}{|E(k)-E(k+M) | }|\rho(k,k+M)| \\ & \leq \big(\sup_{k,M}|E(k)-E(k+M) |^{-1}\big)\sum_{M\neq 0}\int_{|2k+M|\leq 1 }dk\,\big(\frac{1}{2}\rho(k,k)+\frac{1}{2}\rho(k+M,k+M)\big)\\ & \leq  (\inf_{n} g_{n}  )^{-1}2\int_{\R}dk\,\rho(k,k)=\frac{2\pi}{\alpha},  
\end{align*}
where $g_{n}>0$ are the gaps between the energy bands occurring at momenta $k\in \frac{1}{2}\Z-\{0\}$.   The infemum of the energy gaps $g_{n}$ is  $\frac{\alpha}{\pi}$.  The sum over $n\in \Z$ of $|\kappa_{v}(k,n)|\,|\kappa_{v}(k+M,n-M)|$ is bounded through the Cauchy-Schwarz inequality and $\sum_{n}|\kappa_{v}(k,n)|^{2}=1$.  The key observation is $k$ and $k+M$ must lie on different energy bands, and it follows that $E(k)$ and $E(k+M)$ differ by at least the length of the smallest energy band  gap.

  Next, I need to show that the sum of the $C_{M}$'s is finite.  A single $C_{M}$ can be bounded by $(\inf_{n} g_{n}  )^{-1}$ using some of the same reasoning as above.   I will show  $C_{M}$ decays on the order of $|M|^{-\frac{3}{2}}$, and thus is a summable series.  The  difference $|E(k)-E(k+M)|$ necessarily becomes large for $|M|\gg 1$ except for cases when $k+M$ is close to $-k$.  However, by my restriction $|2k+M|\geq 1$, the momenta $k$ and $k+M$ will not lie on neighboring energy bands, and thus their energies must differ by at least the length $L_{|M|}$ of the $|M|$th energy band.  By~\cite[Thm.2.3.3]{Solve}, $L_{n}$ grows with linear order for $n\gg 1$.  Also, if $|k|\wedge |k+M|\leq \frac{1}{4}|M|$, then  $|E(k)-E(k+M)|\geq \sum_{ \frac{1}{4}|M|\leq n\leq |M|}L_{n}$ will grow on quadratic order in $\mathit{O}(|M|)$.  
  
  I have  $C_{M}< L_{M}^{-1}C^{\prime}_{M}+C_{M}^{\prime \prime}$, where $C^{\prime}_{N}$ and $C^{\prime \prime}_{N}$ are defined as
\begin{eqnarray*}  
  C_{M}^{\prime}&=& \sup_{\substack{|2k+M|\geq  1 \\  |k|\wedge |k+M|\geq \frac{1}{4}|M| }  } \int_{\R}dv\,\frac{j(v)}{\mathcal{R}}\,  \sum_{n}|\kappa_{v}(k,n)|\,|\kappa_{v}(k+M,n-M)|,  \\
   C_{M}^{\prime \prime}&=& \sup_{ |k|\wedge |k+M|\leq \frac{1}{4}|M|   } \int_{\R}dv\,\frac{j(v)}{\mathcal{R}}\,   \frac{\sum_{n}|\kappa_{v}(k,n)|\,|\kappa_{v}(k+M,n-M)|}{|E(k)-E(k+M) | }.
  \end{eqnarray*}
By the observation above, $C_{M}^{\prime \prime}\leq \sup_{ |k|\wedge |k+M|\leq \frac{1}{4}M   }|E(k)-E(k+M)|^{-1} $ decays quadratically. The $C_{M}^{\prime \prime}$ are therefore summable, and I turn to the only somewhat delicate part of the proof, which requires isolating the problematic terms in the sum of the $|\kappa_{v}(k,n)|\,|\kappa_{v}(k+M,n-M)|$ contributing to $C_{M}^{\prime}$ to which I can apply Lem.~\ref{BadTerms}.  

For fixed $k,v$, the  Cauchy-Schwarz inequality and $\sum_{n}|\kappa_{v}(k,n)|^{2}=1$  yield 
\begin{multline}\label{Rabble}
\sum_{n}|\kappa_{v}(k,n)|\,|\kappa_{v}(k+M,n-M)|\leq  \sum_{ \substack{ n\in I(k,v) \\  n\in I(k+M,v)+M}}|\kappa_{v}(k,n)|\,|\kappa_{v}(k+M,n-M)| \\+\Big(\sum_{n\notin I(k,v)}|\kappa_{v}(k,n)|^{2}\Big)^{\frac{1}{2}}+\Big(\sum_{n\notin I(k+M,v)+M}|\kappa_{v}(k+M,n)|^{2}\Big)^{\frac{1}{2}}.      
\end{multline}
Under the constraint $|k|\wedge |k+M|\geq \frac{1}{4}|M|$ and by Part (1) of Lem.~\ref{BadTerms}, the  terms on the bottom line of~(\ref{Rabble}) decay on the order $|k|^{-\frac{1}{2}}\leq 2|M|^{-\frac{1}{2}}$ and $|k+M|^{-\frac{1}{2}}\leq 2|M|^{-\frac{1}{2}}$, respectively.  The weighted integration $\int_{\R} dv\, \frac{j(v)}{\mathcal{R}}$ is finite so these terms make contributions to $C_{M}^{\prime}$ that vanish with order $\mathit{O}(|M|^{-\frac{1}{2}})$.   

Controlling the integral $\int_{\R}dv\,\frac{ j(v)}{\mathcal{R}}$ of the first term on the right side in~(\ref{Rabble}) will now require invoking the decay of $j(v)$ at infinity and its boundedness through Lem.~\ref{BadTerms}.  The constraints $ n\in I(k,v)$, $ n\in I(k+M,v)+M$, $|2k+M|\geq 1$, and $|k|\wedge |k+M|\geq \frac{1}{4}|M|$ leave the following possibilities:
\begin{itemize}
\item The inequality  $|k+v+\frac{1}{2}M|\leq \frac{1}{4}$ holds and either $n=0$ or $n=M$ holds.
\item The inequality $|v-\frac{1}{2}M| \leq \frac{1}{4}$ holds and either     $|k+\frac{1}{2}n|\leq \frac{1}{4} $ or $|k+v+\frac{1}{2}n|\leq \frac{1}{4}$ holds.
\end{itemize}
The second case vanishes, since $|v|\approx \frac{1}{2}|M|\gg 1$ and by Jensen's inequality: 
$$ \int_{|v|\geq \frac{1}{2} M}dv\,\frac{j(v)}{\mathcal{R}}\,  \sum_{n}|\kappa_{v}(k,n)|\,|\kappa_{v}(k+M,n-M)| < \int_{|v|\geq \frac{1}{2}M}dv\,\frac{j(v)}{\mathcal{R}}\leq  \frac{4\sigma}{\mathcal{R}} |M|^{-2},
  $$
  where $\sigma=\int_{\R}dv j(v)v^{2}$.  
The first case follows by Part (4) of Lem.~\ref{BadTerms}.

\end{proof}

\section{Submartingales related to energy } \label{SecTransition}

In this section, I discuss certain key submartingales appearing in both the quantum and the limiting classical settings.  First, let me define an \textit{operator-valued submartingale}.  Consider a probability space $(\Omega, \mathcal{F}, P)$ with a filtration $\mathcal{F}_{t}$, a Hilbert space $\mathcal{H}$, and an operator-valued process $Y_{t}:\Omega \rightarrow \mathcal{B}(\mathcal{H})$ adapted to $\mathcal{F}_{t}$ and satisfying
\begin{align}\label{DefCond}
\mathbb{E}\big[|\langle f|\, Y_{t} f\rangle |   \big]<\infty, \quad \quad f\in \mathcal{H}.
\end{align}   I call $Y_{t}$ a \textit{submartingale} if $ \mathbb{E}\big[Y_{t}\big|\mathcal{F}_{s}  \big]-Y_{s}   $
 is a positive operator for all $t>s$.
Naturally, $Y_{t}$ is a \textit{martingale} if both $Y_{t}$ and $-Y_{t}$ are submartingales.  I can extend my definition to the case in which $Y_{t}$ may take values as an unbounded operator.  In this case, I require that there is a single dense space  
$ \textup{D}\subset \mathcal{H} $  such that domain of $Y_{t}$ almost surely contains  $\textup{D}$ for all $t$ and~(\ref{DefCond}) holds for all $f\in \textup{D}\subset \mathcal{H} $.   In the unbounded case, I refer to the process by the tuple $(Y_{t}, \textup{D})$.

 For the following discussion, I will consider a Schr\"odinger Hamiltonian $H=P^{2}+V(X)$ with positive potential $V$ and domain $\textup{D}(H)\subset L^{2}(\R)$.  The noise in my model is generated by an underlying L\'evy process $L_{t}$ with jump rate density $j(v)$.   As before, let $v_{n}\in \R, t_{n}\in \R_{+}$ denote the jumps and jump-times for the L\'evy process, and $\mathcal{N}_{t}$ be the Poisson counter for the jump-times.   Also let the unitaries $U_{\lambda,t}(\xi)$ be defined as in~(\ref{LaughableMan}).  Define the operator-valued process $G_{t}$ as
\begin{align}\label{FlyOver}
G_{t}\equiv G_{\lambda, t}(\xi)= U_{\lambda, t}^{*}(\xi)G  U_{\lambda,t}(\xi)  
\end{align}
for an observable $G$ acting on $L^{2}(\R)$.  On bounded observables $G$, the trajectories are right weak*-continuous with weak*-limits existing from the left, since the Hamiltonian evolution in the  Heisenberg representation is weak*-continuous. Analogously to~(\ref{Couture}),  the Heisenberg evolution for   $G\in \mathcal{B}\big(L^{2}(\R)\big)$ can be written  as 
\begin{align}\label{Heisenberg}
\Phi_{\lambda,t}^{*}(G)=\mathbb{E} \big[ G_{\lambda, t}(\xi)\big] .  
\end{align}
In Prop.~\ref{StochLem}, the formula~(\ref{Heisenberg}) may be interpreted  as the definition for the dynamical maps $\Phi_{\lambda,t}^{*}:\mathcal{B}\big(L^{2}(\R)\big)$.   I will suppress the $\lambda$ and $\xi$ dependence for the operator processes in the future.

  In the lemma below, I study~(\ref{FlyOver}) for the special cases $G=H$ and $G=H^{\frac{1}{2}}$, and prove that $\big(H_{t}, \textup{D}(H)\big)$ and $\big(H_{t}^{\frac{1}{2}},\textup{D}(H^{\frac{1}{2}})\big)$ are operator submartingales.  I show that each process is a sum of a martingale and an increasing part, and  I have presented the increasing part of the Doob-Meyer decomposition for $H_{t}^{\frac{1}{2}}$ in a form that is not predictable, but which will be useful later.  The linear spaces $\textup{D}(H)$ and $\textup{D}(H^{\frac{1}{2}} )$ are closed under the operation of the unitaries $U_{\lambda,t}(\xi)$, since the evolution $e^{-\frac{\textup{i}t}{\lambda} H}$ clearly leaves the domains invariant, and 
  $$e^{-\textup{i}vX}H^{\frac{1}{2}}e^{\textup{i}vX}-H^{\frac{1}{2}}\hspace{1cm}\text{and}\hspace{1cm} e^{-\textup{i}vX}He^{\textup{i}vX}-H     $$
are relatively bounded to $H^{\frac{1}{2}}$ and $H$, respectively.  These relative bounds can be shown using the Wigner-Weyl identity $e^{-\textup{i}vX}H e^{\textup{i}vX}= (P+v)^{2}+V(X) $ and, in the case of $H^{\frac{1}{2}}$, a resolvent representation for the square root of an operator (see~(\ref{Tybo})).  

Note that I have used the Wigner-Weyl relation $e^{-\textup{i}vX}H^{\frac{1}{2}}e^{\textup{i}vX}= \big((P +v)^{2}+V(X)    \big)^{\frac{1}{2}}$ in the statement of Part (2) of the proposition below. 

 \begin{proposition}\label{StochLem}
Let the operator-valued processes $H_{t}$ and $H_{t}^{\frac{1}{2}}$ be defined as in~(\ref{FlyOver}) for a Schr\"odinger operator  $H=P^{2}+V(X)$ with nonnegative  $V$ and domain  $\textup{D}(H)$. The processes $\big(H^{\frac{1}{2}}_{t},\textup{D}(H^{\frac{1}{2}} )\big)$ and  $\big(H^{\frac{1}{2}}_{t},\textup{D}(H^{\frac{1}{2}} )\big)$  are   submartingales that can be written as a sum of martingale parts $\big(\mathbf{M}_{t},\textup{D}(H)\big)  $, $\big(\mathbf{M}^{\prime}_{t}, L^{2}(\R)\big)$ and increasing parts $\big(\mathbf{A}_{t},\textup{D}(H)\big)   $, $(\mathbf{A}_{t}^{\prime},\textup{D}(H^{\frac{1}{2}})\big)$, respectively, with the forms found below.
  
\begin{enumerate}
\item The submartingale $H_{t}$ is equal to $\mathbf{M}_{t}+\mathbf{A}_{t}$ for
$$
\mathbf{M}_{t}:=2\sum_{n=1}^{\mathcal{N}_{t} }v_{n}P_{t_{n}^{-}}+\sum_{n=1}^{\mathcal{N}_{t} }v_{n}^{2}-\sigma t\quad \text{ and }\quad
\mathbf{A}_{t}:=H+\sigma t.$$

\item The submartingale $H_{t}^{\frac{1}{2}}$ is equal to $\mathbf{M}_{t}^{\prime}+\mathbf{A}_{t}^{\prime}$ for 
\begin{eqnarray*}
\mathbf{M}^{\prime}_{t}&:=&\frac{1}{2}\sum_{n=1}^{\mathcal{N}_{t} }\big((P_{t_{n}^{-}} +v_{n})^{2}+V(X_{t_{n}^{-}})    \big)^{\frac{1}{2}}- \big((P_{t_{n}^{-}}-v_{n})^{2}+V(X_{t_{n}^{-}})    \big)^{\frac{1}{2}},\\
\mathbf{A}^{\prime}_{t}&:=&H^{\frac{1}{2}}+ \frac{1}{2}\sum_{n=1}^{\mathcal{N}_{t} }\big((P_{t_{n}^{-}} +v_{n})^{2}+V(X_{t_{n}^{-}})    \big)^{\frac{1}{2}}+ \big((P_{t_{n}^{-}}-v_{n})^{2}+V(X_{t_{n}^{-}})\big)^{\frac{1}{2}} \\ & &\hspace{7cm}-2\big(P_{t_{n}^{-}}^{2}+V(X_{t_{n}^{-}})    \big)^{\frac{1}{2}}    .
\end{eqnarray*}
\item  Moreover, I have the operator relations  
$$\Phi_{\lambda, t}^{*}(H)= \mathbb{E}\big[ H_{t} \big]=H+\sigma t \quad \text{and} \quad \Phi_{\lambda, t}^{*}(H^{2})= \mathbb{E}\big[  H_{t}^{2}   \big]\leq H^{2}+ 3 \sigma H t +\mathcal{R}\varsigma t +\frac{3}{2}\sigma^{2}t^{2},  $$
where  $\varsigma:= \int_{\R} dv\,\frac{j(v)}{\mathcal{R}}v^{4}$.

\end{enumerate}

 \end{proposition}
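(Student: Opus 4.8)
\medskip
\noindent\textbf{Proof plan.} The structural fact driving everything is that the operator processes $H_{t}$, $H_{t}^{1/2}$ and $H_{t}^{2}$ are piecewise constant in $t$. Between two consecutive Poisson times the only change in $U_{\lambda,t}(\xi)$ is an outer factor $e^{-\frac{\textup{i}r}{\lambda}H}$, which commutes with $H$, $H^{1/2}$ and $H^{2}$, so none of these conjugated processes move; at a Poisson time $t_{n}$ a factor $e^{\textup{i}v_{n}X}$ is inserted, and the Wigner--Weyl identity $e^{-\textup{i}vX}He^{\textup{i}vX}=(P+v)^{2}+V(X)$, together with the fact that conjugation by a unitary commutes with the continuous functional calculus, gives the jump rules
\[
H_{t_{n}}-H_{t_{n}^{-}}=2v_{n}P_{t_{n}^{-}}+v_{n}^{2},\qquad
H_{t_{n}}^{1/2}-H_{t_{n}^{-}}^{1/2}=A_{n}(v_{n})-A_{n}(0),
\]
where $P_{t_{n}^{-}}=U_{\lambda,t_{n}^{-}}^{*}(\xi)PU_{\lambda,t_{n}^{-}}(\xi)$, $X_{t_{n}^{-}}$ is defined analogously, and $A_{n}(v):=\big((P_{t_{n}^{-}}+v)^{2}+V(X_{t_{n}^{-}})\big)^{1/2}$. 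Summing over the Poisson times $\le t$ (using $U_{\lambda,0}=I$) produces the telescoping identities $H_{t}=H+\sum_{n\le\mathcal{N}_{t}}(2v_{n}P_{t_{n}^{-}}+v_{n}^{2})$ and $H_{t}^{1/2}=H^{1/2}+\sum_{n\le\mathcal{N}_{t}}\big(A_{n}(v_{n})-A_{n}(0)\big)$. That $\textup{D}(H)$ and $\textup{D}(H^{1/2})$ are invariant under the $U_{\lambda,t}(\xi)$ and that~(\ref{DefCond}) holds will follow from the relative boundedness estimates noted before the statement (Wigner--Weyl for $H$, the resolvent representation~(\ref{Tybo}) for $H^{1/2}$) together with $\sigma<\infty$.

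For part (1) I split each jump $2v_{n}P_{t_{n}^{-}}+v_{n}^{2}$ into its $v$-odd part $2v_{n}P_{t_{n}^{-}}$ and its $v$-even part $v_{n}^{2}$. By symmetry of $j$ the conditional mean of $v_{n}$ is $0$, so $\sum_{n}2v_{n}P_{t_{n}^{-}}$ is a martingale; $\sum_{n}v_{n}^{2}$ has predictable compensator $\sigma t$ because the jump rate is $\mathcal{R}$ and $\mathbb{E}[v^{2}]=\sigma/\mathcal{R}$. Hence $H_{t}=\mathbf{M}_{t}+\mathbf{A}_{t}$ with the stated $\mathbf{M}_{t}$ and $\mathbf{A}_{t}=H+\sigma t$, and $\mathbf{A}_{t}$ is increasing because $H\ge0$ and $\sigma\ge0$; therefore $H_{t}$ is a submartingale. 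The integrability needed for the martingale property is exactly $\mathbb{E}\langle f|H_{t}f\rangle=\langle f|(H+\sigma t)f\rangle<\infty$, which is the first assertion of part (3).

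Part (2) follows the same template, but the decomposition is now forced through the purely algebraic identity
\[
A_{n}(v)-A_{n}(0)=\tfrac12\big(A_{n}(v)-A_{n}(-v)\big)+\tfrac12\big(A_{n}(v)+A_{n}(-v)-2A_{n}(0)\big).
\]
The first term is odd in $v$, hence has zero conditional mean by symmetry of $j$, so $\mathbf{M}_{t}'=\tfrac12\sum_{n}\big(A_{n}(v_{n})-A_{n}(-v_{n})\big)$ is a martingale; the second term produces the stated $\mathbf{A}_{t}'=H^{1/2}+\tfrac12\sum_{n}\big(A_{n}(v_{n})+A_{n}(-v_{n})-2A_{n}(0)\big)$, and $H_{t}^{1/2}=\mathbf{M}_{t}'+\mathbf{A}_{t}'$. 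To conclude that $H_{t}^{1/2}$ is a submartingale I must show $\mathbf{A}_{t}'$ is increasing, i.e.\ (after conjugating out the unitary $U_{\lambda,t_{n}^{-}}(\xi)$, which preserves positivity) the operator inequality
\[
\big((P+v)^{2}+V\big)^{1/2}+\big((P-v)^{2}+V\big)^{1/2}\ \ge\ 2\big(P^{2}+V\big)^{1/2},\qquad v\in\R,\ V\ge0 .
\]
This is the noncommutative analogue of the elementary convexity of $x\mapsto\sqrt{x^{2}+c}$, and it is the step I expect to be the main obstacle: the usual operator-concavity of $\sqrt{\,\cdot\,}$ and operator-convexity of the resolvent both produce bounds in the wrong direction, so the inequality has to come from a genuine use of $V\ge0$ (equivalently $P^{2}\le P^{2}+V$). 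My plan is to deduce it from operator convexity in $v$ of the affine family $v\mapsto\big((P+v)^{2}+V\big)^{1/2}$ — the midpoint case at $v,-v,0$ being precisely the displayed inequality — exploiting the resolvent/integral representation of the square root and the identity $(P+v)^{2}+V=e^{-\textup{i}vX}(P^{2}+V)e^{\textup{i}vX}$.

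For part (3): $\mathbb{E}[H_{t}]=\mathbb{E}[\mathbf{M}_{t}]+\mathbf{A}_{t}=H+\sigma t$ is immediate from part (1). For the second moment, $H_{t}^{2}=U_{\lambda,t}^{*}(\xi)H^{2}U_{\lambda,t}(\xi)$ is again piecewise constant, with jump increment obtained by conjugating $\big((P+v_{n})^{2}+V\big)^{2}-\big(P^{2}+V\big)^{2}=2v_{n}(HP+PH)+2v_{n}^{2}H+4v_{n}^{2}P^{2}+4v_{n}^{3}P+v_{n}^{4}$ by $U_{\lambda,t_{n}^{-}}(\xi)$. Applying the compensation formula for the underlying Poisson point process, the $v$-odd terms integrate to zero and one gets $\mathbb{E}[H_{t}^{2}]=H^{2}+\int_{0}^{t}\big(2\sigma\,\mathbb{E}[H_{r}]+4\sigma\,\mathbb{E}[P_{r}^{2}]+\mathcal{R}\varsigma\big)\,dr$ with $\varsigma=\mathcal{R}^{-1}\int j(v)v^{4}\,dv$. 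Inserting $\mathbb{E}[H_{r}]=H+\sigma r$ from part (1) and the pointwise operator bound $P_{r}^{2}=U_{\lambda,r}^{*}(\xi)P^{2}U_{\lambda,r}(\xi)\le U_{\lambda,r}^{*}(\xi)HU_{\lambda,r}(\xi)=H_{r}$, whence $\mathbb{E}[P_{r}^{2}]\le H+\sigma r$, and carrying out the $r$-integration yields an upper bound of the form $H^{2}+c_{1}\sigma Ht+\mathcal{R}\varsigma t+c_{2}\sigma^{2}t^{2}$ with explicit numerical constants $c_{1},c_{2}$, as asserted.
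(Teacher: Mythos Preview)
Your treatment of Parts (1) and (3) is correct and matches the paper's approach. For Part (3) your direct computation of the jump of $H_{t}^{2}$ is in fact slightly cleaner than the paper's route through $\langle\mathbf{M},\mathbf{M}\rangle_{t}$; your integrand $2\sigma\,\mathbb{E}[H_{r}]+4\sigma\,\mathbb{E}[P_{r}^{2}]+\mathcal{R}\varsigma$ is right, and after bounding $\mathbb{E}[P_{r}^{2}]\le H+\sigma r$ one obtains $H^{2}+6\sigma Ht+3\sigma^{2}t^{2}+\mathcal{R}\varsigma t$. (The constants $3$ and $\tfrac32$ printed in the statement appear to come from dropping the factor $2$ in front of $\sum v_{n}P_{t_{n}^{-}}$ when computing the quadratic variation; either way the bound is of the asserted form and this is all that is used downstream.)

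The gap is in Part (2). Your decomposition $A_{n}(v)-A_{n}(0)=\tfrac12\big(A_{n}(v)-A_{n}(-v)\big)+\tfrac12\big(A_{n}(v)+A_{n}(-v)-2A_{n}(0)\big)$ and the martingale claim for $\mathbf{M}'_{t}$ are exactly what the paper does, but the positivity of the increments of $\mathbf{A}'_{t}$ is the whole point and you leave it as a plan. Saying ``operator convexity in $v$'' is the correct target---the midpoint inequality you need is equivalent to $\textup{i}[X,\textup{i}[X,H^{1/2}]]\ge0$ via the second-order Taylor formula
\[
e^{\textup{i}vX}H^{1/2}e^{-\textup{i}vX}+e^{-\textup{i}vX}H^{1/2}e^{\textup{i}vX}-2H^{1/2}
=\int_{-|v|}^{|v|}\!da\int_{0}^{a}\!db\; e^{\textup{i}bX}\,\textup{i}[X,\textup{i}[X,H^{1/2}]]\,e^{-\textup{i}bX},
\]
but you still have to prove $\textup{i}[X,\textup{i}[X,H^{1/2}]]\ge0$, and this does \emph{not} follow from generic operator-convexity/concavity facts (those go the wrong way, as you note). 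The paper's argument is: insert the resolvent representation $H^{1/2}=\pi^{-1}\int_{0}^{\infty}w^{-1/2}\,H(H+w)^{-1}\,dw$, use $[X,P]$ to compute
\[
\textup{i}[X,\textup{i}[X,H^{1/2}]]=\frac{2}{\pi}\int_{0}^{\infty}w^{1/2}\Big(\frac{1}{(H+w)^{2}}-4\,\frac{1}{H+w}P\frac{1}{H+w}P\frac{1}{H+w}\Big)\,dw,
\]
and then invoke $V\ge0$ through $P^{2}\le H$: operator monotonicity of $u\mapsto(u+w)^{-1}$ and $u\mapsto u(u+w)^{-1}$ gives
\[
\frac{1}{H+w}P\frac{1}{H+w}P\frac{1}{H+w}\ \le\ \frac{1}{H+w}P\frac{1}{P^{2}+w}P\frac{1}{H+w}\ \le\ \frac{1}{H+w}\,\frac{H}{H+w}\,\frac{1}{H+w},
\]
after which a functional-calculus computation shows the resulting lower bound for the integrand integrates to exactly zero. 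This is the missing idea you need to supply. You should also check condition~(\ref{DefCond}) for $\mathbf{M}'_{t}$: the paper does this by the same resolvent trick, obtaining $\big|\langle f\,|\,\textup{i}[X,H^{1/2}]\,f\rangle\big|\le\|f\|_{2}^{2}$ and hence the uniform bound~(\ref{ThinkaDink}), which gives integrability for all $f\in L^{2}(\R)$ as claimed.
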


\begin{proof}\text{ }\\

\noindent Part (1):\hspace{.15cm}  The energy process $H_{t}$ can be written as
\begin{align*}  H_{t} &=  e^{\frac{\textup{i}t_{1}}{\lambda}  H}e^{-\textup{i}v_{1}X}\cdots e^{-\textup{i}v_{n}X}e^{\frac{\textup{i}(t-t_{n})}{\lambda}H} H\,e^{-\frac{\textup{i}(t-t_{n})}{\lambda}H}e^{\textup{i}v_{n}X}\cdots e^{\textup{i}v_{1}X}e^{-\frac{\textup{i} t_{1}}{\lambda}H}\\ &= H_{t_{n}^{-}}+2v_{n}P_{t_{n}^{-}}+v_{n}^{2}, 
\end{align*}
for $n=\mathcal{N}_{t}$. Through iteration of the above calculation, I obtain the relation $H_{t}=\mathbf{M}_{t}+\mathbf{A}_{t}$. By the symmetry of the rates $j(v)=j(-v)$, it is clear that $\mathbf{M}_{t}$ is a martingale, since $\sum_{m=1}^{\mathcal{N}_{t}}v_{n}^{2}-t\sigma$ is a martingale.

 For $f\in \textup{D}(H)$, I will now verify condition~(\ref{DefCond}) for $H_{t}$ and $\mathbf{M}_{t}$.  For $H_{t}$, I have
$$\mathbb{E}\big[\langle f|  H_{t} |f\rangle \big]=\langle f| H |f \rangle+t\sigma \|f\|_{2}^{2},  $$
since the martingale part vanishes under the expectation. By two applications of  Jensen's inequality for the first inequality below and using that $ \langle f| P^{2}_{t} |f\rangle \leq \langle f |H_{t} | f\rangle$ for the last inequality, 
\begin{align}
\mathbb{E}\big[\big|\big\langle f\big|  \mathbf{M}_{t} \big|f\big\rangle | \big]^{2} & \leq \|f\|_{2}^{2} \mathbb{E}\big[\big\langle f\big| \mathbf{M}_{t}^{2}\big|f\big\rangle ]\nonumber  \\ &=4\|f\|_{2}^{2}\mathbb{E}\Big[\sum_{n=1}^{\mathcal{N}_{t}}v_{n}^{2}\big\langle f\big|P_{t_{n}^{-}}^{2}\big| f\big\rangle  \Big].  
 \nonumber \\ &= 4\sigma\|f\|_{2}^{2} \int_{0}^{t}dr\mathbb{E}\big[\big\langle f\big| P^{2}_{r}   \big|f\big\rangle \big]\nonumber  \\ & \leq 4\sigma\|f\|_{2}^{2} \Big(t\big\langle f\big| H \big|f \big\rangle +\sigma\frac{t^{2}}{2} \|f\|_{2}^{2}\Big).   
\end{align}

The domain for $\mathbf{A}_{t}$ is clear from its form.

\vspace{.5cm}

\noindent Part (2): \hspace{.15cm}  The equality $H_{t}^{\frac{1}{2}}=\mathbf{M}_{t}^{\prime}+\mathbf{A}_{t}^{\prime}$  can be shown through a telescoping sum and the conservation of energy between momentum kicks in a similar way to the argument in Part (1). Also, that  $\mathbf{M}_{t}^{\prime}$ is formally a martingale is clear through the symmetry of the rates $j(v)=j(-v)$, however unlike for Part (1), it is not immediately clear that  $\mathbf{A}_{t}^{\prime}$ is an increasing process.  
  
   I have the equality
\begin{align*}
e^{\textup{i}vX}H^{\frac{1}{2}} e^{-\textup{i}vX}+e^{-\textup{i}vX}H^{\frac{1}{2}} e^{\textup{i}vX}-2H^{\frac{1}{2}}    = \int_{-|v|}^{|v|}da\int_{0}^{a}db\,e^{\textup{i}bX}\textup{i}\big[X,\textup{i}\big[X, H^{\frac{1}{2}}\big]\big]e^{-\textup{i}bX}   
\end{align*}
by the second-order Taylor expansion
\begin{align*}
e^{\textup{i}vX}H^{\frac{1}{2}} e^{-\textup{i}vX}= H^{\frac{1}{2}}+\textup{i}v\big[X,H^{\frac{1}{2}}\big] +\int_{0}^{v}da\int_{0}^{a}db\,e^{\textup{i}bX}\textup{i}\big[X,\textup{i}\big[X,H^{\frac{1}{2}}\big]\big]e^{-\textup{i}bX}.  
\end{align*}
Showing that $\textup{i}[X,\textup{i}[X, H^{\frac{1}{2}}]]$ is  a positive operator will complete the proof that $\mathbf{A}_{t}^{\prime}$ is increasing. 

 Using the identity $a^{\frac{1}{2}}=\frac{1}{\pi}\int_{0}^{\infty}dw\,w^{-\frac{1}{2}}\frac{ a }{a+w }  $
 and functional calculus, the operator $H^{\frac{1}{2}}$ can be represented through its resolvents (see ~\cite[Ch.VIII, Ex.50]{Simon}) as
\begin{align}\label{Tybo}
 H^{\frac{1}{2}}=\frac{1}{\pi}\int_{0}^{\infty}dw\,w^{-\frac{1}{2}}\frac{ H }{H+w }.    
 \end{align}
 Evaluating both sides by the double commutator with $X$,
\begin{align}\label{DoubleCom}
\textup{i}\big[X,\textup{i}\big[X,H^{\frac{1}{2}}\big]\big] &= \textup{i}\Big[X,\,\textup{i}\Big[X, \frac{1}{\pi}\int_{0}^{\infty}dw\,w^{-\frac{1}{2}}\frac{ H }{H+w } \Big]\Big]\nonumber  \\ &= \frac{2}{\pi} \int_{0}^{\infty}dw\,w^{\frac{1}{2}}\Big(\frac{1}{\big(H+w \big)^{2}}-4\frac{1}{H+w}P \frac{1}{H+w}P \frac{1}{H+w}      \Big), 
\end{align}   
since through canonical commutation relations I obtain
\begin{align*}
\textup{i}\Big[X,\,\textup{i}\Big[X,\frac{ H }{H+w } \Big]\Big]=2w\Big(\frac{1}{(H+w)^{2}}- 4 \frac{1}{H+w}P\frac{1}{H+w}P\frac{1}{H+w}\Big)    .
\end{align*}

The following operator inequalities hold:
\begin{align*}
 \frac{1}{H+w}P\frac{1}{H+w}P\frac{1}{H+w}& \leq \frac{1}{H+w}P\frac{1}{P^{2}+w}P\frac{1}{H+w}\\  & \leq  \frac{1}{H+w}\big(\frac{H }{H+w }\big)\frac{1}{H+w}, 
 \end{align*}
where I have used that  $f(u)=\frac{1}{u+w}$ and $f(u)=\frac{u}{u+w}$ are operator monotonically decreasing and increasing functions, respectively.  Applying this inequality to~(\ref{DoubleCom}), 
$$\textup{i}\big[X,\textup{i}\big[X,H^{\frac{1}{2}}\big]\big]\geq \frac{2}{\pi}\int_{0}^{\infty}dw\,w^{\frac{1}{2}}\Big(\frac{1}{(H+w)^{2}}-4\frac{H}{(H+w)^{3}}     \Big)=0.  $$
To see the equality to zero, I compute the two integrals through a change of variables $w=h\tan^{2}(\theta)$ and find  
\begin{eqnarray*}
\frac{1}{\pi}\int_{0}^{\infty}dw\,\frac{w^{\frac{1}{2}}}{(h+w)^{2}}&=&\frac{2}{h^{\frac{1}{2}}\pi}\int_{0}^{\frac{\pi}{2}}d\theta\,\sin^{2}(\theta)=\frac{1}{2h^{\frac{1}{2}}},\\ \frac{4}{\pi}\int_{0}^{\infty}dw\,\frac{ h w^{\frac{1}{2}}}{(h+w)^{3}}&=&\frac{2}{ h^{\frac{1}{2}}\pi} \int_{0}^{\frac{\pi}{2}}d\theta\,\sin^{2}(2\theta)=\frac{1}{2h^{\frac{1}{2}} }.  
\end{eqnarray*}
Hence, $\textup{i}[X,\textup{i}[X,H^{\frac{1}{2}}]]$ is a positive operator, and $\mathbf{A}^{\prime}_{t}$ is increasing. 

Next, I show that $\mathbf{M}_{t}^{\prime}$ satisfies condition~(\ref{DefCond}).  Again by Taylor's formula, 
\begin{align*}
e^{\textup{i}vX}H^{\frac{1}{2}} e^{-\textup{i}vX}-e^{-\textup{i}vX}H^{\frac{1}{2}} e^{\textup{i}vX}=\int_{-v}^{v}db\,e^{\textup{i}bX}\textup{i}\big[X, H^{\frac{1}{2}}  \big]e^{-\textup{i}bX}.
\end{align*}
By similar reasoning as above,
$$\textup{i}\big[X, H^{\frac{1}{2}}  \big]= \frac{2}{\pi}\int_{0}^{\infty}dw\,w^{\frac{1}{2}}\frac{1}{H+w}P\frac{1}{H+w}.     $$
Using that $ |P|\leq H^{\frac{1}{2}}$,  I have the inequality
$$\big|\big\langle f \big|\textup{i}\big[X, H^{\frac{1}{2}}  \big] f\big\rangle \big| \leq \frac{2 }{\pi}\int_{0}^{\infty}dw\,w^{\frac{1}{2}}\Big\langle f\Big| \frac{H^{\frac{1}{2}} }{(H+w)^{2} } f \Big\rangle= \|f\|_{2}^{2}, $$
where I have used the same change of integration as in the functional calculus above.  Since $e^{\textup{i}bX}$  has operator norm one, the above remarks imply
\begin{align}\label{ThinkaDink}
\Big|\left \langle f \Big|
\Big(e^{\textup{i}vX}H^{\frac{1}{2}} e^{-\textup{i}vX}-e^{-\textup{i}vX}H^{\frac{1}{2}} e^{\textup{i}vX} \Big) f \right \rangle \Big|\leq 2|v| \|f\|_{2}^{2}.  
\end{align}
Finally, for $f\in L^{2}(\R)$,
\begin{align*}
 \mathbb{E}\big[\big|\big\langle f\big| \mathbf{M}_{t}' f\big\rangle\big| \big]= &\mathbb{E}\Big[\int_{0}^{t}dr\int_{\R}dv j(v) \Big|\Big\langle f\Big| e^{\textup{i}vX}H_{r}^{\frac{1}{2}} e^{-\textup{i}vX}-e^{-\textup{i}vX}H_{r}^{\frac{1}{2}} e^{\textup{i}vX}      \Big| f \Big\rangle \Big|\Big] \\
  \leq & 2t\|f\|_{2}^{2}\int_{\R} dv|v|j(v)\leq 2t\|f\|_{2}^{2}\,\mathcal{R}^{\frac{1}{2}}\sigma^{\frac{1}{2}}  .   
 \end{align*}
The same calculation shows  that $H_{t}^{\frac{1}{2}}$ satisfies condition (4.1).  That condition also holds for $\mathbf{A}_{t}'$, since the process $\mathbf{A}_{t}'$ is the difference between $H_{t}^{\frac{1}{2}}$ and $\mathbf{M}_{t}'$.  

\vspace{.5cm}

\noindent Part (3):\hspace{.15cm} Since the martingale part has expectation zero,  the equality
\begin{align}\label{Aggie}
\mathbb{E}\big[ H_{t} \big]=H+\sigma t
\end{align}
 follows trivially from the decomposition in Part (1).  For $\mathbb{E}\big[H_{t}^{2}]$, the classical theory would have   
$$\mathbb{E}\big[H_{t}^{2}]=\mathbb{E}\big[ (\mathbf{M}_{t}+\mathbf{A}_{t})^{2} \big]=\mathbb{E}\Big[ H^{2}+ \langle\mathbf{M},\mathbf{M}\rangle_{t}+2\int_{0}^{t}d\mathbf{A}_{r} H_{r}\Big],    $$
where $\langle\mathbf{M},\mathbf{M}\rangle_{t}$ is the predictable quadratic variation.  In my case, $d\mathbf{A}_{t}=\sigma dt$ is a multiple of the identity operator and therefore commutes with everything.  It follows that the equality above holds by the same argument as for the classical case.   The processes  $\sum_{n=1}^{\mathcal{N}_{t}} P_{t_{n}^{-}}v_{n}$, $\sum_{n=1}^{\mathcal{N}_{t}}\big( v_{n}^{2}-\frac{\sigma}{\mathcal{R} }\big)$,  and $\frac{\sigma}{\mathcal{R} }\mathcal{N}_{t}-\sigma t$ are uncorrelated martingales, and thus
\begin{align}
\big\langle\mathbf{M,M}\big\rangle_{t} =&\Big\langle  \sum_{n=1}^{\mathcal{N}_{r}}P_{t_{n}^{-}}v_{n},\sum_{n=1}^{\mathcal{N}_{r}}P_{t_{n}^{-}}v_{n} \Big\rangle_{t}+   \Big\langle \sum_{n=1}^{\mathcal{N}_{r}}\Big(v_{n}^{2}-\frac{\sigma}{\mathcal{R}}\Big), \sum_{n=1}^{\mathcal{N}_{r}}\Big(v_{n}^{2}-\frac{\sigma}{\mathcal{R}}\Big) \Big\rangle_{t}\nonumber  \\ & +\Big\langle \frac{\sigma}{\mathcal{R}} \mathcal{N}_{r}-\sigma r , \frac{\sigma}{\mathcal{R}} \mathcal{N}_{r}-\sigma r   \Big\rangle_{t}\nonumber \\ =&\sigma \int_{0}^{t}drP^{2}_{r}+\mathcal{R}\Big(\varsigma-\frac{\sigma^{2}}{\mathcal{R}^{2}} \Big)t+\frac{\sigma^{2}}{\mathcal{R}}t.     
\end{align}
Using $P^{2}_{r}\leq H_{r}$ and~(\ref{Aggie}) gives the bound for $\mathbb{E}[\langle\mathbf{M},\mathbf{M}\rangle_{t}  ]$.  Bounding $\mathbb{E}\big[ \int_{0}^{t}drH_{r}\big]$ also follows from~(\ref{Aggie}).

\end{proof}

In the proof of Prop.~\ref{Basics},  I apply Prop.~\ref{StochLem} to gain information about certain martingales  related to the Markov process $K_{r}$.  Define the energy process $E_{r}:=E(K_{r})$, where $E(k)$ is the dispersion relation determined by~(\ref{Energies}).   Define $\mathcal{E}:\R\rightarrow \R_{+}$ as the square root of the energy: $\mathcal{E}(k)=E^{\frac{1}{2}}(k)$.  I also use the symbol ``$\mathcal{E}$"  to refer to the corresponding process $\mathcal{E}_{r}=\mathcal{E}(K_{r})$.   Recall that the kets $|k\rangle_{\scriptscriptstyle{Q}}$ are  associated though a fiber decomposition of $L^{2}(\R)$ with normalized Bloch functions $\widetilde{\psi}_{k}\in L^{2}\big([-\pi,\,\pi)   \big)$ given by~(\ref{Bloch}).  The mathematical connection between the results in Prop.~\ref{StochLem} and the classical process $K_{r}$ is made through formulae such as in the equality   
\begin{align}\label{Notice}
\text{  }_{\scriptscriptstyle{Q}}\big\langle k\big| e^{-\textup{i}vX}H^{\frac{1}{2}}e^{\textup{i}vX}\big|k\big\rangle_{\scriptscriptstyle{Q}}= &\sum_{i\in \Z}\big|\kappa_{v}\big(k,i   \big)\big|^{2}\mathcal{E}(k+v+i)\nonumber \\= &\mathbb{E}\big[ \mathcal{E}_{r}   \,\big|\,K_{r-}=k,\,dL_{r}=v   \big],       
\end{align}
where $L_{r}$ is the underlying L\'evy process.   The first equality above follows from the definition of the coefficients $\big|\kappa_{v}\big(k,i \big)\big|^{2}$.  The value $ \big|\kappa_{v}\big(k,i   \big)\big|^{2}\in [0,1]$ is the probability for a lattice jump $i\in \Z$ conditioned on a L\'evy jump $v$ occurring from the momentum $k$.       The rigorous meaning of expressions involving bra-ket notation can  be traced back to the fiber decomposition such as in 
$$\text{  }_{\scriptscriptstyle{Q}}\big\langle k\big| e^{-\textup{i}vX}H^{\frac{1}{2}}e^{\textup{i}vX}\big|k\big\rangle_{\scriptscriptstyle{Q}}=\big\langle e^{\textup{i}vX}\widetilde{\psi}_{k}\big| H^{\frac{1}{2}}_{\phi} \,e^{\textup{i}vX}\widetilde{\psi}_{k}\big\rangle, $$
where $H_{\phi}$ is the fiber Hamiltonian for $\phi\in[-\frac{1}{2},\frac{1}{2})$ with  $k+v=\phi\,\textup{mod}\,1$.

Define the process  $m_{r}$ as
$$m_{r}:=\frac{1}{2}\sum_{n=1}^{\mathcal{N}_{r}}\Big(\sum_{i\in \Z}\big|\kappa_{v_{n}}\big(K_{t_{n}-},i   \big)\big|^{2}\mathcal{E}\big(K_{t_{n}-}+v_{n}+i\big) - \sum_{i}\big|\kappa_{-v_{n}}\big(K_{t_{n}-},\,i   \big)\big|^{2}\mathcal{E}\big(K_{t_{n}-}-v_{n}+i\big)      \Big).$$
By the rate symmetry $j(v)=j(-v)$ of the L\'evy jumps, $m_{t}$ is a martingale.      

\begin{proposition}\label{Basics} \text{ }\\
\begin{enumerate}
\item The process $E_{r}$ is a submartingale, and the predictable increasing part of its Doob-Meyer decomposition is $\sigma r$.  Moreover, the second moment satisfies 
$$\mathbb{E}\big[E_{r}^{2}\big]\leq \mathbb{E}[E_{0}^{2}]+ 3 \sigma r\mathbb{E}[E_{0}] +\mathcal{R}\varsigma r +\frac{3}{2}\sigma^{2}r^{2},$$
where $\varsigma:=\int_{\R}dv\frac{j(v)}{\mathcal{R}}v^{4}$.

\item  The process $\mathcal{E}_{r}$ is a submartingale, and the martingale part $M_{r}$ of its Doob-Meyer decomposition has predictable quadratic variation $\langle M,M \rangle_{r}$ with $\frac{d}{dr}\langle M,M \rangle_{r}\leq \sigma  $.

\item  The martingales $m_{r}$ and $M_{r}-m_{r}$ are uncorrelated  and therefore 
$\frac{d}{dr}\langle M, M \rangle_{r}\geq \frac{d}{dr}\langle m,m \rangle_{r} $.  Also, the quadratic variation of $m_{r}$ satisfies $[m,m]_{r}-[m,m]_{s}\leq \sum_{n>\mathcal{N}_{s}}^{\mathcal{N}_{r}}v_{n}^{2}$ for $r\geq s$, which means that it is dominated by the quadratic variation of the L\'evy process $L_{t}$.

\item The derivative of the predictable quadratic variation for the martingale $m_{r}$ (i.e. $\frac{d}{dr}\langle m,m \rangle_{r}$) is a function $\mathcal{V}:\R\rightarrow [0,\sigma]$ of $K_{r}$.  There exists an $\mathbf{a}>0$ such that for all $k\in \R$, then
\begin{align}\label{FromLattice}
\sigma-\mathcal{V}(k)\leq \frac{\mathbf{a} }{1+|\beta(k)|}.        
\end{align}

\end{enumerate}

\end{proposition}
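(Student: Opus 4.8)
The plan is to first obtain a closed formula for $\mathcal{V}$, deduce the two easy claims from it, and then estimate the deficit $\sigma-\mathcal{V}(k)$ with the help of Lemma~\ref{BadTerms}, the Kr\"onig--Penney relation~(\ref{Energies}), and the exponential integrability of $j$.  Set $A_{v}(k):={ }_{\scriptscriptstyle{Q}}\langle k|\big((P+v)^{2}+V\big)^{\frac12}|k\rangle_{\scriptscriptstyle{Q}}=\sum_{n\in\Z}|\kappa_{v}(k,n)|^{2}\mathcal{E}(k+v+n)$, the conditional expectation appearing in~(\ref{Notice}).  The martingale $m_{r}$ jumps only at the Poisson times (intensity $\mathcal{R}$), and at a jump with $K_{r^{-}}=k$ the increment is $\tfrac12\big(A_{v}(k)-A_{-v}(k)\big)$ with $v$ of density $\mathcal{R}^{-1}j(v)$; hence the compensator of $[m,m]$ is $\langle m,m\rangle_{r}=\int_{0}^{r}\mathcal{V}(K_{s^{-}})\,ds$ with $\mathcal{V}(k)=\frac14\int_{\R}dv\,j(v)\big(A_{v}(k)-A_{-v}(k)\big)^{2}$, so $\frac{d}{dr}\langle m,m\rangle_{r}=\mathcal{V}(K_{r})$ is indeed a function of $K_{r}$.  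Nonnegativity of $\mathcal{V}$ is clear, and applying the bound~(\ref{ThinkaDink}) at the level of the fiber Hamiltonian with $f=\widetilde{\psi}_{k}$ gives $|A_{v}(k)-A_{-v}(k)|\leq 2|v|$, whence $\mathcal{V}(k)\leq\frac14\int j(v)(2|v|)^{2}\,dv=\sigma$.  Thus $\mathcal{V}:\R\to[0,\sigma]$.

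For the quantitative estimate, since $4v^{2}-\big(A_{v}(k)-A_{-v}(k)\big)^{2}\geq 0$, factoring and using $|A_{v}-A_{-v}|\leq 2|v|$ gives
$$\sigma-\mathcal{V}(k)\leq\int_{\R}dv\,j(v)\,|v|\,\big(2|v|-|A_{v}(k)-A_{-v}(k)|\big).$$
Over $|v|>\tfrac12|k|$ the integrand is at most $2j(v)v^{2}$, and by assumption~(1) of List~\ref{Assumptions} the tail $\int_{|v|>|k|/2}j(v)v^{2}\,dv$ decays faster than any power of $|k|^{-1}$.  For $|v|\leq\tfrac12|k|$ and $|k|$ large I would write $A_{v}(k)=\mathcal{E}(k+v)+\delta_{v}(k)$ with $\delta_{v}(k)=\sum_{n\neq 0}|\kappa_{v}(k,n)|^{2}\big(\mathcal{E}(k+v+n)-\mathcal{E}(k+v)\big)$, so that by the triangle inequality
$$2|v|-|A_{v}(k)-A_{-v}(k)|\leq\big(2|v|-|\mathcal{E}(k+v)-\mathcal{E}(k-v)|\big)+|\delta_{v}(k)|+|\delta_{-v}(k)|.$$
Since $\mathcal{E}(w)=|\mathbf{q}(w)|$ and $|\mathbf{q}(w)-w|\leq\frac{\alpha}{\pi|w|}$ for $|w|\gg1$ by~(\ref{Energies}), one gets $\bigl|\,|\mathcal{E}(k+v)-\mathcal{E}(k-v)|-2|v|\,\bigr|\leq C|k|^{-1}$ on this range, so the task reduces to bounding $\int_{\R}dv\,j(v)\,|v|\,\big(|\delta_{v}(k)|+|\delta_{-v}(k)|\big)$.

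The core of the argument is the estimate of $|\delta_{v}(k)|$ via Lemma~\ref{BadTerms}, pairing each term's energy offset with the size of its weight.  The sum over $n\notin I(k,v)$ is at most $\sum_{n\notin I(k,v)}|\kappa_{v}(k,n)|^{2}\big(\textup{dist}(n,I(k,v))+\max_{m\in I(k,v)}|m|\big)$ because $|\mathcal{E}(k+v+n)-\mathcal{E}(k+v)|\leq|n|+C$; by Parts (1)--(2) of Lemma~\ref{BadTerms} and $\max_{m\in I(k,v)}|m|\leq C|k|$ this is $\mathit{O}(|k|^{-1})$, uniformly in $v$.  For $n\in I(k,v)\setminus\{0\}$ one uses Part (3).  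The key point is that the reflection-of-$(k+v)$ index (namely $-\mathbf{n}(k+v)$ when $\mathbf{n}(k)\neq\mathbf{n}(k+v)$, and $-\mathbf{n}(k)$ otherwise) satisfies $\mathcal{E}(k+v+n)=\mathcal{E}\big(k+v-2\Theta(k+v)\big)$, so its energy offset is $\leq C\big(|\Theta(k+v)|+|k|^{-1}\big)$, while its weight is $\leq C\big(\mathbf{r}_{-}(k+v)+|k|^{-1}\big)\leq C\big((1+c|k|\,|\Theta(k+v)|)^{-2}+|k|^{-1}\big)$; the product is $\leq C|k|^{-1}$ uniformly in $v$, because $x(1+x)^{-2}\leq\frac14$.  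The remaining index $-\mathbf{n}(k)$ (in the case $\mathbf{n}(k)\neq\mathbf{n}(k+v)$) has the larger offset $|\mathcal{E}(k+v+n)-\mathcal{E}(k+v)|\leq C(1+|v|)$, but its weight is only $\leq C\big(\mathbf{r}_{-}(k)+|k|^{-1}(1+|\beta(k)|)^{-1}\big)\leq C(1+|\beta(k)|)^{-2}$; and the index $\mathbf{n}(k)-\mathbf{n}(k+v)$ has weight $\leq C(1+|\beta(k)|)^{-2}$ with offset $\leq C(1+|v|)$, so it is subsumed.  Integrating against $j(v)\,|v|$ — using the finite moments of $j$ from assumption~(1) — yields $\int j(v)|v|\,|\delta_{v}(k)|\,dv\leq C|k|^{-1}+C(1+|\beta(k)|)^{-2}$, and likewise for $\delta_{-v}$.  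Assembling, for $|k|\geq K_{0}$ one obtains $\sigma-\mathcal{V}(k)\leq C|k|^{-1}+C(1+|\beta(k)|)^{-2}$.  Since $|\beta(k)|=\tfrac12|\mathbf{n}(k)|\,|\Theta(k)|\leq\tfrac14(|k|+\tfrac14)$, one has $1+|\beta(k)|\leq|k|$ once $|k|\geq 2$, so both terms are $\leq C'(1+|\beta(k)|)^{-1}$; for $|k|<K_{0}$ the crude bound $\sigma-\mathcal{V}(k)\leq\sigma$ together with $1+|\beta(k)|$ bounded closes the gap after enlarging the constant, giving~(\ref{FromLattice}) for a suitable $\mathbf{a}>0$.

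I expect the main obstacle to be exactly the bookkeeping in the last step: one must match each near-resonant index in $I(k,v)$ to the correct order of its energy offset and realize that the superficially dangerous ``reflection of $k+v$'' contribution is in fact harmless — its offset, of order $|\Theta(k+v)|$, and the reflection probability $\lesssim(1+c|k|\,|\Theta(k+v)|)^{-2}$ can never both be large — so that only the ``reflection of $k$'' term carries the $(1+|\beta(k)|)^{-1}$ decay claimed in~(\ref{FromLattice}).
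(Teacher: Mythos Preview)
Your proposal addresses only Part~(4); Parts~(1)--(3) are left aside. In the paper those parts are handled separately and rest on the operator-valued submartingale calculus of Prop.~\ref{StochLem} together with the bridge formula~(\ref{Notice}) and the Freidlin--Wentzell limit of Thm.~\ref{FWLimit}, so your write-up as it stands is incomplete for the full proposition.

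For Part~(4) itself your argument is correct but follows a genuinely different route from the paper's. The paper works at the operator level: it writes $\mathcal{V}(k)$ via~(\ref{Katrina}) and then makes two substitutions. First, it replaces the Bloch ket $|k\rangle_{\scriptscriptstyle Q}$ by the plane wave $|k\rangle$, using the norm estimate $\|\widetilde\psi_{k}-\psi_{k}\|_{2}\leq c'/(1+|\beta(k)|)$ from~(\ref{QuasiToStand}) together with the operator-norm bound $2|v|$ on $e^{-\textup{i}vX}H^{1/2}e^{\textup{i}vX}-e^{\textup{i}vX}H^{1/2}e^{-\textup{i}vX}$; this single replacement is what produces the factor $(1+|\beta(k)|)^{-1}$. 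Second, it replaces $H^{1/2}$ by $|P|$ via the resolvent representation~(\ref{Tybo}) and the explicit Green's function for the Dirac comb, giving a further $\mathit{O}(|k|^{-1})$ error (see~(\ref{NextToLast})). You instead stay on the scalar side: you expand $A_{v}(k)$ as a weighted sum over lattice indices and bound $\delta_{v}(k)$ term by term through Lemma~\ref{BadTerms}, finally isolating $n=-\mathbf{n}(k)$ as the sole index carrying the $(1+|\beta(k)|)^{-2}$ contribution. Your approach avoids the resolvent calculus for $H^{1/2}-|P|$ and is in that sense more elementary, but it trades this for the case analysis over $I(k,v)$ and the ``harmless reflection-of-$(k+v)$'' cancellation you correctly flag as the crux; the paper's argument is shorter because the $(1+|\beta(k)|)^{-1}$ decay enters in one stroke from the ket replacement rather than from tracking which summand is responsible for it.
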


\begin{proof}\text{  }\\

\noindent Part (1):\hspace{.15cm} Let $  \mathbf{u}\in L^{1}(\R)$ be a probability density with $\int_{\R}dk\,\mathbf{u}(k)E^{2}(k)<\infty$. Construct the density matrix $\rho=|f\rangle \langle f | \in \mathcal{B}_{1}\big(L^{2}(\R)\big)  $ for $f(k)=\mathbf{u}^{\frac{1}{2}}(k)$.  Let $D_{\lambda,t}$, $\rho_{\lambda,t}$, and $\mathcal{D}_{t}$ be defined as in Thm.~\ref{FWLimit}.  
 For all $\lambda>0$,
\begin{align}\label{Peanuts}
  \int_{\R}dk\,D_{\lambda,t}(k)E(k)= & \Tr[\rho_{\lambda,t}\big(P^{2}+V(x)\big)] \nonumber \\  = &\Tr[\rho\,\Phi^{*}_{\lambda,t}  \big(P^{2}+V(x)\big)]\nonumber \\  =  & \int_{\R}dk\,\mathbf{u}(k)E(k)+\sigma\,t.    
 \end{align}
  The third equality is by Part (3) of Prop.~\ref{StochLem}.

By Thm.~\ref{FWLimit}, $D_{\lambda,t}$ converges to $\mathcal{D}_{t}$  in the $1$-norm as $\lambda\rightarrow 0$ for every fixed $t$.  To guarantee the convergence of $\int_{\R}dk D_{\lambda,t}(k)E(k)$ to $\int_{\R}dk \mathcal{D}_{t}(k)E(k)$, it is sufficient to have a uniform bound in $\lambda$ on the second moments $\int_{\R}dk\,D_{\lambda,t}(k)E^{2}(k)$.  Again using Part (3) of Prop.~\ref{StochLem}, there are constants $c_{1}(t),c_{2}(t)\in \R_{+}$ such that
\begin{align}
\Phi_{\lambda,t}^{*}\big( (P^{2}+V(X))^{2}\big)= \mathbb{E}\big[\big(P_{t}^{2}+V(X_{t})\big)^{2} \big]\leq H^{2}+ c_{1}(t) H +c_{2}(t).
\end{align}
  Hence, a uniform bound for the second moment of the energy is given by
\begin{align}
\int_{\R}dk D_{\lambda,t}(k)E^{2}(k)\leq \int_{\R}dk\,\mathbf{u}(k)E^{2}(k)+c_{1}(t)\int_{\R}dk\,\mathbf{u}(k)E(k)+c_{2}(t).
\end{align}
It follows that $\mathbb{E}_{\mathbf{u}}[E_{t}]=\int_{\R}dk\,\mathcal{D}_{t}(k)E(k)$ is finite and equal to~(\ref{Peanuts}), where $\mathbb{E}_{\mathbf{u}}$ is the expectation beginning from an initial distribution $\mathbf{u}$.  The jump rate densities $\mathbf{j}_{k}(k^{\prime})= J(k^{\prime},k)$ are continuous in $L^{1}(\R)$ as a function of $k$ over intervals between lattice points in $\frac{1}{2}\Z$.  I can approximate a $\delta$-distribution at $k\in \R-\frac{1}{2}\Z$ with densities, and I have the result  $\frac{d}{dt}\mathbb{E}_{k}[E_{t}]|_{t=0}=\sigma$ for $k$ not on the half-spaced lattice.  Since my initial distribution will always be a density and the jump rates are densities, the behavior assigned to the lattice values is irrelevant.  It follows that $E_{t}$ is a submartingale with increasing part $E_{0}+\sigma t$.         

The bound for $\mathbb{E}\big[E_{r}^{2}\big]$ follows by plugging in the explicit values for $c_{1}(r)$ and $c_{2}(r)$.  

\vspace{.5cm}

\noindent Part (2):\hspace{.15cm} Let $\mathbf{u}\in L^{1}(\R)$ be defined as in Part (1).    By Part (2) of Prop.~\ref{StochLem},  for every $t,\lambda >0$ the inequality below holds 
\begin{align*}
\int_{\R}dk D_{\lambda,t}(k)E^{\frac{1}{2}}(k)= &\Tr[\rho_{\lambda,t}(P^{2}+V)^{\frac{1}{2}}]\\ =&\Tr[\rho \Phi^{*}_{\lambda,t}\big((P^{2}+V)^{\frac{1}{2}}\big) ]\\ \geq & \Tr[\rho(P^{2}+V)^{\frac{1}{2}}].
\end{align*}
  A similar argument as in Part (1) shows that $H_{t}^{\frac{1}{2}}$ is a submartingale. 

By Part (1),  the increasing part for the Doob-Meyer decomposition for~$\calE_{t}^{2}=E_{t}$ increases with linear rate $\sigma$.  I thus have the relation
$$\sigma= \frac{d}{dt}\langle M,M\rangle_{t}+2\calE_{t}\frac{d}{dt}\mathcal{A}_{t}.$$
Since both terms on the right are positive, it follows that $\frac{d}{dt}\langle M,M\rangle_{t}\leq \sigma$.

\vspace{.5cm}

\noindent Part (3):\hspace{.15cm} By~(\ref{Notice}), the terms of $m_{r}$ can be rewritten 
$$\sum_{i\in \Z}\Big(\big|\kappa_{v}\big(k,i \big)\big|^{2}\mathcal{E}(k+v+i) - \big|\kappa_{-v}\big(k,i  \big)\big|^{2}\mathcal{E}(k-v+i)\Big) =         { }_{\scriptscriptstyle{Q}}\big\langle k \big| e^{-\textup{i}v X} H^{\frac{1}{2}} e^{\textup{i}v X}-e^{\textup{i}vX} H^{\frac{1}{2}} e^{-\textup{i}vX}\big| k  \big\rangle_{\scriptscriptstyle{Q}}. $$
 By~(\ref{ThinkaDink}),  the absolute values for the jumps of $m_{t}$ are bounded by the absolute values for the jumps $v$ of the L\'evy process.  Consequently, the increments for the quadratic variation $[m,m]_{t}-[m,m]_{r}$ are almost surely smaller than those of the L\'evy process: $[L,L]_{t}-[L,L]_{r}$ for all $t\geq r\geq 0$.

The process $\mathcal{E}_{t}$ can be written as 
\begin{align}\label{Libya}
\mathcal{E}_{t}= \mathcal{E}_{0}+\sum_{n=1}^{\mathcal{N}_{t}}(\mathcal{E}_{t_{n}}-\mathcal{E}_{t_{n}^{-} })= \calS_{t}+\sum_{n=1}^{\mathcal{N}_{t}}\big(\mathcal{E}_{t_{n}}-\mathcal{E}_{t_{n}^{-} }   -\mathbb{E}\big[\mathcal{E}_{t_{n}}-\mathcal{E}_{t_{n}^{-}}  \big|\mathcal{F}_{t_{n}^{-}},\,dL_{t_{n}}\big] \big),  
\end{align}
where $\calS_{t}=\mathbf\mathcal{E}_{0}+ \sum_{n=1}^{\mathcal{N}_{t}}\mathbb{E}\big[\mathcal{E}_{t_{n}}-\mathcal{E}_{t_{n}^{-}}  \big|\mathcal{F}_{t_{n}^{-}},\,dL_{t_{n}}\big] $.  For a single term in the sum and $K_{t_{n}^{-}}=k$, $dL_{t_{n}}=v$, then     
\begin{align*}
\mathbb{E}\big[\mathcal{E}_{t_{n}}-\mathcal{E}_{t_{n}^{-}}  \big|\mathcal{F}_{t_{n}^{-}},\,dL_{t_{n}}\big]  & = \sum_{j\in\Z}|\kappa_{v}(k,j)|^{2}\mathcal{E}(k+j+v)-\calE (k)\\  &={ }_{\scriptscriptstyle{Q}} \big\langle k\big| e^{-\textup{i}vX} H^{\frac{1}{2}} e^{\textup{i}vX} \big|k\big\rangle_{\scriptscriptstyle{Q}}- { }_{\scriptscriptstyle{Q}}\big\langle k\big|\calH^{\frac{1}{2}} \big|k\big\rangle_{\scriptscriptstyle{Q}}. \end{align*}

However, I can reorganize $\calS_{t}$ in a way  reminiscent of  Part (2) in Prop.~\ref{StochLem}:
\begin{align*}
 \calS_{t} =& \calE_{0}+ \frac{1}{2}\sum_{n=1}^{\mathcal{N}_{t}}  { }_{\scriptscriptstyle{Q}}\big\langle K_{t_{n}^{-}}\big| e^{-\textup{i}v_{n}X} H^{\frac{1}{2}} e^{\textup{i}v_{n}X} - e^{\textup{i}v_{n}X} H^{\frac{1}{2}} e^{-\textup{i}v_{n}X} \big|K_{t_{n}^{-}}\big\rangle_{\scriptscriptstyle{Q}} \\ &+\frac{1}{2}\sum_{n=1}^{\mathcal{N}_{t}} { }_{\scriptscriptstyle{Q}}\big\langle K_{t_{n}^{-}}\big| e^{-\textup{i}v_{n}X} H^{\frac{1}{2}} e^{\textup{i}v_{n}X} + e^{\textup{i}v_{n}X} H^{\frac{1}{2}} e^{-\textup{i}v_{n}X} -2 H^{\frac{1}{2}} \big|K_{t_{n}^{-}}\big\rangle_{\scriptscriptstyle{Q}}. 
 \end{align*}
The first sum on the right is $m_{t}$, and I denote the second sum on the right by $a_{t}$. By the analysis in the proof of Part (2) of Prop.~\ref{StochLem}, the terms in the sum of $a_{t}$ are positive.  Also, I note that when $K_{t_{n}^{-}}=k$ and $|dL_{t_{n}}|=|v|$, then  
$$ { }_{\scriptscriptstyle{Q}}\big\langle k\big| e^{-\textup{i}v X} H^{\frac{1}{2}} e^{\textup{i}vX} + e^{\textup{i}vX} H^{\frac{1}{2}} e^{-\textup{i}v X} -2 H^{\frac{1}{2}} \big|k\big\rangle_{\scriptscriptstyle{Q}}=2\mathbb{E}\big[\mathcal{E}_{t_{n}}-\mathcal{E}_{t_{n}^{-}}  \big|\mathcal{F}_{t_{n}^{-}},\,|dL_{t_{n}}|\big],      $$
since $\pm v$ occur with equal probability.

The process $\calS_{t}$ is the conditional projection of $\calE_{t}$ on to the set of processes whose value at time $t$ depends only on $\calF_{t^{-}}$ and the L\'evy process $L_{t}$. Moreover, the process $a_{t}$ is the projection of $\calS_{t}$ that depends only on the jump times and the absolute value of the jumps $|dL_{t}|$.  Finally, $\calA_{t}$ is the predictable projection of $a_{t}$.   It follows that  $\mathcal{E}_{t}-\mathcal{S}_{t}$,  $\mathcal{S}_{t}-a_{t}=m_{t}$, and $a_{t}-\mathcal{A}_{t}$ are uncorrelated martingales with the following inequality for their predictable quadratic variations:   
$$\sigma t \geq \langle M ,M \rangle_{t}=\langle \calE-\mathcal{S},\calE-\mathcal{S}\rangle_{t}+ \langle m ,m\rangle_{t}+   \langle a-\mathcal{A},a-\mathcal{A}\rangle_{t}> \langle m,m \rangle_{t}. $$

\vspace{.5cm}

\noindent Part (4):\hspace{.15cm} The predictable quadratic variation $\langle m,m \rangle_{t}$ has the form
\begin{align}\label{Katrina}
\langle m,m \rangle_{t}= \frac{1}{4}\int_{0}^{t}dr\int_{\R} dv\,j(v) \Big|{ }_{\scriptscriptstyle{Q}}\big\langle K_{r} \big|e^{-\textup{i}v X} H^{\frac{1}{2}} e^{\textup{i}v X}-e^{\textup{i}vX} H^{\frac{1}{2}} e^{-\textup{i}vX}    \big| K_{r}  \big\rangle_{\scriptscriptstyle{Q}}\Big|^{2}.
\end{align}
The above gives an expression through which I can examine the dependence of $\frac{d}{dt}\langle m,m\rangle_{t}$ on $K_{t}$. If the expression in the integrand~(\ref{Katrina}) had  $| p \rangle_{\scriptscriptstyle{Q}}$, ${ }_{\scriptscriptstyle{Q}}\langle p| $,  and $H^{\frac{1}{2}}$ is replaced respectively by $| p \rangle$, $\langle p| $,   and  $|P|$, then I would have the explicit computation
\begin{align}
\big\langle k \big| e^{-\textup{i}v X} |P| e^{\textup{i}v X}-e^{\textup{i}vX} |P| e^{-\textup{i}vX}\big| k  \big\rangle \nonumber  & = \big\langle k+v \big|\, |P|\, \big| k+v  \big\rangle-\big\langle k-v \big|\, |P|\, \big| k-v  \big\rangle \\ & =           |k+v|-|k-v|=2vS(k\cdot v),
\end{align}
where the last inequality is restricted to $|v|\leq |k|$.  In my analysis, I will first work to bound the error of substituting ${ }_{\scriptscriptstyle{Q}}\langle p| $,  $| p \rangle_{\scriptscriptstyle{Q}}$ with $\langle p| $,  $| p \rangle$, and secondly, I  bound the error of  substituting $H^{\frac{1}{2}}$ with $|P|$.

  By the proof of Part (2) of Prop.~\ref{StochLem}, the difference $e^{-\textup{i}v X} H^{\frac{1}{2}} e^{\textup{i}v X}-e^{\textup{i}vX} H^{\frac{1}{2}} e^{-\textup{i}vX}$ has operator norm bounded by $2|v|$.  The difference shares the same fiber decomposition of the Hamiltonian. Consequently for $\phi_{+}, \phi_{-}\in [-\frac{1}{2},\frac{1}{2}) $ with  $\phi_{+}-v=\phi_{-}+v\,\textup{mod}\,1$,  then   the linear map  $e^{-\textup{i}v X} H^{\frac{1}{2}}_{\phi_{+}} e^{\textup{i}v X}-e^{\textup{i}vX} H^{\frac{1}{2}}_{\phi_{-}} e^{-\textup{i}vX}$ on  $L^{2}\big([-\pi,\pi)\big)$ has operator norm $\leq 2|v|$.

 By~(\ref{QuasiToStand}), there is $c>0$ such that the distance between Bloch the vectors $\widetilde{\psi}_{k},\psi_{k}\in  L^{2}\big([-\pi,\,\pi)   \big)  $ for $|k|\gg 1$ is bounded by
 \begin{align*} \big\| \widetilde{\psi}_{k}  -  \psi_{k}  \big\|_{2} \leq & \frac{c}{ 1+|\beta(k)|}+\mathbf{r}_{-}^{\frac{1}{2}}(k ) + 1-\mathbf{r}_{+}^{\frac{1}{2}}(k) \leq  \frac{c'}{ 1+|\beta(k)|}  
 \end{align*}
where the second inequality holds for some $c'>0$, since  $\mathbf{r}_{-}^{\frac{1}{2}}(k ) $ and $1-\mathbf{r}_{+}^{\frac{1}{2}}(k)$ are bounded by a multiple of $ \frac{1}{ 1+|\beta(k)|}$.  Hence, for $\phi_{\pm}\in [-\frac{1}{2},\frac{1}{2})$ with $\phi_{\pm}=k\pm v\,\textup{mod}\,1$, then 
   \begin{align}\label{UsesCond}  \Big| { }_{\scriptscriptstyle{Q}}\big\langle k  \big|   e^{-\textup{i}v X} & H^{\frac{1}{2}}  e^{\textup{i}v X}-e^{\textup{i}vX} H^{\frac{1}{2}} e^{-\textup{i}vX}\big| k  \big\rangle_{\scriptscriptstyle{Q}} - \big\langle k \big| e^{-\textup{i}v X} H^{\frac{1}{2}} e^{\textup{i}v X}-e^{\textup{i}vX} H^{\frac{1}{2}} e^{-\textup{i}vX}\big| k  \big\rangle\Big| \nonumber   \\  = &  \Big| \big\langle  \widetilde{\psi}_{k} \big| e^{-\textup{i}v X}H^{\frac{1}{2}}_{\phi_{+}}\, e^{\textup{i}v X}- e^{\textup{i}vX} H^{\frac{1}{2}}_{\phi_{-}}  e^{-\textup{i}vX} \big|\widetilde{\psi}_{k}  \big\rangle   - \big\langle  \psi_{k} \big| e^{-\textup{i}v X}  H^{\frac{1}{2}}_{\phi_{+}}e^{\textup{i}vX} - e^{\textup{i}vX} H^{\frac{1}{2}}_{\phi_{-}} e^{-\textup{i}vX}\big| \psi_{k}  \big\rangle\Big| \nonumber \\  = &\Big| \big\langle  \widetilde{\psi}_{k}-\psi_{k} \big| e^{-\textup{i}v X}H^{\frac{1}{2}}_{\phi_{+}}\, e^{\textup{i}v X}- e^{\textup{i}vX} H^{\frac{1}{2}}_{\phi_{-}}  e^{-\textup{i}vX} \big|\widetilde{\psi}_{k}  \big\rangle \Big|\nonumber \\ &+\Big|   \big\langle \psi_{k} \big| e^{-\textup{i}v X}  H^{\frac{1}{2}}_{\phi_{+}}e^{\textup{i}vX} - e^{\textup{i}vX} H^{\frac{1}{2}}_{\phi_{-}} e^{-\textup{i}vX}\big| \widetilde{\psi}_{k}-\psi_{k} \big\rangle\Big| \nonumber \\   \leq   & \frac{4c'|v|}{ 1+|\beta(k)| }  . 
   \end{align}

 I now bound the difference $  H^{\frac{1}{2}}-|P|$ when evaluated by kets $ | k^{\prime} \rangle$ with $|k^{\prime}|\gg 1$.  By the formula for the square root of an operator in terms of its resolvents,  
\begin{align*}
 H^{\frac{1}{2}}-|P|=\frac{1}{\pi}\int_{0}^{\infty}\frac{dw}{w^{\frac{1}{2}}}\Big(\frac{H}{w+H}-\frac{P^{2}}{w+P^{2}}\Big)      =\frac{1}{\pi}\int_{0}^{\infty}dw\, w^{\frac{1}{2}} \Big( \frac{1}{w+P^{2}} -\frac{1}{w+H}\Big).       
\end{align*}
However, the difference between the resolvent of a Laplacian and the resolvent of the Laplacian perturbed by a $\delta$-potential has a simple form~\cite{Solve}.  To use this, I will focus on a single fiber from the decomposition~(\ref{FiberInt}). For $w\in \R_{+}$ and $\phi \in [-\frac{1}{2},\frac{1}{2})$, the Green's function $G_{\phi,w}:[-\pi,\pi)\rightarrow \C  $ for the operator  $w-\Delta_{\phi} $ is given by the form  
$$G_{\phi,w}(x)  =\frac{1}{2\pi} \sum_{n\in \Z}\frac{1}{w+(n+\phi)^{2}}e^{i (n+\phi)x}.    $$
The operator on $L^{2}\big([-\pi,\pi)\big)$ determined by the  integral kernel $G_{\phi,w}(x-y) $ is equal to $(w-\Delta_{\phi})^{-1}$.   The difference between the resolvents in the $\phi$-fiber is
$$\Big[\frac{1}{w+P^{2}} -\frac{1}{w+H}\Big]_{\phi}=\frac{\alpha }{1+\alpha G_{\phi,w}(0) } A_{\phi,w},$$  where the expression on the left in square brackets denotes the operator on $L^{2}\big([-\pi,\pi)\big)$ corresponding to the $\phi$-fiber, and the operator $A_{\phi,w}$ has integral kernel 
$A_{\phi,w}(x,y)=G_{\phi,w}(x) G_{\phi,w}(-y)$.

For $k=\phi\,\textup{mod}\,1$, then
\begin{align}\label{NextToLast}
\big|\big\langle k' \big| H^{\frac{1}{2}}-|P|  \big| k' \big\rangle\big| & = \frac{1}{4\pi^{3} }\int_{0}^{\infty}dw\, w^{\frac{1}{2}} \frac{\alpha }{1+\alpha G_{\phi,w}(0) } \Big(\frac{1}{w+(k')^{2} }   \Big)^{2}\nonumber  \\  & \leq  \frac{\alpha}{4\pi^{3} }\int_{0}^{\infty}dw\, w^{\frac{1}{2}}\frac{1}{(w+(k')^{2})^{2} }= \frac{\alpha}{4|k^{\prime}|\pi^{2} }  .   
\end{align}

Going back to~(\ref{Katrina}), I have the relations
\begin{align*}
\frac{d}{dt}\langle m \rangle_{t} & \geq \frac{1}{4}\int_{|v|\leq \frac{1}{2}|k|  } dv\,j(v) \Big|{ }_{\scriptscriptstyle{Q}}\big\langle k \big|e^{-\textup{i}v X} H^{\frac{1}{2}} e^{\textup{i}v X}-e^{\textup{i}vX} H^{\frac{1}{2}} e^{-\textup{i}vX}    \big| k  \big\rangle_{\scriptscriptstyle{Q}} \Big|^{2} \\ &= \frac{1}{4}\int_{|v|\leq \frac{1}{2}|k|  } dv\,j(v)\Big|  \big\langle k+v \big|\, H^{\frac{1}{2}}\, \big| k+v  \big\rangle -\big\langle k-v \big|\, H^{\frac{1}{2}}\, \big| k-v  \big\rangle     \Big|^{2}+\mathit{O}\Big(\frac{1}{1+|\beta(k)|}\Big)   \\ &=  \frac{1}{4}\int_{|v|\leq \frac{1}{2}|k|  } dv\,j(v)\Big(   |k+v|-|k-v|    \Big)^{2}     +\mathit{O}\Big(\frac{1}{1+|\beta(k)|}\Big)
\\ & =  \frac{1}{4}\int_{|v|\leq \frac{1}{2}|k|  } dv\,j(v) (2v)^{2}+\mathit{O}\Big(\frac{1}{1+|\beta(k)|}\Big)=\sigma+\mathit{O}\Big(\frac{1}{1+|\beta(k)|}\Big).
\end{align*}
The first equality follows from~(\ref{UsesCond}), and the second  follows by~(\ref{NextToLast}) for $k'=k+v$ and because $|k'|^{-1}=\mathit{O}\big(\frac{1}{1+|\beta(k)|}\big)$ by the restriction $|v|\leq \frac{1}{2}|k|$.  The third inequality is by the definition of $\sigma$ and Chebyshev's inequality through  $\int_{|v|\geq \frac{1}{2}|k|  } dv\,\frac{j(v)}{\mathcal{R}}\,v^{2}\leq 4|k|^{-2}\varsigma$, where $\varsigma$ is the fourth moment of $\frac{j(v)}{\mathcal{R}}$.

\end{proof}

\begin{lemma}
Let $K_{t^{-}}=k$ for $|k|\gg 1$ and $|dL_{t}|=|v|\leq |k|$.  There exists an $\mathbf{a}>0$ such that
$$\textup{Var}\big[\calE(K_{t})\big|\calF_{t^{-}},dL_{t}  \big]\leq \frac{\mathbf{a}}{1+|\beta(k)|}.$$

\end{lemma}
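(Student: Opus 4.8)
The plan is to write out the conditional law of $K_t$ explicitly and control its variance using the coefficient estimates of Lemma~\ref{BadTerms}. Given $K_{t^{-}}=k$ and the L\'evy increment $dL_t=v$, the value $K_t$ equals $k+v+i$ with probability $|\kappa_v(k,i)|^2$, so
\begin{align*}
\textup{Var}\big[\mathcal{E}(K_t)\,\big|\,\calF_{t^{-}},dL_t\big]=\tfrac12\sum_{i,j\in\Z}|\kappa_v(k,i)|^2\,|\kappa_v(k,j)|^2\,\big(\mathcal{E}(k+v+i)-\mathcal{E}(k+v+j)\big)^2 .
\end{align*}
(Equivalently, with $A:=e^{-\textup{i}vX}He^{\textup{i}vX}=(P+v)^2+V$, this is the variance of $A^{1/2}$ in the state $|k\rangle_{\scriptscriptstyle Q}$; concavity of the square root gives $\textup{Var}[A^{1/2}]\le (4\underline m)^{-1}\textup{Var}[A]$ with $\underline m$ the least eigenvalue of $A$ of positive weight, which is an alternative route and already explains the gain of a factor $\approx|k|^{-2}$.) Since $\mathcal{E}(k')=|\mathbf{q}(k')|$ and $|\mathbf{q}(k')-k'|\le C$ for all $k'$ by~(\ref{Energies}), one has the Lipschitz-type bound $|\mathcal{E}(k')-\mathcal{E}(k'')|\le|k'-k''|+2C$, which crudely bounds the bracket by $2\,\textup{dist}(i,I(k,v))^2+C'$ whenever $j\in I(k,v)$.

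I then split the double sum into the part with both indices in $I(k,v)$ and the rest. For the rest, the bound just stated combined with Parts (1) and (2) of Lemma~\ref{BadTerms} (and the sharper pointwise estimate $|\kappa_v(k,i)|\le h'(|i|\wedge|i-\mathbf{n}(k)|)^{-1}(|i+\mathbf{n}(k)|\wedge|i+\mathbf{n}(k+v)|)^{-1}$ proved there, which after a Riemann-sum bound also controls $\sum_{i\notin I(k,v)}\textup{dist}(i,I(k,v))^2|\kappa_v(k,i)|^2$) shows this part is $\mathit{O}(|k|^{-1})$, hence $\mathit{O}\big((1+|\beta(k)|)^{-1}\big)$ because $1+|\beta(k)|\le|k|$ for $|k|\gg1$. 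For the part with $i,j\in I(k,v)$ I invoke Part (3): up to errors of order $|k|^{-1}$ or better, the four amplitudes $\kappa_v(k,\cdot)$ on $I(k,v)$ are products $\mathbf{r}_{\pm}^{1/2}(k)\,\mathbf{r}_{\pm}^{1/2}(k+v)$ with signs, so the conditional law of $\mathcal{E}(K_t)$ is, to leading order, carried by a ``transmitted'' cluster $\{k+v,\,k+v-\mathbf{n}(k+v)\}$ of total weight $\approx\mathbf{r}_+(k)$ on which $\mathcal{E}$ varies by $\mathit{O}\big((1+|\beta(k+v)|)/|k|\big)$, and a ``reflected'' cluster $\{k+v-\mathbf{n}(k),\,k+v-\mathbf{n}(k+v)+\mathbf{n}(k)\}$ of total weight $\approx\mathbf{r}_-(k)$ with a similar intra-cluster spread. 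For a pair inside one cluster the small energy gap is paired with a minority amplitude $\lesssim(1+|\beta(k+v)|)^{-2}$ (or the corresponding $\mathbf{r}_-$ of the reflected index), so its contribution is $\mathit{O}(|k|^{-2})$; for a pair straddling the two clusters the (possibly large) energy gap always carries the factor $\mathbf{r}_-(k)\lesssim(1+|\beta(k)|)^{-2}$, and using $|v|\le|k|$ — so that every $\mathcal{E}(k+v+i)$ of positive weight stays of size $\mathit{O}(|k|)$ unless the clusters coalesce — this cross term is $\mathit{O}\big((1+|\beta(k)|)^{-1}\big)$.

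The delicate point, and the only step requiring real care, is this cross-cluster estimate: one must verify that the reflection probability $\mathbf{r}_-(k)$ multiplying the transmitted-versus-reflected energy difference always produces the claimed $(1+|\beta(k)|)^{-1}$ decay once it is weighed against that difference, which by the near-parabolic form $E(k')\approx|k'|^2+\tfrac{\alpha}{2\pi}$ must be expressed in terms of $|v|$ and $|k|$. The remaining steps are bookkeeping with Lemma~\ref{BadTerms} and the elementary Lipschitz bound on $\mathcal{E}$.
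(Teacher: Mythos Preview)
Your approach is genuinely different from the paper's and in some ways more transparent: you compute the conditional variance directly from the coefficient estimates of Lemma~\ref{BadTerms}, whereas the paper exploits the algebraic identity
\[
\mathbb{E}\big[E(K_t)-E(K_{t^-})\,\big|\,|dL_t|\big]
=\tfrac12\sum_\pm\textup{Var}\big[\mathcal{E}(K_t)\,\big|\,dL_t=\pm v\big]
+\tfrac14\big(\mathbb{E}[Y_+]-\mathbb{E}[Y_-]\big)^2
+\big(\mathbb{E}[\mathcal{E}(K_t)\,|\,|dL_t|]^2-\mathcal{E}^2(k)\big),
\]
combines it with the exact operator equality $e^{-ivX}He^{ivX}+e^{ivX}He^{-ivX}-2H=2v^2$, and then argues that the middle term already eats up essentially all of $v^2$ (so the variance term must be small). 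Your route isolates the same mechanism at the level of the four dominant coefficients $\mathbf r_\pm(k)\mathbf r_\pm(k+v)$, and you rightly single out the cross--cluster pairing as the only nontrivial piece.

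However, that flagged step \emph{cannot} be completed as stated, and this is not a matter of bookkeeping. With your own cluster description the transmitted cluster carries weight $\mathbf r_+(k)$ at energy $\approx|k+v|$ and the reflected cluster carries weight $\mathbf r_-(k)$ at energy $\approx|k-v|$, so the cross term contributes
\[
\approx 4v^2\,\mathbf r_+(k)\mathbf r_-(k)\;\asymp\;\frac{v^2}{(1+|\beta(k)|)^2}.
\]
Under the sole hypothesis $|v|\le|k|$ this quantity is \emph{not} $\le\mathbf a/(1+|\beta(k)|)$: take $\Theta(k)$ tiny (so $\mathbf r_-(k)\approx\tfrac12$, $|\beta(k)|=O(1)$) and $|v|$ of order $|k|$; the variance is then of order $|k|^2$, not $O(1)$. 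The paper's argument runs into exactly the same wall: the estimate $(\mathbb{E}[Y_+]-\mathbb{E}[Y_-])^2=4v^2+\mathit{O}((1+|\beta(k)|)^{-1})$ invoked there actually has error $\mathit{O}\!\big(v^2(1+|\beta(k)|)^{-1}\big)$ by the very bound~(5.5) it cites, so it too yields only $\textup{Var}\lesssim v^2/(1+|\beta(k)|)$. In other words the lemma as written is too strong; the achievable pointwise bound is $\textup{Var}\big[\mathcal{E}(K_t)\,\big|\,\mathcal{F}_{t^-},dL_t\big]\le \mathbf a\,v^2/(1+|\beta(k)|)^2$ (or more crudely $\mathbf a\,v^2/(1+|\beta(k)|)$). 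This is harmless for every application in the paper, since the variance is always integrated against $j(v)/\mathcal R$ and one picks up the finite factor $\sigma/\mathcal R$ --- but you should state and prove that version rather than trying to close the gap you flagged.
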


\begin{proof}
Note that
$$
\mathbb{E}\big[E(K_{t})-E(K_{t^{-}})\big|\calF_{t^{-}},|dL_{t}|  \big] = \frac{1}{2}{ }_{\scriptscriptstyle{Q}} \big\langle k\big| e^{-\textup{i}vP}He^{\textup{i}vP}+ e^{\textup{i}vP}He^{-\textup{i}vP}-2H  \big|k\big\rangle_{\scriptscriptstyle{Q}}\leq v^{2},
$$
where the  inequality follows from the same computation as in Part (1) of Prop.~\ref{StochLem}.  However, the quantity $\textup{Var}\big[\calE(K_{t})\big|\calF_{t^{-}},dL_{t}  \big]$ appears in the follow equation: 
\begin{align}\label{Tortilla}
\mathbb{E}\big[E(K_{t})-E(K_{t^{-}})\big|\calF_{t^{-}},|dL_{t}|  \big] =&\frac{1}{2}\sum_{\pm}\textup{Var}\big[\calE(K_{t})\big|\calF_{t^{-}},dL_{t} =\pm v \big]\nonumber  \\ & + \frac{1}{2}\Big(\mathbb{E}\big[\calE(K_{t})\big|\calF_{t^{-}},dL_{t}  =v \big]-\mathbb{E}\big[\calE(K_{t})\big|\calF_{t^{-}},dL_{t}=-v \big] \Big)^{2}\nonumber \\  &+\mathbb{E}\big[\calE(K_{t})\big|\calF_{t^{-}},|dL_{t}|=|v|  \big]^{2}-\mathcal{E}^{2}(k) .
\end{align}
The bottom line is positive.  The second line of~(\ref{Tortilla}) can be written as 
$$\Big(\mathbb{E}\big[\calE(K_{t})\big|\calF_{t^{-}},dL_{t}  =v \big]-\mathbb{E}\big[\calE(K_{t})\big|\calF_{t^{-}},dL_{t}=-v \big]\Big)^{2}= \Big|{ }_{\scriptscriptstyle{Q}}\big\langle K_{t^{-}} \big|e^{-\textup{i}v X} H^{\frac{1}{2}} e^{\textup{i}v X}-e^{\textup{i}vX} H^{\frac{1}{2}} e^{-\textup{i}vX}    \big| K_{t^{-}}  \big\rangle_{\scriptscriptstyle{Q}}\Big|^{2}$$
and the right side is  $v^{2}+\mathit{O}\big(\frac{1}{1+|\beta(k)| } \big)$ by the argument in the proof for Part (4) of Prop.~\ref{Basics}.  Since all three terms on the right side of~(\ref{Tortilla}) are positive, it follows that $\textup{Var}\big[\calE(K_{t})\big|\calF_{t^{-}},dL_{t}= v \big]$ is $\mathit{O}\big(\frac{1}{1+|\beta(k)| } \big)$ as claimed.

\end{proof}

 \section{The limiting classical process}\label{SecClassical}
  
 I now shift my focus entirely to the classical stochastic processes $K_{r}$ whose probability density evolves according to the equation~(\ref{Master}).     In Sect.~\ref{SecSubMart}, I prove Thm.~\ref{SubMartCLT}, which stated that $t^{-\frac{1}{2}}\mathcal{E}_{st}$, $s\in[0,1]$ converges in distribution as $t\rightarrow \infty$ to the absolute value of a Brownian motion.  Section~\ref{SubSecRefl} contains various lemmas related to the random variables $\frac{\tau_{n+1}-\tau_{n}}{|K_{\tau_{n}}|}$ being approximately exponentially distributed with expectation $\nu^{-1}$, where  $\tau_{n},\tau_{n+1}$ are successive reflection times.  Finally, Sect.~\ref{SubSecMain} contains proofs of the lemmas in Sect.~\ref{SecProofOutline} and completes the proof of  Thm.~\ref{Main}.

\subsection{A submartingale central limit theorem}\label{SecSubMart}

The difference between the quantities $|k|$ and $\mathcal{E}(k)=E^{\frac{1}{2}}(k)$ is bounded by a constant, and the difference is even $\mathit{O}(|k|^{-1})$ for $|k|\gg 1$ as a consequence of the Kr\"onig-Penney relation~(\ref{Energies}).   Working with the process $\mathcal{E}_{r}$ is advantageous, since it is a submartingale by Part (2) of Prop.~\ref{Basics}, and  the increasing part of the Doob-Meyer decomposition for the square $\mathcal{E}_{r}^{2}=E_{r}$ increases linearly.  Thus, the processes $t^{-\frac{1}{2}}|K_{st}|$ and $t^{-\frac{1}{2}}\mathcal{E}_{st}$, $s\in[0,1]$ are close, and I will  work with the latter.    
As before, $M_{r}$ and $A_{r}$ will denote the martingale and increasing parts the Doob-Meyer decomposition for $\mathcal{E}_{r}$.  The main result of this section is proof of Thm.~\ref{SubMartCLT}.  One of the key inputs for the proof is Lem.~\ref{EnergyLemma}, which yields that $\mathcal{E}_{r}$ typically spends most of the interval $r\in [0,t]$ with values $\mathcal{E}_{r}\gg 1$.  This is important, since some estimates, such as in Part (4) of Prop.~\ref{Basics},  only become effective when $\mathcal{E}_{r}$ is large.  Lemma~\ref{LittleEm} offers a lower  bound for the predictable quadratic variation of the martingale $m_{t}$ from Prop.~\ref{Basics}. This lower bound shows that $\langle m,m\rangle_{r}$ essentially grows linearly with rate $\sigma$, and when combined with Part (2) and (3) of Prop.~\ref{Basics}, this implies that $t^{-\frac{1}{2}}M_{st}$ and  $t^{-\frac{1}{2}}m_{st}$ are nearly equal. 

The result of the lemma below holds for a general class of positive processes whose squares are submartingales, whose variances increase linearly, and that satisfy some Lindberg condition (to avoid the situation where the jumps are  very large but very infrequent).  The lemma is only stated for $\mathcal{E}_{r}$ here.  The lemma implies that $\mathcal{E}_{r}$ typically spends most of a time interval $r\in [0,t]$ with values on the order of $t^{\frac{1}{2}-\rho}$, $0<\rho<\frac{1}{2}$ for large $t$, where ``most" is  all except for a total duration on the order  $\mathit{O}(t^{\rho})$.  Lemma 3.4 in~\cite{Newton} is similar, although the proof here employs different techniques.  In the more general situation in which the second momentum of $\mathcal{E}_{r}$ grows at varying rates, similar statements can be made for a stochastic time change of $\mathcal{E}_{r}$.

\begin{lemma}\label{EnergyLemma}
Let $\varrho_{1},\varrho_{2},\varrho_{3}\geq 0$, $\varrho_{1}+\varrho_{2} +\varrho_{3}=\frac{1}{2}$, and $\varrho_{1}$ be strictly positive.  For $\epsilon>0$,  define $T_{\epsilon,t}=\int_{0}^{1}ds\,\chi(t^{-\varrho_{1} }\mathcal{E}_{st}\leq \epsilon  )$.  For all $\epsilon, \delta>0$  ,
$$\limsup_{t\rightarrow \infty}\, t^{\varrho_{2}}  \mathbb{P}\big[T_{\epsilon,t}\geq t^{-\varrho_{3} }\delta    \big]< 64\sigma^{-\frac{1}{2}} \frac{\epsilon}{\delta}.       $$
\end{lemma}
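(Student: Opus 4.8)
Write $L:=\epsilon\,t^{\varrho_{1}}$ and note that, after the change of variables $r=st$, $T_{\epsilon,t}=t^{-1}\int_{0}^{t}dr\,\chi(\mathcal{E}_{r}\le L)$. By Markov's inequality
$$
t^{\varrho_{2}}\,\mathbb{P}\big[T_{\epsilon,t}\ge t^{-\varrho_{3}}\delta\big]\le \frac{t^{\varrho_{2}+\varrho_{3}-1}}{\delta}\,\mathbb{E}\Big[\int_{0}^{t}dr\,\chi(\mathcal{E}_{r}\le L)\Big]=\frac{t^{-\frac12-\varrho_{1}}}{\delta}\,\mathbb{E}\Big[\int_{0}^{t}dr\,\chi(\mathcal{E}_{r}\le L)\Big],
$$
so the lemma reduces to the bound $\mathbb{E}\big[\int_{0}^{t}dr\,\chi(\mathcal{E}_{r}\le L)\big]\le (1+o(1))\,2\sigma^{-\frac12}L\,t^{\frac12}$ as $t\to\infty$; the constant $64$ leaves ample room for lower-order terms. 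Throughout I use that $\big||k|-\mathcal{E}(k)\big|$ is bounded, so working with $\mathcal{E}_{r}$ rather than $|K_{r}|$ is harmless, and that by Prop.~\ref{Basics} $\mathcal{E}_{r}$ is a submartingale with Doob--Meyer decomposition $\mathcal{E}_{r}=\mathcal{E}_{0}+M_{r}+A_{r}$ ($A$ continuous and increasing) whose square $\mathcal{E}_{r}^{2}=E_{r}$ has predictable increasing part exactly $\sigma r$.

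The plan is to extract the occupation time by a Dynkin/compensation argument applied to a concave test function. Fix $f_{L}\in C^{1}(\mathbb{R}_{+})$ with $f_{L}(x)=\sigma^{-1}x(2L-x)$ for $0\le x\le L$ and $f_{L}(x)=\sigma^{-1}L^{2}$ for $x\ge L$; then $0\le f_{L}\le \sigma^{-1}L^{2}$, $0\le f_{L}'\le f_{L}'(0)=2L/\sigma$, and $f_{L}''=-2\sigma^{-1}\chi_{[0,L]}$ (a.e.). Since $K_{r}$ jumps at rate $\mathcal{R}$, the compensator of $\mathcal{E}_{r}-\mathcal{E}_{0}$ has density $\mathfrak{a}(K_{r}):=\mathcal{R}\,\mathbb{E}[\mathcal{E}'-\mathcal{E}_{r}\mid K_{r}]=\tfrac{d}{dr}A_{r}\ge 0$, and the compensator of $[\mathcal{E}]_{r}=[M]_{r}$ has density $\mathfrak{v}(K_{r}):=\mathcal{R}\,\mathbb{E}[(\mathcal{E}'-\mathcal{E}_{r})^{2}\mid K_{r}]=\tfrac{d}{dr}\langle M\rangle_{r}\le\sigma$. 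Using the exact first-order Taylor formula with integral remainder for $f_{L}$ (which is legitimate since $f_{L}'$ is absolutely continuous) and taking compensators, one gets
\begin{align*}
\frac{1}{\sigma}\,\mathbb{E}\Big[\int_{0}^{t}dr\,\mathfrak{v}(K_{r})\,\chi(\mathcal{E}_{r}\le L)\Big]=\mathbb{E}[f_{L}(\mathcal{E}_{0})]-\mathbb{E}[f_{L}(\mathcal{E}_{t})]+\mathbb{E}\Big[\int_{0}^{t}f_{L}'(\mathcal{E}_{r})\,dA_{r}\Big]+\textup{Rem}_{t},
\end{align*}
where $\textup{Rem}_{t}$ collects the boundary contributions of jumps straddling the level $L$; since such a jump needs $|\mathcal{E}_{r}-L|\lesssim|\Delta\mathcal{E}_{r}|$, $|\textup{Rem}_{t}|$ is controlled by the occupation time of a thin band around $L$, which is $O(\sqrt{t}\,\log t)$ by the same estimate applied to a narrow band, the exponential-moment item (1) of List~\ref{Assumptions} keeping large jumps from spoiling the expansion. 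The first two terms on the right are bounded by $\sigma^{-1}L^{2}$ and $0$. The decisive ``reflection/local-time'' term is at most $f_{L}'(0)\,\mathbb{E}[A_{t}]=\tfrac{2L}{\sigma}\,\mathbb{E}[A_{t}]$, and $\mathbb{E}[A_{t}]=\mathbb{E}[\mathcal{E}_{t}]-\mathbb{E}[\mathcal{E}_{0}]\le(\mathbb{E}[E_{t}])^{\frac12}=(\mathbb{E}[E_{0}]+\sigma t)^{\frac12}\le(\mathbb{E}[E_{0}])^{\frac12}+(\sigma t)^{\frac12}$ by Jensen and Prop.~\ref{Basics}(1); hence the right-hand side is at most $\sigma^{-1}L^{2}+2\sigma^{-1}L(\mathbb{E}[E_{0}])^{\frac12}+2\sigma^{-\frac12}L\,t^{\frac12}+|\textup{Rem}_{t}|$, dominated by $2\sigma^{-\frac12}L\,t^{\frac12}$ because $\varrho_{1}<\frac12$ forces $L^{2}=\epsilon^{2}t^{2\varrho_{1}}\ll L\,t^{\frac12}$.

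It remains to pass from $\int_{0}^{t}\mathfrak{v}(K_{r})\chi(\mathcal{E}_{r}\le L)dr$ back to $\int_{0}^{t}\chi(\mathcal{E}_{r}\le L)dr$, i.e.\ to show $\sigma-\mathfrak{v}$ contributes only lower order. One has $\mathfrak{v}(K_{r})\le\sigma$ and, since $\langle M\rangle\ge\langle m\rangle$ and $\sigma-\mathcal{V}(k)\le\mathbf{a}/(1+|\beta(k)|)$ by Prop.~\ref{Basics}(3)--(4), also $\sigma-\mathfrak{v}(K_{r})\le\mathbf{a}/(1+|\beta(K_{r})|)$. Splitting $\{\mathcal{E}_{r}\le L\}=\{\mathcal{E}_{r}\le\ell_{0}\}\cup\{\ell_{0}<\mathcal{E}_{r}\le L\}$ for a slowly growing threshold $\ell_{0}=\ell_{0}(t)$: the first region is handled by the very same Dynkin estimate with $L$ replaced by $\ell_{0}$, giving $O(\ell_{0}\sqrt t)=o(L\sqrt t)$; on the second region $\int_{0}^{t}(\sigma-\mathfrak{v}(K_{r}))\chi(\mathcal{E}_{r}\le L)dr\le\sigma t-\langle m\rangle_{t}$ is sublinear, which is precisely the lower bound on $\langle m,m\rangle_{r}$ (Lem.~\ref{LittleEm}), or follows directly from Prop.~\ref{Basics}(4) together with a bound on the time $K_{r}$ spends within $O(1)$ of the lattice $\tfrac12\mathbb{Z}$. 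Combining the two displays yields $\mathbb{E}\big[\int_{0}^{t}\chi(\mathcal{E}_{r}\le L)dr\big]\le(1+o(1))\,2\sigma^{-\frac12}L\,t^{\frac12}$, whence $\limsup_{t\to\infty}t^{\varrho_{2}}\mathbb{P}[T_{\epsilon,t}\ge t^{-\varrho_{3}}\delta]\le 2\sigma^{-\frac12}\epsilon/\delta<64\sigma^{-\frac12}\epsilon/\delta$.

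I expect the main obstacle to be the rigorous execution of the compensated second-order expansion for the pure-jump generator of $K_{r}$: one must bound $\textup{Rem}_{t}$ uniformly in $t$ using only the exponential-moment hypothesis on $j$, and must be sure that the identification $\mathcal{R}\,\mathbb{E}[(\mathcal{E}'-\mathcal{E}_{r})^{2}\mid K_{r}]=\tfrac{d}{dr}\langle M\rangle_{r}\le\sigma$ is robust against the corner of $f_{L}$ at $L$. The second delicate point is the ``passage-back'' step --- quantifying that neither $\mathcal{E}_{r}$ nor $K_{r}$ spend a fraction of $[0,t]$ comparable to $L\sqrt t$ at $O(1)$ momenta or within $O(1)$ of the half-integer lattice, which is the only regime where $\mathfrak{v}$ can fall below $\sigma$.
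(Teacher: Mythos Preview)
Your reduction via Markov's inequality to bounding $\mathbb{E}\big[\int_0^t\chi(\mathcal{E}_r\le L)\,dr\big]$ is the same first step as the paper, but the route you take from there is quite different --- and it has a genuine circularity problem. The Dynkin/compensation argument only delivers a bound on the \emph{weighted} occupation time $\int_0^t\mathfrak v(K_r)\chi(\mathcal{E}_r\le L)\,dr$, and since $\mathfrak v\le\sigma$ this inequality goes the wrong way. Your ``passage-back'' step therefore must control $\int_0^t(\sigma-\mathfrak v(K_r))\chi(\mathcal{E}_r\le L)\,dr$, and for this you invoke Lem.~\ref{LittleEm}. But in the paper Lem.~\ref{LittleEm} is proved \emph{using} Lem.~\ref{EnergyLemma} (its final paragraph applies Lem.~\ref{EnergyLemma} with $\varrho_1=\varrho_2=\varrho_3=1/6$), so this is circular. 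Your alternative (``a bound on the time $K_r$ spends within $O(1)$ of $\tfrac12\Z$'') is itself an occupation-time statement of the same type you are trying to prove, and your handling of the low region $\{\mathcal{E}_r\le\ell_0\}$ via ``the same Dynkin estimate with $L$ replaced by $\ell_0$'' presupposes the very unweighted bound you have not yet established. The $\textup{Rem}_t$ justification (``$O(\sqrt t\log t)$ by the same estimate applied to a narrow band'') suffers from the identical circularity.

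The paper's argument sidesteps this entirely by never needing any lower bound on $\tfrac{d}{dr}\langle M\rangle_r$. It decomposes the occupation time into (number of incursions into $[0,\epsilon t^{\varrho_1}]$) $\times$ (expected length of an incursion). The number of incursions is bounded by the submartingale upcrossing inequality applied to $\mathcal{E}_r$, giving $\mathbb{E}[\Xi_t]\lesssim \sigma^{1/2}t^{1/2-\varrho_1}/\epsilon$. The expected incursion length is bounded by optional sampling on the \emph{exact} martingale $E_r-E_0-\sigma r$ from Prop.~\ref{Basics}(1): $\sigma\,\mathbb{E}[\varsigma]=\mathbb{E}[E_\varsigma-E_0]\lesssim(2\epsilon t^{\varrho_1})^2$ (with a truncation device to handle overshoot). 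Multiplying gives $\mathbb{E}[T_{\epsilon,t}]\lesssim\sigma^{-1/2}\epsilon\,t^{\varrho_1-1/2}$, and no information about $\langle M\rangle$ versus $\langle m\rangle$ is needed.
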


\begin{proof}
I begin with an application of Chebyshev's inequality to obtain
$$ t^{\varrho_{2}}\mathbb{P}\big[T_{\epsilon,t}\geq t^{-\varrho_{3}}\delta    \big]\leq \frac{ t^{\frac{1}{2}-\varrho_{1} } }{\delta  } \mathbb{E}\big[T_{\epsilon,t} \big].$$
  
Set $\varsigma_{0}=\varpi_{1}=0$, and define the stopping times $\varsigma_{n},\varpi_{n}\leq t$ such that 
\begin{align*}
\varpi_{n}= \min \{r\in (\varsigma_{n-1},\infty)\,\big| \,\mathcal{E}_{r}\leq \epsilon t^{\varrho_{1} }         \},\quad \quad
\varsigma_{n}= \min \{ r\in (\varpi_{n},\infty)\,\big| \, \mathcal{E}_{r}\geq 2\epsilon t^{\varrho_{1} }            \}.
\end{align*}
Let $\Xi_{t}$ be the number $\varpi_{n}$'s less than $t$.  I clearly have  $$T_{\epsilon,t}\leq t^{-1}\sum_{n=1}^{\Xi_{t}}\varsigma_{n}-\varpi_{n}.    $$
Moreover, notice that
\begin{align}\label{IncursionTime}
\mathbb{E}\big[T_{\epsilon,t}  \big]\leq \mathbb{E}\big[  t^{-1}\sum_{n=1}^{\Xi_{t}}\varsigma_{n}-\varpi_{n}\big]\leq t^{-1}\mathbb{E}\big[ \Xi_{t}\big]\sup_{n\in \mathbb{N}}\mathbb{E}\big[\varsigma_{n}-\varpi_{n}\,\big|\, n\leq \Xi_{t}   \big]  . 
\end{align}
I hence have an upper bound in terms of the expectation for the number of upcrossings $\Xi_{t}$ and the expectation for the duration of a single upcrossing $ \varsigma_{n}-\varpi_{n}$ conditioned on the event $n\leq \Xi_{t}$.  By the submartingale uncrossing inequality~\cite{Chung}, I have the first inequality below 
\begin{align}\label{Numero}
\mathbb{E}\big[ \Xi_{t}\big]\leq  \frac{ \mathbb{E}\big[\mathcal{E}_{t}  \big]}{2\epsilon t^{\varrho_{1}}-\epsilon t^{\varrho_{1} }    }\leq   \frac{ \mathbb{E}\big[\mathcal{E}_{t}^{2}  \big]^{\frac{1}{2}} }{\epsilon t^{\varrho_{1} }  }\leq      \frac{(\mathbb{E}\big[\mathcal{E}_{0}^{2}    \big]+t\sigma )^{\frac{1}{2}} }{\epsilon t^{\varrho_{1} } }<2\sigma^{\frac{1}{2}} t^{\frac{1}{2}-\varrho_{1}} \epsilon^{-1}.
\end{align}
The last inequality is for $t$ large enough.  

Next, I focus on the expectation of the incursions $\varsigma_{n}-\varpi_{n}$.  Whether or not the event $n\leq \Xi_{t}$ occurred will be known at time $\varpi_{n}$, so    
$$\sup_{n\in \mathbb{N}}\mathbb{E}\big[\varsigma_{n}-\varpi_{n}\,\big|\, n\leq \Xi_{t}   \big]\leq \sup_{n\in \mathbb{N},\,\omega\in \mathcal{F}_{\varpi_{n}} }\mathbb{E}\big[\varsigma_{n}-\varpi_{n}\,\big|\,\mathcal{F}_{\varpi_{n}} \big]. $$
In examining the expression on the right side above, I will set $\varpi_{n}=0$ and $\varsigma=\varsigma_{n}-\varpi_{n}$. 
Since $\sigma t$ is the increasing part of the Doob-Meyer decomposition for $\mathcal{E}^{2}_{t}-\mathcal{E}^{2}_{0}$, the optional sampling theorem gives 
$$\mathbb{E}_{\omega}\big[\varsigma\wedge \mathcal{T}\big]= \sigma^{-1}\mathbb{E}_{\omega}\big[\mathcal{E}^{2}_{\varsigma\wedge \mathcal{T} }-\mathcal{E}^{2}_{0}  \big]. $$
If the process $\mathcal{E}_{t}$ were constrained to make jumps bounded by $J>0$, then I could immediately reason that      
$$\mathbb{E}_{\omega}[\varsigma]=\lim_{\mathcal{T}\rightarrow \infty}   \sigma^{-1}\mathbb{E}_{\omega}\big[\mathcal{E}^{2}_{\varsigma \wedge \mathcal{T} }-\mathcal{E}^{2}_{0}\big]\leq \sigma^{-1} (2\epsilon t^{\varrho_{1}}+J)^{2},   $$
since $\mathcal{E}_{s}$ is either still in the interval $[0,2\epsilon t^{\varrho_{1} }]$ at time $s=\varsigma \wedge \mathcal{T}$ or has jumped out by a value of at most $J$.  Plugging~(\ref{Numero}) and the bound $\sigma^{-1} (2\epsilon t^{\varrho_{1}}+J)^{2}$ for $\sup_{n\in \mathbb{N}}\mathbb{E}\big[\varsigma_{n}-\varpi_{n}\,\big|\, n\leq \Xi_{t}   \big]$ into~(\ref{IncursionTime}) leads to the desired result.  

For the case in which there is not a cap on the jumps, it is necessary to be more  careful.  Consider a modified process in which jumps $\mathcal{E}_{s}-\mathcal{E}_{s-} $ greater than $2\epsilon t^{\varrho_{1}}$ are removed.  The modified process can be written as 
$$
\mathcal{E}_{r}^{\prime}=\sum_{n=1}^{\mathcal{N}_{r} }(\mathcal{E}_{t_{n}}-\mathcal{E}_{t_{n}^{-}})\chi( \mathcal{E}_{t_{n}}-\mathcal{E}_{t_{n}^{-}}\leq 2\epsilon t^{\varrho_{1} } ),    $$
for Poisson times $t_{n}\leq r$.  The modified process $\mathcal{E}_{r}^{\prime}$  is still positive, and for large enough $t$, $(\mathcal{E}_{r}^{\prime})^{2}$ will still be a submartingale such that the increasing part of its Doob-Meyer decomposition grows nearly at the linear rate $\sigma$ (although growing at a lesser rate, such as $\frac{\sigma}{2}$, is sufficient).

Let $\varsigma^{\prime} $ be  analogously defined to $\varsigma$ as the hitting time that $\mathcal{E}_{r}^{\prime}$ jumps out of $[0,\,2\epsilon t^{2\varrho_{1} }]$.  By the definitions of these processes, I always have $\varsigma \leq \varsigma^{\prime} $,
\begin{align*}
\mathbb{E}_{\omega}\big[ \varsigma \big] &=\lim_{\mathcal{T}\rightarrow \infty}\mathbb{E}_{\omega}\big[ \varsigma \wedge \mathcal{T}  \big]\leq \lim_{\mathcal{T}\rightarrow \infty}   \mathbb{E}_{\omega}\big[ \varsigma^{\prime}\wedge \mathcal{T}  \big]\\ &\leq \lim_{\mathcal{T}\rightarrow \infty}   2\sigma^{-1} \mathbb{E}_{\omega}\big[(\mathcal{E}_{\varsigma^{\prime}\wedge \mathcal{T}  }^{\prime})^{2}-(\mathcal{E}_{0}^{\prime})^{2}  \big]\leq  32\sigma^{-1}\epsilon^{2}t^{2\varrho_{1} },   
\end{align*}
 since $\mathcal{E}_{\varsigma^{\prime}\wedge \mathcal{T}  }^{\prime}\leq 4\epsilon t^{\varrho_{1} }$. Combining this inequality with~(\ref{Numero}) in~(\ref{IncursionTime}) finishes the proof.

\end{proof}

I will make a few comments about the Markov chain $\mathbf{\theta}_{n}=\Theta(K_{t_{n}})$, which is the sequence of momenta $K_{t_{n}}$ contracted to the torus $\mathbb{T}=[-\frac{1}{4},\frac{1}{4})$ at the Poisson times $t_{n}$.  Since the lattice components of the momentum jumps live on $\Z$, they do not influence the jump rates for the contracted process $\mathbf{\theta}_{n}$.  Define the map $\langle\cdot\rangle:L^{1}(\R)\longrightarrow L^{1}(\mathbb{T})$ by
\begin{align}\label{DefTorusContraction}
\langle q\rangle(\theta)= \sum_{i\in \Z}q(\theta+\frac{i}{2}), \quad \quad  \quad \quad  \theta \in \big[-\frac{1}{4},  \frac{1}{4}\big).   
\end{align}
If $q_{t_{n-1}}$, $q_{t_{n}}$ are the distributions for the momentum at times $t_{n-1}$,$t_{n}$, then $\langle q_{t_{n}}\rangle =\mathbf{T}\langle q_{t_{n-1}}\rangle$, where the operator $\mathbf{T}$ has integral kernel
$$   \mathbf{T}(\theta_{2},\theta_{1}) =\mathcal{R}^{-1}\sum_{i\in \Z} j\big(\theta_{2}-\theta_{1}+\frac{i}{2}\big),\quad \quad \theta_{1},\theta_{2}\in \big[-\frac{1}{4},\frac{1}{4}\big).  $$ 
As a consequence of (2) of List~\ref{Assumptions}, $\|\langle j\rangle \|_{\infty}=\sup_{-\frac{1}{4}\leq \theta\leq \frac{1}{4}}\sum_{i\in \Z}j(\theta+\frac{i}{2})$ is finite, and thus $\mathbf{T}$ maps $L^{1}$ functions to $L^{\infty}$:  
\begin{align}\label{Crumpet}
\|\mathbf{T}h \|_{\infty} \leq \mathcal{R}^{-1}\|h\|_{1}\sum_{i\in \Z}j\big(\theta+\frac{i}{2}\big),  \quad \quad \quad h\in L^{1}\big(\mathbb{T}\big)\leq \frac{\mu}{\mathcal{R}} \|h\|_{1}. 
\end{align}

By~(\ref{Crumpet}), the process $K_{r}$  forgets its previous locations on the torus $\mathbb{T}$ exponentially fast. In particular, the process $\Theta(K_{r})$ will not not be disproportionately concentrated around the value zero.  This is important for the proof of the following lemma, which relies on Part (4) of Prop.~\ref{Basics}, since the bound in Part (4) of Prop.~\ref{Basics} is only useful for $|\beta(K_{r})|=\frac{1}{2}|\Theta(K_{r})||\mathbf{n}(K_{r})|\gg 1$, i.e., when $|\Theta(K_{r})|$ is not too close to zero.

\begin{lemma}\label{LittleEm}
The predictable quadratic variation for  $m_{r}$ satisfies
$$\mathbb{E}\Big[\sup_{s\in [0,1]}\big|\sigma s-t^{-1}\langle m,m\rangle_{st}\big|\Big]\leq \frac{g\log(1+t)}{t^{\frac{1}{6}}}  $$   
for some constant $g>0$ and all $t\in \R_{+}$.

\end{lemma}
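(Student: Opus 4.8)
The plan is to first collapse the supremum. By Part~(4) of Prop.~\ref{Basics} the predictable quadratic variation is absolutely continuous with $\frac{d}{dr}\langle m,m\rangle_{r}=\mathcal{V}(K_{r})$ for some $\mathcal{V}:\R\to[0,\sigma]$, so
\[
\phi_{t}(s):=\sigma s-t^{-1}\langle m,m\rangle_{st}=t^{-1}\int_{0}^{st}\big(\sigma-\mathcal{V}(K_{r})\big)\,dr
\]
is nonnegative, vanishes at $s=0$, and is nondecreasing in $s$. Hence $\sup_{s\in[0,1]}\big|\phi_{t}(s)\big|=\phi_{t}(1)=t^{-1}\int_{0}^{t}\big(\sigma-\mathcal{V}(K_{r})\big)\,dr$, and it suffices to bound $\mathbb{E}[\phi_{t}(1)]$. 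Invoking the estimate $\sigma-\mathcal{V}(k)\le\mathbf{a}\big(1+|\beta(k)|\big)^{-1}$ of Part~(4) of Prop.~\ref{Basics} together with $|\beta(k)|=\big|\tfrac{\mathbf{n}(k)}{2}\big|\,|\Theta(k)|$ and $|\mathbf{n}(k)|\ge 2|k|-\tfrac12$ from~(\ref{Barrack}), one has, for parameters $1\le\delta_{1}$ and $0<\delta_{2}<\tfrac14$, the pointwise bound
\[
\frac{\mathbf{a}}{1+|\beta(K_{r})|}\le \mathbf{a}\,\chi\big(|K_{r}|\le\delta_{1}\big)+\mathbf{a}\,\chi\big(|\Theta(K_{r})|\le\delta_{2}\big)+\frac{2\mathbf{a}}{\delta_{1}\delta_{2}},
\]
because outside the two indicator events $|\beta(K_{r})|\ge\tfrac12\delta_{1}\delta_{2}$. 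Thus $\mathbb{E}[\phi_{t}(1)]$ is controlled by the time-averaged probabilities of $\{|K_{r}|\le\delta_{1}\}$ and $\{|\Theta(K_{r})|\le\delta_{2}\}$ plus the deterministic term $2\mathbf{a}/(\delta_{1}\delta_{2})$.

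Next I would bound the two averages separately. Since $\big||k|-\mathcal{E}(k)\big|$ is bounded by a constant $C_{0}$, one has $\chi(|K_{r}|\le\delta_{1})\le\chi(\mathcal{E}_{r}\le\delta_{1}+C_{0})$; choosing $\delta_{1}=t^{1/3}$ and applying Lem.~\ref{EnergyLemma} with $\varrho_{1}=\tfrac13$, a fixed $\epsilon$ (e.g.\ $\epsilon=2$, valid once $t^{1/3}\ge C_{0}$), and any split $\varrho_{2}+\varrho_{3}=\tfrac16$, the tail bound there gives $\mathbb{P}[T_{\epsilon,t}\ge x]\le\min\{1,\,c\,t^{-1/6}/x\}$ for $t$ large; integrating in $x$ over $[0,1]$ yields $\mathbb{E}\big[t^{-1}\int_{0}^{t}\chi(|K_{r}|\le\delta_{1})\,dr\big]\le\mathbb{E}[T_{\epsilon,t}]\le c'\,t^{-1/6}\log(1+t)$. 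For the second average, $\Theta(K_{r})$ is frozen on each interval between Poisson times, and for $n\ge1$ the smoothing bound~(\ref{Crumpet}) (which uses assumption~(2) of List~\ref{Assumptions}) gives $\|\langle q_{t_{n}}\rangle\|_{\infty}\le\mu/\mathcal{R}$, hence $\mathbb{P}[|\Theta(K_{t_{n}})|\le\delta_{2}]\le2\delta_{2}\mu/\mathcal{R}$; using that the Poisson clock is independent of the jump sequence and that $\mathbb{E}[\#\{n:t_{n}\le t\}]=\mathcal{R}t$, one gets $\mathbb{E}\big[t^{-1}\int_{0}^{t}\chi(|\Theta(K_{r})|\le\delta_{2})\,dr\big]\le(\mathcal{R}t)^{-1}+2\delta_{2}\mu/\mathcal{R}$. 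Taking $\delta_{2}=t^{-1/6}$ makes this $O(t^{-1/6})$ and makes $2\mathbf{a}/(\delta_{1}\delta_{2})=2\mathbf{a}\,t^{-1/6}$; combining the three contributions gives $\mathbb{E}[\phi_{t}(1)]\le g\,t^{-1/6}\log(1+t)$ for $t$ large, and adjusting $g$ covers the remaining bounded range of $t$.

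The main obstacle is the second average, i.e.\ controlling the time that $\Theta(K_{r})$ spends near $0$: the decay in Part~(4) of Prop.~\ref{Basics} only kicks in when $|\beta(K_{r})|$ is large, which needs \emph{both} $|K_{r}|$ large (supplied by Lem.~\ref{EnergyLemma}) \emph{and} $\Theta(K_{r})$ bounded away from $0$. Handling the latter cleanly requires passing from the continuous-time integral to a sum over Poisson times, exploiting the independence of the Poisson clock from the jump chain, and applying the one-step $L^{1}\to L^{\infty}$ contraction~(\ref{Crumpet}); once this is in place, balancing the three error terms forces $\delta_{1}\sim t^{1/3}$ and $\delta_{2}\sim t^{-1/6}$, producing the rate $t^{-1/6}$, with the logarithm entering solely through the integration of the tail estimate in Lem.~\ref{EnergyLemma}.
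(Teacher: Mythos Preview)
Your proof is correct, but the decomposition differs from the paper's. Both start by collapsing the supremum to $\phi_t(1)=t^{-1}\int_0^t(\sigma-\mathcal{V}(K_r))\,dr$. The paper then introduces a one–step conditional expectation $Q(k):=\mathbb{E}[\mathcal{V}(K_{t_n})\mid K_{t_{n-1}}=k]$, approximates $\int_0^t(\sigma-\mathcal{V}(K_r))\,dr$ by $\int_0^t(\sigma-Q(K_r))\,dr$ via a martingale argument, and bounds $\sigma-Q(k)$ for $|k|\ge 2t^{1/6}$; here the torus averaging is built into $Q$ and produces the logarithm through $\int_{-1/4}^{1/4}\frac{d\theta}{1+|k\theta|}\lesssim\frac{\log|k|}{|k|}$. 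Lem.~\ref{EnergyLemma} is then applied with $\varrho_1=\varrho_2=\varrho_3=\tfrac16$. Your route bypasses $Q$ entirely: you split $\mathbf{a}/(1+|\beta(K_r)|)$ pointwise into an indicator for small $|K_r|$, an indicator for small $|\Theta(K_r)|$, and a deterministic remainder; you then invoke Lem.~\ref{EnergyLemma} with $\varrho_1=\tfrac13$ and obtain the logarithm by integrating the resulting tail bound, while the torus term is handled separately via the density bound~(\ref{Crumpet}). Your argument is more elementary (no auxiliary process $s_t$, no $Q$), at the cost of two separate occupation–time estimates; the paper's packaging is slicker but requires the extra approximation step.

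One small caveat: Lem.~\ref{EnergyLemma} is stated as a $\limsup$ for each fixed $\delta$, so the tail estimate $\mathbb{P}[T_{\epsilon,t}\ge x]\le c\,t^{-1/6}/x$ uniformly in $x$ does not follow from the \emph{statement} alone; it does, however, follow immediately from the proof of Lem.~\ref{EnergyLemma} (the Chebyshev step there gives the uniform tail, and in fact $\mathbb{E}[T_{\epsilon,t}]\le C\,t^{\varrho_1-1/2}$ directly, so with $\varrho_1=\tfrac13$ you could even drop the logarithm on that term). This is a harmless technicality.
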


\begin{proof}
By Parts (2) and (3) of Prop.~\ref{Basics}, I have the inequality $\sigma\geq \frac{d}{dr}\langle m,m\rangle_{r}$.  Thus, the expression $\sup_{s\in [0,1]}\big|\sigma s-t^{-1}\langle m,m\rangle_{st}\big|$ is equal to $\sigma-t^{-1}\langle m,m\rangle_{t}$.  For the mean-$\mathcal{R}^{-1}$ exponential waiting times $e_{n}=t_{n+1}-t_{n}$ between Poisson times,  I can write
\begin{align}\label{Bengali}
t^{-1}\langle m,m\rangle_{t}=t^{-1}\sum_{n=0}^{\mathcal{N}_{t}}\mathcal{V}(K_{t_{n} })e_{n}- t^{-1}\Big(\sum_{n=1}^{\mathcal{N}_{t}}e_{n}-t\Big)\mathcal{V}(K_{\mathcal{N}_{t}}), 
\end{align}
where the second term on the right subtracts over-counting in the first term and will be of  order $\mathit{O}(t^{-1})$, since $  \mathcal{V}$ is bounded. 

Approximating $\langle m,m\rangle_{t}$ by the process  $s_{t}:=\sum_{n=1}^{\mathcal{N}_{t}}\mathbb{E}\big[\mathcal{V}(K_{t_{n}})\,\big|\,\mathcal{F}_{t_{n-1}}\big]  e_{n}$, yields a difference bounded by 
$$t^{-2}\mathbb{E}\big[\big(\langle m,m\rangle_{t}-s_{t}   \big)^{2}\big]\leq \sigma\mathcal{R} t^{-1}+\mathit{O}(t^{-2}),  $$
since $\sup_{k}\mathcal{V}(k)\leq \sigma$ and the $e_{n}$'s are independent of each other and everything else.  The conditional expectation  $\mathbb{E}\big[\mathcal{V}(K_{t_{n}})\,\big|\,\mathcal{F}_{t_{n-1}}\big]$ is a function $Q:\R\rightarrow [0,\,\sigma]$ of $K_{t_{n-1}}$.  My expression of choice will be $\int_{0}^{t}dr Q(K_{r})$, which is nearly equal to $s_{t}$ except for two extra boundary terms of order $\mathit{O}(t^{-1})$.      

The key input for the proof will be an upper bound for  $\sigma-Q(k)$ when $|k|\geq 2t^{\frac{1}{6}}$.  The probability that a Poisson jump $v$ has absolute value $|v|>t^{\frac{1}{6}}$ will decay superpolynomially for large $t$ by (1) of List~\ref{Assumptions}, and thus  
\begin{align}\label{Puff}
\sigma -Q(k)=\int_{[-t^{\frac{1}{6}},\,t^{\frac{1}{6}}  ]}dv\,\sum_{i\in \Z}\frac{j(v)}{\mathcal{R}}\big|\kappa_{v}( k,i)\big|^{2}\big|\sigma-\mathcal{V}(k+i+v    )\big|+\mathit{O}(t^{-\frac{1}{6}}) .  
\end{align}
By Part (1) of Lem.~\ref{BadTerms}, the sum of $\big|\kappa_{v}( k,i)\big|^{2}$ for $i\notin I(k,v)$ will be $\leq ct^{-\frac{1}{3}}$ for some $c>0$.  Since $|v|\leq t^{\frac{1}{6}}$, the values $|k+i+v |$ must be $\geq t^{\frac{1}{6}}$ for $i\in I(k,v)$. Applying Part (4) of Prop.~\ref{Basics}, the integral in~(\ref{Puff}) is bounded by 
\begin{align*}
&\int_{[-t^{\frac{1}{6}},\,t^{\frac{1}{6}}  ]}dv\,\sum_{i\in I(k  ,\,v )  }\frac{j(v)}{\mathcal{R}}\big|\kappa_{v}( k,i)\big|^{2} \frac{\mathbf{a}}{1+|\beta( k+i+v  )|}+\mathit{O}(t^{-\frac{1}{6}}) \nonumber  \\ &\leq \mathcal{R}^{-1}\Big(\sup_{-\frac{1}{4}\leq \theta \leq \frac{1}{4}} \sum_{i\in\Z} j\big(\theta+\frac{i}{2}\big)\Big)\int_{-\frac{1}{4}}^{\frac{1}{4}}d\theta \frac{\mathbf{a} }{1+| \lfloor t^{\frac{1}{6}}\rfloor \theta| }+\mathit{O}(t^{-\frac{1}{6}}) \\  &\leq \frac{2\mathbf{a}\mu}{\mathcal{R}} \frac{\log(1+t^{\frac{1}{6}} )}{t^{\frac{1}{6}}}+\mathit{O}(t^{-\frac{1}{6}}), 
\end{align*}
where I have used that the values $k+i+v$ have absolute value $\geq t^{\frac{1}{6}}$ and the shifts $i\in \Z$ do not change the values $k+v$ modulo $\frac{1}{2}$.  The supremum in the second inequality is smaller than $\mu$ by (2) of List~\ref{Assumptions}. Thus, for $|k|>2t^{-\frac{1}{6}}$, the above gives
\begin{align}\label{Curtsey}
\sigma -Q(k)=  \frac{2\mathbf{a}\mu}{\mathcal{R}} \frac{\log(1+t^{\frac{1}{6}} )}{t^{\frac{1}{6}}}+\mathit{O}(t^{-\frac{1}{6}}).
\end{align}

Now, I am ready to bound the expectation of $\int_{0}^{t}dr\,\big(\sigma-Q(K_{r})\big)$.  Define $T_{t}=t^{-1}\int_{0}^{t}dr\chi(|K_{r}|\leq 2t^{\frac{1}{6}} ) $.  By considering the events $T_{t}> t^{\frac{1}{6}}$ and $T_{t}\leq  t^{\frac{1}{6}}$, I get the bound
\begin{align*}
\mathbb{E}\Big[t^{-1}\int_{0}^{t}dr\big(\sigma-Q(K_{r})\big)  \Big]\leq & \sigma\mathbb{P}\big[T_{t}\geq t^{-\frac{1}{6}} \big]+\sigma \mathbb{E}\big[T_{t}\chi(T_{t}\leq t^{\frac{1}{6}})\big] + \sup_{|k|>  2t^{-\frac{1}{6}} }\big(\sigma-Q(k)\big)  \\   \leq & \sigma\mathbb{P}\big[T_{t}\geq t^{-\frac{1}{6}} \big]+\sigma t^{-\frac{1}{6}}+ \frac{2\mathbf{a}\mu}{\mathcal{R}} t^{-\frac{1}{6}} \log(t). 
\end{align*}  
  For the event $T_{t}\leq t^{\frac{1}{6}}$, the  third term on the right corresponds to the part of the trajectory in which $|K_{r}|$ is   $\geq 2t^{\frac{1}{6}}$, and the second inequality uses~(\ref{Curtsey}).    By Lem.~\ref{EnergyLemma} with $\delta=1$, $\epsilon=2$, and $\varrho_{1} =\varrho_{2}=\varrho_{3}=\frac{1}{6}$,   the probability that $T_{t}\geq t^{-\frac{1}{6}}$ has order $t^{-\frac{1}{6}}$ (since  $|K_{r}|\approx\mathcal{E}_{r}+\mathit{O}(|K_{r}|^{-1})$).

\end{proof}

The proof that the processes $t^{-\frac{1}{2}}\mathcal{E}_{st}$, $s\in [0,1]$ converge as $t\rightarrow \infty$ to the absolute value of a Brownian motion  relies on the following features of $\mathcal{E}_{t}$: 
\begin{enumerate}
\item The process $\mathcal{E}_{r}$ is a positive submartingale with martingale component $M_{r}$.

\item  The increasing part of the Doob-Meyer decomposition for $\mathcal{E}_{r}^{2}$ is $\mathcal{E}_{0}^{2}+ \sigma r$.

\item  When $\mathcal{E}_{r}\gg 1$, then $\sigma-\frac{d}{dt}\langle M,M \rangle_{t}$ goes to zero (at least on average over long  periods of time).

\item The quadratic variation of $M_{r}$ satisfies  a Lindberg condition. 

\end{enumerate} 
Theorem~\ref{SubMartCLT} follows from a more general theorem assuming conditions of the type above. In principle, if (2) does not strictly hold but the expectation of $\mathcal{E}_{t}$ tends to grow with $t$, then one can obtain this assumption by considering a stochastic time change $\tau_{t}$:  $\mathcal{E}_{\tau_{t}}=\mathcal{E}_{t}^{\prime}$.   Assumption (3) is required to guarantee that the process $t^{-\frac{1}{2}}\mathcal{E}_{st}$  behaves diffusively away from zero.  It forbids, for instance, that the process $\mathcal{E}_{r}$ has the deterministic trajectory $\mathcal{E}_{r}=\sigma^{\frac{1}{2}}r^{\frac{1}{2}}$.  To make use of (3), I will need to verify that  $\mathcal{E}_{t}$ spends most of the time at ``high" values, and this verification is made through Lem.~\ref{EnergyLemma}.   Assumption (4) is to guarantee asymptotic negligibility as is required in the martingale central limit theorem.  Theorem~\ref{SubMartCLT} is similar to Thm. 4.1 in~\cite{Newton}. \vspace{.3cm} 


\begin{proof}[Proof of Thm.~\ref{SubMartCLT}]\text{  }\\
All ``convergence in law" in this proof will refer to the uniform metric. Let $M_{r}$ and $A_{r}$ be the components of the Doob-Meyer decomposition for $\mathcal{E}_{r}$, and define the submartingale $G_{r}=M_{r}+\sup_{0\leq s\leq r}-M_{s}$.  The convergence in law of $t^{-\frac{1}{2}}\mathcal{E}_{st}$ is implied by the following statements:
\begin{enumerate}[(i).]
\item The processes $(\sigma t)^{-\frac{1}{2}}G_{st}$, $s\in [0,1]$ converge in law as $t\rightarrow \infty$ to the absolute value of a standard Brownian motion.   
\item The random variables $\sup_{0\leq s\leq 1}  t^{-\frac{1}{2}}\big|A_{st}-\sup_{0\leq r\leq st}-M_{r}\big|$ converge in probability to zero as $t\rightarrow \infty$. 
\end{enumerate}

\noindent  (i).\hspace{.15cm}  If  $(\sigma t)^{-\frac{1}{2}} M_{st}$ converges in law  to a Brownian motion $\mathbf{B}_{s}$ as $t\rightarrow \infty$,  it follows that $t^{-\frac{1}{2}}G_{st}$ converges in law to $L(B_{s})$, since the map $F:L^{\infty}([0,1])$ defined by $F(f)(s)=f(s)+\sup_{r\in[0,s]}-f(r)$ is  bounded with respect to the supremum norm.  Moreover, by L\'evy's law, $F(B_{s})$ has the same distribution as the absolute value of a Brownian motion~\cite[Ch.6]{Karatzas}. 

 Part (3) of Prop.~\ref{Basics} states that $M_{r}$ is a sum of the uncorrelated martingales $m_{r}$ and $M_{r}-m_{r}$.  By Lem.~\ref{LittleEm}, the predictable quadratic variation for $t^{-\frac{1}{2}}m_{st}$, $0\leq s\leq 1 $ converges in distribution to $\sigma s$.  By~\cite[Thm.VIII.2.13]{Pollard}, it will follow that $t^{-\frac{1}{2}}m_{st}$ converges in law to a Brownian motion with diffusion constant $\sigma$ if the following Lindberg condition holds as $t\rightarrow \infty$:
$$ t^{-1}\mathbb{E}\Big[ \sup_{0\leq r\leq t}|m_{r}-m_{r-}|^{2}   \Big]\longrightarrow 0.  $$
The quadratic variation of $m_{r}$ is dominated by the quadratic variation of the L\'evy process by Part (3) of Prop.~\ref{Basics}, and hence I have the first inequality below.  
\begin{align*}
 t^{-1}\mathbb{E}\Big[ \sup_{0\leq r\leq t}|m_{r}-m_{r-}|^{2}   \Big]\leq & t^{-1}\mathbb{E}\Big[ \sup_{0\leq n\leq \mathcal{N}_{t} }|v_{n}|^{2}   \Big]\\ \leq & t^{-1}\mathbb{E}\Big[\sum_{n=1}^{\mathcal{N}_{t}}|v_{n}|^{4}   \Big]^{\frac{1}{2}} \\ =& t^{-\frac{1}{2}}\Big(\int_{\R}dv\,j(v)\,v^{4}   \Big)^{\frac{1}{2}} . 
 \end{align*}
The second inequality is Jensen's, and the equality holds because the jumps $v_{n}$ occur with L\'evy rate $j(v)$.   By the assumptions on the jump rates $j(v)$, the fourth moment is finite, and the bottom line tends to zero for large $t$.  That proves the convergence in law for $t^{-\frac{1}{2}}m_{st}$.

 Next, I argue that $t^{-\frac{1}{2}}\sup_{0\leq s\leq 1} |M_{st}-m_{st}|$ converges weakly to zero as $t\rightarrow \infty$.  It will then follow that  $t^{-\frac{1}{2}}M_{st}$ converges to a Brownian motion.   Since the martingales $m_{r}$ and $M_{r}-m_{r}$ are uncorrelated, the formula holds:
  \begin{align}\label{Lamma}
\sigma\geq \frac{d}{dr}\langle M,M\rangle_{r}=    \frac{d}{dr}\langle m,m\rangle_{r}+\frac{d}{dr}\langle M-m,M-m\rangle_{r}.
\end{align} 
With  Doob's inequality, the inequality~(\ref{Lamma}), and Lem.~\ref{LittleEm}, I have
\begin{align*}
t^{-1}\mathbb{E}\big[\sup_{0\leq r\leq t} | M_{r}-m_{r}|^{2}\big]\leq & 4t^{-1}\mathbb{E}\big[ | M_{r}-m_{r}|^{2}   \big]\\ = &4t^{-1}\mathbb{E}\big[ \langle M-m, M-m\rangle_{t}   \big]\\  \leq &4\mathbb{E}\big[\sigma-t^{-1}\langle m,m\rangle_{t}\big] \longrightarrow 0.
\end{align*}

\noindent (ii).\hspace{.15cm}  Note that $G_{r}$ is the smallest positive submartingale that has Doob-Meyer decomposition with $M_{r}$ as the martingale part, so $\mathcal{E}_{r}\geq G_{r}$.  I begin with a technical point extending the discussion of (i).  Let $\mathcal{A}_{s}^{(t)}$ be the increasing part in the Doob-Meyer decomposition for $t^{-1}G_{st}^{2}$.  I should expect that $\mathcal{A}_{s}^{(t)}$ converges to $\sigma s$, since $(\sigma t)^{-\frac{1}{2}}G_{st}$ converges to the absolute value of a standard Brownian motion, and the increasing part of the Doob-Meyer decomposition for the square of a standard Brownian motion $\mathbf{B}_{s}$ is $s$.  My first task will be to show there is convergence in distribution
\begin{align}\label{FirstisFirst}
\sup_{0\leq s \leq 1}\big(\sigma s -\mathcal{A}_{s}^{(t)}\big)\Longrightarrow 0   
\end{align}
for large $t$. 

The value $\mathcal{A}_{s}^{(t)}$ has the form $\mathcal{A}_{s}^{(t)}=t^{-1}\langle M,M\rangle_{st}-Z_{s}^{(t)}$, where    $Z_{s}^{(t)}$ is a term related to the times when $t^{-\frac{1}{2} }G_{st}$ is likely to jump to zero: $Z_{s}^{(t)}:=  t^{-1}\mathcal{R}\int_{0}^{st}\,dr F(\mathcal{E}_{r},G_{r})   $ for
$$F(G_{r},K_{r}): =\mathbb{E}\Big[\big|M_{r}-M_{r^{-}}+G_{r} \big|^{2}\chi(M_{r}-M_{r^{-}}\leq -G_{r})\,\Big| G_{r},\,K_{r}, \,\mathcal{N}_{r}-\mathcal{N}_{r^{-}}\neq 0    \Big].$$  
The dependence on $K_{r}$ enters through the statistics for $M_{r}-M_{r^{-}}$ conditioned on the event that $r\in \R_{+}$ is a Poisson time (i.e. $\mathcal{N}_{r}-\mathcal{N}_{r^{-}}\neq 0$).  It follows that $t^{-1}\langle M,M\rangle_{st} -\mathcal{A}_{s}^{(t)} $ is increasing in $s$ and so     
  $$\sup_{0\leq s\leq 1}\big(\sigma s  -\mathcal{A}_{s}^{(t)}\big)= \big(\sigma-t^{-1}\langle M,M\rangle_{t}\big)+Z_{1}^{(t)}.           $$
By Part (1), the first term on the right converges in distribution to zero as $t\rightarrow \infty$.  

Now to address $Z_{1}^{(t)}$.  I always have the inequality $F(G_{r},K_{r})\leq \frac{\sigma}{\mathcal{R}}$, and for  the regime $|G_{r}|\gg 1$, then
\begin{align*}
 F(G_{r},K_{r})& =\mathbb{E}\Big[\big|M_{r}-M_{r^{-}}+G_{r} \big|^{2}\chi(M_{r}-M_{r^{-}}\leq -G_{r})\,\Big| G_{r},\,K_{r}, \,\mathcal{N}_{r}-\mathcal{N}_{r^{-}}\neq 0    \Big]\nonumber  \\ & \leq \mathbb{E}\Big[\big|M_{r}-M_{r^{-}} \big|^{2}\chi(|M_{r}-M_{r^{-}}|\geq G_{r})\,\Big| G_{r},\,K_{r}, \,\mathcal{N}_{r}-\mathcal{N}_{r^{-}}\neq 0    \Big]\nonumber  \\ & =
 \mathit{O}\Big(\frac{1}{1+|\beta(K_{r})     | }     \Big).   
\end{align*}
To see the order equality, I first break $M_{r}-M_{r^{-}}$ into the two parts  $(M-m)_{r}-(M-m)_{r^{-}}$ and $m_{r}-m_{r^{-}}$ and use the simple inequality
\begin{multline*}
\big|M_{r}-M_{r^{-}}+G_{r} \big|^{2} \chi(|M_{r}-M_{r^{-}}|\geq G_{r})\\ \leq  4\big| m_{r}-m_{r^{-}} \big|^{2}\chi(|m_{r}-m_{r^{-}}|\geq \frac{1}{2}G_{r})+4 \big|(M-m)_{r}-(M-m)_{r^{-}} \big|^{2}  
\end{multline*}
to separate the terms in the expectation.   By the reasoning in the proof of Part (3) of Prop.~\ref{Basics}, $|m_{r}-m_{r^{-}}|^{2}$ is smaller than the square of the L\'evy jump $|L_{r}-L_{r^{-}}|^{2}$.  Since the L\'evy jumps have exponential tails, the expectation of  $|m_{r}-m_{r^{-}}|^{2}\chi(|m_{r}-m_{r-}|\geq \frac{1}{2}G_{r})$ must decay exponentially in $G_{r}$ (and therefore be negligible).   The expectation of $\big|(M-m)_{r}-(M-m)_{r^{-}}\big|^{2}$ is smaller than $\frac{\mathbf{a}}{1+|\beta(K_{r^{-}})| }$ by the proof of Part (4) of Prop.~\ref{Basics}.  Note that $\mathcal{E}_{r}\geq G_{r}$ and $\mathcal{E}_{r}\approx |K_{r}|$ to conclude that when $G_{r}$ is large, then $|K_{r}|$ is also large.  Doubling the bound to cover the smaller  term $|m_{r}-m_{r^{-}}|^{2}$ above, I can write       
\begin{align}\label{LikeEm}
\mathbb{E}\Big[ t^{-1}\mathcal{R}\int_{0}^{t}dr F(G_{r},K_{r})\Big]\leq  \mathbb{E}\Big[ t^{-1}\int_{0}^{t}dr\frac{8\mathcal{R}\mathbf{a}  }{1+|\beta(K_{r})|}  \Big].
\end{align}
Define $T_{\epsilon,t}^{\prime} =t^{-1}\int_{0}^{1}dr\,\chi( t^{-\frac{1}{2}} G_{st} \leq \epsilon)$.  For some $c> 0$, the following inequality holds:  $$\limsup_{t\rightarrow \infty} \mathbb{P}[ T_{\epsilon,t}^{\prime}<\delta]\leq c\frac{\epsilon}{\delta}.$$   This can be either proved using that $t^{-\frac{1}{2}}G_{st}$ converges to  the absolute value of a Brownian motion and a little estimate involving Gaussians or by a small modification of the proof of Lem.~\ref{EnergyLemma} with $\mathcal{E}_{r}$ replaced by $G_{r}$.  Hence, for any $\delta>0$ I can pick an $\epsilon$ small enough to make the probabilities $\mathbb{P}[ T_{\epsilon,t}^{\prime}<\delta]$ uniformly small for large $t$.  By a similar argument as in the proof of Prop.~\ref{LittleEm}, the right side of~(\ref{LikeEm}) converges to zero, which finishes the proof of~(\ref{FirstisFirst}).      
 
With~(\ref{FirstisFirst}) as a tool, I go to the core of the proof.  Since $t^{-1}\mathcal{E}_{st}^{2}$ and $t^{-1}G_{st}^{2}$ are both submartingales, taking Doob-Meyer decompositions allows me to write their difference as a martingale $\mathcal{M}_{s}^{(t)}$ plus the difference of the predictable increasing parts $\sigma s$ and $ \mathcal{A}_{s}^{(t)}$.  I can also write the difference between $t^{-1}\mathcal{E}_{st}^{2}$ and $t^{-1}G_{st}^{2}$  in an expression involving $A_{st} - \sup_{0\leq r\leq st}-M_{r}$ as follows 
\begin{align}\label{Corinthians}
 2t^{-1}\mathcal{E}_{st}(\mathcal{A}_{st}  -\sup_{0\leq r\leq st}-M_{r}  )+t^{-1}(A_{st} - \sup_{0\leq r\leq st}-M_{r}  )^{2} &=  t^{-1}\mathcal{E}_{st}^{2}-t^{-1}G_{st}^{2} \nonumber  \\  &=\mathcal{M}_{s}^{(t)}+\big(\sigma s -\mathcal{A}_{s}^{(t)}\big). 
 \end{align}
Since the left side is made up of positive terms, I can immediately conclude that  
 $$\sup_{0\leq s\leq 1} t^{-1}(A_{st} - \sup_{0\leq r\leq st}-M_{r})^{2}\leq  \sup_{0\leq s\leq 1}\mathcal{M}_{s}^{(t)}+ \sup_{0\leq s\leq 1}\big(\sigma s -\mathcal{A}_{s}^{(t)}\big).  $$  
Hence, showing that the terms on the right converge in distribution to zero for large $t$ will be sufficient.  I have already shown $\sup_{0\leq s\leq 1}\big(\sigma s -\mathcal{A}_{s}^{(t)}\big)$ converges to zero, and now I turn to the martingale component.         

Since the left side of~(\ref{Corinthians}) is positive, it follows that $-\mathcal{M}_{s}^{(t)}\leq \sigma s -\mathcal{A}_{s}^{(t)}$ for all $s\in [0,1]$, and thus
 \begin{align}\label{AACP}
  \mathbb{E}\big[|\mathcal{M}_{s}^{(t)}|\chi(\mathcal{M}_{s}^{(t)}<0    )\big]\leq \mathbb{E}\big[ \sigma s -\mathcal{A}_{s}^{(t)} \big].
 \end{align}
Let $\tau$ be the stopping time when $\mathcal{M}_{s}^{(t)}$ first jumps  $\geq \epsilon $ or $\tau=1$ when values above $\epsilon$ are not reached by time $s=1$.  By the optional sampling theorem, 
 I have the equality $\mathbb{E}\big[\mathcal{M}_{\tau}^{(t)}   \big]=0 $.  Since $\mathcal{M}_{\tau}^{(t)} $ has mean zero, the equality below holds: 
\begin{align}
 \mathbb{P}\Big[\sup_{0\leq s\leq 1}\mathcal{M}_{s}^{(t)}\geq \epsilon \Big]& \leq  \epsilon^{-1} \mathbb{E}\big[\mathcal{M}_{\tau}^{(t)}  \chi(\mathcal{M}_{\tau}^{(t)}>\epsilon )\big]\leq  \epsilon^{-1} \mathbb{E}\big[ |\mathcal{M}_{\tau}^{(t)} | \chi(\mathcal{M}_{\tau}^{(t)}>0 )\big] \nonumber \\  &= \epsilon^{-1}\mathbb{E}\big[|\mathcal{M}_{\tau}^{(t)}|\chi(\mathcal{M}_{\tau}^{(t)}<0)  \big]\leq \epsilon^{-1}\mathbb{E}\big[|\mathcal{M}_{1}^{(t)}|\chi(\mathcal{M}_{1}^{(t)}<0)  \big] \nonumber  \\ &\leq \epsilon^{-1}\mathbb{E}\big[ \sigma  -\mathcal{A}_{1}^{(t)}\big].          
\end{align}    
The first inequality is Chebyshev's and the last is~(\ref{AACP}).  Since the random variables $\sigma  -\mathcal{A}_{1}^{(t)}$ are bounded by $\sigma$ and converge to zero in distribution as $t\rightarrow \infty$, the expectation on the right tends to zero.

\end{proof}


\subsection{The waiting-times between momentum reflections }\label{SubSecRefl}

This section focuses on bounding and characterizing the length of the time intervals between ``Bragg reflections" for the momentum process.  The stochastic process $K_{r}$ exhibits a singular rate of sign-changing in comparison to an ordinary random walk due to the lattice components of the momentum jumps.  The lattice jumps yielding reflections are ``macroscopic", since they transport the momentum to a value that would  require many steps through the L\'evy component to reach.  This unusual sign-flipping characteristic was not apparent for the study of $\mathcal{E}_{r}=E^{\frac{1}{2}}(K_{r})$ in Sect.~\ref{SecSubMart}, since $\mathcal{E}_{r}$ depends only on the absolute value of the momentum.   The results below are directed toward showing that the waiting-time $\tau_{m+1}-\tau_{m}$ between successive reflections $\tau_{m},\,\tau_{m+1}$ is approximately an exponential distribution with mean $\nu^{-1}|K_{\tau_{m}}|$ and is independent of the subsequent fluctuations in momentum.  This idealized behavior emerges for high beginning momentum $|K_{\tau_{m}}|\gg 1$.

It will be useful to work with an idealization of the momentum process that makes the same L\'evy jumps with rate $j(v)$  except that the additional lattice jump $m\in \Z$  takes only  the values $0$ or $-\mathbf{n}(k+v)$ with the probabilities defined below.  Define the functions $\Pi_{-}(k)= \frac{\alpha^{2}}{8\pi^{2}}(k^{2}+\frac{\alpha^{2}}{16\pi^{2}})^{-1}$, $\mathbf{R}_{-}(k)=\Pi_{-}\big(\frac{1}{2}\Theta(k)\mathbf{n}(k)\big)$, and $\mathbf{R}_{+}=1-\mathbf{R}_{-}$.  If the current state of the momentum is $k$ and the L\'evy increment is $v$, then the conditional probabilities for the lattice component are defined to be
\begin{align}\label{AlterStat}
m=\left\{  \begin{array}{cc}   0  & \quad \mathbf{R}_{+}(k+v),  \\ \text{ } & \text{ }\\ -\mathbf{n}(k+v) & \quad  \mathbf{R}_{-}(k+v).   \end{array} \right. 
\end{align} 
This process describes a particle whose momentum is reflected at the next Poisson time based on a weighted coin flip whose bias is determined by the sum of the current position $k$ and next L\'evy increment $v$.  Of course, the simplified statistics is a useful approximation only for time intervals when the momentum is high.  Note that the idealized  process closely resembles the process proposed in Sect.~\ref{SecButchered}.  I will refer to probabilities and expectations in the law above by  $\widetilde{\mathbb{P}}$ and $\widetilde{\mathbb{E}}$, respectively.  For a technical reason explained below, I define the particle to make an additional jump $-\mathbf{n}(K_{0})$ with probability $\mathbf{r}_{-}(K_{0})$ at an infinitesimal time after zero .

Consider a particle beginning with a momentum $k$ with $|k|\gg 1$ and making jumps according to the simplified law above, and let $\tau$ be the first reflection time, i.e., the non-trivial lattice jump.  I will sketch why the random variable $\frac{\tau}{|k|}$ is approximately an exponential with mean $\nu^{-1}$.  It is equivalent to  consider a discrete-time random walk $X_{n}$, $n\in \mathbb{N}$ beginning from $k$ with jump increment density $\frac{j(v)}{\mathcal{R}}$ and a  hitting time $\hat{N}$ marking the first tails outcome for coin tosses with tails probabilities $\mathbf{R}_{-}(X_{n})$.   As long as $X_{n}$ does not stray too far from $k$, the probabilities  $\mathbf{R}_{-}(X_{n})$ are approximately
\begin{align}\label{Shimmy}
\mathbf{R}_{-}(X_{n})\approx  \Pi_{-}\big(|k|\Theta(X_{n})\big).
\end{align}
The problem is thus further reduced to the contracted  Markov chain $\widehat{\theta}_{n}:=\Theta(X_{n})$ living on the torus $\mathbb{T}=[-\frac{1}{4},\frac{1}{4})$ with the transition operator $\mathbf{T}:L^{1}(\mathbb{T})$ (defined above Lem.~\ref{LittleEm}).   The probabilities $\Pi_{-}\big(|k|\widehat{\theta}_{n}\big)$ decay rapidly for $|\widehat{\theta}_{n}|\gg |k|^{-1} $, so  $\widehat{\theta}_{n}$ will typically require many steps to  score a  reflection. Since  the chain $\widehat{\theta}_{n}$ is exponentially ergodic to the uniform distribution over $\mathbb{T}$, the probability of a reflection on a given time step is approximately $2\int_{\mathbb{T}}d\theta\Pi_{-}(|k|\theta)\approx \frac{\alpha}{|k|} $.  The random variable $\frac{\hat{N}}{|k|}$ should thus be close in law to a mean-$\alpha^{-1}$ exponential for $|k|\gg 1$.  This conclusion is consistent with the ansatz~(\ref{Shimmy}), since the fluctuations in the random walk $X_{n}$ over $[0,\hat{N}]$ will be on the order $\mathit{O}(|k|^{\frac{1}{2}})$, which is small compared to $|k|$.  Since the Poisson times have rate $\mathcal{R}$ and $\nu:=\mathcal{R}\alpha$, the distribution for  $\frac{\tau}{|k|}$ will be approximately a mean-$\nu^{-1}$ exponential.

  The connection between the momentum process and the simplified law~(\ref{AlterStat}) is not obvious, since the approximations for the probabilities $|\kappa_{v}(k,n)|^{2}$ in Prop.~\ref{BadTerms} reduce  for $|k|\gg 1$ to expressions $\mathbf{r}_{\epsilon_{1}}(k) \mathbf{r}_{\epsilon_{2}}(k+v) $, $\epsilon_{1},\epsilon_{2}\in \{\pm\}$ when $n\in I(k,v)$ and zero for $n\notin I(k,v)$.  For a hint of the link, note the identity
\begin{align}\label{Huffington}
\mathbf{R}_{-}(k)=2\mathbf{r}_{-}(k) \mathbf{r}_{+}(k).
\end{align} 
  An intermediary law between the original momentum process and~(\ref{AlterStat}) is given by the process that accompanies a L\'evy jump $v$ from the momentum $k$ with four possible lattice jumps with values and probabilities given by:
\begin{align}\label{AlterStat2}
m=\left\{  \begin{array}{cc}   -\mathbf{n}(k) &  \mathbf{r}_{-}(k) \mathbf{r}_{+}(k+v),    \\ -\mathbf{n}(k+v)  & \mathbf{r}_{+}(k)\mathbf{r}_{-}(k+v), \\ \mathbf{n}(k)-\mathbf{n}(k+v) & \mathbf{r}_{-}(k) \mathbf{r}_{-}(k+v), \\  0 & \mathbf{r}_{+}(k)\mathbf{r}_{+}(k+v).    \end{array} \right. 
\end{align}
Naturally, when $\mathbf{n}(k)=\mathbf{n}(k+v)$, then the probabilities are summed.  I refer to these statistics by $\widetilde{\mathbb{P}}^{ \prime},\widetilde{\mathbb{E}}^{ \prime }$.  There is an underlying equivalence between the laws~(\ref{AlterStat}) and~(\ref{AlterStat2}) such that they may be embedded on a single probability space in which the trajectories match for specific realizations except for time periods around reflection times where the values become staggered.  To demonstrate this, I will consider only skeletal chains, since the waiting-times between jumps are independent of the states. As before, let $v_{m}$, $m\in \mathbb{N}$ be a sequence of independent random variables with density $\frac{j(v)}{\mathcal{R}}$. Let the (non stationary) Markov chain $\tilde{X}_{n}$  have increments whose law depends on the parity of $n\geq 1$ as
 \begin{align*}
\tilde{X}_{n}-\tilde{X}_{n-1}=     \left\{  \begin{array}{ccc}  0  & \quad  \mathbf{r}_{+}\big(\tilde{X}_{n-1} \big) &\quad $n$ \text{ odd}, \\ -\mathbf{n}\big(  \tilde{X}_{n-1} \big)  & \quad \mathbf{r}_{-}\big(\tilde{X}_{n-1} \big)  & \quad   $n$ \text{ odd} ,    \\ 
v_{\frac{n}{2}}  & \quad \mathbf{r}_{+}\big( \tilde{X}_{n-1} +v_{\frac{n}{2}}\big)    & \quad  $n$ \text{ even}, \\ v_{\frac{n}{2}}- \mathbf{n}\big(  \tilde{X}_{n-1} +v_{\frac{n}{2}}\big)  & \quad \mathbf{r}_{-}\big( \tilde{X}_{n-1} +v_{\frac{n}{2}}\big)    & \quad  $n$ \text{ even},    \end{array} \right.
\end{align*}
where the second column on the right lists the probabilities for the lattice component.  Thus, the jumps $v_{m}$ occur at even time steps and coins are flipped at each time step to determine whether there is an extra lattice jump.  The chains $\tilde{X}_{2m+1}$ and $\tilde{X}_{2m}$ for $m\geq 0$ are each Markovian and have the statistics of~(\ref{AlterStat}) and (\ref{AlterStat2}), respectively.   This result depends on the symmetries $j(v)=j(-v)$ and $\mathbf{r}_{-}(k)=\mathbf{r}_{-}(k-\mathbf{n}(k) )$.  The two idealized laws are interchangeable for the purposes of this section.

Let $S:\R\rightarrow \{\pm 1\}$ be the sign function. I will use the term ``sign-flip" in a technical sense to refer to a Poisson time $t_{n}$ such that for some odd $m\in \mathbb{N}$ 
\begin{align}\label{RealDeal}
 S(K_{t_{n-m-1} })=S(K_{t_{n-m} })=-S(K_{t_{n}})=-S(K_{t_{n+1}})    
 \end{align}
and $ S(K_{t_{n-r} })=-S(K_{t_{n-r+1} }) $   for $ r\in [1,m)$.
In other words, there is an odd-numbered  sequence of  sign-changes ending at the time $t_{n}$.  Only sign-flips in which there is a single sign-change are likely to occur in practice.  The complication in the definition results from the fact that there is a comparatively high probability of a sign-change at the Poisson time after a sign-change, and I would like to avoid counting these ``fake" sign-flips.   For example, suppose the beginning momentum is $k$ with $|k|\gg 1$ and the next two L\'evy jumps are $v_{1}$ and $v_{2}$.  By the approximate statistics~(\ref{AlterStat2}), the probability assigned  for the lattice jump $-\mathbf{n}(k+v_{1})$ at the first Poisson time is $\mathbf{r}_{+}(k)\mathbf{r}_{-}(k+v_{1})$, and the resulting landing point is $k+v_{1}-\mathbf{n}(k+v_{1})\approx -k-v_{1}$.  The value $\mathbf{r}_{-}(k+v_{1})$ can not be too small if this jump is likely to occur.  The probability that the particle makes another lattice jump $\mathbf{n}(k+v_{1})$ at the second Poisson time to land at  $k+v_{1}-v_{2}$ is approximately
$$\mathbf{r}_{-}\big(k+v_{1}-\mathbf{n}(k+v_{1})\big)\mathbf{r}_{+}\big(k+v_{1}+v_{2}-\mathbf{n}(k+v_{1})\big)  \geq  \frac{1}{2}\mathbf{r}_{-}\big(k+v_{1}\big),  $$
where the inequality is by the lower bound $ \mathbf{r}_{+}\geq \frac{1}{2}$ and the symmetry $\mathbf{r}_{-}(k-\mathbf{n}(k))=\mathbf{r}_{-}(k)$.  Thus, the second sign-change has a nonnegligible probability $\geq \frac{1}{2}\mathbf{r}_{-}(k+v_{1})$ regardless of the value $v_{2}$.

  A sign-flip time is not a hitting time, since it requires the information from the following Poisson verifying that the sign of the momentum does not change again at next the momentum jump.  However, this will not give much information.  Consider $K_{0}=k$ with $|k|\gg1$, and let $q^{\prime},q\in L^{1}(\R)$ be the distribution for the first Poisson time with and without conditioning to not change sign, respectively.  Using the estimates (1) and (3) from Lem.~\ref{BadTerms}, it can be proven that the probability of a sign-change can be at most $\approx \frac{1}{2}$.  Hence, I will have the  bound
 \begin{align}\label{WithOrWithoutYou}
 q^{\prime}(p)\leq 2 q(p),  \hspace{2cm} p\in \R.
 \end{align}
In particular, there are no density peaks near lattice points $\frac{1}{2}\Z$, where approximations tend to be weaker. 

The following proposition is the main result of Sect.~\ref{SubSecRefl} and the proof is in Sect.~\ref{SecFlipidy}.  The inequalities in Prop.~\ref{TimeFlip} have the following purposes: Parts (1) and (2) are characterizations for how close the random variables $\frac{\tau_{m+1}-\tau_{m}   }{|K_{\tau_{m}}|}$ are to mean-$\nu^{-1}$ exponentials for $|K_{\tau_{m}}|\gg 1$.   Part (3) bounds the amount of time that the momentum process spends performing ``fake sign-flips" before the actual sign-flip occurs in the technical sense~(\ref{RealDeal}).  Part (4) states that the probability of a macroscopic fluctuation in the absolute value  of the momentum before a sign-flip is small.  Part (5) shows that the fluctuation in the absolute value of the momentum over the time interval between sign-flips is nearly uncorrelated with the length on the time interval.  Notice that in Prop.~\ref{TimeFlip} the momentum process is conditioned not make a sign-flip at the first Poisson time.  This is exactly the situation that I have for the process $K_{r}$, $r\in [\tau_{m},\infty)$ when  given the information $\widetilde{\mathcal{F}}_{\tau_{m}^{-}}$ and $\tau_{m}$ is a sign-flip.  This does not explicitly cover the case that $\tau_{m}$ is not a sign-flip  as when $\tau_{m}=\varsigma_{j} $ for some $j\in \mathbb{N}$ or $|K_{\tau_{m}}|\notin [\frac{1}{2}|K_{\tau_{m-1}}|, \frac{3}{2} |K_{\tau_{m-1}}|]$. These events do not occur frequently enough to generate nonnegligible contributions to the quantities that I study (in fact, Part (4) of Prop.~\ref{TimeFlip} bounds the probability of the event $|K_{\tau_{m}}|\notin [\frac{1}{2}|K_{\tau_{m-1}}|, \frac{3}{2} |K_{\tau_{m-1}}|]$).   Without conditioning the  momentum to remain the same sign after the first jump,  the estimates in Parts (1) and (2) may be distorted if $K_{0}$ is near a lattice point $\frac{1}{2}\Z$, since  $\tau$ will have a comparatively high probability of occurring at the first Poisson time. However,  Prop.~\ref{TimeFlip} still offers bounds for the moments of $\frac{\tau}{|k|}$ in that case.   There are other natural conditions on the beginning momentum that will yield the results of Prop.~\ref{TimeFlip}.  The assumptions  that $K_{r}$ begins with the value $K_{0}=k$ and is conditioned to avoid a sign change at the first Poisson time can be replaced by the assumptions that $K_{r}$ begins in a distribution given by a density $q\in L^{1}(\R)$  concentrated around the value $k$ and that the contracted density $\langle q \rangle \in L^{1}(\mathbb{T}) $ is bounded.       

\begin{proposition} \label{TimeFlip}
Let $\zeta>0$, $K_{0}=k$ for $|k|\gg 1$, and $K_{r}$ be conditioned not to make a sign change at the first Poisson time (i.e. $ S(K_{0})=S(K_{t_{1}})$).  Define $\tau$ to be  the first time that either $K_{r}$ has a sign-flip or  $|K_{r}|$ jumps out of the set $[\frac{1}{2}|k|,\,\frac{3}{2}|k|]$ depending on what occurs first.  For fixed $\zeta,\, m>0$, there exist $C$ and $\gamma_{0}$ such that the following inequalities hold for all $k$ and $0< \gamma\leq \gamma_{0}$:

\begin{enumerate}
\item $\Big|\mathbb{E}\big[ \big(\frac{\tau}{|k|}\big)^{m}  \big]-m!\nu^{-m}   \Big|\leq \frac{C}{ |k|^{1-\zeta}},  $

\item 
$\Big|\mathbb{E}\big[ e^{\gamma\frac{\tau}{|k|}} \big]-\frac{\nu}{\nu-\gamma}   \Big|\leq \frac{C}{ |k|^{1-\zeta}},  $

\item 
$ \mathbb{E}\Big[\int_{0}^{\tau}dr\chi\big(S(K_{r})\neq S(K_{0})   \big)   \Big] \leq C, $
\item 
$\mathbb{P}\Big[ |K_{\tau}|\notin \big[\frac{1}{2}|k|,\,\frac{3}{2}|k|\big]     \Big]\leq \frac{C\log(|k|)}{|k|^{2}},     $

\item $ \mathbb{E}\Big[ \big( |K_{\tau}|-|K_{0}|   \big)\tau  \Big] \leq  C|k|^{1+\zeta}. $

\end{enumerate}

\end{proposition}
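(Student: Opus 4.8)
The plan is to bound $\mathbb{E}\big[(|K_{\tau}|-|K_{0}|)\tau\big]$ by decomposing $|K_{\tau}|-|K_{0}|$ into a martingale fluctuation part and a small drift correction, then using independence-type estimates for the reflection waiting time $\tau$. First I would note that on the event $|K_{\tau}|\in[\tfrac12|k|,\tfrac32|k|]$ the increment $|K_{\tau}|-|K_{0}|$ is $\mathit{O}(|k|)$, so the naive bound $\mathbb{E}[(|K_{\tau}|-|K_{0}|)\tau]\le \mathit{O}(|k|)\,\mathbb{E}[\tau]$ together with Part (1) ($\mathbb{E}[\tau]=\mathit{O}(|k|)$) would only give $\mathit{O}(|k|^{2})$, which is too weak. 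The point is that the \emph{signed} quantity $|K_{\tau}|-|K_{0}|$ is typically $\mathit{O}(|k|^{1/2})$ and, more importantly, is nearly uncorrelated with $\tau$. Concretely, I would write $|K_{r}|-|K_{0}|=\mathcal{E}_{r}-\mathcal{E}_{0}+\mathit{O}(|k|^{-1})$ using $\big|E^{1/2}(k)-|k|\big|=\mathit{O}(|k|^{-1})$ for large momenta, and then use the Doob-Meyer decomposition $\mathcal{E}_{r}=\mathcal{E}_{0}+M_{r}+A_{r}$ from Prop.~\ref{Basics}. Since $A_{r}$ satisfies $\frac{d}{dr}A_{r}\le \sigma/(2\mathcal{E}_{r})=\mathit{O}(|k|^{-1})$ on the event that the momentum stays in $[\tfrac12|k|,\tfrac32|k|]$ (this follows from $\sigma=\frac{d}{dr}\langle M,M\rangle_{r}+2\mathcal{E}_{r}\frac{d}{dr}A_{r}$ and positivity), the contribution $\mathbb{E}[A_{\tau}\tau]$ is $\mathit{O}(|k|^{-1})\cdot\mathbb{E}[\tau^{2}]=\mathit{O}(|k|^{-1})\cdot\mathit{O}(|k|^{2})=\mathit{O}(|k|)$, which is well within budget. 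So the main term to control is $\mathbb{E}[M_{\tau}\tau]$.

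For $\mathbb{E}[M_{\tau}\tau]$ the strategy is to exploit that $M_{r}$ is a martingale and $\tau$ is (essentially) a stopping time whose conditional law, given $\widetilde{\mathcal{F}}_{\tau^{-}}$-type information, is approximately a mean-$\nu^{-1}|k|$ exponential independent of the post-$\tau$ increments — which is exactly the content of Parts (1)--(2) and the heuristics preceding the proposition. I would integrate by parts: $M_{\tau}\tau=\int_{0}^{\tau}M_{r}\,dr+\int_{0}^{\tau}(\tau-r)\,dM_{r}$. Taking expectations, the second integral is a martingale-type term; by the optional sampling theorem applied carefully (using the uniform integrability established via the quadratic-variation bound $\langle M,M\rangle_{r}\le\sigma r$ and Part (1)'s exponential moment bound on $\tau/|k|$), one gets $\mathbb{E}\big[\int_{0}^{\tau}(\tau-r)\,dM_{r}\big]=0$ up to a boundary correction controlled by the conditioning at the first Poisson time (handled as in~(\ref{WithOrWithoutYou})). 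This leaves $\mathbb{E}[M_{\tau}\tau]=\mathbb{E}\big[\int_{0}^{\tau}M_{r}\,dr\big]$. For the latter, I would bound $\mathbb{E}\big[\int_{0}^{\tau}|M_{r}|\,dr\big]\le\mathbb{E}\big[\tau\sup_{r\le\tau}|M_{r}|\big]\le\mathbb{E}[\tau^{2}]^{1/2}\,\mathbb{E}[\sup_{r\le\tau}|M_{r}|^{2}]^{1/2}$ by Cauchy--Schwarz, then use Doob's inequality $\mathbb{E}[\sup_{r\le\tau}|M_{r}|^{2}]\le 4\,\mathbb{E}[\langle M,M\rangle_{\tau}]\le 4\sigma\,\mathbb{E}[\tau]=\mathit{O}(|k|)$, and $\mathbb{E}[\tau^{2}]=\mathit{O}(|k|^{2})$ from Part (1), giving $\mathit{O}(|k|^{1/2})\cdot\mathit{O}(|k|)=\mathit{O}(|k|^{3/2})$. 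This is slightly worse than the claimed $\mathit{O}(|k|^{1+\zeta})$, so I would need a sharper argument: instead of Cauchy--Schwarz, I would use the approximate independence of $\tau$ from the martingale fluctuation over $[0,\tau]$ — replacing $\tau$ by its conditional mean $\approx\nu^{-1}|k|$ costs only $\mathit{O}(|k|^{1-\zeta})$ multiplicatively by Part (1), and $\mathbb{E}[M_{\tau}]=0$ after the boundary correction — to collapse $\mathbb{E}[M_{\tau}\tau]\approx\nu^{-1}|k|\,\mathbb{E}[M_{\tau}]+\text{(small)}$. The error from this decoupling, using that $\sup_{r\le\tau}|M_{r}|=\mathit{O}(|k|^{1/2})$ and the conditional-law estimate has relative error $\mathit{O}(|k|^{-1+\zeta})$, is $\mathit{O}(|k|)\cdot\mathit{O}(|k|^{1/2})\cdot\mathit{O}(|k|^{-1+\zeta})=\mathit{O}(|k|^{1/2+\zeta})$, comfortably below $\mathit{O}(|k|^{1+\zeta})$.

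The main obstacle I anticipate is making the \emph{decoupling} step rigorous: $\tau$ is not independent of $(K_{r})_{r\le\tau}$ (it is determined by the trajectory), and $\tau$ is not even a genuine stopping time because sign-flips require one extra Poisson time to confirm (cf.\ the discussion of $\widetilde{\mathcal{F}}_{r}$). The fix is to condition on the number of Poisson steps and the L\'evy increments up to the reflection, reducing to the discrete skeleton chain $X_{n}$ and its torus contraction $\widehat{\theta}_{n}=\Theta(X_{n})$; on the torus the chain is exponentially ergodic by~(\ref{Crumpet}), and the reflection time $\widehat{N}$ depends on the full momentum trajectory only through the low-order bits $\widehat{\theta}_{n}$, which are asymptotically decoupled from the high-order random-walk part that drives $M_{r}$. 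I would therefore carry out the proof on the idealized process with statistics~(\ref{AlterStat2}) (equivalently~(\ref{AlterStat})), where $\tau$ literally equals $|k|$ times a geometric-type waiting time built from the $\widehat{\theta}_{n}$-chain, transfer the estimate back using the comparison between $|\kappa_{v}(k,n)|^{2}$ and the $\mathbf{r}_{\pm}$ weights from Lem.~\ref{BadTerms} (with errors summable against $\int dv\,j(v)$ by List~\ref{Assumptions}), and absorb the transfer error — which is $\mathit{O}(|k|^{-1})$ per step over $\mathit{O}(|k|)$ steps — into the $\mathit{O}(|k|^{1+\zeta})$ slack. The $\zeta$ in the exponent is exactly the cushion needed to absorb the logarithmic factors from the ergodic mixing estimate and the $\mathit{O}(\log|k|)$-type bounds appearing (as in Lem.~\ref{LittleEm}) when integrating the reflection probability $\Pi_{-}(|k|\theta)$ against the near-uniform law of $\widehat{\theta}_{n}$.
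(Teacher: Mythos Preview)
Your final paragraph is essentially the paper's route: the paper's proof of Part~(5) is a one-line referral to ``an extension of the analysis in Part~(2) of Lemma~\ref{Fraiser}'' (and hence of Lemma~\ref{SubFraiser}), i.e.\ pass to the idealized law~(\ref{AlterStat2}), work with the skeleton walk $X_{n}$ and its torus contraction $\widehat\theta_{n}=\Theta(X_{n})$, use the symmetry $j(v)=j(-v)$ together with the exponential ergodicity~(\ref{Cabbage}) to kill the correlation between the increment and the remaining time, and transfer back via Lem.~\ref{LemTotVar}. Concretely one writes $(|K_{\tau}|-|K_{0}|)\tau$ by partial summation as a sum of increments $|K_{t_{M}}|-|K_{t_{M-1}}|$ against remaining times $\tau-t_{M}$ and controls each piece with the small-increment bound of Lem.~\ref{Fraiser}~(2) and the moment/tail bounds of Part~(1); no Doob--Meyer decomposition enters.

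Your martingale detour is a natural instinct but, as you already recognize, does not close on its own. Two remarks. First, the integration-by-parts identity you wrote is incorrect: with $M_{0}=0$ one has $\int_{0}^{\tau}(\tau-r)\,dM_{r}=\tau M_{\tau}-\int_{0}^{\tau}r\,dM_{r}=\int_{0}^{\tau}M_{r-}\,dr$, so your two right-hand terms are the \emph{same} quantity and the displayed identity would assert $M_{\tau}\tau=2\int_{0}^{\tau}M_{r}\,dr$. The correct decomposition is $M_{\tau}\tau=\int_{0}^{\tau}M_{r-}\,dr+\int_{0}^{\tau}r\,dM_{r}$, and it is the \emph{second} integral (whose integrand $r$ is predictable) that vanishes in expectation by optional stopping; the integrand $(\tau-r)$ is not predictable, so your ``optional sampling'' step as written is not valid. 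Second, even with the correct identity you land exactly where you say: Cauchy--Schwarz/Doob gives only $O(|k|^{3/2})$, and the ``replace $\tau$ by its conditional mean'' decoupling cannot be justified at the level of $M_{\tau}$ without the skeleton-chain/torus analysis---which is precisely the paper's direct route. So the $\mathcal{E}_{r}=M_{r}+A_{r}$ layer adds nothing here and can be dropped in favor of the argument in your last paragraph.
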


Recall by the remarks preceding  Lem.~\ref{LittleEm} that the chain $\widehat{\theta}_{n}=\Theta(K_{t_{n}})$ on the torus $\mathbb{T}=[-\frac{1}{4},\frac{1}{4})$ is Markovian with bounded transition operator $\sup_{\theta_{1},\theta_{2}\in\mathbb{T}} \mathbf{T}(\theta_{2},\theta_{1})<\frac{\mu}{\mathcal{R}}$.  Since the lattice jumps do not appear in $\mathbb{T}$, the contracted process for the idealized momentum process is exactly the same.  The torus chain has  detailed balance, since the kernel is symmetric $\mathbf{T}(\theta_{2},\theta_{1})=\mathbf{T}(\theta_{1},\theta_{2})$ as a consequence of the symmetry $j(v)=j(-v)$.  The process is thus time-reversible, and its stationary state is the uniform distribution.  Due to the boundedness of the kernel, the process is exponentially ergodic, and even in the supremum norm I have the existence of an $\varepsilon>0$ such that    
\begin{align}\label{Cabbage}
\sup_{\theta \in [-\frac{1}{4},\frac{1}{4})}\big|\big(\mathbf{T}^{n}h\big)(\theta)-2  \big|\leq e^{-n\varepsilon }\sup_{\theta \in [-\frac{1}{4},\frac{1}{4})}\big|h(\theta)-2  \big|.  
\end{align}

The laws $\mathbb{P}$ and $\widetilde{\mathbb{P}}^{ \prime }$ determine two probability measures on the sequence of momenta $K_{t_{j}}$.   I denote the total variation distance between the measures on sequences of length $M$   by $\|\cdot\|_{\textup{Var},M}$.   Let $A_{M}$ be the event $ |K_{t_{j}}| \in [\frac{1}{2}|k|,\frac{3}{2}|k|]$ for $0\leq j\leq M $, and $\chi(A_{M})\mathbb{P}$ be the positive measure that agrees with $\mathbb{P}$ for events in $A_{M}$ and assigns events in $A^{c}_{M}$ zero weight. 
 The key inputs for the proof of Lem.~\ref{LemTotVar} are the estimates (1) and (3) of Lem.~\ref{BadTerms}.

\begin{lemma}\label{LemTotVar}
Let $K_{0}=k$. There is a $c>0$ such that for all $k$,
$$ \big\|\chi(A_{M})\mathbb{P}-\chi(A_{M})\widetilde{\mathbb{P}}^{ \prime}\big\|_{\textup{Var},M } \leq  c M\frac{ \log(|k|)  }{ |k|^{2} }.    $$

\end{lemma}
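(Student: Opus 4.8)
The plan is to bound the total variation distance step by step along the sequence of Poisson times, comparing the true momentum transition kernel to the idealized kernel $\widetilde{\mathbb{P}}^{\prime}$ of~(\ref{AlterStat2}). The basic strategy is a coupling/telescoping argument: write $\mathbb{P}-\widetilde{\mathbb{P}}^{\prime}$ on length-$M$ sequences as a sum of $M$ terms, where the $j$th term accounts for the discrepancy between the two one-step transition kernels at step $j$, conditioned on agreement up to step $j-1$. Restricting everything to the event $A_M$ that $|K_{t_j}|\in[\tfrac12|k|,\tfrac32|k|]$ for all $j\le M$ means every momentum encountered lies in the high-momentum regime $|K_{t_j}|\ge \tfrac12|k|\gg 1$, so the estimates of Lem.~\ref{BadTerms} are available with the uniform constant there.

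First I would fix a step and estimate the single-step discrepancy. Given the current momentum $K_{t_{j-1}}=k'$ with $|k'|\approx|k|$ and a L\'evy increment $v$ (drawn from the common density $\tfrac{j(v)}{\mathcal{R}}$ under both laws), the true conditional probabilities for the lattice jump are the $|\kappa_v(k',n)|^2$, while the idealized law assigns weights supported on $I(k',v)$ with values $\mathbf{r}_{\epsilon_1}(k')\mathbf{r}_{\epsilon_2}(k'+v)$. By Part (1) of Lem.~\ref{BadTerms}, $\sum_{n\notin I(k',v)}|\kappa_v(k',n)|^2\le c|k'|^{-2}$, and by Part (3) of Lem.~\ref{BadTerms}, for each $n\in I(k',v)$ the value $|\kappa_v(k',n)|^2$ differs from the corresponding product $\mathbf{r}_{\epsilon_1}(k')\mathbf{r}_{\epsilon_2}(k'+v)$ by at most $\tfrac{c}{|k'|(1+|\beta(k')|)}$ or $\tfrac{c}{|k'|(1+|\beta(k'+v)|)}$ (and by $\tfrac{c}{|k'|^2}$ in the $\mathbf{n}(k')=\mathbf{n}(k'+v)$ case). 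Summing the finitely many terms in $I(k',v)$, the total-variation distance between the two lattice-jump distributions, at fixed $k',v$, is at most a constant times
\begin{align*}
\frac{1}{|k'|}\Big(\frac{1}{1+|\beta(k')|}+\frac{1}{1+|\beta(k'+v)|}\Big)+\frac{1}{|k'|^2}.
\end{align*}
Now I integrate this against $\tfrac{j(v)}{\mathcal{R}}\,dv$. The first term is handled exactly as in the proof of Lem.~\ref{LittleEm}: writing $v=\theta+\tfrac{i}{2}$ and using $\sup_\theta\sum_i j(\theta+\tfrac i2)<\mu$ from (2) of List~\ref{Assumptions}, plus the fact that the $\frac12\Z$-shifts do not change $k'+v$ modulo $\tfrac12$, the integral $\int dv\,\tfrac{j(v)}{\mathcal R}\tfrac{1}{1+|\beta(k'+v)|}$ is bounded by a constant times $|k'|^{-1}\log|k'|$; the $\tfrac{1}{1+|\beta(k')|}$ term integrates trivially to $\le c|k'|^{-1}$ only when $|\beta(k')|\gtrsim\log|k'|$, but in general it is $\le 1$ — however this term is actually the $\mathit{O}(|k'|^{-1})$ size only after noting that a reflection near the starting lattice cell is rare; to stay safe one uses the $\log$ bound $\int d\theta\,\tfrac{\mathbf a}{1+|\lfloor|k'|\rfloor\theta|}\le c|k'|^{-1}\log|k'|$ as in~(\ref{Curtsey}). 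Thus the per-step discrepancy is $\le c\,\tfrac{\log|k|}{|k|^2}$, uniformly over $k'$ with $|k'|\ge\tfrac12|k|$ and over $v$, once integrated.

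Next I would assemble the telescoping estimate. By the standard bound for total variation of path measures of (possibly inhomogeneous) Markov chains,
\begin{align*}
\big\|\chi(A_M)\mathbb{P}-\chi(A_M)\widetilde{\mathbb{P}}^{\prime}\big\|_{\textup{Var},M}
\le \sum_{j=1}^{M}\mathbb{E}_{\widetilde{\mathbb{P}}^{\prime}}\!\Big[\chi(A_{j-1})\,d_{\textup{TV}}\big(p_j,\widetilde{p}_j\big)\Big],
\end{align*}
where $p_j,\widetilde p_j$ are the one-step laws of $K_{t_j}$ given $\widetilde{\mathcal F}_{t_{j-1}}$ under the two measures, and on $A_{j-1}$ we have $|K_{t_{j-1}}|\ge\tfrac12|k|$. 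Inserting the per-step bound $\le c\tfrac{\log|k|}{|k|^2}$ and summing gives $\le cM\tfrac{\log|k|}{|k|^2}$, which is the claim. The main obstacle is the bookkeeping for the finitely-many lattice-jump values $n\in I(k',v)$ and, within that, controlling the integral $\int dv\,\tfrac{j(v)}{\mathcal R}$ of the error terms near lattice cells — precisely the point where one must use (2) of List~\ref{Assumptions} (boundedness of the torus-contracted rate) together with Part (4) of Lem.~\ref{BadTerms} to turn the pointwise $\tfrac{c}{|k'|(1+|\beta|)}$ bounds into a $\tfrac{\log|k|}{|k|^2}$ contribution after the $v$-integration; large $v$ (with $|v|>\tfrac12|k|$) contribute a negligible $\mathit{O}(|k|^{-2})$ tail by the moment assumption, consistent with $A_{j-1}$ forcing $|k'+v|\approx|k|$.
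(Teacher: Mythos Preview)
Your overall strategy---telescoping over the $M$ steps and bounding each one-step discrepancy via Parts (1) and (3) of Lem.~\ref{BadTerms}---is exactly the paper's approach. But there is a genuine gap in how you handle the term $\tfrac{1}{|k'|(1+|\beta(k')|)}$ coming from Lem.~\ref{BadTerms}(3) for the jump $n=-\mathbf{n}(k')$.

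You assert that ``the per-step discrepancy is $\le c\,\tfrac{\log|k|}{|k|^{2}}$, uniformly over $k'$ with $|k'|\ge\tfrac12|k|$.'' This is false: if $k'=K_{t_{j-1}}$ happens to land close to a lattice point in $\tfrac12\Z$, then $\beta(k')\approx 0$ and the error from Lem.~\ref{BadTerms}(3) is only $\mathit{O}(|k'|^{-1})$, not $\mathit{O}(|k|^{-2}\log|k|)$. The $v$-integration does nothing for this term, since it does not depend on $v$. Your telescoping inequality does carry an expectation $\mathbb{E}_{\widetilde{\mathbb{P}}'}[\chi(A_{j-1})\,\cdot\,]$ over the law of $k'$, and it is precisely this averaging that saves the estimate---but only once you know that the torus-contracted density $\langle \hat q_{j-1}'\rangle\in L^{1}(\mathbb{T})$ of $K_{t_{j-1}}$ modulo $\tfrac12$ is bounded in $L^{\infty}$ uniformly in $j$. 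The paper states and uses this explicitly (the inequality~(\ref{Karate}), which follows from~(\ref{Crumpet}) and List~\ref{Assumptions}(2)): with $\|\langle \hat q_{j-1}'\rangle\|_{\infty}\le \mathcal{R}^{-1}\|\langle j\rangle\|_{\infty}$ in hand, the $k'$-average of $\tfrac{1}{|k'|(1+|\beta(k')|)}$ becomes a torus integral $\lesssim |k|^{-1}\int_{-1/4}^{1/4}\tfrac{d\theta}{1+|k\theta|/2}\lesssim \tfrac{\log|k|}{|k|^{2}}$, and the argument closes. Your pointer to~(\ref{Curtsey}) is in the right spirit, but that estimate itself rests on the same torus-density bound, and you need to invoke it directly here rather than claim a pointwise bound that does not hold.
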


\begin{proof}
Let  $\widetilde{\mathbb{P}}^{(n)}$ refer to the process that obeys the law~(\ref{AlterStat2}) for the first $n$ jumps and the original law for the remaining jumps.   The variation norm for the difference $\chi(A_{M})\mathbb{P}-\chi(A_{M})\widetilde{\mathbb{P}}^{ \prime}$  is bounded by
\begin{align*}
 \big\|\chi(A_{M})\mathbb{P}-\chi(A_{M})\widetilde{\mathbb{P}}^{ \prime}\big\|_{\textup{Var},M} \leq & \sum_{n=1}^{M } \big\|\chi(A_{M})\widetilde{\mathbb{P}}^{(n-1)}-\chi(A_{M})\widetilde{\mathbb{P}}^{(n)}\big\|_{\textup{Var},M}\\ \leq & M \sup_{n\in \mathbb{N}} \big\|\chi(A_{M})\widetilde{\mathbb{P}}^{(n-1)}-\chi(A_{M})\widetilde{\mathbb{P}}^{(n)}\big\|_{\textup{Var},M } ,
\end{align*}
where I have used the triangle inequality with a telescoping sum determined by inserting the measures $\chi(A)\widetilde{\mathbb{P}}^{(n)}$.  
Let  $dL_{n}$ and $d\ell_{n}$  denote the L\'evy and lattice jumps at the $n$th Poisson time.  The laws $\widetilde{\mathbb{P}}^{(n)}$ and $\widetilde{\mathbb{P}}^{(n-1)}$ differ only at the $n$th Poisson time in the probabilities that they assign for the lattice jump $d\ell_{n}=m\in \Z$.  
 I can bound a single term from the sum above through the inequality
\begin{align}\label{Franklin}
 \big\|\chi(A_{M})\widetilde{\mathbb{P}}^{(n-1)}-\chi(A_{M})\widetilde{\mathbb{P}}^{(n)}\big\|_{\textup{Var},M } < & \int_{\R} dp\big|\hat{q}_{n}(p)-\hat{q}_{n}^{\prime}(p)\big|\nonumber \\ \leq & \int_{\R}dp\,\hat{q}_{n-1}^{\prime}(p)\int_{\R}dv\,\frac{j(v)}{\mathcal{R}} \sum_{m\in \Z}\nonumber \\  \Big|\mathbb{P}\big[d\ell_{n}=m   \,\big|\,dL_{n}=v, K_{\tau_{n-1}}=p\big] -& \widetilde{\mathbb{P}}^{ \prime } \big[d\ell_{n}=m    \,\big|\,dL_{n}=v, K_{\tau_{n-1}}=p \big]    \Big|    ,
\end{align}       
where $\hat{q}_{n}$ and $\hat{q}_{n}^{\prime}$ are the densities
$$\hat{q}_{n}(p)=\mathbb{E}^{(n-1)}\big[\delta(K_{t_{n}}-p)\chi\big(A_{n}\big)     \big] \quad \text{and} \quad  \hat{q}_{n}^{\prime}(p)=\widetilde{\mathbb{E}}^{ \prime} \big[\delta(K_{t_{n}}-p)\chi\big(A_{n} \big)     \big].$$
The first inequality~(\ref{Franklin}) follows since $A_{M}\subset A_{n}$ for $M\geq n$.  I will show that the right side of~(\ref{Franklin}) is $\mathit{O}(\frac{\log(|k|)}{|k|^{2}}    ) $ for $|k|\gg 1$.  
 
By (1) of List~\ref{Assumptions} and Chebyshev's inequality, I can replace the integration $\int_{\R}dv$ in~(\ref{Franklin})  by the restriction $|v|\leq \frac{1}{4}|k|$ with an error that  decays exponentially for $|k|\gg 1$  (since the integrand is bounded by one).   Under $\widetilde{\mathbb{P}}^{ \prime}$, only the lattice jumps $m\in I(p,v)$ have nonzero probability. For the law $\mathbb{P}$,  Part (1) of Lem.~\ref{BadTerms} yields that for $|p|\in [\frac{1}{2}|k|,\frac{3}{2}|k|]$ and $|v|\leq \frac{1}{4}|k|$,
$$\sum_{m\notin I(p,p+v) } \mathbb{P}\big[d\ell=m    \,\big|\,dL_{n}=v, K_{\tau_{n-1}}=p \big]=\sum_{m\notin I(p,p+v)}\big|\kappa_{v}(p,m)\big|^{2}=\mathit{O}(|k|^{-2}),  $$ 
when $|k|\gg 1$.  Since the density $\hat{q}_{n}'(p)$ is supported on the interval $|p|\in [\frac{1}{2}|k|,\frac{3}{2}|k|]$,  I can restrict the summation $\sum_{m}$ in~(\ref{Franklin}) to $m\in I(p,v)$ with an error $\mathit{O}(|k|^{-2})$.

Next, I treat the difference~(\ref{Franklin}) for $m\in I(p,v)$. The different cases are handled similarly, and I will take $m=-\mathbf{n}(p)\neq -\mathbf{n}(p+v)$.  I make the restriction $-\mathbf{n}(p)\neq -\mathbf{n}(p+v)$, because I will apply Part (3) of Lem.~\ref{BadTerms} to obtain the errors for the approximations of $\big|\kappa_{v}\big(p,-\mathbf{n}(p)\big)\big|^{2}$, and  the cases $\mathbf{n}(p)= \mathbf{n}(p+v)$  and  $\mathbf{n}(p)\neq \mathbf{n}(p+v)$ correspond to different approximations in the lemma.  By definition,  $$\widetilde{\mathbb{P}}'\big[d\ell_{n}=-\mathbf{n}(p)   \,\big|\,dL_{n}=v, K_{\tau_{n-1}}=p\big]=\mathbf{r}_{-}\big(p\big)\mathbf{r}_{+}\big(p+v\big),$$  so I am interested in the difference 
 \begin{align}\label{BlogPost}
 \int_{\R}dp\,\hat{q}_{n-1}^{\prime}(p)\int_{\substack{|v|\leq \frac{1}{4}|k|\\ \mathbf{n}(p)\neq \mathbf{n}(p+v)   } } dv\,\frac{j(v)}{\mathcal{R}}\Big| \big|\kappa_{v}\big(p,-\mathbf{n}(p)\big)\big|^{2}-\mathbf{r}_{-}\big(p\big)\mathbf{r}_{+}\big(p+v\big)\Big| .
 \end{align}
I will change summation variables through $p= m_{1}+\theta_{1}$ and $p+v=m_{2}+\theta_{2}$ for $m_{1},m_{2}\in \frac{1}{2}\Z$ and $\theta_{1},\theta_{2}\in [-\frac{1}{4},\frac{1}{4})$.  The domain $|v|\leq \frac{1}{4}|k|$ is roughly the same as $|m_{1}-m_{2}|\leq \frac{1}{4}|k|$, so the  expression~(\ref{BlogPost}) is approximately
\begin{multline}\label{Discount}
\sum_{\substack{m_{1},m_{2}\in \frac{1}{2}\Z \\ 0<|m_{1}-m_{2}|\leq \frac{1}{4}|k| } }\mathcal{R}^{-1}\int_{[-\frac{1}{4},\frac{1}{4})} d\theta_{1}d\theta_{2}\,\hat{q}^{\prime}_{n-1}(m_{1}+\theta_{1} )j(m_{2}-m_{1}+\theta_{2}-\theta_{1})\\ \times \Big| \big|\kappa_{m_{2}-m_{1}+\theta_{2}-\theta_{1} }\big(m_{1}+\theta_{1},-2m_{1}\big)\big|^{2}-\mathbf{r}_{-}(m_{1}+\theta_{1})\mathbf{r}_{+}(m_{2}+\theta_{2})    \Big|.
\end{multline}
 By Part (3) of Lem.~\ref{BadTerms}, there is a $c>0$ such that the sum of terms $m_{1}\neq m_{2}$ in~(\ref{Discount}) is smaller than  
\begin{align*}
 \sum_{\substack{ m_{1},m_{2}\in \frac{1}{2}\Z \\ 0<|m_{1}-m_{2}|\leq \frac{1}{4}|k|   }   }\mathcal{R}^{-1}\int_{[-\frac{1}{4},\frac{1}{4})}  d\theta_{1}& d\theta_{2}\,\hat{q}^{\prime}_{n-1}(m_{1}+\theta_{1} )j(m_{2}-m_{1}+\theta_{2}-\theta_{1})\frac{c}{|m_{1}+\theta_{1}|(1+|m_{1}\theta_{1}|)} \\ & \leq \frac{4c }{\mathcal{R}|k|} \|\langle \hat{q}^{\prime}_{n-1}\rangle \|_{\infty}\|\langle j\rangle \|_{\infty}\int_{[-\frac{1}{4},\frac{1}{4})} d\theta_{1}  \frac{1}{1+|\frac{1}{4}k\theta_{1}| }\\ &\leq  \frac{32c}{\mathcal{R}}\|\langle \hat{q}_{n-1}^{\prime} \rangle \|_{\infty}\|\langle j\rangle \|_{\infty} \frac{\log(1+\frac{|k|}{16})  }{|k|^{2}},
\end{align*}
where  the map $\langle \cdot \rangle: L^{1}(\R)\rightarrow L^{1}(\mathbb{T})$ defined in~(\ref{DefTorusContraction})  contracts densities on $\R$ to densities on the torus $\mathbb{T}=[-\frac{1}{4},\frac{1}{4})$.  In the first inequality above, I have bounded $|m_{1}+\theta_{1}|^{-1}(1+|m_{1}\theta_{1}|)^{-1}$ from below by $|\frac{k}{4}|^{-1}(1+|\frac{k\theta}{4}|)^{-1}$ and commuted the sums with the integrals.
By the discussion preceding Lem.~\ref{LittleEm},   
\begin{align}\label{Karate}
\|\langle \hat{q}_{n}^{\prime}\rangle\|_{\infty} \leq \|\langle \mathcal{T}q_{n-1}^{\prime}\rangle\|_{\infty}=\|\mathbf{T}\langle \hat{q}_{n-1}^{\prime}\rangle\|_{\infty} \leq \mathcal{R}^{-1}\|\langle j \rangle\|_{\infty},
\end{align}
and  where $\mathcal{T}$ is the transition kernel for the law~(\ref{AlterStat2}).  The norm $\|\langle j\rangle\|_{\infty}$ is finite by (3) of List~\ref{Assumptions}.     Note that $\hat{q}^{\prime}_{n}(p)\leq (\mathcal{T}\hat{q}^{\prime}_{n-1})(p)$, since $ (\mathcal{T}\hat{q}^{\prime}_{n-1})(p)$ includes a contribution for $|p|$ outside the interval $\big[\frac{1}{2}|k|,\frac{3}{2}|k|\big]$.



\end{proof}

\begin{lemma}\label{ProbFlip}
 Let  $t_{j}$ for  $j\in \mathbb{N}$ be the Poisson times and $K_{0}=k$.  For $|k|$ large, there are $c,C>0$ such that the following inequalities hold: 
\begin{enumerate}
\item $\mathbb{P}\big[S(K_{0})=S(K_{t_{1}})=-S(K_{t_{2}})=- S(K_{t_{3}})\big ]\geq \frac{c}{|k|},$
 
 \item $\mathbb{P}\big[S(K_{0})=S(K_{t_{1}})=-S(K_{t_{2}}) \big ]\leq \frac{C}{|k|}$.
 \end{enumerate}
The same statements hold with the statistics $\mathbb{P}$ replaced by $\widetilde{\mathbb{P}}$.

\end{lemma}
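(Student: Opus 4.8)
# Proof proposal for Lemma~\ref{ProbFlip}

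\textbf{Overall strategy.} Both bounds concern the probability of a single momentum reflection (a lattice jump $-\mathbf{n}(\cdot)$) occurring over a few Poisson steps starting from a high momentum $k$. The plan is to reduce everything to the contracted chain $\widehat{\theta}_n = \Theta(K_{t_n})$ on the torus $\mathbb{T} = [-\frac14,\frac14)$, using the fact (established just before Lem.~\ref{LittleEm} and restated in~(\ref{Cabbage})) that this chain has a bounded transition kernel $\mathbf{T}$ and is exponentially ergodic to the uniform distribution. The probability of a reflection at a given step, conditioned on the current momentum $p$ and the next L\'evy increment $v$, is governed by $\mathbf{R}_-(p+v) = \Pi_-\!\big(\tfrac12\Theta(p+v)\mathbf{n}(p+v)\big)$ in the idealized law $\widetilde{\mathbb{P}}$, and by the $\kappa$-coefficients in the true law $\mathbb{P}$; by Lem.~\ref{LemTotVar} (or directly by Parts (1) and (3) of Lem.~\ref{BadTerms}) these agree up to an error that is $o(|k|^{-1})$ over $O(1)$ steps, so it suffices to prove both statements for $\widetilde{\mathbb{P}}$ and then transfer.

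\textbf{Upper bound (Part 2).} For $\widetilde{\mathbb{P}}$, the event $S(K_0) = S(K_{t_1}) = -S(K_{t_2})$ forces a reflection at the second Poisson time (with the possibility of a trivial lattice jump at the first, since $S(K_0)=S(K_{t_1})$ rules out a reflection there when $|k|\gg 1$ and the L\'evy jump is small). Conditioning on $dL_1 = v_1$, $dL_2 = v_2$ and on $K_{t_1}$, the conditional probability of the reflection at step two is at most $\mathbf{R}_-(K_{t_1} + v_2) \le \Pi_-\!\big(|K_{t_1}|\,\Theta(K_{t_1}+v_2)\big)$ up to lower-order corrections, using that $|K_{t_1}| \ge \tfrac12|k|$ with overwhelming probability (by (1) of List~\ref{Assumptions} and Chebyshev, large L\'evy jumps are superpolynomially rare). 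Integrating over $v_2$ with density $\mathcal{R}^{-1}j(v_2)$ and then over $\theta := \Theta(K_{t_1})$ — whose law is controlled by $\|\mathbf{T}\langle q\rangle\|_\infty \le \mathcal{R}^{-1}\|\langle j\rangle\|_\infty$ via~(\ref{Crumpet}) — gives
\begin{align*}
\mathbb{P}\big[S(K_0)=S(K_{t_1})=-S(K_{t_2})\big] &\lesssim \int_{-1/4}^{1/4} d\theta\, \|\langle j\rangle\|_\infty \int_{\R} dv\, \frac{j(v)}{\mathcal{R}}\, \Pi_-\!\big(|k|\,(\theta+v \bmod \tfrac12)\big) \\
&\lesssim \frac{\|\langle j\rangle\|_\infty}{\mathcal{R}}\, \alpha\, |k|^{-1},
\end{align*}
where the final estimate uses $\int_{\mathbb{T}} \Pi_-(|k|\theta)\,d\theta \approx \alpha(2|k|)^{-1}$ exactly as in~(\ref{IntRefl}). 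The finiteness of $\|\langle j\rangle\|_\infty$ is (2) of List~\ref{Assumptions}. Transferring to $\mathbb{P}$ via Lem.~\ref{LemTotVar} (with $M = 3$) costs only an $O(|k|^{-2}\log|k|)$ error, which is absorbed.

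\textbf{Lower bound (Part 1).} Here I want to exhibit a specific scenario with probability $\gtrsim |k|^{-1}$: at $t_1$ make a trivial lattice jump (probability $\ge \mathbf{r}_+ \ge \tfrac12$, so $S(K_{t_1}) = S(K_0)$); at $t_2$ make a reflection $-\mathbf{n}(K_{t_1}+v_2)$ so that $S(K_{t_2}) = -S(K_0)$; at $t_3$ make a trivial lattice jump again, keeping $S(K_{t_3}) = S(K_{t_2}) = -S(K_0)$ (probability $\ge \tfrac12$, using $\mathbf{r}_+ \ge \tfrac12$). The only nontrivial factor is the reflection at step two: its conditional probability, integrated over $v_2$ and over the torus position $\theta = \Theta(K_{t_1})$, is $\gtrsim \int_{\mathbb{T}} d\theta\, (\text{density of }\theta)(\theta) \int dv\, \tfrac{j(v)}{\mathcal{R}} \Pi_-(|k|\,(\theta+v))$. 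To get a genuine lower bound I need the density of $\theta = \Theta(K_{t_1})$ to be bounded \emph{below} on a set of positive measure near $0$ — this follows from exponential ergodicity~(\ref{Cabbage}): after even one step the contracted density is close to the uniform density $2$, and more robustly, using (3) of List~\ref{Assumptions} ($j \ge \mu^{-1}$ on $[-1,1]$) the one-step kernel $\mathbf{T}(\theta_2,\theta_1) = \mathcal{R}^{-1}\sum_i j(\theta_2-\theta_1+\tfrac i2) \ge \mathcal{R}^{-1}\mu^{-1}$ uniformly, so $\langle q_{t_1}\rangle(\theta) \ge (\mathcal{R}\mu)^{-1}$ everywhere on $\mathbb{T}$. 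Combined with the lower bound $j \ge \mu^{-1}$ near zero to control the $v$-integral, this yields $\gtrsim \int_{\mathbb{T}} \Pi_-(|k|\theta)\,d\theta \gtrsim \alpha|k|^{-1}$, hence the claim. Again Lem.~\ref{LemTotVar} transfers this from $\widetilde{\mathbb{P}}$ to $\mathbb{P}$ with negligible loss.

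\textbf{Main obstacle.} The delicate point is the lower bound: I must ensure that conditioning on $S(K_0)=S(K_{t_1})$ (no sign change at the first step) does not distort the torus distribution of $\widehat{\theta}_1$ in a way that kills mass near $0$ where $\Pi_-(|k|\theta)$ is concentrated. This is precisely the situation addressed by~(\ref{WithOrWithoutYou}): conditioning to avoid a sign change at most doubles the density, and since it is the \emph{no-reflection} event being imposed it does not create a deficit of mass near lattice points — one still has a uniform lower bound on $\langle q\rangle$ via the kernel positivity from (3) of List~\ref{Assumptions}. Making this last step airtight (that the conditioning is compatible with a uniform-from-below torus density) is where I would spend most of the care; everything else is the integral estimate~(\ref{IntRefl}) plus the total-variation bookkeeping of Lem.~\ref{LemTotVar}.
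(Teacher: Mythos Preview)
Your proposal is correct and follows essentially the same approach as the paper: transfer to the idealized law via Lem.~\ref{LemTotVar}, then estimate the single-step reflection probability using $\int_{\mathbb{T}}\Pi_{-}(|k|\theta)\,d\theta \approx \alpha|k|^{-1}$ together with assumptions (2) and (3) of List~\ref{Assumptions}. The paper works with $\widetilde{\mathbb{P}}'$ rather than $\widetilde{\mathbb{P}}$ and writes the joint probability directly as an explicit integral over $(v_{1},v_{2},v_{3})$, applying $\mathbf{r}_{+}\geq \tfrac12$ and $j\geq \mu^{-1}$ pointwise inside the integrand; this sidesteps your ``main obstacle'' entirely, since no conditional torus-density lower bound is ever needed.
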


\begin{proof}\text{ }\\

\noindent Part (1):\hspace{.15cm} The laws $\mathbb{P}$ and $\widetilde{\mathbb{P}}^{\prime}$ determine measures on the sequences $K_{t_{j}}$.  Let $\|\cdot\|_{\textup{Var},4}$ and the set $A_{4}$ be defined as in Lem~\ref{LemTotVar}.  The probability of the event $A_{4}^{c}$ is order $\mathit{O}(|k|^{-2})$ for both  $\mathbb{P}$ and $\widetilde{\mathbb{P}}^{\prime}$, since the  random variables $|K_{t_{j+1}}|-|K_{t_{j}}|$ have uniformly bounded second moments and by Chebyshev's inequality.  Thus, 
$$\big\|\mathbb{P}-\widetilde{\mathbb{P}}^{\prime}\big\|_{\textup{Var},\,4}+\mathit{O}\big(|k|^{-2}\big)=\big\|\chi(A_{4})\mathbb{P}-\chi(A_{4})\widetilde{\mathbb{P}}^{\prime}\big\|_{\textup{Var},\,4}=\mathit{O}\Big(\frac{\log(|k|)}{|k|^{2}}\Big), $$
where the second equality above is by Lem~\ref{LemTotVar}.  However, the variational distance $\|\mathbb{P}-\widetilde{\mathbb{P}}^{\prime}\|_{\textup{Var},\,4}$ will bound the difference between the probabilities
\begin{align*}
\Big|\mathbb{P}\big[S(K_{0})=S(K_{t_{1}})=-S(K_{t_{2}})=-S(K_{t_{3}}) \big ]- \widetilde{\mathbb{P}}^{ \prime}\big[S(K_{0})=S(K_{t_{1}})=-S(K_{t_{2}})=-S(K_{t_{3}}) \big ] \Big|.  
\end{align*} 
Therefore, I can substitute $\mathbb{P}$ with $\widetilde{\mathbb{P}}^{ \prime}$ with an error $\mathit{o}(|k|^{-1})$. 

One way in which the event $S(K_{0})=S(K_{t_{1}})=-S(K_{t_{2}})=-S(K_{t_{3}})$ can occur is if the L\'evy jumps satisfy $|v_{i}|\leq |k|^{\frac{1}{2}}$, $i=1,\,2,\,3$ and the corresponding lattice jumps are respectively $0, -\mathbf{n}(k+v_{1}+v_{2}), 0$.  This possibility gives a lower bound   
\begin{align*}
\widetilde{\mathbb{P}}^{ \prime }\big[ S(K_{0})&=S(K_{t_{1}})=-S(K_{t_{2}})=-S(K_{t_{3}})\big] \\ \geq & \int_{|v_{1}|,|v_{2}|,|v_{3}|  \leq |k|^{\frac{1}{2}} }\frac{j(v_{1})}{\mathcal{R}}\frac{j(v_{2})}{\mathcal{R}}\frac{j(v_{3})}{\mathcal{R}}   \mathbf{r}_{+}(k)\mathbf{r}_{+}^{2}(k+v_{1}) \mathbf{r}_{-}(k+v_{1}+v_{2}) \\  &\times\mathbf{r}_{+}\big(k+v_{1}+v_{2}- \mathbf{n}(k+v_{1}+v_{2}) \big) \mathbf{r}_{+}\big(k+v_{1}+v_{2}-\mathbf{n}(k+v_{1}+v_{2})+v_{3}\big).   
\end{align*}
I will use the identities  $\mathbf{r}_{+}(k')=\mathbf{r}_{+}\big(k'-\mathbf{n}(k')\big)$ and  $\Pi_{-}\big(2^{-1}\Theta(k')\mathbf{n}(k')\big)=  2\mathbf{r}_{+}(k')\mathbf{r}_{-}(k')$ for  $k'= k+v_{1}+v_{2}    $.  
Since $\mathbf{r}_{+}$ is  $\geq \frac{1}{2}$ and $j(v_{2})\geq \frac{1}{\mu}$ for $|v_{2}|\leq 1$  by (3) of List~\ref{Assumptions}, I have a bound from below given by  
\begin{align*}
\frac{1}{2\mu }&\int_{|v_{1}| \leq |k|^{\frac{1}{2}} }\frac{j(v_{1})}{\mathcal{R}}\int_{[-1,1]}dv_{2}\frac{j(v_{2})}{\mathcal{R}}   \Pi_{-}\big(2^{-1}\Theta(k+v_{1}+v_{2})\mathbf{n}(k+v_{1}+v_{2})  \big) \\ & \geq \frac{1}{2\mu \mathcal{R}}\int_{|v_{1}| \leq |k|^{\frac{1}{2}} }\frac{j(v_{1})}{\mathcal{R}}\int_{[-1,1]}dv_{2}\,\Pi_{-}\big(2^{-1} \Theta(k)\mathbf{n}(k+v_{1}+v_{2})  \big)    \\ & =\frac{\alpha}{\mu \mathcal{R}|k|},
\end{align*}
  where I have used that $\int_{\R}dk\,\Pi_{-}(k)=\alpha $.

\vspace{.5cm}

\noindent Part (2):\hspace{.15cm} By the argument in Part (1), I can approximate $\mathbb{P}$ with $\widetilde{\mathbb{P}}^{\prime}$.   There are four possible pairs of lattice jumps that will give a sign change for the  second jump and not the first: $0$ for the first and $-\mathbf{n}(k+v_{1})$ or $-\mathbf{n}(k+v_{1}+v_{2})$ for the second, or $\mathbf{n}(k)-\mathbf{n}(k+v_{1})$ for the first and $-2\mathbf{n}(k)+\mathbf{n}(k+v_{1})$ or $\mathbf{n}(k)-\mathbf{n}(k+v_{1})-\mathbf{n}(k+v_{1}+v_{2})$ for the second.  The cases require the same analysis (with more or less messy notation), so I will analyze the case in which the jumps are  $0$ and $-\mathbf{n}(k+v_{1})$:  
\begin{align*}
 \int_{|v_{1}|,|v_{2}|  \leq |k|^{\frac{1}{2}} }& \frac{j(v_{1})}{\mathcal{R}}\frac{j(v_{2})}{\mathcal{R}}  \mathbf{r}_{+}(k)\mathbf{r}_{+}(k+v_{1})\mathbf{r}_{-}(k+v_{1})\mathbf{r}_{+}(k+v_{1}+v_{2}) \\ & \leq \frac{1}{2} \int_{|v_{1}| \leq |k|^{\frac{1}{2}} }\frac{j(v_{1})}{\mathcal{R}} \Pi_{-}\big(2^{-1}\Theta(k+v_{1})\mathbf{n}(k+v_{1})\big)
\\  & \approx \frac{1}{2} \sum_{ \substack{ m\in \frac{1}{2}\Z , \\ |m-k|\leq \sqrt{|k|}  } }\int_{\mathbb{T}}d\theta  \frac{j(m+\theta)}{\mathcal{R}} \Pi_{-}\big(\theta m)    \\ & \leq   \frac{\alpha \|\langle j\rangle\|_{\infty}}{\mathcal{R}|k|  } .
 \end{align*}
In the first inequality, I have used that $\mathbf{r}_{+}\leq 1$.   The norm $\|\langle j\rangle \|_{\infty}$ is smaller than $\mu$ by assumption (2) of List~\ref{Assumptions}.

\end{proof}

\subsubsection{ Results for the idealized law}

Lemma~\ref{SubFraiser} concerns the simplified momentum process~(\ref{AlterStat}), or alternatively~(\ref{AlterStat2}), and it is applied in the proof of Lem.~\ref{Fraiser} to obtain the analogous result for the original momentum process.  Part (1) of the lemma states that random variables $\frac{\mathcal{N}_{\tau_{n+1}}-\mathcal{N}_{\tau_{n}}  }{|K_{\tau_{n}}|}$ are approximately mean-$\alpha^{-1}$  exponentials for $|K_{\tau_{n}}|\gg 1 $.  The proof is based on the heuristics in the paragraph~(\ref{Shimmy}).

\begin{lemma}\label{SubFraiser}
Let $\beta,\zeta>0$ and $K_{0}=k$ with $|k|\gg 1$ be conditioned not to change signs at the first Poisson time.  Let $\tau$ be the first Poisson time that either the process makes a sign flip or  its absolute value jumps out of the interval $[\frac{1}{2}|k|,\,\frac{3}{2}|k|]$.  For fixed $\zeta$, there is a $C>0$ such that for all $k\in \R$, the following inequalities hold: 
\begin{enumerate}
\item $\sup_{a \in \R_{+}} \Big| \widetilde{\mathbb{P}}\Big[\frac{\mathcal{N}_{\tau}}{|k|}>a  \Big]-e^{-\alpha a   }   \Big|\leq C |k|^{-1+\zeta},  $

\item 
$\Big|\widetilde{\mathbb{E}}\Big[\big(|K_{t_{M } }|-|K_{0}|\big)\,\chi(\mathcal{N}_{\tau}> M  )   \Big]  \Big| \leq C |k|^{\beta-1+\zeta} $, 

\end{enumerate}
where $M:=\lfloor |k|^{\beta } \rfloor$.

\end{lemma}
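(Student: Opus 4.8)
## Proof proposal for Lemma~\ref{SubFraiser}

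The plan is to reduce everything to the contracted torus chain $\widehat{\theta}_n = \Theta(K_{t_n})$ on $\mathbb{T} = [-\frac14,\frac14)$ and to exploit its exponential ergodicity~(\ref{Cabbage}) together with the approximation~(\ref{Shimmy}) that the reflection probabilities behave like $\Pi_-(|k|\,\widehat{\theta}_n)$ while the random walk $X_n$ stays near its starting value $k$. First I would introduce the discrete-time skeleton $X_n$ with increment density $j(v)/\mathcal{R}$ and the associated "coin" $\widehat{N}$, the first index $n$ at which a Bernoulli trial with tails-probability $\mathbf{R}_-(X_n) = \Pi_-\big(\tfrac12\Theta(X_n)\mathbf{n}(X_n)\big)$ comes up tails; since jump waiting times are independent of the states, $\mathcal{N}_\tau$ has the same law as $\widehat{N}$ stopped also at the first exit of $|X_n|$ from $[\tfrac12|k|,\tfrac32|k|]$. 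I would then fix a cutoff, say requiring $|X_n - k| \le |k|^{\frac12+\zeta/2}$, and show via the uniformly-bounded second moments of $|K_{t_{j+1}}|-|K_{t_j}|$ and a maximal inequality (Doob applied to the submartingale $|K_r|$, or Kolmogorov's inequality for the centered walk) that the exit of $|X_n|$ from $[\tfrac12|k|,\tfrac32|k|]$ within the first $O(|k|)$ steps has probability $O(|k|^{-1+\zeta})$ — this is essentially Part (4) of Prop.~\ref{TimeFlip} in the idealized law, and makes the exit event negligible relative to the target error.

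For Part (1), conditioned on the event that $X_n$ stays in the good window, I would replace $\mathbf{R}_-(X_n)$ by $\Pi_-\big(|k|\,\widehat\theta_n\big)$; the error in a single probability is controlled by $\big|\tfrac12\Theta(X_n)\mathbf{n}(X_n) - |k|\Theta(X_n)\big| = |\Theta(X_n)|\cdot\big|\tfrac12\mathbf{n}(X_n) - |k|\big| \le |k|^{\frac12+\zeta/2}|\Theta(X_n)|$ together with the Lipschitz-type bound $\big|\Pi_-(a)-\Pi_-(b)\big| \lesssim |a-b|\,(1+|a|\wedge|b|)^{-3}$, and summed over the $O(|k|)$ relevant steps this remains $o(1)$ — here it is essential that, by~(\ref{Cabbage}), $\widehat\theta_n$ equilibrates to the uniform law after $O(\log|k|)$ steps so $|\widehat\theta_n|$ is typically $\gtrsim |k|^{-1}$ only with the "right" frequency. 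Then the survival probability up to step $\lfloor a|k|\rfloor$ is $\prod_{n\le a|k|}\big(1 - \mathbb{E}[\Pi_-(|k|\widehat\theta_n)\,|\,\mathcal F_{t_{n-1}}]\big)$ up to the accumulated error; using $\mathbb{E}_{\text{unif}}[\Pi_-(|k|\theta)] = \tfrac{1}{|k|}\int_{\mathbb{T}}\Pi_-(|k|\theta)\,d(|k|\theta) \approx \tfrac{1}{2|k|}\int_{\mathbb R}\Pi_-(w)\,dw = \tfrac{\alpha}{|k|}$ (recall $2\int_{-1/4}^{1/4}\Pi_- \approx \alpha/|\cdot|$ from~(\ref{IntRefl})), the geometric ergodicity bound~(\ref{Cabbage}) lets me replace each conditional expectation by $\alpha/|k|$ with a per-step error that is geometrically small in $n$ and hence summable, giving $\prod(1-\alpha/|k| + O(\cdot)) = e^{-\alpha a}(1 + O(|k|^{-1+\zeta}))$ uniformly in $a$. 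The uniformity in $a \in \mathbb{R}_+$ comes for free since both sides are monotone and agree to the stated error at each dyadic scale, or simply because once $a \gtrsim |k|^{\zeta}$ both sides are below the error bound.

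For Part (2), I would write $\widetilde{\mathbb E}\big[(|K_{t_M}|-|K_0|)\chi(\mathcal N_\tau > M)\big]$ and use the submartingale property of $|K_r|$ (equivalently of $\mathcal E_r$, Part (2) of Prop.~\ref{Basics}) together with the fact that on $\{\mathcal N_\tau > M\}$ the absolute value has not yet left $[\tfrac12|k|,\tfrac32|k|]$ at time $t_M$. By Cauchy--Schwarz, $\big|\widetilde{\mathbb E}[(|K_{t_M}|-|K_0|)\chi(\mathcal N_\tau>M)]\big| \le \widetilde{\mathbb E}\big[(|K_{t_M}|-|K_0|)^2\big]^{1/2}\,\widetilde{\mathbb P}[\mathcal N_\tau > M]^{1/2}$; the first factor is $O(M^{1/2}) = O(|k|^{\beta/2})$ because increments $|K_{t_{j+1}}|-|K_{t_j}|$ are bounded in $L^2$ uniformly and $|K_{t_M}| - |K_0|$ is a submartingale with quadratic variation $\lesssim M$ (control of the bracket via Prop.~\ref{Basics}), while the second factor is $O(e^{-\alpha M/(2|k|)}) + O(|k|^{-1/2+\zeta/2})$ by Part (1) with $a = M/|k| = |k|^{\beta-1}$ — for $\beta < 1$ the exponential is $\approx 1 - \tfrac{\alpha}{2}|k|^{\beta-1}$, so the product is $O\big(|k|^{\beta/2}\cdot(|k|^{\beta-1}+|k|^{-1/2+\zeta/2})^{1/2}\big)$, which I need to check collapses to the claimed $O(|k|^{\beta-1+\zeta})$; if the naive Cauchy--Schwarz is too lossy here, I would instead condition on $\mathcal F_{t_M}$, use $\widetilde{\mathbb E}[|K_{t_M}| - |K_0|\,|\,\{\mathcal N_\tau > M\}] = O(M\cdot |k|^{-1})$ (drift of $|K_r|$ near a high momentum is $O(1/|k|)$ per unit rate-time by the energy submartingale together with Part (4) of Prop.~\ref{Basics}) times $\widetilde{\mathbb P}[\mathcal N_\tau > M] = O(1)$, yielding $O(M/|k|) = O(|k|^{\beta-1})$, which is within the stated bound.

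The main obstacle I expect is the error-accumulation bookkeeping in Part (1): one must simultaneously control (a) the deviation of $X_n$ from $k$ over $O(|k|)$ steps, (b) the Lipschitz error from the ansatz~(\ref{Shimmy}), and (c) the mixing error~(\ref{Cabbage}) of $\widehat\theta_n$ toward uniform, and show that their sum over all $O(|k|)$ steps stays below $C|k|^{-1+\zeta}$ — the delicate point is that the Lipschitz error in (b) is $O(|\Theta(X_n)|\cdot|k|^{1/2+\zeta/2})$ per step, so it is only summable because $|\Theta(X_n)|$ itself is small (of size $\sim|k|^{-1}$) exactly on the steps that matter, which must be extracted from the ergodicity of the torus chain rather than assumed. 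Keeping the exponent $\zeta$ as a free slack parameter absorbs the logarithmic factors from~(\ref{Cabbage}) and the crude exit-probability estimates.
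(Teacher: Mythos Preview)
Your overall strategy for Part~(1) (random-walk skeleton, torus ergodicity, confinement of $X_n$ near $k$) matches the paper's, but there is a genuine gap in the key step.  You write that the survival probability is $\prod_{n}\bigl(1-\mathbb{E}[\Pi_-(|k|\widehat\theta_n)\,|\,\mathcal F_{t_{n-1}}]\bigr)$ and that the geometric ergodicity~(\ref{Cabbage}) lets you replace each conditional expectation by $\alpha/|k|$ with error geometrically small in $n$.  Neither claim holds.  First, the one-step conditional expectation $\int\mathbf{T}(\theta,\widehat\theta_{n-1})\Pi_-(|k|\theta)\,d\theta$ depends on $\widehat\theta_{n-1}$ and is \emph{not} $\alpha/|k|+o(1/|k|)$; the kernel $\mathbf{T}$ is bounded above and below but is not the uniform density after one step.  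Second, the relevant distribution of $\widehat\theta_{n-1}$ is conditioned on survival up to step $n-1$, which biases it away from the lattice, so~(\ref{Cabbage}) does not apply directly.  The paper resolves both issues simultaneously by a look-back window: it writes $\widetilde{\mathbb P}[\hat N>\hat n]/\widetilde{\mathbb P}[\hat N>\hat n-1]$ as an integral over the survival-conditioned density $r_{\hat n,\hat m}$ of $X_{\hat n-\hat m}$ with $\hat m=\lfloor |k|^\zeta\rfloor$, then shows (ii) that the free torus chain equilibrates over those $\hat m$ steps so that $\widetilde{\mathbb P}[\hat m\notin\mathbf{F}\mid\hat N\ge\hat m,\,X_0=p]=1-\alpha/|p|+O(|k|^{\zeta-2})$, and (iii) that $\int r_{\hat n,\hat m}(p)(1-\alpha/|p|)\,dp=1-\alpha/|k|+O(|k|^{2\zeta-2})$ via a symmetry argument exploiting $j(v)=j(-v)$ to show the conditional first moment of $X_{\hat n-\hat m}-k$ is $O(|k|^{2\zeta})$.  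Step~(iii) is the genuinely new idea your outline is missing.

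For Part~(2), your Cauchy--Schwarz bound gives only $O(|k|^{\beta/2})$, as you recognize.  Your fallback drift estimate $\widetilde{\mathbb E}[|K_{t_M}|-|K_0|\mid\mathcal N_\tau>M]=O(M/|k|)$ is in the right direction but incomplete: the per-step drift conditioned only on survival at that single step is indeed $O(1/|k|)$, but conditioning on survival through all $M$ steps couples each increment $v_j$ to the future trajectory, and you would still need to control $\partial_p\,\widetilde{\mathbb P}[\hat N_{(j+1:M)}>M\mid X_j=p]$.  The paper bypasses this by the same symmetry device as in step~(iii): writing $\widetilde{\mathbb E}[(X_M-X_0)\chi(\hat N>M)]$ as $\widetilde{\mathbb E}[\Pi_+(\cdot)\cdots\Pi_+(\cdot)(X_M-k)]$, replacing $\mathbf n(X_j)$ by $\mathbf n(k)$ and dropping the first $\hat m$ factors, the leading piece $\widetilde{\mathbb E}[\Pi_+(\tfrac12\mathbf n(k)\Theta(X_{\hat m+1}))\cdots\Pi_+(\tfrac12\mathbf n(k)\Theta(X_M))(X_M-X_{\hat m})]$ is superpolynomially small because $\langle q_{\hat m}\rangle$ is close to uniform and the integrand is antisymmetric under $(\theta,y)\mapsto(-\theta,-y)$; the remaining $\hat m$ boundary terms are each $O(|k|^{\zeta-1})$, giving the claimed $O(|k|^{\beta-1+\zeta})$ total.
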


\begin{proof}
\text{ }\\

\noindent Part (1): \hspace{.15cm} Since I am considering the probability of the event $ \widetilde{\mathbb{P}}\big[\mathcal{N}_{\tau}\geq a|k|\big]$ that is determined when the first reflection occurs (or, less likely, when $|K_{\mathcal{N}_{\tau}}|$ leaves the interval $[\frac{1}{2}|k|,\frac{3}{2}|k|]$), the problem can be formulated in terms of a random walk $X_{n}=k+v_{1}+\dots v_{n}$, where  the increments $v_{n}$ are independent and have density $\frac{j(v)}{\mathcal{R}}$; and the reflections are decided by flipping coins with heads weight $\Pi_{+}\big(2^{-1}\mathbf{n}(X_{n})\Theta(X_{n})  \big)$.  The problem is thus translated to characterizing the distribution for the first tails outcome. This scheme ignores the special rule for the first momentum jump for the law~(\ref{AlterStat}), but that will not generate a difference.  I am also ignoring the technical definition~(\ref{RealDeal}) for a sign-flip, which is only relevant when working with the laws $\mathbb{P}$ and $\widetilde{\mathbb{P}}'$ for which the double flips are likely.     Let $\mathbf{F}\subset \mathbb{N}$ be the set of times $n$ with either a tail outcome or such that $|X_{n}|$ lands outside $[\frac{1}{2}|k|,\,\frac{3}{2}|k|]$.  Also, let $\hat{N}$ be the smallest element in $\mathbf{F}$.     

First, I will show that $\widetilde{\mathbb{P}}\big[\hat{N}\geq a|k|\big]$ decays exponentially for $a\gg 1$.  By a simpler argument than in the proof for  Part (1) of  Lem.~\ref{ProbFlip}, there is a $B>0$ such that 
\begin{align}\label{BZZZ}
\widetilde{\mathbb{P}}[  n \notin \mathbf{F}   \,\big|\,  \,X_{n-1}   ]\leq 1-\frac{B}{|k|}
\end{align}
for all $|X_{n-1}|\in \frac{1}{2}|k|$.  With an iterated conditional expectation and an application of~(\ref{BZZZ}), then
\begin{align*}
\widetilde{\mathbb{P}}\big[ \hat{N}>  n \big] &= \widetilde{\mathbb{E}}\big[\chi(1 \notin \mathbf{F})\cdots \chi( n\notin \mathbf{F})     \big] \\ &=  \widetilde{\mathbb{E}}\big[\chi(1 \notin \mathbf{F} )\cdots \chi( n-1 \notin \mathbf{F} )\widetilde{\mathbb{P}}[ n \notin \mathbf{F}   \,\big|  \,X_{n-1}   ]  \big] \\ &\leq \Big(1-\frac{B}{|k|}\Big)\widetilde{\mathbb{E}}\big[\chi(1 \notin \mathbf{F})\cdots \chi( n-1 \notin \mathbf{F})\big],
\end{align*}
where $B$ is the constant in the lower bound from Part (1) of Lem.~\ref{ProbFlip}.    By induction, $\widetilde{\mathbb{P}}\big[ \hat{N}>n \big]\leq (1-\frac{B}{|k|})^{n} $.  It follows that $\frac{\hat{N}}{|k|}$ has finite exponential moments $\widetilde{\mathbb{E}}[e^{\gamma|k|^{-1}\hat{N} }  ] $ for $0\leq \gamma< B $, which are uniformly bounded for all large $|k|$.

By the above $\widetilde{\mathbb{P}}\big[\hat{N}\geq m\big]$ decays exponentially with rate at least $(1-\frac{B}{|k|})^{m}$.  Hence, the values $\widetilde{\mathbb{P}}\big[\hat{N}\geq a |k|\big]$ and  $e^{-\alpha a  }$ will be both small for $a> |k|^{\zeta}$, and I can focus my analysis on $a \leq |k|^{\zeta}$.
Let me set $\hat{n}=\lfloor a |k| \rfloor$ and $\hat{m}= \lfloor |k|^{\zeta} \rfloor$. I can rewrite $\widetilde{\mathbb{P}}\big[\hat{N}\geq \hat{n}\big]$ as   
\begin{align}\label{Abel}
\widetilde{\mathbb{P}}\big[\hat{N}> \hat{n}\big] & =\widetilde{\mathbb{E}}\big[\chi( 1\notin \mathbf{F})\cdots \chi( \hat{n}\notin \mathbf{F})    \big]\nonumber \\ &=\widetilde{\mathbb{E}}\Big[\chi( 1 \notin \mathbf{F})\cdots \chi( \hat{n}-1\notin \mathbf{F})\widetilde{\mathbb{P}}\big[ \hat{n}\notin \mathbf{F}\,\big|\,\mathbf{F}\cap (\hat{n}-\hat{m},\,\hat{n})=\emptyset,\,X_{\hat{n}-\hat{m}}  \big]    \Big]\nonumber \\ &=\widetilde{\mathbb{P}}\big[\hat{N}> \hat{n}-1   \big]\int_{\R}dp\,r_{\hat{n},\hat{m}}(p)\,\widetilde{\mathbb{P}}\big[\hat{m} \notin \mathbf{F}\,\big|\,\hat{N}\geq \hat{m},\,X_{0}=p  \big], 
\end{align}
where $r_{\hat{n},\hat{m}}(p)$ is the density for $X_{\hat{n}-\hat{m}}$ conditioned on the event $\hat{N}> \hat{n}-1 $.  The third equality above holds by the strong Markov property.  By definition, the density $r_{\hat{n},\hat{m}}(p)$ will have its support over $|p|\in [\frac{1}{2}|k|,\frac{3}{2}|k|]$.  I will argue that the integral in the last line of~(\ref{Abel}) has the $|k|\gg 1$ asymptotics
\begin{align}\label{Curses}
\int_{\R}dp\,r_{\hat{n},\hat{m}}(p)\,\widetilde{\mathbb{P}}\big[\hat{m} \notin \mathbf{F}\,\big|\,\hat{N}\geq \hat{m},\,X_{0}=p  \big]=1-\frac{\alpha}{|k|}+\mathit{O}\big(|k|^{2\zeta-2}\big).
\end{align}
The result~(\ref{Curses}) can be applied inductively in~(\ref{Abel}) for $\widetilde{\mathbb{P}}[\hat{N}> \hat{n}-1],\dots, \widetilde{\mathbb{P}}[\hat{N}> \hat{m}+1]$ to obtain
 \begin{align*}
 \widetilde{\mathbb{P}}\big[\hat{N}> \hat{n}\big] &= \Big(1-\frac{\alpha}{|k|}\Big)^{\hat{n}-\hat{m}}\widetilde{\mathbb{P}}[\hat{N}> \hat{m}  ]+ \mathit{O}(\gamma|k|^{2\zeta-2})\sum_{j=\hat{m}+1}^{\hat{n}-1} \Big(1-\frac{\alpha}{|k|}\Big)^{\hat{n}-j-1} \widetilde{\mathbb{P}}\big[\hat{N}> j\big] \\ & =
 e^{-\alpha a  }+\mathit{O}( |k|^{2\zeta-1})e^{-A a} ,
 \end{align*}  
where $A=B\wedge \alpha $.  The second equality follows by a simple argument yielding that $\big|\mathbb{P}\big[\hat{N}\geq \hat{m}\big]-1\big|=\mathit{O}( |k|^{\zeta-1})$.  Since $\zeta>0$ is arbitrary, the result would follow.  

I will separate the analysis showing~(\ref{Curses}) into proofs of the statements (i)-(iii) below.  For $|p-k|\leq |k|^{\frac{1}{2}+\zeta}$ and $|k|\gg 1$, the following statements hold:
\begin{enumerate}[(i).]

\item The difference between the left side of~(\ref{Curses}) and 
 \begin{align}\label{Restrict}
\int_{|p-k|\leq |k|^{\frac{1}{2}+\zeta} }dp\,r_{\hat{n},\hat{m}}(p)\,\widetilde{\mathbb{P}}\big[\hat{m} \notin \mathbf{F}\,\big|\,\hat{N}\geq \hat{m},\,X_{0}=p  \big]
\end{align}
decays superpolynomially in $|k|$.

\item $\widetilde{\mathbb{P}}\big[\hat{m}\notin \mathbf{F}\,\big|\,\hat{N}\geq \hat{m},\,X_{0}=p  \big]= 1-\frac{\alpha}{|p|}+\mathit{O}(|k|^{-2+\zeta})$

\item $\int_{|p-k|\leq |k|^{\frac{1}{2}+\zeta} }dp\,r_{\hat{n},\hat{m}}(p)\big(1-\frac{\alpha}{|p|}\big)=1-\frac{\alpha}{|k|}+\mathit{O}\big( |k|^{2\zeta-2}   \big)$

\end{enumerate}
\vspace{.3cm}

\noindent (i).\hspace{.15cm}  The value  $\int_{|p-k| > |k|^{\frac{1}{2}+\zeta} }dp\,r_{\hat{n},\hat{m}}(p)$ is smaller than the probability of the event $\sup_{0\leq n\leq \hat{n}}|X_{n}-k|>
|k|^{\frac{1}{2}+\zeta}$.  The probability $\widetilde{\mathbb{P}}\big[\sup_{0\leq n\leq \hat{n}}X_{n}-k>
|k|^{\frac{1}{2}+\zeta}\big]$ will decay superpolynomially fast for $|k|\gg 1$, since
\begin{align}\label{NotMuch}
e^{|k|^{\frac{\zeta}{2}}} \widetilde{\mathbb{P}}\Big[\sup_{0\leq n\leq \hat{n}}X_{n}-k>|k|^{\frac{1}{2}+\zeta }\Big] & \leq \widetilde{\mathbb{E}}\Big[\sup_{0\leq n\leq \hat{n}}e^{  (X_{n}-k)|k|^{-\frac{1}{2}-\frac{\zeta}{2}  }}
  \Big]\nonumber \\ & \leq 4 \widetilde{\mathbb{E}}\Big[ e^{(X_{\hat{n} }-k)|k|^{-\frac{1}{2}-\frac{\zeta}{2} }  } \Big]\nonumber \\ &\leq 4\big(1+2\sigma^{\prime} |k|^{-1-\zeta }\big)^{\hat{n}}\nonumber \\ & \leq 4e^{2\sigma^{\prime} },  
\end{align}
where the first inequality is Chebyshev's, the second is Doob's maximal inequality, and $\sigma^{\prime}$ is the second moment of $ \frac{j(v)}{\mathcal{R}}$. The last inequality uses the cap $\hat{n}\leq |k|^{1+\zeta} $.  The same argument can be applied to bound $\widetilde{\mathbb{P}}\big[\sup_{0\leq n\leq \hat{n}}k-X_{n}>
|k|^{\frac{1}{2}+\zeta}\big]$.

\vspace{.4cm}

\noindent (ii).\hspace{.15cm}  I first show that, when beginning from $X_{0}=p$ for $|p|\in [\frac{3}{4}|k|,\frac{5}{4}|k|]$, the probability of the event $\hat{m}\notin \mathbf{F}$ is nearly the same with or without conditioning on the information  $\hat{N}\geq \hat{m}$.  I will bound the probability  $\widetilde{\mathbb{P}}\big[\hat{m} \notin \mathbf{F}\,\big|\,X_{0}=p  \big]$ in (ii$'$) below.  First, I need to show  that $ \widetilde{\mathbb{P}}\big[\hat{N}\geq \hat{m}\,\big|\, X_{0 }=p \big]$ is nearly $1$.  This requires another inductive argument concluding that   
 \begin{align*}
 \widetilde{\mathbb{P}}\big[\hat{N}\geq \hat{m}\,\big|\, X_{0} =p \big]  &=\widetilde{\mathbb{E}}\big[\chi\big(\hat{N}\geq \hat{m}-1\big)\widetilde{\mathbb{P}}\big[ \hat{m} \notin \mathbf{F}\,\big|\,  X_{\hat{m}-1}\big]  \,\big|\, X_{0}=p \big]  \\ &\geq \widetilde{\mathbb{E}}\big[\chi\big(\hat{N}\geq \hat{m}-1 \big) \,\big|\, X_{0}=p \big]\Big(1-\frac{2\alpha \|\langle j\rangle  \|_{\infty}}{\mathcal{R}|k|}\Big)+\mathit{O}(|k|^{-2})\\  & \geq 1-\frac{2\alpha \| \langle j \rangle \|_{\infty}}{\mathcal{R}}|k|^{\zeta-1} +\mathit{O}(|k|^{\zeta-2}).
 \end{align*}
The first inequality follows from the analysis below.   I can assume  that  $X_{\hat{m}}=X_{\hat{m}-1}+v_{\hat{m}}$ has absolute value $|X_{\hat{m}-1}+v_{\hat{m}}|\geq \frac{1}{2}|k|$ by~(\ref{NotMuch}), so 
\begin{align*}
1-\widetilde{\mathbb{P}}\big[ \hat{m} \notin \mathbf{F}\,\big|\,  X_{\hat{m}-1}\big] &\approx  \int_{|X_{\hat{m}-1}+v|\geq \frac{1}{2}|k| } dv\,\frac{j(v)}{\mathcal{R}}\Pi_{-}\big( 2^{-1}\mathbf{n}(X_{\hat{m}-1}+v)\Theta(X_{\hat{m}-1}+v    )\big)\\ &\leq \frac{1}{\mathcal{R}}\int_{[-\frac{1}{4},\frac{1}{4}]  } d\theta \,\langle j \rangle \big(\theta-\Theta(X_{\hat{m}-1})\big) \Pi_{-}\big(2^{-1}|k|\theta \big)\\ & = \frac{\|\langle  j \rangle \|_{\infty}}{\mathcal{R}  }\int_{[-\frac{1}{4},\frac{1}{4}]  } d\theta  \,\Pi_{-}\big(2^{-1}|k|\theta \big)\\  &\leq \frac{2\alpha \| \langle j \rangle \|_{\infty}}{\mathcal{R}|k|}. 
\end{align*}
 
 Now, I use that  $ \widetilde{\mathbb{P}}\big[\hat{N}\geq \hat{m}\,\big|\, X_{0} =p \big]$ is close to $1$.  Let $h, h^{\prime}$ be the distributions for $X_{\hat{m}-1}$ when beginning from $X_{0}=p$ and conditioned or not conditioned on the event $\hat{N}\geq \hat{m}$,  respectively.  By similar reasoning as above  
\begin{align}
\big|\widetilde{\mathbb{P}}\big[\hat{m} \notin \mathbf{F}\, \big|\,& \hat{N}\geq \hat{m},\,X_{0}=p  \big]-\widetilde{\mathbb{P}}\big[\hat{m} \notin \mathbf{F}\,\big|\,X_{0}=p  \big]   \big|\nonumber  \\ & = 
 \Big| \int_{\R}dw\,\big( h(w)-h^{\prime}(w)\big) \int_{\R} dv\, \frac{j(v)}{\mathcal{R}} \Pi_{-}\big(2^{-1}\mathbf{n}(w+v)\Theta(w+v)\big) \Big| \nonumber \\ & \leq \|h-h^{\prime}\|_{1} \Big(\frac{2\alpha \|\langle  j \rangle\|_{\infty}}{\mathcal{R}|k|}+\mathit{O}\big(|k|^{-2}\big) \Big)\nonumber \\ & \leq \frac{8\alpha^{2} \| \langle j \rangle \|_{\infty}^{2} }{\mathcal{R}^{2}}|k|^{\zeta-2} +\mathit{O}\Big(|k|^{\zeta-3}\Big),
\end{align}
where the last inequality follows from
 $$ \|h-h^{\prime}\|_{1} \leq 2\widetilde{\mathbb{P}}\big[\hat{N}\geq \hat{m}\,\big|\, X_{0}=p \big]\leq \frac{4\alpha \|\langle j \rangle \|_{\infty}}{\mathcal{R}}|k|^{\zeta-1} +\mathit{O}\big(|k|^{\zeta-2}\big).   $$
 \vspace{.2cm}
\noindent (ii$'$).\hspace{.15cm} Approximating the unconditioned expression $\widetilde{\mathbb{P}}\big[\hat{m} \notin \mathbf{F}\,\big|\,X_{0}=p  \big]$ is more straightforward.  I will denote the density $\widetilde{\mathbb{P}}\big[X_{\hat{m}}=w \,\big|\,\,X_{0}=p  \big]= T^{\hat{m}}\big(\delta(\cdot -p)\big)$ by $q_{\hat{m},p}$.  I have the following approximations
\begin{align*}
\widetilde{\mathbb{P}}\big[\hat{m} \notin \mathbf{F}\,\big|\,\,X_{0}=p\big]  \approx & \int_{\R}dw\,q_{\hat{m},p}(w)\, \Pi_{+}\big(2^{-1}\mathbf{n}(w)\,\Theta(w)\big) \\  \approx  & \int_{|w-p|\leq \hat{m}  }dw\,q_{\hat{m},p}(w)\,  \Pi_{+}\big(2^{-1}\mathbf{n}(w)\,\Theta(w)\big) \\  = &\int_{|w-p|\leq \hat{m}}\,dp\, q_{\hat{m},p}(w) \Pi_{+}\big(2^{-1}\mathbf{n}(p)\,\Theta(w)\big)\\ &+ \mathit{O}\Big(|p|^{\zeta-1}\int_{\R }\,dp\, q_{\hat{m},p}(w) \Pi_{-}\big(2^{-1} \mathbf{n}(p)\,\Theta(w)\big)\Big)
.  
\end{align*}
  By the same argument as~(\ref{NotMuch}), the probability that  $|K_{r}|$ leaves the interval $[p- \hat{m},\,p+\hat{m}]$ after $\hat{m}$ steps after beginning from $p$ will be superpolynomially small.  Hence, the approximation in the second line above with the restriction of the integration to $[p- \hat{m},\,p+\hat{m}]$ will have a superpolynomially small error.  The error in the last equality comes from the inequality            
\begin{align}\label{Aphrodite}
\big|\Pi_{+}(x)-   \Pi_{+}(y)\big|\leq  \frac{40\pi^{2}}{\alpha^{2}} \Pi_{-}(x)\Big|\frac{|x|-|y|}{x}\Big|,\quad \text{when} \quad  \big||x|-|y|\big|\leq \frac{1}{2}|x|
\end{align}
for $x=2^{-1}\mathbf{n}(p)\,\Theta(w)$ and $y=2^{-1}\mathbf{n}(w)\,\Theta(w)$.   Finally, $\Pi_{+}\big(2^{-1}\mathbf{n}(p)\,\Theta(w)\big)$ only depends on $w$ modulo $\frac{1}{2}$, so I have that   
\begin{align*}
\int_{|w-p|\leq \hat{m}  }\,dp\, q_{m,p}(w) \Pi_{+}\big(2^{-1}\mathbf{n}(p)\,\Theta(w)\big)=& \int_{[-\frac{1}{4},\frac{1}{4})}\,d\theta\, \langle q_{\hat{m},p}'\rangle (\theta) \Pi_{+}\big(2^{-1}\mathbf{n}(p)\,\theta\big)\\ \leq & \int_{[-\frac{1}{4},\frac{1}{4})}\,d\theta\, \langle q_{\hat{m},p}\rangle (\theta) \Pi_{+}\big(2^{-1}\mathbf{n}(p)\,\theta\big) \\ 
\approx & 2\int_{[-\frac{1}{4},\frac{1}{4})}\,d\theta\, \Pi_{+}\big(2^{-1}\mathbf{n}(p)\,\theta\big)\\ = & 1-\frac{\alpha}{|p|}  +\mathit{O}(|k|^{-2}),
\end{align*}
where $q_{m,p}'(w):=q_{m,p}(w)\chi(|w-p|\leq \hat{m} ) $.   The second approximation above follows by~(\ref{Cabbage}), since  $\langle q_{\hat{m},p}\rangle=\mathbf{T}^{\hat{m}}\big(\langle \delta(\cdot-p)\rangle \big)$ must be superpolynomially close for $|k|\gg 1$ to the uniform distribution $2$.  
The order equality  on the bottom line uses that $|p|\geq \frac{1}{2}|k|$.  \vspace{.4cm}

\noindent (iii). \hspace{.15cm} Plugging in the result (i) to~(\ref{Restrict}), and expanding $|p|^{-1}$ around $k$ results in 
\begin{multline}
\int_{|p-k|\leq |k|^{\frac{1}{2}+\zeta} }dp\,r_{\hat{n},\hat{m}}(p)\,\Big(1-\frac{\alpha }{|p|} \Big) \\ =\Big( 1-\frac{\alpha}{|k|}\Big)+ \frac{\alpha}{|k|^{2}} \int_{|y|\leq |k|^{\frac{1}{2}+\zeta} }dy\,y\,r_{\hat{n},\hat{m}}(y+k)+\mathit{O}\big( |k|^{2\zeta-2} \big).
\end{multline}
It remains to show that $ \int_{|y|\leq |k|^{\frac{1}{2}+\zeta} }dy\,y\,r_{\hat{n},\hat{m}}(y+k) $ is $\mathit{O}(|k|^{2\zeta})$.
This will require an analysis of the density $r_{\hat{n},\hat{m}}$ and related densities $r^{\prime}_{\hat{n},\hat{m}},r^{\prime \prime}_{\hat{n},\hat{m}}$,   which can be written
\begin{align}\label{RFirst}
r_{\hat{n},\hat{m} }(p)&= \frac{ \widetilde{\mathbb{E}}\big[\Pi_{+}\big(2^{-1}\mathbf{n}(X_{1})\Theta(X_{1})\big)\cdots \Pi_{+}\big(2^{-1}\mathbf{n}(X_{\hat{n}-1})\Theta(X_{\hat{n}-1})\big)\delta\big(X_{\hat{n}-\hat{m}}-p \big)    \big]}{ \mathbb{E}\big[\Pi_{+}\big(2^{-1}\mathbf{n}(X_{1})\Theta(X_{1})\big)\cdots \Pi_{+}\big(2^{-1}\mathbf{n}(X_{\hat{n}-1})\Theta(X_{\hat{n}-1})\big)    \big]  }, \\ \label{RSecond} r^{\prime}_{\hat{n},\hat{m} }(p)&:=\frac{\widetilde{\mathbb{E}}\big[\Pi_{+}\big(2^{-1}\mathbf{n}(k)\Theta(X_{1})\big)\cdots \Pi_{+}\big(2^{-1}\mathbf{n}(k)\Theta(X_{\hat{n}-1})\big)\delta\big(X_{\hat{n}-\hat{m}}-p \big)    \big] }{\widetilde{\mathbb{E}}\big[\Pi_{+}\big(2^{-1}\mathbf{n}(k)\Theta(X_{1})\big)\cdots \Pi_{+}\big(2^{-1}\mathbf{n}(k)\Theta(X_{\hat{n}-1})\big)  \big] },  \\ \label{RThird}   
r_{\hat{n},\hat{m}}^{\prime \prime}(p)&:=\frac{\widetilde{\mathbb{E}}\big[\Pi_{+}\big(2^{-1}\mathbf{n}(k)\Theta(X_{\hat{m}+1})\big)\cdots \Pi_{+}\big(2^{-1}\mathbf{n}(k)\Theta(X_{\hat{n}-1})\big)\delta\big(X_{\hat{n}-\hat{m}}-p \big)    \big]    }{\mathbb{E}\big[\Pi_{+}\big(\mathbf{n}(k)\Theta(X_{\hat{m}+1})\big)\cdots \Pi_{+}\big(2^{-1}\mathbf{n}(k)\Theta(X_{\hat{n}-1})\big)   \big]     }.
\end{align}
I claim  that the following order equalities hold:
\begin{align}\label{Dist}
\sup_{|y|\leq |k|^{\frac{1}{2}+\zeta}}  \Big|\frac{r_{\hat{n},\hat{m} }(y+k)}{r_{\hat{n},\hat{m} }^{\prime}(y+k)}-1\Big|=\mathit{O}\big(|k|^{\zeta-\frac{1}{2}}\big)\hspace{.1cm} \text{ and }  \sup_{|y|\leq |k|^{\frac{1}{2}+\zeta}}  \Big|\frac{r_{\hat{n},\hat{m} }^{ \prime } (y+k)}{r_{\hat{n},\hat{m} }^{\prime \prime}(y+k)}-1\Big|=\mathit{O}\big(|k|^{\zeta-1 }\big).
\end{align}
I will return to a discussion of~(\ref{Dist}) below in (iii$'$).  It  follows from~(\ref{Dist}) that 
$$\Big|\int_{|y|\leq |k|^{\frac{1}{2}+\zeta} }dy\,y\,\big(r_{\hat{n},\hat{m} }(y+k)-r^{\prime \prime }_{\hat{n},\hat{m} }(y+k)\big)\Big|\leq  \mathit{O}\big(|k|^{\zeta-\frac{1}{2}}\big)\int_{|y|\leq |k|^{\frac{1}{2}+\zeta} }dy |y|r^{\prime \prime}_{\hat{n},\hat{m} }(y+k)    =\mathit{O}\big(|k|^{2\zeta}\big).$$
In that case,  it is sufficient to bound the expression $\int_{|y|\leq |k|^{\frac{1}{2}+\zeta} }dy\,y\,r^{\prime \prime }_{\hat{n},\hat{m} }(y+k)$.  The density $r^{\prime \prime }_{\hat{n},\hat{m} }$ can be written as 
\begin{align*}
&r_{\hat{n},\hat{m}}^{\prime \prime}(p) \\ &=\frac{ \int_{\R}dw\,q_{\hat{m} }(w) \widetilde{\mathbb{E}}\big[\Pi_{+}\big(2^{-1}\mathbf{n}(k)\Theta(X_{\hat{m}+1})\big)\cdots \Pi_{+}\big(2^{-1}\mathbf{n}(k)\Theta(X_{\hat{n}-1})\big)\delta\big(X_{\hat{n}-\hat{m}}=p \big) \,\big|\,X_{\hat{m}}=w   \big]    }{ \widetilde{\mathbb{E}}\big[\Pi_{+}\big(2^{-1}\mathbf{n}(k)\Theta(X_{\hat{m}+1})\big)\cdots \Pi_{+}\big(2^{-1}\mathbf{n}(k)\Theta(X_{\hat{n}-1})\big)   \big]    },
\end{align*}
where $q_{\hat{m}}$ is density for the random variable $X_{\hat{m}}$.  The density $r_{\hat{n},\hat{m}}^{\prime \prime}$ corresponds to the distribution for $X_{\hat{n}-\hat{m}}$ conditioned on the event $\mathbf{F} \cap [\hat{m}+1,\hat{n}-1]=\emptyset$.  Moreover, the integral of $q_{\hat{m}}(w)\chi(|w-k|> |k|^{\zeta})$ will be superpolynomially small, since the jumps in the random walk $X_{i}$, $i\in[1,\hat{m}]$ have mean zero and exponential tails.  Hence, I can split $X_{\hat{n}-\hat{m}}-k$ in to a sum of components $X_{\hat{m}}-k$ and $X_{\hat{n}-\hat{m}}-X_{\hat{m}}$, where  $|X_{\hat{m}}-k|$ is typically small and $X_{\hat{n}-\hat{m}}-X_{\hat{m}}$ has density $U_{\hat{m},\hat{n},w}\in L^{1}(\R)$ given by
$$ U_{\hat{m},\hat{n},w}(p):= \frac{ \widetilde{\mathbb{E}}\big[\Pi_{+}\big(2^{-1}\mathbf{n}(k)\Theta(X_{\hat{m}+1})\big)\cdots \Pi_{+}\big(2^{-1}\mathbf{n}(k)\Theta(X_{\hat{n}-1})\big)\delta\big(X_{\hat{n}-\hat{m}}-X_{\hat{m}}=p \big) \,\big|\,X_{\hat{m}}=w \big] }{ \widetilde{\mathbb{E}}\big[\Pi_{+}\big(2^{-1}\mathbf{n}(k)\Theta(X_{\hat{m}+1})\big)\cdots \Pi_{+}\big(2^{-1}\mathbf{n}(k)\Theta(X_{\hat{n}-1})\big)   \big]  } 
$$
 when $X_{\hat{m}}=w$ and conditioned on the event $\mathbf{F} \cap [\hat{m}+1,\hat{n}-1]=\emptyset $.
Since the arguments of the expectations defining $U_{\hat{m},\hat{n},w}$ depend only on the contracted random walk  $\Theta(X_{m})$, the density satisfies the shift invariance $U_{\hat{m},\hat{n},w}(p)=U_{\hat{m},\hat{n},w+\frac{1}{2}m}(p)$ for $m\in \Z$.  This shift symmetry gives the    inequality in the second line below:
\begin{align}\nonumber 
\Big|\int_{|y|\leq |k|^{\frac{1}{2}+\zeta}}dy \,y\, r_{\hat{n},\hat{m}}^{\prime \prime}(y+k)\Big|=  &  \mathit{O}\big(|k|^{\zeta}\big)+\Big|\int_{|w-k|\leq |k|^{\frac{1}{2}+\zeta}}dw q_{\hat{m}}(w)   \int_{|y|\leq |k|^{\frac{1}{2}+\zeta}}dy\,y\, U_{\hat{m},\hat{n},w}(y)\Big|
\\  \leq   & \mathit{O}\big(|k|^{\zeta}\big)+\Big|\int_{\mathbb{T}}d\theta\langle q_{\hat{m}}\rangle(\theta)   \int_{|y|\leq |k|^{\frac{1}{2}+\zeta}}dy\,y\, U_{\hat{m},\hat{n},\theta}(y)\Big|. \label{FireFight} 
\end{align}
The equality above follows by approximating $X_{\hat{n}-\hat{m}}-k$ using $X_{\hat{n}-\hat{m}}-X_{\hat{m}}$ and restricting the integration over $w\in \R$ to  $ |w-k|\leq |k|^{\frac{1}{2}+\zeta}$. 

 The density $\langle q_{\hat{m}}\rangle \in L^{1}(\mathbb{T})$ will be superpolynomially close to the uniform  distribution over $\mathbb{T}=[-\frac{1}{4},\frac{1}{4})$ by~(\ref{Cabbage}). 
However, if $\langle q_{\hat{m}}\rangle$ is replaced by $2$, then the resulting expression is zero, since  the symmetry $U_{\hat{m},\hat{n},\theta}(y)=U_{\hat{m},\hat{n},-\theta}(-y)$ holds as a consequence of the symmetry $j(v)=j(-v)$ for the  jump rates.  Thus, the last term on the right side of~(\ref{FireFight}) is also $\mathit{O}(|k|^{\zeta})$.  
\vspace{.4cm}

\noindent (iii$'$).\hspace{.15cm}  I will focus on bounding the difference   $\big|\frac{r_{\hat{n},\hat{m} }(y+k)}{r_{\hat{n},\hat{m} }^{\prime}(y+k)}-1\big|$ over the domain $|y|\leq |k|^{\frac{1}{2}+\zeta}$, since the order equality on the right side of~(\ref{Dist}) follows by simpler analysis.  Let $G$ be the event $\sup_{0\leq j\leq \hat{n}}\big|X_{j}-k|\leq |k|^{\frac{1}{2}+\zeta}$.  By~(\ref{NotMuch}), $1-\mathbb{P}[G]$ decreases superpolynomially with large $|k|$.  The difference between the expressions in the denominators of~(\ref{RFirst}) and~(\ref{RSecond}) is smaller than
\begin{align*}
 \widetilde{\mathbb{E}}\Big[& \big|\Pi_{+}\big(2^{-1}\mathbf{n}(k)\Theta(X_{1})\big)\cdots\Pi_{+}\big(2^{-1}\mathbf{n}(k)\Theta(X_{\hat{n}-1})\big)  \\  &-\Pi_{+}\big(2^{-1}\mathbf{n}(X_{1})\Theta(X_{1})\big)\cdots\Pi_{+}\big(2^{-1}\mathbf{n}(X_{\hat{n}-1})\Theta(X_{\hat{n}-1})\big)    \big|\,\Big|\,G  \Big]  \\ \leq & \frac{40\pi^{2}}{\alpha^2}\sum_{j=1}^{\hat{n}-1}\widetilde{\mathbb{E}}\Big[\Pi_{+}\big(2^{-1}\mathbf{n}(k)\Theta(X_{1})\big)\cdots \Pi_{+}\big(2^{-1}\mathbf{n}(k)\Theta(X_{j-1})\big)\\ &\times \Big|\frac{X_{j}-k}{k}\Big|\Pi_{-}\big(2^{-1}\mathbf{n}(k)\Theta(X_{j})\big)\Pi_{+}\big(2^{-1}\mathbf{n}(k)\Theta(X_{j+1})\big)\cdots \Pi_{+}\big(2^{-1}\mathbf{n}(X_{\hat{n}-1})\Theta(X_{\hat{n}-1})\big)\,\Big| \,G   \Big].  
\end{align*}
  The above inequality follows from inserting a telescoping sum and using the inequality~(\ref{Aphrodite}).

  Since I have conditioned on the event $G$, then $\big|\frac{X_{j}-k}{k}\big|\leq |k|^{\zeta-\frac{1}{2}}$.  Removing the conditioning on $G$ will make another superpolynomially small error, and a single term from the sum above is bounded by
\begin{align*}
 |k|^{\zeta-\frac{1}{2}}
\widetilde{\mathbb{E}}\Big[&\Pi_{+}\big(2^{-1}\mathbf{n}(k)\Theta(X_{1})\big)\cdots \Pi_{+}\big(2^{-1}\mathbf{n}(k)\Theta(X_{j-1})\big)\\ &\times \Pi_{-}\big(2^{-1}\mathbf{n}(k)\Theta(X_{j})\big)\Pi_{+}\big(2^{-1}\mathbf{n}(k)\Theta(X_{j+1})\big)\cdots \Pi_{+}\big(2^{-1}\mathbf{n}(X_{\hat{n}-1})\Theta(X_{\hat{n}-1})\big)  \Big]\\ \leq & 2 \mu^{4}|k|^{\zeta-\frac{3}{2}}
\widetilde{\mathbb{E}}\Big[\Pi_{+}\big(2^{-1}\mathbf{n}(k)\Theta(X_{1})\big)\cdots  \Pi_{+}\big(2^{-1}\mathbf{n}(X_{\hat{n}-1})\Theta(X_{\hat{n}-1})\big)  \Big].
\end{align*}
The conditional density for $\Theta(X_{j})$ given $\Theta(X_{j-1})$ is bounded by $\mu$ as a result of assumption (2) from List~\ref{Assumptions} (and the same for~$\Theta(X_{j+1})$ given $\Theta(X_{j})$).  I can make use of this to obtain bounds in which $\Theta(X_{j})$ is decoupled from the other variables, and where $\Theta(X_{j})$ and $\Theta(X_{j+1})$ are integrated with respect to the constant density $\mu$ over $\mathbb{T}$.  Integrating out $X_{j}$ yields a factor $2\int_{\mathbb{T}}d\theta \Pi_{+}\big(2^{-1}\mathbf{n}(k)\theta\big)\approx \frac{\alpha}{|k|}< \frac{2\alpha}{|k|}$. Moreover,  I can put the variable back in the expression by inserting $2\Pi_{+}\big(2^{-1}\mathbf{n}(k)\Theta(X_{j}))$, since $2\Pi_{+}\geq 1$.    The conditional density for $\Theta(X_{j})$ given $\Theta(X_{j-1})$ (and $\Theta(X_{j+1})$ given $\Theta(X_{j})$) is bounded from below by $\mu^{-1}$ by (2) of List~\ref{Assumptions}.  I can use this to recouple $X_{j}$ with the other variables, so that the expression bounding the difference is a constant multiple of the denominator in the definition for $r^{\prime}_{\hat{n},\hat{m}}$.   

The same argument as above applies in order to bound the difference between the numerators of~(\ref{RFirst}) and~(\ref{RSecond}).

  \vspace{.4cm}

\noindent Part (2):  \hspace{.15cm} I will sketch the proof.  Let $\zeta<\beta$, $M=\lfloor |k|^{\beta}\rfloor$, and  $\hat{m}=\lfloor |k|^{\zeta}\rfloor$.  As in  Part (1), I can phrase the problem in terms of the random walk $X_{n}$.    Since there is no actual sign-flipping for $X_{n}$ (only coin tossing),  I can remove the absolute values from $X_{M}, X_{0}$ in the expression:     
\begin{align*}
\widetilde{\mathbb{E}}\Big[(X_{M}-X_{0})\,\chi(\hat{N}> M  )   \Big]  \approx \Big| \widetilde{\mathbb{E}}\Big[\Pi_{+}\big(2^{-1}\mathbf{n}(X_{1})\Theta(X_{1})\big)\cdots \Pi_{+}\big(2^{-1}\mathbf{n}(X_{M})\Theta(X_{ M})\big) (X_{M}-k)      \Big]\Big|,
\end{align*}
where the approximation uses that there is a superpolynomially small probability that the event  $\hat{N}\leq M$   occurs because   $|X_{n}-k|\geq\frac{|k|}{2}$ for some $n\leq M$.  By the same reasoning as in  Part (1), I can replace  the factors $\mathbf{n}(X_{r})$ by $\mathbf{n}(k)$ in the arguments of the functions $\Pi_{+}$ with an error of order $\mathit{O}(|k|^{\zeta-1})$, and I can also remove the factors $\Pi_{+}\big(2^{-1}\mathbf{n}(X_{j})\Theta(X_{j})\big)$ from the expectations for $j\leq \hat{m}$.  By the triangle inequality, I can bound the right side above by 
\begin{multline}\label{GrassFast}
  \Big|\widetilde{\mathbb{E}}\Big[\Pi_{+}\big(2^{-1}\mathbf{n}(k)\Theta(X_{\hat{m}+1})\big)\cdots \Pi_{+}\big(2^{-1}\mathbf{n}(k)\Theta(X_{ M})\big) (X_{M}-X_{\hat{m}} )    \Big] \Big| \\ +\sum_{n=0}^{\hat{m}-1}\Big|\widetilde{\mathbb{E}}\Big[\Pi_{+}\big(2^{-1}\mathbf{n}(k)\Theta(X_{\hat{m}+1})\big)\cdots \Pi_{+}\big(2^{-1}\mathbf{n}(k)\Theta(X_{ M})\big) (X_{n+1}-X_{n})    \Big] \Big|.
 \end{multline}
 The first term will be superpolynomially close to zero for large $|k|$, since the distribution $\langle q_{\hat{m}}\rangle$ for $\Theta(X_{\hat{m}})$ will be superpolynomially close to the uniform distribution and by the symmetry argument used in the proof of  Part (1).  The probability of the event $\hat{N}\in [n+2,M]$ when $X_{n+1}=p$ for $n<\hat{N} $  and $|p-k|\leq \frac{1}{4}|k|$    will be $\mathit{O}\big(|k|^{\beta-1}\big)$, 
 so the $n$th term on the lower line of~(\ref{GrassFast}) is equal to
  $$\Big|\widetilde{\mathbb{E}}\Big[\Pi_{+}\big(2^{-1}\mathbf{n}(k)\Theta(X_{\hat{m}+1})\big)\cdots \Pi_{+}\big(2^{-1}\mathbf{n}(k)\Theta(X_{n+1})\big) (X_{n+1}-X_{n})    \Big] \Big| +\mathit{O}\big(|k|^{\beta-1}\big).   $$
Finally, I have the equality   
$$\widetilde{\mathbb{E}}\Big[ \Pi_{+}\big(2^{-1}\mathbf{n}(k)\Theta(X_{n+1})\big) (X_{n+1}-X_{n})\,\Big|\,X_{n}  \Big]=-\int_{\R}dv\, v\, \frac{j(v)}{\mathcal{R}}\,\Pi_{-}\big(2^{-1}\mathbf{n}(k)(X_{n}+v)\big)   .      $$  
I can use the exponential decay of $j(v)$ to cap the jumps $v$ by $|k|^{\zeta}$ with a small error, so the expression above is approximately    
$$\int_{|v|\leq |k|^{\zeta} }dv\, v\, \frac{j(v)}{\mathcal{R}}\,\Pi_{-}\big(2^{-1}\mathbf{n}(k)(X_{n}+v)\big)\leq  |k|^{\zeta }\frac{\|\langle j\rangle \|_{\infty} }{\mathcal{R}}\,\int_{\mathbb{T}}d\theta \Pi_{-}\,\big(2^{-1}\mathbf{n}(k)\theta\big)=\mathit{O}\big(|k|^{\zeta-1}\big),    $$
where I have followed the usual method for bounding expressions with integrals including  $j(v)$ and $\Pi_{-} $.

\end{proof}

\subsubsection{ Proof of Proposition~\ref{TimeFlip} } \label{SecFlipidy}

Part (1) of the lemma below makes another step toward showing that the duration between between successive sign-flip times $\tau_{n},\tau_{n+1}$ is approximately an exponential distribution with mean $\nu^{-1} |K_{\tau_{n}}|$. 

\begin{lemma}\label{Fraiser}
Let $\tau,\zeta, \beta, k$ be as in the statement of Lemma~\ref{Fraiser}.  There exists a $C>0$ such that all $k$
\begin{enumerate}
\item $\sup_{a \in \R_{+}} \Big| \mathbb{P}\Big[\frac{\mathcal{N}_{\tau}}{|k|}>a  \Big]-e^{-\alpha a   }   \Big|\leq C |k|^{-1+\zeta},  $

\item 
$\Big|\mathbb{E}\Big[\big(|K_{t_{M } }|-|K_{0}|\big)\,\chi(\mathcal{N}_{\tau}> M  )   \Big]  \Big| \leq C |k|^{\beta-1+\zeta} $, 

\end{enumerate}
where $M:=\lfloor |k|^{\beta } \rfloor$.

\end{lemma}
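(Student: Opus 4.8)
The idea is to carry the two estimates of Lemma~\ref{SubFraiser} over to the true law $\mathbb{P}$ by two successive changes of measure: first pass from $\mathbb{P}$ to the four-way idealized law $\widetilde{\mathbb{P}}^{\prime}$ of~(\ref{AlterStat2}) via the total-variation bound of Lemma~\ref{LemTotVar}, and then from $\widetilde{\mathbb{P}}^{\prime}$ to the two-way law $\widetilde{\mathbb{P}}$ of~(\ref{AlterStat}) via the coupling of the two idealized laws on the doubled chain $\tilde X_{n}$ constructed just before Lemma~\ref{ProbFlip}. The only bookkeeping required is to count the number of Poisson steps on which the relevant event or observable depends, because the error in Lemma~\ref{LemTotVar} is linear in that number.

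For Part~(1), fix $a>0$ and put $M_{a}=\lfloor a|k|\rfloor$. When $a\leq |k|^{\zeta}$ one has $M_{a}\leq|k|^{1+\zeta}$, and the event $\{\mathcal{N}_{\tau}>M_{a}\}$ — no sign-flip and no exit of $|K_{r}|$ from $[\tfrac12|k|,\tfrac32|k|]$ during the first $M_{a}$ Poisson times — depends on $O(M_{a})$ momenta (the $(M_{a}+1)$-st entering only through the look-ahead built into the sign-flip definition~(\ref{RealDeal})) and lies in the event $A_{M_{a}}$ of Lemma~\ref{LemTotVar}, so
\begin{align*}
\big|\mathbb{P}[\mathcal{N}_{\tau}>M_{a}]-\widetilde{\mathbb{P}}^{\prime}[\mathcal{N}_{\tau}>M_{a}]\big|\leq c\,M_{a}\,\frac{\log(|k|)}{|k|^{2}}=\mathit{O}\Big(\frac{\log(|k|)}{|k|^{1-\zeta}}\Big).
\end{align*}
The $\tilde X_{n}$-coupling matches the $\widetilde{\mathbb{P}}$- and $\widetilde{\mathbb{P}}^{\prime}$-trajectories away from a bounded neighbourhood of each reflection — this is precisely where~(\ref{RealDeal}) is used, to collapse the double-flips common under $\widetilde{\mathbb{P}}^{\prime}$ but absent under $\widetilde{\mathbb{P}}$ — so the first-reflection counters of the two laws differ by at most a constant $c_{0}$, and Lemma~\ref{SubFraiser}~(1) then gives $\big|\widetilde{\mathbb{P}}^{\prime}[\mathcal{N}_{\tau}>M_{a}]-e^{-\alpha a}\big|\leq \alpha c_{0}|k|^{-1}+C|k|^{-1+\zeta}$. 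For $a>|k|^{\zeta}$ the target $e^{-\alpha a}$ is superpolynomially small, while $\mathbb{P}[\mathcal{N}_{\tau}>M_{a}]\leq\mathbb{P}[\mathcal{N}_{\tau}>\lfloor|k|^{1+\zeta}\rfloor]$, and the same two reductions (now with $\lfloor|k|^{1+\zeta}\rfloor$ steps) bound this by $\widetilde{\mathbb{P}}[\mathcal{N}_{\tau}>\lfloor|k|^{1+\zeta}\rfloor-c_{0}]+\mathit{O}(\log(|k|)\,|k|^{-1+\zeta})$, which Lemma~\ref{SubFraiser}~(1) makes $\mathit{O}(|k|^{-1+\zeta})$ after enlarging $\zeta$ slightly. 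Taking the supremum over $a$ proves Part~(1).

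For Part~(2), with $M=\lfloor|k|^{\beta}\rfloor$ the observable $(|K_{t_{M}}|-|K_{0}|)\,\chi(\mathcal{N}_{\tau}>M)$ depends on $O(M)$ momenta and, on $\{\mathcal{N}_{\tau}>M\}\subset A_{M}$, is bounded in absolute value by $\tfrac12|k|$; hence Lemma~\ref{LemTotVar} yields
\begin{align*}
\Big|\mathbb{E}\big[(|K_{t_{M}}|-|K_{0}|)\chi(\mathcal{N}_{\tau}>M)\big]-\widetilde{\mathbb{E}}^{\prime}\big[(|K_{t_{M}}|-|K_{0}|)\chi(\mathcal{N}_{\tau}>M)\big]\Big|\leq \frac{|k|}{2}\,c\,M\,\frac{\log(|k|)}{|k|^{2}}=\mathit{O}\big(|k|^{\beta-1}\log(|k|)\big).
\end{align*}
On $\{\mathcal{N}_{\tau}>M\}$ there are no reflections among the first $M$ steps, so under the coupling the $\widetilde{\mathbb{P}}$- and $\widetilde{\mathbb{P}}^{\prime}$-positions at time $M$ differ by at most one Lévy increment, contributing $\mathit{O}(1)$ in expectation; thus $\widetilde{\mathbb{E}}^{\prime}[\,\cdot\,]=\widetilde{\mathbb{E}}[\,\cdot\,]+\mathit{O}(1)$, and Lemma~\ref{SubFraiser}~(2) makes the latter $\mathit{O}(|k|^{\beta-1+\zeta})$. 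Summing the errors gives the claim.

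I expect the step-count bookkeeping to be the main obstacle: one must check carefully that $\{\mathcal{N}_{\tau}>M\}$ and $|K_{t_{M}}|$ are, up to the bounded look-ahead of~(\ref{RealDeal}), measurable with respect to only $M$ Poisson steps so that the linear-in-$M$ error of Lemma~\ref{LemTotVar} legitimately applies; separate the rare contribution in which $\tau$ is triggered by $|K_{r}|$ leaving $[\tfrac12|k|,\tfrac32|k|]$ rather than by a sign-flip, which is lower order by the bounded second moments of the momentum increments and Chebyshev; and confirm that the staggering between $\widetilde{\mathbb{P}}$ and $\widetilde{\mathbb{P}}^{\prime}$ near reflections, together with the collapse of double-flips by~(\ref{RealDeal}), only shifts the relevant counters by a bounded amount. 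Once these points are settled the rest is a direct assembly of Lemmas~\ref{SubFraiser}, \ref{LemTotVar} and~\ref{ProbFlip}.
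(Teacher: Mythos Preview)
Your overall strategy matches the paper's: transfer $\mathbb{P}\to\widetilde{\mathbb{P}}'$ via Lemma~\ref{LemTotVar}, then $\widetilde{\mathbb{P}}'\to\widetilde{\mathbb{P}}$ via the doubled-chain coupling, then invoke Lemma~\ref{SubFraiser}. Part~(1) is handled essentially as in the paper; the paper additionally establishes the exponential tail of $\mathbb{P}[\mathcal{N}_{\tau}>n]$ directly via Lemma~\ref{ProbFlip} before cutting off at $a\le|k|^{\zeta}$, whereas you route the large-$a$ case back through the same two reductions, which also works.

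For Part~(2), your $\mathbb{P}\to\widetilde{\mathbb{P}}'$ step is actually cleaner than the paper's. You observe that $(|K_{t_M}|-|K_0|)\chi(\mathcal N_\tau>M)$ is supported in $A_M$ and bounded there by $|k|/2$, so Lemma~\ref{LemTotVar} plus the $L^\infty$ bound gives $O(|k|^{\beta-1}\log|k|)$ directly. The paper instead re-enters the telescoping proof and invokes Parts~(1)--(3) of Lemma~\ref{BadTerms} to handle the unbounded lattice jumps term by term; this is more work for the same order. Your simplification is valid, modulo the look-ahead measurability issue you correctly flag.

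However, your $\widetilde{\mathbb{P}}'\to\widetilde{\mathbb{P}}$ step in Part~(2) has a genuine gap. The assertion ``on $\{\mathcal N_\tau>M\}$ there are no reflections among the first $M$ steps'' is false under $\widetilde{\mathbb{P}}'$: the technical definition~(\ref{RealDeal}) permits even-length runs of sign changes (double-flips), each of which \emph{is} a reflection in the doubled chain. More seriously, the $O(1)$ error you claim is fatally coarse: summing $O(|k|^{\beta-1}\log|k|)+O(1)+O(|k|^{\beta-1+\zeta})$ gives $O(1)$, not the required $O(|k|^{\beta-1+\zeta})$. The paper avoids this by simply asserting that Lemma~\ref{SubFraiser}~(2) holds for $\widetilde{\mathbb{P}}'$ ``by the equivalence'' --- in effect, the random-walk-plus-coin-toss proof of Lemma~\ref{SubFraiser} carries over because under $\widetilde{\mathbb{P}}'$ the absolute value $|K_{t_m}|$ differs from the pure L\'evy walk $|X_m|$ by at most $2|\Theta|\le\tfrac12$ per reflection, and the expected number of reflections among $M=\lfloor|k|^\beta\rfloor$ steps is $O(M/|k|)=O(|k|^{\beta-1})$. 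You need either this finer accounting or the direct observation that the proof of Lemma~\ref{SubFraiser}~(2) applies verbatim to $\widetilde{\mathbb{P}}'$.
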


\begin{proof}\text{ }\\

\noindent Part (1):\hspace{.15cm} By Part (1) of Lemma~\ref{SubFraiser} and the equivalence between $\widetilde{\mathbb{P}}$ and $\widetilde{\mathbb{P}}^{\prime }$, it is sufficient to show  that
\begin{align} \label{Nordstrums} 
\sup_{a \in \R_{+} }
\Big| \mathbb{P}\Big[\frac{\mathcal{N}_{\tau}}{|k|}> a  \Big]- \widetilde{\mathbb{P}}^{\prime }\Big[\frac{\mathcal{N}_{\tau}}{|k|}>a  \Big] \Big|=\mathit{O}( |k|^{\zeta-1})
\end{align} 
for all $\zeta>0$. 

  First, I will show that the probability $\mathbb{P}[\mathcal{N}_{\tau}> n ]$ decays exponentially for large $n$.    Let $\mathbf{F}\subset \mathbb{N}$ be the set of times $n$ such that $t_{n}$ is a sign-flip  or $|K_{t_{n}}|$ jumps out of $[\frac{1}{2}|k|,\frac{3}{2}|k|]$.  I have the following relations
  \begin{align*}
\mathbb{P}\Big[\mathcal{N}_{\tau}> 3n  \Big] &=\mathbb{E}\Big[  \prod_{j=1}^{3n}\chi\big(j\notin \mathbf{F}     \big) \Big] \leq  \mathbb{E}\Big[  \prod_{j=1}^{n}\chi\big(3j-1\notin \mathbf{F}     \big) \Big]  
\\ & = \mathbb{E}\Big[  \Big(\prod_{j=1}^{n-1}\chi\big(3j-1\notin \mathbf{F}     \big)\Big) \mathbb{P}\big[  3n-1\notin \mathbf{F}     \,\big|\, K_{t_{3n-3}}  \big] \Big]\\ &\leq \Big(1-\frac{B'}{|k|}\Big) \mathbb{E}\Big[  \prod_{j=1}^{n-1}\chi\big(3j-1\notin \mathbf{F}     \big) \Big],
\end{align*}
where the second inequality uses the Markov property and Part (1) of Lemma~\ref{ProbFlip} to bound $ \mathbb{P}\big[  3n-1\notin \mathbf{F}     \,\big|\, K_{t_{3n-3}}  \big]$.  My purpose of removing the terms $j=0,1\,\textup{mod}\, 3$ in the product above was to accommodate the technical definition for a sign-flip, which requires information from the next Poisson time. 
 This argument can be applied inductively to obtain a bound for $\mathbb{P}[\mathcal{N}_{\tau}> n ]$ that decays exponentially as $(1-\frac{B}{|k|})^{n}$ for $B=\frac{B'}{3}$.  I remark that  the expectations $\mathbb{E}[e^{\gamma \frac{ \mathcal{N}_{\tau} }{|k|} }] $ must be uniformly bounded for  $0\leq \gamma<B$ and $k$.  The same arguments hold for the law $\widetilde{\mathbb{P}}^{\prime }$.  Since the probabilities $\mathbb{P}\big[ \frac{\mathcal{N}_{\tau}}{|k|}>a \big]$ and $\widetilde{\mathbb{P}}^{\prime }\big[\frac{\mathcal{N}_{\tau}}{|k|}>a \big]$ decay exponentially in $a\in \R_{+}$ with a rate that is uniform  in $k$,  I can take a cut-off $|a|\leq |k|^{\frac{\zeta}{2}} $ as in the proof of Part (1) of Lemma~\ref{SubFraiser}.


Let the set $A_{M}$ and the variational norm $\|\cdot\|_{\textup{Var},M }$ be defined as in Lem~\ref{LemTotVar}.  Notice that the event $\frac{\mathcal{N}_{\tau}}{|k|}> a$ is contained in the set $A_{M}$ for $M=\lfloor |k|^{1+\frac{\zeta}{2}}\rfloor$.    The variational norm for the difference $\chi(A_{M})\mathbb{P}-\chi(A_{M})\widetilde{\mathbb{P}}^{\prime}$ bounds the difference in probabilities for any event contained in $A_{M}$, and thus  
the first inequality below holds:
\begin{align*}
\sup_{a\leq |k|^{\frac{\zeta}{2} } }  \Big| \mathbb{P}\Big[\frac{\mathcal{N}_{\tau}}{|k|  }> a \Big]-\widetilde{\mathbb{P}}^{\prime}\Big[\frac{\mathcal{N}_{\tau}}{|k|}>a  \Big]    \Big| & \leq \big\|\chi(A_{M})\mathbb{P}-\chi(A_{M})\widetilde{\mathbb{P}}^{\prime}\big\|_{\textup{Var},\,M } \\ & \leq cM\frac{\log(|k|)}{|k|^{2}}=\mathit{O}(|k|^{\zeta-1}).
\end{align*}
The second inequality is by Lem~\ref{LemTotVar}.

\vspace{.5cm}

\noindent Part (2):\hspace{.15cm} If $|K_{t_{M} }|-|K_{0}|$ were a bounded random variable, then this result would follow immediately from the proof of Lem.~\ref{LemTotVar}.  By Part (2) of Lemma~\ref{SubFraiser}, it is sufficient to prove that
\begin{align}
 \Big|\mathbb{E}\Big[\Big(|K_{t_{M } }|-|K_{0}|\Big)\chi\big(\mathcal{N}_{\tau}> M  \big)   \Big]-\widetilde{\mathbb{E}}^{ \prime}\Big[\Big(|K_{t_{M } }|-|K_{0}|\Big)\chi\big(\mathcal{N}_{\tau}> M  \big)   \Big]  \Big| = \mathit{O}\Big(\frac{\log(|k|)}{|k|^{1-\beta } }\Big).
\end{align}

  As in the proof of Lem~\ref{LemTotVar}, I will break down the difference into a sum of differences between the expectations $\mathbb{E}^{(n-1)},\,\mathbb{E}^{(n)}$ of $|K_{t_{M}}|-|K_{0}|$ conditioned on $\mathbf{N}_{\tau}>M$:
\begin{align}\label{Attention}
  \sum_{n=1}^{M}  \Big|\widetilde{\mathbb{E}}^{(n-1)}\Big[\Big(|K_{t_{M } }|-|K_{0}|\Big)\chi\big(\mathcal{N}_{\tau}> M  \big)   \Big]-\widetilde{\mathbb{E}}^{(n)}\Big[\Big(|K_{t_{M } }|-|K_{0}|\Big)\chi\big(\mathcal{N}_{\tau}> M  \big)   \Big]  \Big|.  
\end{align}
   The difference  $K_{t_{n}}-K_{t_{n-1}}$ is equal to $ v_{n}+\ell_{n}$, where $v_{n},\, \ell_{n}$ are the $n$th L\'evy and lattice jumps respectively.  For $m\in \Z$, let $\Gamma(m;\,p,v)$ be the closest element to $m$ in $I(p,v)$ (where ties can be assigned arbitrarily).  It is convenient to split $\ell_{n}$ into two parts $\Gamma(\ell_{n};\, K_{t_{n-1}},v_{n})$ and $\ell_{n}-\Gamma(\ell_{n}; K_{t_{n-1}},v_{n})$ that are treated separately.  Let $K_{t_{M},n}^{\prime}$ be the momentum at time $t_{M}$ if the $n$th lattice jump is replaced by $\Gamma(\ell_{n}; K_{t_{n-1}},v_{n})$:
    $$K_{t_{M},n}^{\prime}:=K_{t_{M}}+K_{t_{n-1}}-K_{t_{n} }+ \Gamma(\ell_{n};\,K_{t_{n}},v_{n})+v_{n}.$$
I have the following bound for the $n$th term of the sum~(\ref{Attention}) when $K_{t_{M}}$ is replaced by $K_{t_{M},n}^{\prime}$ in the expectation $\widetilde{\mathbb{E}}^{(n-1)}$:    
\begin{align*}
\Big|\widetilde{\mathbb{E}}^{(n-1)}\Big[&\Big(|K^{\prime}_{t_{M},n}|-|K_{0}|\Big)\chi\big(\mathbf{N}_{\tau}>M\big)   \Big]  -     \widetilde{\mathbb{E}}^{(n)}\Big[\Big(|K_{t_{M}}|-|K_{0}|\Big) \chi\big(\mathbf{N}_{\tau}>M\big)     \Big] \Big|\nonumber  \\  \leq &\int_{\R}dp\,\hat{q}_{n-1}'(p)\int_{\R}dv\frac{j(v)}{\mathcal{R}} \nonumber \\ & \times \sum_{i\in I(k,v)  } |v+i|\,\left|\sum_{ \substack{ m\in \Z \text{ with}\\ i=\Gamma(m;\, k_{n-1},v)} }\big|\kappa_{v}\big(p,\,m\big)\big|^{2} -\widetilde{\mathbb{P}}^{ \prime}\big[d\ell=i   \,\big|\,dL_{n}=v, K_{t_{n-1}}=p\big]     \right| \nonumber ,     
\end{align*}
where the density $\hat{q}_{n}'$ is defined as in the proof of Lem~\ref{LemTotVar}. Recall that the $\hat{q}_{n}'(p)$ is supported in the set $[\frac{1}{2}|p|,\frac{3}{2}|p|]$.  By Part (1) of Lem.~\ref{BadTerms}, the sum over $|\kappa_{v}\big(p,\,m\big)|^{2}$ for $m\notin I(p,v)$ is $\mathit{O}(|k|^{-2})$.  Thus, it is sufficient bound 
\begin{align}\label{Nukki}
\int_{\R}dp\, \hat{q}_{n-1}'(p)& \int_{|v|\leq |k|^{\frac{1}{2}}  }dv \frac{j(v)}{\mathcal{R}}\,\sum_{i\in I(k,v)  }  |v+i|\nonumber \\   & \times \Big| \big|\kappa_{v}\big(p,\,i\big)\big|^{2} -\widetilde{\mathbb{P}}^{(0)}\big[d\ell=i   \,\big|\,dL_{n}=v, K_{t_{n-1}}=p\big]     \Big|,
\end{align}
where the restriction of the integration in $v$ to the set $|v|\leq |k|^{\frac{1}{2}} $ will have a superpolynomially small error by the decay of $j(v)$.  I can apply the same estimates as in the proof of Lem~\ref{LemTotVar} to show that~(\ref{Nukki}) is $\mathit{O}\Big(\frac{\log(|k|)}{|k|}\Big)$.



To complete the proof, I must bound the error of having substituted $|K_{t_{M}}|$ by $|K^{\prime}_{t_{M},n}|$ in the expectation $\widetilde{\mathbb{E}}^{(n-1)}$ above:  
\begin{multline*}
\widetilde{\mathbb{E}}^{(n-1)}\Big[\Big| |K_{t_{M}}|- |K^{\prime}_{t_{M}}| \Big| \chi\big(\mathcal{N}_{\tau}>M\big)   \Big]\leq\int_{\R}dp\,\hat{q}_{n-1}'(p)\int_{\R}dv\,\frac{j(v)}{\mathcal{R}}\sum_{m\in \Z}\big|m-\Gamma(m;\,p,v)\big|\, \big|\kappa_{v}\big(p,\,m\big)\big|^{2}. 
 \end{multline*}
Again employing a cutoff $|v|\leq |k|^{\frac{1}{2}}$ and using that the support of $\hat{q}_{n}'(p)$ is over $|p|\in [\frac{1}{2}|k|,\,\frac{3}{2}|k|]$,  I can apply Part (2) of Lem.~\ref{BadTerms}, since $\big|m-\Gamma(m;\,p,v)\big|=\textup{dist}\big(m,\,I(k,v)    \big)$.  This yields a $c>0$ such that for all $k$,  
$$\widetilde{\mathbb{E}}^{(n)}\Big[\Big| |K_{t_{M}}|- |K^{\prime}_{t_{M}}| \Big| \chi\big(\mathcal{N}_{\tau}>M\big)   \Big]\leq \frac{4c}{|k|}.  $$

Therefore, a single term from the  sum~(\ref{Attention}) is $\mathit{O}\big(\frac{\log(|k|)}{|k|})$.  Since there are $M=\lfloor |k|^{\beta} \rfloor $ terms,  the result follows.

 \end{proof}

\begin{proof}[Proof of Prop.~\ref{TimeFlip}] \text{ } \\

\noindent Part (1):\hspace{.15cm} Similar to Part (2).

\vspace{.5cm}

\noindent Part (2):\hspace{.15cm} The Poisson times $t_{n}$ can be thought of as the sum of independent mean-$\mathcal{R}^{-1}$ exponentials $e_{m}$, $m\in \mathbb{N}$: $t_{n}=\sum_{m=1}^{n}e_{m}$.
It is  sufficient to prove the statement of (2) with $\tau$ replaced by $\mathcal{N}_{\tau}$ and $\nu$ replaced by $\frac{\nu}{\mathcal{R}} =\alpha $.  This can be seen with the short calculation:
$$\mathbb{E}\left[ e^{\gamma\frac{\tau}{|k|} }\right]= \mathbb{E}\Big[ \Big(\frac{\mathcal{R} }{\mathcal{R}-\frac{\gamma}{|k|}}\Big)^{\mathcal{N}_{\tau}} \Big]= \mathbb{E}\Big[ e^{-\mathcal{N}_{\tau}\log\big(1-\frac{\gamma  }{\mathcal{R} |k| }    \big) }  \Big]= \mathbb{E}\Big[ e^{\gamma \mathcal{R}^{-1} \mathcal{N}_{\tau}   }  \Big]+\mathit{O}\big(|k|^{-1}\big).   $$

  By the remark in  the proof of Part (1) of Lem.~\ref{Fraiser},
the expectation of $e^{\gamma\frac{\mathcal{N}_{\tau}}{|k|}}$ is uniformly bounded  for all $k\in \R $ and  $\gamma$ in $0\leq \gamma_{0} \leq \gamma_{0} $ for small enough $\gamma_{0}>0$.  Using the following expectation formula for a random variable $X$:
\begin{align*}
 \mathbb{E}\big[e^{\gamma X }\big]  =\int_{0}^{\infty}da\,\gamma\,e^{a\gamma }  \mathbb{P}\big[ X\geq a\big] , 
\end{align*}  
  I can bound the difference in the case $\gamma< \gamma_{0}$:
  \begin{align*} 
& \Big| \mathbb{E}\left[e^{\gamma \frac{\mathcal{N}_{\tau}}{|k|}}\right]   
 - \frac{\alpha }{\alpha-\gamma} \Big|  \\ & \leq  \int_{0}^{ \gamma_{0}^{-1}  \log(|k|) }da\, \gamma\,e^{a\gamma }\Big|\mathbb{P}\Big[\frac{\mathcal{N}_{\tau}}{|k|}\geq a\Big]-e^{-a\alpha }\Big|   
  + \int_{B^{-1} \log(|k|) }^{\infty}da\,\gamma\,e^{a\gamma }  \Big(\mathbb{P}\Big[\frac{\mathcal{N}_{\tau}}{|k|}\geq a\Big] +e^{-a\alpha }\Big) \ \\ & \leq   2C\gamma \,|k|^{2\zeta-1}+ 2\int_{\gamma_{0}^{-1} \log(|k|) }^{\infty}da\,\gamma\,e^{a(\gamma-\gamma_{0})  }  = \mathit{O}(|k|^{2\zeta-1}).
\end{align*}
The second inequality uses Part (1) of Lem.~\ref{Fraiser} for the first term.  For the second term of the second line, the probability $\mathbb{P}\big[\frac{\mathcal{N}_{\tau}}{|k|}\geq a\big]$ is smaller than  $e^{-a\gamma_{0}}$ by the proof of Part (1) of Lem.~\ref{Fraiser}.

\vspace{.5cm}

\noindent Part (3):\hspace{.15cm} The quantity $\int_{0}^{\tau}dr\chi\big(S(K_{r})\neq S(K_{0})\big)$ sums the amount of time  that $K_{r}$ spends up to time $\tau$ with the opposite sign of what it began with.  By the definition of $\tau$, $K_{r}$ cannot spend two consecutive Poisson times $t_{n-1},t_{n}\leq \tau$ with $S(K_{t_{n-1}})= S(K_{t_{n}})\neq S(K_{0})$, and thus $S(K_{t_{n}})\neq S(K_{0})$ implies $S(K_{t_{n-1}})\neq S(K_{t_{n}})$.    

I can rewrite $\int_{0}^{\tau}dr\chi\big(S(K_{r})\neq S(K_{0})\big)$  in terms of the exponential waiting times $e_{n}$  as follows:
\begin{align*}
\mathbb{E}\Big[\int_{0}^{\tau}dr\chi\big(S(K_{r})\neq S(K_{0})\big)\Big] = &\mathbb{E}\Big[\sum_{n=2}^{\mathcal{N}_{\tau}-1}\chi\big(S(K_{t_{n}})\neq S(K_{0})\big)e_{n}\Big] \\ = & \mathcal{R}^{-1}\mathbb{E}\Big[\sum_{n=2}^{\mathcal{N}_{\tau}-1}\chi\big(S(K_{t_{n}})\neq S(K_{0})\big)\Big]\\  \leq  & \mathcal{R}^{-1}\mathbb{E}\Big[\sum_{n=2}^{\mathcal{N}_{\tau}-1}\mathbb{P}\big[S(K_{t_{n}})\neq S(K_{t_{n-1}}) \big|\,K_{t_{n-2}}\big] \Big]\\ \leq & \frac{c}{\mathcal{R}|k| }\mathbb{E}\big[\mathcal{N}_{\tau}\big]= c \mathbb{E}\Big[\frac{\tau}{|k|}\Big]<  \frac{2c }{\nu },    
\end{align*}
where I do not count $n=1$, since  the sign of the momentum has been conditioned not to change at the first Poisson time.  The probabilities in the summand on the third line are smaller than a constant multiple $c>0$ of $|k|^{-1}$   
by  the same argument as in the Part (2) of Lem.~\ref{ProbFlip}. To apply the argument from Lem.~\ref{ProbFlip},  I  must use that $|K_{r}|\geq \frac{1}{2}|k|$ for $r\leq  \tau $.     The last inequality is for  $|k|$ large enough, since  $\mathbb{E}\big[\frac{\tau}{|k|}\big] $ approaches $\nu^{-1}$ at $k\rightarrow \infty$ by Part (1).

\vspace{.5cm}

\noindent Part (4):\hspace{.15cm} In the notation from Part (1), $e_{\mathcal{N}_{\tau}+1}$ is the waiting time from $\tau$ to the next Poisson time. The random variable $\tau$ is not a hitting time, since determining $\tau$ requires information from the time $\tau^{\prime}:=\tau+e_{\mathcal{N}_{\tau}+1}$,  but $\tau^{\prime}$ is a hitting time.  The results for Parts (1) and (2) will also hold for $\tau^{\prime}$, since it includes  only an extra contribution from an additional independent exponential random variable with mean $\mathcal{R}^{-1}$.  
    
I will apply a submartingale argument using the fact that $\mathcal{E}_{s}\approx |K_{s}|+\mathit{O}(|K_{s}|^{-1})$ when $|K_{s}|\gg 1$. As before, let  $M_{r},A_{r}$ be the Doob-Meyer components  of the submartingale $\mathcal{E}_{r}-\mathcal{E}_{0}$.  Recall that  $M_{r}$ is a sum of two uncorrelated submartingales $m_{r}$ and $M_{r}- m_{r}$ by Part (3) of Prop.~\ref{Basics}.  By an argument similar to Lem.~\ref{LittleEm}, I can  show $\langle m,m\rangle_{\tau}$ is close to $\sigma \tau$ and thus $\langle M-m,M-m\rangle_{\tau}$ is small, since $\frac{d}{dr}\langle m,m\rangle_{r}+\frac{d}{dr}\langle M-m,M-m\rangle_{r}\leq \sigma$.  To see that $\sigma \tau-\langle m,m\rangle_{\tau}$ is small, notice that
\begin{multline*}
\mathbb{E}\big[\sigma\tau-\langle m,m\rangle_{\tau} \big]=\mathbb{E}\Big[\sum_{n=0}^{\infty}\chi\big( \mathcal{N}_{\tau}> n\big)\big(\sigma-\mathcal{V}(K_{t_{n}})   \big) e_{n}   \Big] \\ \leq \mathbb{E}\Big[\sum_{n=0}^{\infty}\chi\big(\mathcal{N}_{\tau}> n-1\big)e_{n}\mathbb{E}\Big[\big(\sigma-\mathcal{V}(K_{t_{n}})\big)\chi \big(|K_{t_{n}}|\in \big[\frac{1}{2}|k|,\,\frac{3}{2}|k|\big]     \big)\,\Big|\,\mathcal{F}_{t_{n-1}} \Big]    \Big],
\end{multline*}
where the inequality follows because the event $\mathcal{N}_{\tau}\geq n$ is contained by the intersection of $\mathcal{N}_{\tau}\geq n-1$ and $|K_{t_{n}}|\in\big[\frac{1}{2}|k|,\,\frac{3}{2}|k|\big]$.   The exponential random variables $e_{n}$ are independent of everything else, so they can be left outside the conditional expectation.  By the same argument as in the proof of Lem.~\ref{LittleEm},
$$\mathbb{E}\Big[\big(\sigma-\mathcal{V}(K_{t_{n}})\big)\chi \Big(|K_{t_{n}}|\in \big[\frac{1}{2}|k|,\,\frac{3}{2}|k|\big]\Big) \,\Big|\,\mathcal{F}_{t_{n-1}} \Big]\leq  \frac{4\mathbf{a}}{\mathcal{R}} \| \langle j \rangle  \|_{\infty} \frac{\log(|k|) }{|k|  }.    $$   
Using the above and the result of Part (1), 
$$\mathbb{E}\big[\langle M-m,M-m\rangle_{\tau}\big]\leq \frac{4\mathbf{a}}{\mathcal{R}} \|\langle  j  \rangle \|_{\infty} \frac{\log(|k|) }{|k|  }\mathbb{E}\big[\tau \big]\leq \frac{ 4\nu \mathbf{a}}{\mathcal{R} } \| \langle j \rangle  \|_{\infty} \log(|k|).  $$

Now applying standard martingale arguments, I can bound $M_{r}-m_{r}$ and $A_{r}$.  
\begin{align*}
\mathbb{E}\Big[\sup_{0\leq r\leq \tau }\big| M_{r}- m_{r}+A_{r}   \big|^{2}\Big]  &  \leq 2\mathbb{E}\Big[\sup_{0\leq r\leq \tau^{\prime} }\big| M_{r}-m_{r}  \big|^{2}\Big] + 2\mathbb{E}\big[A_{\tau }^{2}\big] \\ &\leq 8\mathbb{E}\big[(M_{\tau^{\prime} }-m_{\tau^{\prime} })^{2}\big]+\frac{2\sigma^{2}}{|k|^{2}}\mathbb{E}\big[ \tau^{2}  \big]  \\  &= 8\sigma\mathbb{E}\big[\langle M-m,M-m\rangle_{\tau^{\prime}}  \big]+\frac{2\sigma^{2}}{|k|^{2}}\mathbb{E}\big[ \tau^{2}  \big]\\ & \leq \frac{4\nu \mathbf{a}}{\mathcal{R}} \| \langle j  \rangle \|_{\infty} \log(|k|) + 8\nu^{2} \sigma^{2}.             
\end{align*}
The second inequality is Doob's for the first term, and for the second term I use that $A_{\tau}\leq\frac{\sigma \tau}{|k|}$, which is a consequence of the identity $\sigma=\frac{d}{dr}\langle M,M\rangle_{r}+2\mathcal{E}_{r}\frac{d}{dr}A_{r}$ along with the fact that $\mathcal{E}_{r}\approx|K_{r}|\geq \frac{1}{2}|k|$ for $r\in [0,\tau)$.  By Chebyshev's inequality, it follows that the probability  $\sup_{0\leq r\leq \tau }\big| M_{r}- m_{r}+A_{r}   \big|^{2}\geq \frac{|k|^{2} }{4}$ will be $\mathit{O}\big(\frac{\log(|k|)}{|k|^{2}} \big)$. 

The final part of the story is $m_{\tau}$.  The $m_{r}$ martingale is well-behaved, since the absolute values of its jumps are less than those of the L\'evy process by Part (3) of Prop.~\ref{Basics} and the L\'evy jumps have exponential tails.  On the other hand, the moments of $\tau$ are finite by Part (1).  The probability  $|m_{\tau}|>t^{\frac{1}{2}+\delta}$, $\delta>0$ will decay superpolynomially,  since $\tau$ has order $\mathit{O}(|k|)$ for $|k|\gg 1$.

\vspace{.4cm}

\noindent Part (5):\hspace{.15cm} This follows by an extension of the analysis in Part (2) of Lemma~\ref{Fraiser}, which I will not include.

 \end{proof}

\subsection{Proof of Lemmas~\ref{LemMartApprox}-\ref{LemLindberg} }\label{SubSecMain}

 For $r,t\in \R_{+}$, $s\in [0,1]$, and $n\in \mathbb{N}$, let the processes $M_{r}$, $A_{r}$, $Y_{s}^{(t)}$, $\mathbf{m}_{s}^{(t)}$, and $\mathbf{N}_{r}$; the filtration $\widetilde{\mathcal{F}}_{r}$; and the times $\tau_{n}$  be defined as in Sect.~\ref{SecProofOutline}.  In the proofs of this section, I will treat all the times $\tau_{m}$ as if they occur through sign-flips rather than elaborating on the exceptional occurrences in which $\tau_{m}=\varsigma_{j}$ for some $j\in \mathbb{N}$ or $|K_{\tau_{m}}|\notin \big[\frac{1}{2}|K_{\tau_{m-1}}|,\,\frac{3}{2}|K_{\tau_{m-1}}|\big]$.  Although it is slightly incorrect to neglect those cases, the omission avoids some messy and unenlightening case considerations, and the estimates in the proofs below imply that those cases have negligible contribution.  Whereas the number of sign-flips will be on the order $\mathit{O}(t^{\frac{1}{2}})$, the expected number of $\varsigma_{j}$ over the time interval $[0,t]$ has the bound $ \sigma^{\frac{1}{2}}t^{\frac{1}{8}+\iota } $ for $0<\iota\ll 1$ by~(\ref{Numero}), and the event that $|K_{\tau_{m}}|\notin \big[\frac{1}{2}|K_{\tau_{m-1}}|,\,\frac{3}{2}|K_{\tau_{m-1}}|\big]$ for some $\tau_{m}\leq t$ is unlikely to occur  (see~(\ref{Itakeskus})).

 \begin{proof}[Proof of Lem.~\ref{LemLindberg}]  The proof of the Lindberg condition for $M_{r}$ follows from the analysis contained in (i) of the proof of Thm.~\ref{SubMartCLT}.  Also, the random variables $t^{-\frac{1}{2}}M_{st}$ for $t\in \R_{+}$ and $s\in[0,1]$ are uniformly integrable, since  
 $$  \sup_{s\in[0,1],\,t\in \R_{+} } \mathbb{E}\big[\big(t^{-\frac{1}{2}}M_{st}\big)^{2}    \big] = t^{-1}\mathbb{E}\big[\langle M_{t},M_{t}\rangle \big]\leq \sigma,     $$
 where the second inequality is by Part (2) of Prop.~\ref{Basics}.  
 
To see the Lindberg condition for $\mathbf{m}_{s}^{(t)}$, notice that   
\begin{align}\label{Willful}
\sup_{0\leq s\leq 1}\big| \mathbf{m}_{s}^{(t)}-\mathbf{m}_{s^{-}}^{(t)}     \big| =  &  t^{-\frac{5}{4}} \sup_{0\leq n\leq \mathbf{N}_{t}}  \Delta\tau_{n}| K_{\tau_{n}}| \nonumber \\ \leq &  t^{-\frac{1}{4}} \Big(\sup_{0\leq r\leq t}t^{-\frac{1}{2}}|K_{r}|\Big)^{2}\Big( \sup_{1\leq n\leq \mathbf{N}_{t}}\frac{\Delta\tau_{n}}{|K_{\tau_{n}}|} \Big).   
\end{align} 
The random variables $t^{-\frac{1}{8}} \big(\sup_{0\leq r\leq t}t^{-\frac{1}{2}}|K_{r}|\big)^{2}$ converge to zero as $t\rightarrow \infty$, since $t^{-\frac{1}{2}}|K_{st}|$ converges to a Brownian motion with respect to  the uniform metric by Thm.~\ref{SubMartCLT}.  Moreover, by Part (2) of Prop.~\ref{TimeFlip}, there is a $C>0$ such that for small enough $\gamma>0$, the expectations
$\mathbb{E}\big[e^{\gamma\frac{\Delta\tau_{n}}{|K_{\tau_{n}}|}}\,\big|\,\widetilde{\mathcal{F}}_{\tau_{n}^{-}}\big]$ are smaller than $C$ for all $n$ and $t\gg 1$.   With Chebyshev's inequality, $\mathbb{P}\big[\frac{\Delta\tau_{n}}{|K_{\tau_{n}}|}>t^{\frac{1}{8}}\,\big|\,\widetilde{\mathcal{F}}_{\tau_{n}^{-}}\big]\leq Ce^{-\gamma t^{\frac{1}{8}} }$.  Applying this with an inductive argument using conditional expectations, then
\begin{align*}
\mathbb{P}\Big[ \sup_{1\leq n\leq \mathbf{N}_{t} } \frac{\Delta\tau_{n}}{|K_{\tau_{n}}|}\leq t^{\frac{1}{8}}  \Big]\geq \mathbb{E}\left[\Big(1-Ce^{-\gamma t^{\frac{1}{8}}}\Big)^{\mathbf{N}_{t}}  \right] \geq \mathbb{E}\left[\Big(1-Ce^{-\gamma t^{\frac{1}{8}}}\Big)^{\mathcal{N}_{t}}\right]\longrightarrow 1.     
\end{align*}
The second inequality follows because the number of sign-flips will be less than the number of Poisson times, and the number of Poisson times is typically $\propto t$.    
 
 The random variables $\mathbf{m}_{s}^{(t)}$ for  $s\in [0,1]$, $t\in \R_{+}$ are uniformly integrable, since the variances are uniformly bounded: 
 \begin{align}\label{Sis}  \sup_{s\in[0,1],\,t\in \R_{+} } \mathbb{E}\left[\big(\mathbf{m}_{s}^{(t)}\big)^{2}    \right]= &\mathbb{E}\Big[ t^{-\frac{5}{2}}\sum_{n=1}^{\mathbf{N}_{t} }|K_{\tau_{n}}|^{2}|\Delta\tau_{n}|^{2}    \Big]\nonumber  \\ \leq &\frac{8}{\nu} \mathbb{E}\Big[ t^{-\frac{5}{2}}\sum_{n=1}^{\mathbf{N}_{t}-1 }|K_{\tau_{n}}|^{3} \Delta\tau_{n} \Big]+\frac{4}{\nu} \mathbb{E}\Big[ t^{-\frac{5}{2}}\sup_{0\leq r\leq t}|K_{r}|^{4} \Big] \nonumber  \\  \leq & \frac{8}{\nu} \mathbb{E}\Big[ t^{-\frac{3}{2}}\sup_{0\leq r\leq t}|K_{r}|^{3}  \Big]+\mathit{O}\big(t^{-\frac{1}{2}}\big)=\mathit{O}(1) .
 \end{align}
 The second inequality uses that $ \sum_{n=1}^{\mathbf{N}_{t}-1 }\Delta\tau_{n}\leq t$.  For the first inequality, I use nested conditional expectations twice to replace the factor $|\Delta\tau_{n}|^{2}$ by $\frac{2}{\nu}\Delta\tau_{n}|K_{\tau_{n}}|$ in the sum.  In doing so, I invoke Part (1) of Prop.~\ref{TimeFlip} to get the approximations
 $$ \mathbb{E}\big[|\Delta\tau_{n}|^{2}\,\big|\,\widetilde{\mathcal{F}}_{\tau_{n}^{-}}\big]    \approx  2\nu^{-2}|K_{\tau_{n}}|^{2}\hspace{1cm}\text{and}\hspace{1cm}   |K_{\tau_{n}}|^{2}\approx \nu|K_{\tau_{n}}|\mathbb{E}\big[|\Delta\tau_{n}|\,\big|\,\widetilde{\mathcal{F}}_{\tau_{n}^{-}}\big].   $$  
For each approximation, I multiply the upper bound by a factor of $2$ to cover the error of the approximation.   Uniform bounds of $\mathbb{E}\big[ \sup_{0\leq r\leq t} t^{-\frac{m}{2}}|K_{r}|^{m} \big] $, $m=3,4$ for large $t$ are obtained by using that $|K_{r}|\leq \mathcal{E}_{r}$ and applying Doob's maximal inequality to the submartingale $\mathcal{E}_{r}$.  Bounds on the fourth moments of $\mathcal{E}_{r}$ are contained in Part (1) of Prop.~\ref{Basics}.

 \end{proof}

\begin{lemma} \label{Crave}
Fix some $0<\iota<\frac{3}{16}$ in the definition for the times $\tau_{n}$, $n\in \mathbb{N}$.  As $t\rightarrow \infty$, there is convergence in probability 
  $$\sup_{0\leq s\leq 1}\Big| Y_{s}^{(t)}- t^{-\frac{5}{4}}\sum_{n=1 }^{\mathbf{N}_{st}-1} \int_{\tau_{n}}^{\tau_{n}+\Delta \tau_{n}  }dr K_{r}\Big| \Longrightarrow 0 .    $$

 \end{lemma}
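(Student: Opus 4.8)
The plan is to realize the discrepancy $Y^{(t)}_s-t^{-5/4}\sum_{n=1}^{\mathbf{N}_{st}-1}\int_{\tau_n}^{\tau_n+\Delta\tau_n}dr\,K_r$ as a sum of integrals of $K_r$ over a short list of ``exceptional'' time intervals and to show that each batch is $\mathit{o}(t^{5/4})$ in probability, uniformly in $s\in[0,1]$. Since $\tau_{n+1}=\tau_n+\Delta\tau_n$ for every index $n$ except the at-most-one ``re-entry'' index per incursion (an $n$ with $\tau_n<\varpi_j\le\tau_n+\Delta\tau_n$, where instead $\tau_{n+1}=\varsigma_j$), a telescoping identity gives, before dividing by $t^{5/4}$,
\[
\int_0^{st}dr\,K_r-\sum_{n=1}^{\mathbf{N}_{st}-1}\int_{\tau_n}^{\tau_n+\Delta\tau_n}dr\,K_r=\int_0^{\tau_1}dr\,K_r+\int_{\tau_{\mathbf{N}_{st}}}^{st}dr\,K_r+\sum_{\substack{n\in\mathcal{O}\\ n\le\mathbf{N}_{st}-1}}\int_{\tau_n+\Delta\tau_n}^{\varsigma_{j(n)}}dr\,K_r ,
\]
where $\mathcal{O}$ is the random set of re-entry indices and the last integrals are read with the obvious sign when their endpoints are out of order. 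The key structural step, which I expect to be the main obstacle because of the recursive and not-fully-adapted definition of the $\tau_m$, is to check that apart from the head integral $\int_0^{\tau_1}$ and the final tail integral $\int_{\tau_{\mathbf{N}_{st}}}^{st}$, every exceptional interval lies in the low-momentum set $E_t:=\{r\in[0,t]:|K_r|<3t^{3/8-\iota}\}$: for a re-entry index the condition $\tau_n+\Delta\tau_n\ge\varpi_j$ forces $|K_{\tau_n}|<2t^{3/8-\iota}$ (otherwise $\tfrac12|K_{\tau_n}|>t^{3/8-\iota}$ and $|K_r|$ would already have exited $[\tfrac12|K_{\tau_n}|,\tfrac32|K_{\tau_n}|]$ strictly before reaching the low region), hence $|K_r|<3t^{3/8-\iota}$ on all of $[\tau_n,\tau_n+\Delta\tau_n)$, while the complementary portion $[\tau_n+\Delta\tau_n,\varsigma_{j(n)})$ lies inside the incursion $[\varpi_j,\varsigma_j)$ where $|K_r|<2t^{3/8-\iota}$; and for $t$ large enough that $|K_0|\le t^{3/8-\iota}$ almost surely (which holds since $q_0$ has finite second moment) one also has $\varpi_1=0$ and $|K_r|<2t^{3/8-\iota}$ on the head interval $[0,\tau_1)=[0,\varsigma_1)$.

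Granting the structural step, the bulk of the exceptional contribution is at most $t^{-5/4}\cdot 3t^{3/8-\iota}\,|E_t|$ together with the parts of re-entry intervals that poke beyond the enclosing incursion or beyond $st$. Since $|K_r|$ and $\mathcal{E}_r$ differ by a bounded amount, $|E_t|\le t\,T_{3,t}$ with $T_{\epsilon,t}$ as in Lem.~\ref{EnergyLemma}; applying that lemma with $\varrho_1=3/8-\iota$, $\varrho_2=\iota$, $\varrho_3=1/8$ (admissible since $\varrho_1+\varrho_2+\varrho_3=\tfrac12$, $\varrho_2>0$, and $0<\iota<3/16<3/8$) gives $|E_t|<\delta\,t^{7/8}$ with probability tending to $1$ for every fixed $\delta>0$, so $t^{-5/4}\cdot 3t^{3/8-\iota}\,|E_t|<3\delta\,t^{-\iota}\to 0$ in probability. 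The ``poke-out'' parts have length at most $\Delta\tau_n$ with $|K_{\tau_n}|<2t^{3/8-\iota}$; by Parts (1)--(2) of Prop.~\ref{TimeFlip} each such $\Delta\tau_n$ is $\mathit{O}(t^{3/8-\iota+\zeta})$ with high probability (via Chebyshev on the uniform exponential moment and a union bound over the at most $\mathbf{N}_t$ indices), and the number of incursions in $[0,t]$ is $\mathit{O}(t^{1/8+\iota})$ in expectation by the upcrossing bound~(\ref{Numero}), so their total contribution to $Y^{(t)}_s$ is $\mathit{O}\big(t^{-5/4}\,t^{1/8+\iota}\,t^{3/8-\iota+\zeta}\,t^{3/8-\iota}\big)=\mathit{O}(t^{-3/8-\iota+\zeta})\to 0$.

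For the head, on $\{|K_0|\le t^{3/8-\iota}\}$ optional sampling of the submartingale $\mathcal{E}_r^2-\mathcal{E}_0^2$ (whose increasing part is $\sigma r$, and which jumps only through the bounded L\'evy increments, since reflections change $|K_r|$ by at most an $\mathit{O}(1)$ amount) gives $\mathbb{E}[\varsigma_1]\lesssim\sigma^{-1}t^{3/4-2\iota}$, so $t^{-5/4}\int_0^{\tau_1}dr\,|K_r|\le 2t^{-7/8-\iota}\varsigma_1$ is $\mathit{O}(t^{-1/8-3\iota})$ in $L^1$ and hence $\to 0$ in probability (for those $s$ with $st<\tau_1$ the sum is empty and the same bound applies). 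For the tail, $st-\tau_{\mathbf{N}_{st}}<\Delta\tau_{\mathbf{N}_{st}}$ and $|K_r|\le\tfrac32|K_{\tau_{\mathbf{N}_{st}}}|$ on $[\tau_{\mathbf{N}_{st}},st)$, so the tail is at most $t^{-1/4}\big(\sup_{0\le r\le t}t^{-1/2}|K_r|\big)^2\sup_{1\le n\le\mathbf{N}_t}\Delta\tau_n/|K_{\tau_n}|$, which tends to zero in probability by exactly the argument used for the Lindberg condition around~(\ref{Willful}): $\sup_{0\le r\le t}t^{-1/2}|K_r|=\mathit{O}(1)$ in probability by Thm.~\ref{SubMartCLT} (or Doob's inequality with Prop.~\ref{Basics}(1)), while $\sup_{n\le\mathbf{N}_t}\Delta\tau_n/|K_{\tau_n}|=\mathit{O}(t^{1/8})$ in probability by Prop.~\ref{TimeFlip}(2) and the inductive conditioning over $n\le\mathcal{N}_t$ used in the proof of Lem.~\ref{LemLindberg}. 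Collecting the three batches and letting $\delta\downarrow 0$ yields $\sup_{0\le s\le 1}\big|Y^{(t)}_s-t^{-5/4}\sum_{n=1}^{\mathbf{N}_{st}-1}\int_{\tau_n}^{\tau_n+\Delta\tau_n}dr\,K_r\big|\Longrightarrow 0$.
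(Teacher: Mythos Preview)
Your proof is correct and follows essentially the same approach as the paper's: both rest on (a) the low-momentum contribution being negligible via Lem.~\ref{EnergyLemma}, and (b) a single-interval bound of the form~(\ref{Willful}). The paper is terser: it first shows via~(\ref{LowEnergy}) that $t^{-5/4}\int_0^{st}dr\,|K_r|\chi(|K_r|\le 2t^{3/8-\iota})\to 0$ uniformly in $s$, concludes that the incursion contribution to $Y^{(t)}_s$ is negligible so $Y^{(t)}_s\approx t^{-5/4}\sum_{n=1}^{\mathbf{N}_{st}}\int_{\tau_n}^{(\tau_n+\Delta\tau_n)\wedge st}dr\,K_r$, and then removes the $\wedge\,st$ and passes from $\mathbf{N}_{st}$ to $\mathbf{N}_{st}-1$ using the Lindberg-type bound~(\ref{Willful}). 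Your explicit telescoping into head, re-entry gaps, and tail makes the same accounting transparent; your choice of exponents $(\varrho_1,\varrho_2,\varrho_3)=(\tfrac38-\iota,\iota,\tfrac18)$ in Lem.~\ref{EnergyLemma} is a harmless variant of the paper's $(\tfrac38-\iota,2\iota,\tfrac18-\iota)$.

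Two small imprecisions worth tightening. First, the ``strictly before'' clause in your structural step is not literally correct: one can have $\tau_n+\Delta\tau_n=\varpi_j$ with $|K_{\tau_n}|$ large (the jump that exits $[\tfrac12|K_{\tau_n}|,\tfrac32|K_{\tau_n}|]$ is the same jump that enters the low region). But then the gap is exactly the incursion $[\varpi_j,\varsigma_j)$, so the conclusion stands; only in the strict case $\tau_n+\Delta\tau_n>\varpi_j$ do you need (and have) $|K_{\tau_n}|<2t^{3/8-\iota}$, and that covers the overlap case as well. Consequently your separate ``poke-out'' estimate is redundant---the overlap interval already lies in $E_t$---though not wrong. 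Second, ``$|K_0|\le t^{3/8-\iota}$ almost surely'' should read ``with probability tending to $1$''; the finite second moment of $q_0$ gives this via Chebyshev, which is all you use.
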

 
\begin{proof} 
 
 Since my estimates will depend on $|K_{r}|\gg 1$, I first show that the contribution to $Y_{s}^{(t)}$ that  accumulates when $|K_{r}|\ll |k|^{\frac{3}{8}}$ will be negligible.  Let $\delta>0$, then
\begin{align}\label{LowEnergy}
\mathbb{P}\Big[ \sup_{0\leq s\leq 1} t^{-\frac{5}{4}}\int_{0}^{st}dr |K_{r}| \chi\left( |K_{r}|\leq 2  t^{\frac{3}{8}-\iota} \right)  >\delta        \Big] &  \leq\mathbb{P}\Big[ \int_{0}^{1}ds\chi\left(t^{-\frac{3}{8}+\iota }|K_{st}|\leq 2 \right)> \delta t^{-\frac{1}{8}+\iota}    \Big]\nonumber   \\    &\leq 128\, \sigma^{-\frac{1}{2}}\delta^{-1}t^{-2\iota}, 
\end{align}
where the last inequality is for $t$ large enough by Lem.~\ref{EnergyLemma} with $\varsigma_{1}=\frac{3}{8}-\iota$, $\varsigma_{2}=2\iota$,  $\varsigma_{3}=\frac{1}{8}-\iota$, and $\epsilon=2$.  Hence, for any $\delta$, I can pick  $t$ large enough to make~(\ref{LowEnergy})  arbitrarily small.  It follows that  
 $$\sup_{0\leq s\leq 1}\Big| Y_{s}^{(t)}- t^{-\frac{5}{4}}\sum_{n=1 }^{\mathbf{N}_{st}} \int_{\tau_{n}}^{(\tau_{n}+\Delta \tau_{n})\wedge st  }dr K_{r}\Big| \Longrightarrow 0 $$
as  $t\rightarrow \infty$, since the contribution during incursions is negligible.    The upper bound $(\tau_{n}+\Delta \tau_{n})\wedge st$ for the integrals can be replaced by $ \tau_{n}+\Delta \tau_{n}$, since 
 $$  \sup_{0\leq n\leq \mathbf{N}_{t}} t^{-\frac{5}{4}}\int_{\tau_{n}}^{\tau_{n}+\Delta\tau_{n}  }dr |K_{r}|\leq   t^{-\frac{1}{4}} \Big(\sup_{0\leq r\leq t}t^{-\frac{1}{2}}|K_{r}|\Big)^{2}\Big( \sup_{1\leq n\leq \mathbf{N}_{t}}\frac{\Delta\tau_{n}}{|K_{\tau_{n}}|} \Big). $$
  The right side goes to zero for large $t$ by the discussion following~(\ref{Willful}).   For the same reason, I can replace the upper bound of the sum by $\mathbf{N}_{st}-1$.

 \end{proof}

For the remainder of the section, I will set $\iota=0$ in the definition for the excursion intervals $[\varsigma_{n-1},\varpi_{n})$, and thus indirectly for the definition of the times $\tau_{n}$.  The rate of decay computed for various expressions in the proofs would be a little slower if I kept $\iota>0$, but $\iota$ can be chosen arbitrarily small anyway. The following proof will rely heavily on applications of Lem.~\ref{Fraiser} and Prop.~\ref{TimeFlip}.

\begin{proof}[Proof of Lem.~\ref{LemMartApprox}] The proof of Part (1) follows from similar analysis as in (i) of the proof for Lem.~\ref{SubMartCLT}.   For Part (2), I can approximate $Y_{s}^{(t)}$ by the expression $t^{-\frac{5}{4}}\sum_{n=1 }^{\mathbf{N}_{st}-1} \int_{\tau_{n}}^{\tau_{n}+\Delta \tau_{n}  }dr K_{r}$ for $t\gg 1$ by Lem.~\ref{Crave}.  The result follows by showing the convergences in probability (i)-(iii) below.  The differences in (i) and (ii) involve coarse-graining approximations in which the random time intervals $\Delta\tau_{m}$ are parsed into shorter intervals $\Delta_{m,n}$ with duration on the order $\mathit{O}\big(|K_{\tau_{m}}|^{\beta}\big)$ for some $\beta$ chosen from  the interval $(0,\frac{1}{3})$.  I define the following notations:
\begin{eqnarray*}
\omega(m)&:=& \left\lfloor |K_{\tau_{m}}|^{\beta }\right\rfloor, \\
L_{m}&:=&   \left\lfloor \frac{\mathcal{N}_{\tau_{m}+\Delta\tau_{m} }, -\mathcal{N}_{\tau_{m}} }{\omega(m) }\right\rfloor , \\
\Gamma_{m,n}&:=&t_{\mathcal{N}_{\tau_{m}}+nL_{m}},\\
\Delta_{m,n}&:=&\Gamma_{m,n+1} -\Gamma_{m,n}.
\end{eqnarray*}
Moreover, I define   $\underline{\Gamma_{m,n}}$  
as the Poisson time preceding  $\Gamma_{m,n}$ when $n\geq 1$ and $\underline{\Gamma_{m,0}}=\tau_{m}$.  Also, I will abuse notation by identifying  $\mathcal{F}_{\underline{\Gamma_{m,0}}}$ with $\widetilde{\mathcal{F}}_{\tau_{m}^{-}}$.

 I will show the following convergences to zero in probability:
\begin{enumerate}[(i).]
\item $ \sup_{0\leq s\leq 1}\Big| t^{-\frac{5}{4}}\sum_{m=1}^{\mathbf{N}_{st}-1} \Big(\int_{\tau_{m}}^{\tau_{m}+\Delta\tau_{m} }drK_{r}- S(K_{\tau_{m}})\sum_{n=0}^{L_{m}-1}\Delta_{m,n}|K_{\Gamma_{m,n} } |\Big)\Big|\Longrightarrow 0, $

\item $ \sup_{0\leq s\leq 1}\Big|t^{-\frac{5}{4}}\sum_{m=1   }^{\mathbf{N}_{st}-1}\Big( S(K_{\tau_{m}})\sum_{n=0}^{L_{m}-1}\Delta_{m,n}|K_{\Gamma_{m,n} }|-  K_{\tau_{m}}\Delta\tau_{m}  \Big)   \Big| \Longrightarrow 0, $

\item $  \sup_{0\leq s\leq 1}\Big| t^{-\frac{5}{4}}\sum_{m=1 }^{\mathbf{N}_{st}-1} K_{\tau_{m}}\,\mathbb{E}\big[ \Delta\tau_{m} \,|\,\widetilde{\mathcal{F}}_{\tau_{m}^{-}}  \big] \Big| \Longrightarrow 0.  $
\end{enumerate}
The expression $t^{-\frac{5}{4}}\sum_{m=1   }^{\mathbf{N}_{st}-1}K_{\tau_{m}}(\Delta\tau_{m} -\mathbb{E}\big[ \Delta\tau_{m} \,|\,\widetilde{\mathcal{F}}_{\tau_{m}^{-}}  \big] ) $ is obtained by the  right term in  (ii) minus the expression in (iii) and differs from the expression for $\mathbf{m}_{s}^{(t)}$   by the substitution in the upper summand of $\mathbf{N}_{st}$  with $\mathbf{N}_{st}-1$.  By the Lindberg condition in Lemma~\ref{LemLindberg}, the difference is negligible for $t\gg 1$.

\vspace{.4cm}

\noindent (i).\hspace{.15cm}  Over an interval $r\in [\tau_{m},\tau_{m}+\Delta \tau_{m})$, the process $K_{r}$ has the same sign except for isolated Poisson times at which it jumps to the opposite sign and back again at the next Poisson time.  My first step will be to bound the net effect of these rogue sign changes. The first inequality below follows because $|K_{r}|\in \big[\frac{1}{2}|K_{\tau_{n}}|,\,\frac{3}{2}|K_{\tau_{n}}|\big]$: 
\begin{align}\label{KindaBlue}
\mathbb{E}\Big[\sup_{0\leq s\leq 1} t^{-\frac{5}{4}}&\sum_{m=1}^{\mathbf{N}_{st}-1}\int_{\tau_{m}}^{\tau_{m}+\Delta \tau_{m}}dr\chi\big(K_{r}\neq K_{\tau_{m}}\big)|K_{r}|\Big]\nonumber  \\ &\leq \mathbb{E}\Big[ t^{-\frac{5}{4}}\sum_{m=1}^{\mathbf{N}_{t}-1}\frac{3}{2}|K_{\tau_{m}}|\mathbb{E}\Big[\int_{\tau_{m}}^{\tau_{m}+\Delta \tau_{m}}dr\chi\big(K_{r}\neq K_{\tau_{m}}\big)\,\Big|\,\widetilde{\mathcal{F}}_{\tau_{m}^{-}}\Big]\Big]\nonumber   \\ &\leq C\mathbb{E}\Big[ t^{-\frac{5}{4}}\sum_{m=1}^{\mathbf{N}_{t}-1}\frac{3}{2}|K_{\tau_{m}}| \Big]\nonumber \\ &\leq  3C\nu t^{-\frac{5}{4}}\mathbb{E}\Big[\sum_{m=1}^{\mathbf{N}_{t}-1}\mathbb{E}\big[\Delta\tau_{m}   \,\big|\,\widetilde{\mathcal{F}}_{\tau_{m}^{-}}\big]  \Big]\nonumber  \\ & \leq 3C\nu t^{-\frac{1}{4}}\longrightarrow 0.
\end{align}
The second inequality follows by Part (3) of Prop.~\ref{TimeFlip}.  For the third inequality, I have used that $\mathbb{E}\big[\Delta\tau_{m}   \,\big|\,\widetilde{\mathcal{F}}_{\tau_{m}^{-}}\big]\approx \nu^{-1}|K_{\tau_{m}} |$ for large  $t$ by Part (1) of Prop.~\ref{TimeFlip}, and I doubled the bound to cover the error.  The last inequality holds since the  $\Delta\tau_{m}$'s sum up to less than $t$. Hence, I can take the sign of $K_{r}$ to be constant over the time intervals $[\tau_{m},\tau_{m}+\Delta\tau_{m})$.

It is useful to approximate the expressions in (i) by replacing $K_{r}$ with $S(K_{r})\mathcal{E}_{r}$, since $\mathcal{E}_{r}$ is a submartingale with convenient analytic properties.  By~(\ref{KindaBlue}) and because $|K_{r}|=\mathcal{E}_{r}+\mathit{O}(|K_{r}|^{-1})$, I have that    
$$\sup_{0\leq s\leq 1}\Big| t^{-\frac{5}{4}}\sum_{m=1}^{\mathbf{N}_{st}-1}\Big(\int_{\tau_{m}}^{\tau_{m}+\Delta\tau_{m}}drK_{r}- S(K_{\tau_{m}})\int_{\tau_{m}}^{\tau_{m}+\Delta\tau_{m}}dr\,\mathcal{E}_{r} \Big)   \Big|\longrightarrow 0.  $$
I can also replace $|K_{r}|$ with $\mathcal{E}_{r}$ in the expression $t^{-\frac{5}{4}}\sum_{m=1   }^{\mathbf{N}_{st}-1}S(K_{\tau_{m}}) \sum_{n=0}^{L_{m}-1}\Delta_{m,n}\big|K_{\Gamma_{m,n} }\big| $, and the difference (i) reduces to
\begin{multline}\label{Mink}
\Big| t^{-\frac{5}{4}}\sum_{m=1}^{\mathbf{N}_{st}-1}S(K_{\tau_{m}})\int_{\tau_{m}}^{\tau_{m}+\Delta\tau_{m}}dr\,\mathcal{E}_{r} - t^{-\frac{5}{4}}\sum_{m=1   }^{\mathbf{N}_{st}-1}S(K_{\tau_{m}}) \sum_{n=0}^{L_{m}-1}\Delta_{m,n}\mathcal{E}_{\Gamma_{m,n}} \Big| \\ \leq  t^{-\frac{5}{4}}  \Big|\sum_{ m=1 }^{\mathbf{N}_{st}-1}\sum_{n=0}^{L_{m}-1}
\int_{\Gamma_{m,n} }^{\Gamma_{m,n+1} }dr\big(\mathcal{E}_{r}-\mathcal{E}_{\Gamma_{m,n} }\big)        \Big|  +t^{-\frac{5}{4}}\Big(\sup_{0\leq r\leq t}\mathcal{E}_{r}\Big)  \sum_{m=1}^{\mathbf{N}_{t}-1}\Delta_{m,L_{m}} ,
\end{multline}
where the second term bounds the under-counting by $\Delta\tau_{m}-(\Gamma_{m,L_{m}}-\tau_{m})\leq \Delta_{m,L_{m}}$ for the length of the interval $[\tau_{m},\,\tau_{m}+\Delta\tau_{m})$. 

Next, I will show that the second term on the second line of~(\ref{Mink}) goes to zero.  The intervals between successive Poisson times have mean $\mathcal{R}^{-1}$ and are independent of everything else, which gives the  equality below:   
\begin{align}\label{Mink2}
\mathbb{E}\Big[\sum_{m=1 }^{\mathbf{N}_{t}-1}\Delta_{m,L_{m}}\Big] &= \frac{1}{\mathcal{R}}\mathbb{E}\Big[\sum_{m=1 }^{\mathbf{N}_{t}-1}\omega(m)\Big]\leq \frac{t^{-\frac{3}{8}(1-\beta)} }{\mathcal{R}} \mathbb{E}\Big[\sum_{m=1}^{\mathbf{N}_{t}-1}|K_{\tau_{m}}|\Big]\nonumber \\ &\leq   \frac{2\nu t^{-\frac{3}{8}(1-\beta)} }{\mathcal{R}} \mathbb{E}\Big[\sum_{m=1 }^{\mathbf{N}_{t}-1}\mathbb{E}\big[\Delta\tau_{m}\,\big|\,\widetilde{\mathcal{F}}_{\tau_{m}^{-}}  \big] \Big]\leq \frac{2\nu t^{\frac{5}{8}+\frac{3}{8}\beta} }{\mathcal{R}} .  
\end{align}
The first inequality follows because $\omega(m)\leq |K_{\tau_{m}}|^{\beta}$, and $|K_{\tau_{m}}|\geq t^{\frac{3}{8}}$.  By  Part (1) of Prop.~\ref{TimeFlip}, I have that $\mathbb{E}\big[\Delta\tau_{m}\,|\,K_{\tau_{m}}\big]\geq  \frac{1}{2\nu} |K_{\tau_{m}}| $ for large enough $t$.  The  last inequality follows by removing the nested conditional expectations and $\sum_{m=1 }^{\mathbf{N}_{t}-1}\Delta\tau_{m}<t$.  Recall that $\beta\in(0,\frac{1}{3})$, so I have $ \frac{5}{8}+\frac{3}{8}\beta <\frac{3}{4}$.  For a  $\delta \in (0,\frac{1}{8}-\frac{3}{8}\beta)$, then~(\ref{Mink2}) implies that $t^{-\frac{3}{4}+\delta}\sum_{m=1 }^{\mathbf{N}_{t}-1}\Delta_{m}$ converges to zero in probability for large $t$, and Thm.~\ref{SubMartCLT} implies that $t^{-\frac{1}{2}-\delta}\sup_{0\leq r\leq t}\mathcal{E}_{r}$ converges to zero in probability.  Hence, the product goes to zero.

  To bound the first term on the second line of~(\ref{Mink}), I use the Doob-Meyer decomposition $\mathcal{E}_{s}=M_{s}+A_{s}$ and the triangle inequality to get
\begin{align}\label{Instanbul}
t^{-\frac{5}{4}}\mathbb{E}\Big[  \sup_{0\leq s\leq 1}  \Big|& \sum_{ m=1 }^{\mathbf{N}_{st}-1}\sum_{n=0}^{L_{m}-1}
\int_{\Gamma_{m,n}  }^{\Gamma_{m,n+1} }dr\big(\mathcal{E}_{r}-\mathcal{E}_{\Gamma_{m,n}}\big) \Big| \Big]\nonumber  \\  \leq  & t^{-\frac{5}{4}}\mathbb{E}\Big[  \sum_{ m=1}^{\mathbf{N}_{t}-1}\sum_{n=0}^{L_{m}-1}
\mathbb{E}\Big[\Big|\int_{\Gamma_{m,n} }^{\Gamma_{m,n+1} }dr\big(M_{r}-M_{\Gamma_{m,n} }\big)\Big| \,  \Big|\,\mathcal{F}_{\Gamma_{m,n}   }  \Big]\Big] \nonumber \\ &+t^{-\frac{5}{4}}\mathbb{E}\Big[ \sum_{ m=1  }^{\mathbf{N}_{t}-1}\sum_{n=0}^{L_{m}-1}
\int_{\Gamma_{m,n} }^{\Gamma_{m,n+1} }dr\big(A_{r}-A_{\Gamma_{m,n} }\big) \Big],
\end{align}
where  I have inserted nested conditional expectations for the martingale term.     
For the $A_{r}$ term, recall that $\mathcal{E}_{r}^{2}$ is a submartingale with increasing part $\mathcal{E}_{0}^{2}+\sigma t$ by Part (1) of Prop.~\ref{Basics}.   It follows that $\frac{d}{dr}A_{r}\leq \frac{1}{2\mathcal{E}_{r}}(\sigma-\frac{d}{dr}\langle M,M\rangle_{r})\leq \sigma t^{-\frac{3}{8}}$ for the ``high energy" part of the trajectory  $\mathcal{E}_{r}\approx |K_{r}|\geq t^{\frac{3}{8}}$.  The last line in~(\ref{Instanbul}) is therefore less than $\sigma t^{-\frac{5}{8}}$.

 For a single pair $m,n$ from the sum on the second line of~(\ref{Instanbul}), the following inequalities hold:
\begin{align*}
\mathbb{E}\Big[\Big|\int_{\Gamma_{m,n} }^{\Gamma_{m,n+1} }dr\big(M_{r}-M_{\Gamma_{m,n} }\big)\Big| \,  \Big|\,\mathcal{F}_{\Gamma_{m,n}   }  \Big] &\leq \mathbb{E}\Big[\Big| \int_{\Gamma_{m,n} }^{\Gamma_{m,n+1} }dr \big(M_{r}-M_{\Gamma_{m,n} }\big)   \Big|^{2}\,  \Big|\,\mathcal{F}_{\Gamma_{m,n}  } \Big]^{\frac{1}{2}} \\ &= \mathbb{E}\Big[\Big|\int_{\Gamma_{m,n} }^{\Gamma_{m,n+1} }dM_{r}\big(\Gamma_{m,n+1} -r\big)   \Big|^{2}\,  \Big|\,\mathcal{F}_{\Gamma_{m,n}}    \Big]^{\frac{1}{2}}\\ &=  \mathbb{E}\Big[\int_{\Gamma_{m,n} }^{\Gamma_{m,n+1} }dr\frac{d\langle M,M\rangle_{r}}{dr} \big(\Gamma_{m,n+1} -r\big)^{2} \,  \Big|\,\mathcal{F}_{\Gamma_{m,n}  }  \Big]^{\frac{1}{2}} \\ & \leq ( \frac{\sigma}{3})^{\frac{1}{2}} \Delta_{m,n}^{\frac{3}{2}}.
\end{align*}
  The first inequality above is Jensen's inequality.   The last inequality holds since the predictable quadratic variation $\langle M,M\rangle_{r}$ grows at a rate $\leq \sigma$. 

  The second line of~(\ref{Instanbul}) is  bounded by $2( \frac{\sigma}{3})^{\frac{1}{2}}$ multiplied by
\begin{align}\label{Mathews}
  t^{-\frac{5}{4}}\mathbb{E}\Big[  \sum_{ m=1}^{\mathbf{N}_{t}-1}\sum_{n=0}^{L_{m}-1}
\Delta_{m,n}^{\frac{3}{2}}\Big]\leq & \mathcal{R}^{-\frac{3}{2}}\Big(\sup_{s\geq 1}\frac{\Gamma(s+\frac{3}{2})}{s^{\frac{3}{2}}\Gamma(s)   } \Big) t^{-\frac{5}{4}}\mathbb{E}\Big[  \sum_{ m=1}^{\mathbf{N}_{t}-1}\sum_{n=0}^{L_{m}-1}
\omega^{\frac{3}{2}}(m)\Big]\nonumber \\ = & \mathcal{R}^{-\frac{1}{2}}\Big(\sup_{s\geq 1}\frac{\Gamma(s+\frac{3}{2})}{s^{\frac{3}{2}}\Gamma(s)   } \Big)  t^{-\frac{5}{4}}\mathbb{E}\Big[  \sum_{ m=1}^{\mathbf{N}_{t}-1}\omega^{\frac{1}{2}}(m)\sum_{n=0}^{L_{m}-1}
\Delta_{m,n}\Big]\nonumber  \\ \leq &  \mathcal{R}^{-\frac{1}{2}}\Big(\sup_{s\geq 1}\frac{\Gamma(s+\frac{3}{2})}{s^{\frac{3}{2}}\Gamma(s)   } \Big) t^{-\frac{1}{4}}\mathbb{E}\Big[\sup_{0\leq r\leq t}\mathcal{E}_{r}^{\frac{\beta}{2}} \Big],
\end{align}
where $\Gamma:\R_{+}\rightarrow \R_{+}$ is the gamma function, and it should not to be confused with the times $\Gamma_{n,m}$.  The  first inequality and equality  above follow because  $\Delta_{m,n}$ is a sum of $\omega(m)$ independent  mean-$\mathcal{R}^{-1}$ exponential random variables.   For the second inequality in~(\ref{Mathews}), I have used that  $ \sum_{ m=1}^{\mathbf{N}_{t}-1}\sum_{m=0}^{L_{m}-1}\Delta_{m,n} \leq t$ and  $\omega_{m}\leq \sup_{0\leq r\leq t}|K_{r}|^{\beta}$. Finally, by Jensen's inequality 
\begin{align*}
t^{-\frac{1}{4}}\mathbb{E}\Big[\sup_{0\leq r\leq t}\mathcal{E}_{r}^{\frac{\beta}{2}} \Big]  & \leq t^{-\frac{1}{4}} \mathbb{E}\Big[\sup_{0\leq r\leq t}\mathcal{E}_{r}^{2} \Big]^{\frac{\beta}{4}}\leq  2^{\frac{\beta}{2}} t^{-\frac{1}{4}} \mathbb{E}\Big[\mathcal{E}_{t}^{2} \Big]^{\frac{\beta}{4}}\\ &=  t^{-\frac{1}{4}}\big( \mathbb{E}\big[\mathcal{E}_{0}^{2} \big]+t\sigma\big)^{\frac{\beta}{4}}\propto t^{\frac{\beta}{4}-\frac{1}{4} }\rightarrow 0,   \end{align*}
where the first inequality is Jensen's and the second is Doob's.  The equality is by Part (1) of Prop.~\ref{Basics}.

\vspace{.5cm}

\noindent (ii). \hspace{.15cm}  The difference is bounded by
\begin{align}\label{HorseCrap}
\sup_{0\leq s\leq 1}\Big|t^{-\frac{5}{4}}\sum_{m=1   }^{\mathbf{N}_{st}-1} S(K_{\tau_{m}})\sum_{n=0}^{L_{m}-1} \Delta_{m,n}\big( |K_{\Gamma_{m,n} }|-  |K_{\tau_{m}}| \big) \Big| +2t^{-\frac{5}{4}}\Big(\sup_{0\leq s\leq 1}|K_{s}| \Big) \sum_{m=1   }^{\mathbf{N}_{t}-1}\Delta_{m}.  
\end{align}
The second term decays to zero by the argument in (i).  By  partial summation, the left sum in~(\ref{HorseCrap}) is equal to  
\begin{align}\label{Demogogue}
t^{-\frac{5}{4}}\sum_{m=1   }^{\mathbf{N}_{st}-1} S(K_{\tau_{m}})\sum_{n=1}^{L_{m}-1} \big( |K_{\Gamma_{m,n} }|-  |K_{\Gamma_{m-1,n}}| \big)\big(\Gamma_{m,L_{m}}-\Gamma_{m,n}   \big)&\approx \nonumber   \\
t^{-\frac{5}{4}}\sum_{m=1   }^{\mathbf{N}_{st}-1} S(K_{\tau_{m}})\sum_{n=1}^{L_{m}-1} \big( |K_{\underline{\Gamma_{m,n}} }|-  |K_{\Gamma_{m-1,n}}| \big)\big(\Gamma_{m,L_{m}}-\Gamma_{m,n}   \big)  & \approx  t^{-\frac{5}{4}}\sum_{m=1   }^{\mathbf{N}_{st}-1} H_{m}, 
\end{align}
where  $H_{m}$ is defined as 
\begin{align*} 
 H_{m}:=S(K_{\tau_{m}})\sum_{n=1}^{\infty} \big( |K_{\underline{\Gamma_{m,n}} }| -|K_{\Gamma_{m,n-1}}| \big)\big(\tau_{m}+\Delta\tau_{m}-\Gamma_{m,n}   \big)\chi\big(\tau_{m}+\Delta\tau_{m}>\Gamma_{m,n}  \big).
\end{align*}
  For technical reasons involving conditioning, it will be convenient to replace $K_{\Gamma_{m,n}}$ by $K_{\underline{\Gamma_{m,n}} }$ as in the first approximation of~(\ref{Demogogue}).  The sum $ t^{-\frac{5}{4}}\sum_{m=1   }^{\mathbf{N}_{t}-1} \sum_{n=0}^{L_{m}-1} \Delta_{m,n}\big| |K_{\Gamma_{m,n} }|-  |K_{\underline{\Gamma_{m,n}} } |  \big|  $
is easy to bound using  $|K_{r}|\approx \mathcal{E}_{r}$ and techniques used in (i).  The second approximation in~(\ref{Demogogue}) replaces $\Gamma_{m,L_{m}}$ by  $\tau_{m}+\Delta\tau_{m} $  in the expression, and the resulting error decays by the argument in~(\ref{Mink2}) again. 

Since the probability that $\sup_{0\leq r\leq t}|K_{r}|\geq \epsilon^{-1} t^{\frac{1}{2}}$ is small for $1\gg \epsilon>0$, I have the equality $t^{-\frac{5}{4}}\sum_{m=1   }^{\mathbf{N}_{st}-1} H_{m}=t^{-\frac{5}{4}}\sum_{m=1   }^{\mathbf{N}_{st}-1} H_{m}\chi(|K_{\tau_{m}}|\leq \epsilon^{-1}t^{\frac{1}{2}})$ with probability close to one.  Introducing cutoff's will be useful to avoid problems with higher moments of $K_{r}$.   So far I have made only minor adjustments to the expression.  The strategy to show  that $t^{-\frac{5}{4}}\sup_{0\leq s\leq 1}\Big|\sum_{m=1   }^{\mathbf{N}_{st}-1} H_{m}\Big|$ converges in probability to zero will be to prove: \vspace{.2cm} 

\noindent \hspace{.1cm} (\textup{ii}$'$).\hspace{.15cm} $t^{-\frac{5}{4}}\mathbb{E}\Big[\sum_{m=1}^{\mathbf{N}_{t}-1} \big| \mathbb{E}\big[H_{m}\,\big|\,\widetilde{\mathcal{F}}_{\tau_{m}^{-}} \big]\big|\chi\big(|K_{\tau_{m}}|\leq \epsilon^{-1}t^{\frac{1}{2}}\big) \Big]\longrightarrow 0$,  \vspace{.2cm}

\noindent \hspace{.1cm} (\textup{ii}$''$).\hspace{.15cm} $t^{-\frac{5}{2}}\mathbb{E}\Big[\sum_{m=1}^{\mathbf{N}_{t}-1} \big(\mathbb{E}\big[H_{\tau_{m} }^{2}\,\big|\,\widetilde{\mathcal{F}}_{\tau_{m}^{-}}   \big]  -\mathbb{E}\big[H_{m}\,\big|\,\widetilde{\mathcal{F}}_{\tau_{m}^{-}}   \big]^{2}\big)\chi\big(|K_{\tau_{m}}|\leq \epsilon^{-1}t^{\frac{1}{2}}\big)   \Big]\longrightarrow 0$.  \vspace{.3cm}\\
  By an analogous Lindberg condition as for $\mathbf{m}_{s}^{(t)}$, the upper summands above can be taken to be either $\mathbf{N}_{t}-1$ or $\mathbf{N}_{t}$ depending on convenience.  Since the sum of $(H_{m}- \mathbb{E}\big[ H_{m}\,\big|\,\widetilde{\mathcal{F}}_{\tau_{m}^{-}}  \big])\chi(|K_{\tau_{m}}|\leq \epsilon^{-1}t^{\frac{1}{2}}) $ for $m\in [1,\mathbf{N}_{st}]$ is a  $\mathcal{F}_{s}^{(t)}$-martingale, the convergence~(\textup{ii}$''$) along with Doob's maximal inequality shows that $t^{-\frac{5}{4}}\sup_{0\leq s\leq 1}\big|\sum_{m=1   }^{\mathbf{N}_{st}-1} H_{m}- \mathbb{E}\big[H_{m}\,\big|\,\widetilde{\mathcal{F}}_{\tau_{m}^{-}}  \big] \big|$ converges to zero.

 \vspace{.4cm}
\noindent (ii$'$).\hspace{.15cm}   A single term $\big|\mathbb{E}\big[ H_{m}\,\big|\,\widetilde{\mathcal{F}}_{\tau_{m}^{-}}   \big] \big|$ is bounded by: 
\begin{align}\label{FineMess}
\Big| \mathbb{E}\Big[&\sum_{n=1}^{\infty} \chi\Big(\Delta\tau_{m}+\tau_{m}>\Gamma_{m,n}\Big) \big(|K_{\underline{\Gamma_{m,n}} }|-  |K_{\Gamma_{m,n-1} }|\big)\,  \mathbb{E}\big[ \Delta\tau_{m}+\tau_{m}-\Gamma_{m,n} \big|\,\mathcal{F}_{\underline{\Gamma_{m,n}} }    \big] \,\Big|\,\widetilde{\mathcal{F}}_{\tau_{m}^{-}}\Big]\Big| \nonumber \\ \leq & \nu^{-1} |K_{\tau_{m}}|\, \Big|  \mathbb{E}\Big[\sum_{n=1}^{\infty} \chi\Big(\Delta\tau_{m}+\tau_{m}>\Gamma_{m,n}\Big) \big(|K_{\underline{\Gamma_{m,n}} }|-  |K_{\Gamma_{m,n-1} }|\big) \,\Big|\,\widetilde{\mathcal{F}}_{\tau_{m}^{-}}\Big]\Big| \nonumber  \\  & +
 \, \nu^{-1} \Big|  \mathbb{E}\Big[\sum_{n=1}^{\infty} \chi\Big(\Delta\tau_{m}+\tau_{m}>\Gamma_{m,n}\Big) \big(|K_{\underline{\Gamma_{m,n}} }|-  |K_{\Gamma_{m,n-1} }|\big)\,\big( |K_{\underline{\Gamma_{m,n}}}|-|K_{\tau_{m}}|\big)  \,\Big|\,\widetilde{\mathcal{F}}_{\tau_{m}^{-}}\Big]\Big| 
\nonumber  \\ &  +
 \,C  \big(\frac{3}{2}\big)^{\zeta} |K_{\tau_{m}}|^{\gamma}  \mathbb{E}\Big[\sum_{n=1}^{\infty} \chi\Big(\Delta\tau_{m}+\tau_{m}>\Gamma_{m,n}\Big) \Big||K_{\underline{\Gamma_{m,n}} }|-  |K_{\Gamma_{m,n-1} }|\Big|\, \,\Big|\,\widetilde{\mathcal{F}}_{\tau_{m}^{-}}\Big] ,
\end{align}
where I have applied the triangle inequality with 
\begin{align} \label{Belief}
\mathbb{E}\big[ \Delta\tau_{m}+\tau_{m}-\Gamma_{m,n} \big|\,\mathcal{F}_{\underline{\Gamma_{m,n}} }    \big]=&\nu^{-1} |K_{\tau_{m}}|+\big(\nu^{-1} |K_{\underline{\Gamma_{m,n}}}|-\nu^{-1} |K_{\tau_{m}}|\big)\nonumber \\ &+ \big( \mathbb{E}\big[ \Delta\tau_{m}+\tau_{m}-\Gamma_{m,n} \big|\,\mathcal{F}_{\underline{\Gamma_{m,n}} }    \big]-\nu^{-1} |K_{\underline{\Gamma_{m,n}}}|\big).      
\end{align}
For the third term in~(\ref{FineMess}), I have applied an analog of Part (1) of Prop.~\ref{TimeFlip} to get a $C>0$ for a given $\zeta>0$ such that 
\begin{align}\label{Faith}
\Big|\mathbb{E}\big[ \Delta\tau_{m}+\tau_{m}-\Gamma_{m,n}\big|\,\mathcal{F}_{\underline{\Gamma_{m,n}}}    \big]-\nu^{-1} |K_{\underline{\Gamma_{m,n}}}| \Big|\leq C|K_{\underline{\Gamma_{m,n}}}|^{\zeta}\leq C \big(\frac{3}{2}\big)^{\zeta} \big|K_{\tau_{m}}\big|^{\zeta},
\end{align}
 where the second inequality uses that $|K_{\underline{\Gamma_{m,n}}}|\leq \frac{3}{2}|K_{\tau_{m}}|$ for  $\Gamma_{m,n}<\Delta\tau_{m}+\tau_{m}$.  I have said ``analog" above because the situation is not identical to Prop.~\ref{TimeFlip}, since I am  conditioning with respect to $\mathcal{F}_{\underline{\Gamma_{m,n}}}$ rather than conditioning that the sign of the momentum does not change on the Poisson time following $\Gamma_{m,n}$.  The choice made earlier in the proof to deal with $\mathcal{F}_{\underline{\Gamma_{m,n}}}$ rather than $\mathcal{F}_{\Gamma_{m,n}}$ is useful now, since the conditional density for $K_{\Gamma_{m,n}}$ and $\Theta(K_{\Gamma_{m,n}})$ given $\mathcal{F}_{\underline{\Gamma_{m,n}}}$ will be bounded.  It follows by the discussion preceding the statement of Prop.~\ref{TimeFlip} that I can apply the same proof as in Part (1) of Prop.~\ref{TimeFlip} to this case.

The expectation in the second line  of~(\ref{FineMess}) can be rewritten as
\begin{align}\label{Flout}
\mathbb{E}\Big[\sum_{n=1}^{\infty} &\chi\Big(\Delta\tau_{m}+\tau_{m}>\Gamma_{m,n-1}\Big) \Big|\mathbb{E}\Big[\big(|K_{\underline{\Gamma_{m,n}} }|-  |K_{\Gamma_{m,n-1} }|\big)\chi\Big(\Delta\tau_{m}+\tau_{m}>\Gamma_{m,n}\Big) \Big|\, \mathcal{F}_{\underline{\Gamma_{m,n-1}}  }\Big] \Big|\, \Big|\,\widetilde{\mathcal{F}}_{\tau_{m}^{-}}\Big]\nonumber  \\ &\leq  c \mathbb{E}\Big[\sum_{n=1}^{\infty} \chi\Big(\Delta\tau_{m}+\tau_{m}>\Gamma_{m,n-1}\Big) |K_{\Gamma_{m,n-1}}|^{\beta-1+\zeta  }   \,\Big|\,\widetilde{\mathcal{F}}_{\tau_{m}^{-}}\Big]\nonumber \\ &\leq  2c  |K_{\tau_{m}}|^{\beta-1+\zeta }\mathbb{E}\Big[\sum_{n=1}^{\infty} \chi\Big(\frac{\mathcal{N}_{\Delta\tau_{m}+\tau_{m}}-\mathcal{N}_{\tau_{m}}}{\omega(m)}>n-1\Big)    \,\Big|\,\widetilde{\mathcal{F}}_{\tau_{m}^{-}}\Big] \nonumber   \\ &\leq  2c  |K_{\tau_{m}}|^{\beta-1+\zeta }\frac{1}{\omega(m)}\mathbb{E}\big[\mathcal{N}_{\Delta\tau_{m}+\tau_{m}}-\mathcal{N}_{\Delta\tau_{m}}   +1\,\big|\,\widetilde{\mathcal{F}}_{\tau_{m}^{-}}\big] \nonumber   \\
&\leq \frac{4c }{\alpha }  |K_{\tau_{m}}|^{\zeta } , 
\end{align}
where replacing $\chi\big(\Delta\tau_{m}+\tau_{m}>\Gamma_{m,n}\big)$ by $\chi\big(\Delta\tau_{m}+\tau_{m}>\Gamma_{m,n-1}\big)$ on the first line makes the expression larger. The first inequality is for some $c>0$ by Part (2) of Lem.~\ref{Fraiser}.  For the second inequality, I have used that  $|K_{r}|\geq \frac{1}{2}|K_{\tau_{m}}|$ for $r\in \big[\tau_{m},\,\tau_{m}+\Delta\tau_{m}\big]$, and I rewrote the argument of $\chi$ in terms of the corresponding counts for the Poisson times.  The third equality is from the expectation formula $\sum_{n=1}^{\infty}\mathbb{P}[X> n]\leq   \mathbb{E}[X]$ for a positive random variable $X$. The fourth inequality in~(\ref{Flout}) follows from the relations
$$ \mathbb{E}\big[\mathcal{N}_{\Delta\tau_{m}+\tau_{m}}-\mathcal{N}_{\tau_{m}}   \,\big|\,\widetilde{\mathcal{F}}_{\tau_{m}^{-}}\big]= \mathcal{R} \mathbb{E}\big[\Delta\tau_{m}  \,\big|\,\widetilde{\mathcal{F}}_{\tau_{m}^{-}}\big]\leq \frac{2}{\alpha} |K_{\tau_{m}}|^{-1 },$$
 where the inequality is for large enough times by Part (1) of Lem.~\ref{Fraiser}.

For the expression on the third line~(\ref{FineMess}), I split the factor 
$ |K_{\underline{\Gamma_{m,n}} }|-  |K_{\tau_{m}}|$ into a sum of the terms $ |K_{\Gamma_{m,n-1}}|-  |K_{\tau_{m}}|$ and $  |K_{\underline{\Gamma_{m,n}} }|- |K_{\Gamma_{m,n-1} }|$.  For $ |K_{\underline{\Gamma_{m,n}} }|-  |K_{\tau_{m}}| $, I introduce a nested conditional expectation with respect to $\mathcal{F}_{\underline{\Gamma_{m,n-1}}}$ as follows: 
\begin{align*}
\nu^{-1} \mathbb{E}&\Big[ \sum_{n=1}^{\infty} \chi\Big(\Delta\tau_{m}+\tau_{m}>\Gamma_{m,n-1}\Big)\big|\mathbb{E}\big[ |K_{\underline{\Gamma_{m,n}} }|-  |K_{\Gamma_{m,n-1} }|\,\big| \,\mathcal{F}_{\underline{\Gamma_{m,n-1}}}\big]\big|\,  \big| |K_{\Gamma_{m,n-1} }|-  |K_{\tau_{m}}|\big| \,\Big|\,\widetilde{\mathcal{F}}_{\tau_{m}^{-}}\Big] \\  & \leq c\nu^{-1} |K_{\tau_{m}}|^{\beta+\zeta-1} \mathbb{E}\Big[\sum_{n=1}^{\infty} \chi\Big(\frac{\mathcal{N}_{\Delta\tau_{m}+\tau_{m}}-\mathcal{N}_{\tau_{m}}}{\omega(m)}>n-1\Big) \big| |K_{\Gamma_{m,n-1} }|-  |K_{\tau_{m}}|\big| \,\Big|\,\widetilde{\mathcal{F}}_{\tau_{m}^{-}}\Big] \\  & \leq \frac{c}{\nu} |K_{\tau_{m}}|^{\beta+\zeta-1}\,\frac{\mathcal{R}}{\omega(m)}\mathbb{E}\big[ (\Delta\tau_{m})^{2}
 \,\big|\,\widetilde{\mathcal{F}}_{\tau_{m}^{-}}\big]^{\frac{1}{2}}  \mathbb{E}\Big[ \sup_{0\leq r\leq \Delta \tau_{m}}\big| |K_{\tau_{m}+r }|-  |K_{\tau_{m}}|\big|^{2}   \,\Big|\,\widetilde{\mathcal{F}}_{\tau_{m}^{-}}\Big]^{\frac{1}{2}}\\ & \propto |K_{\tau_{m}}|^{\zeta} ,
\end{align*}
The first inequality is by Part (2) of Lem.~\ref{Fraiser} and  $|K_{\Gamma_{m,n-1} }|\geq \frac{1}{2}|K_{\tau_{m}}|$ for $r\in \big[\tau_{m},\,\tau_{m}+\Delta\tau_{m}\big]$.  The first expectation on the third line is $\approx \frac{2}{\nu^{2}}|K_{\tau_{m}}|$ by Part (1) of Prop.~\ref{TimeFlip}.  The second expectation on the third line is also $\mathit{O}(|K_{\tau_{m}}|)$, which can be shown by approximating $\mathcal{E}_{r}\approx |K_{r}|$,  writing  $\mathcal{E}_{\underline{\Gamma_{m,n}} }-  \mathcal{E}_{\Gamma_{m,n-1}}$ in terms of the Doob-Meyer components $M_{r},\,A_{r}$, and applying the relation $\sigma=\frac{d}{dr}\langle M,M\rangle_{r}+2\mathcal{E}_{r}\frac{d}{dr}A_{r}$ in the standard inequalities as  before.   The term corresponding to  $  |K_{\underline{\Gamma_{m,n}} }|- |K_{\Gamma_{m,n-1} }|$ has a simpler analysis yielding a bound $\mathit{O}\big( |K_{\tau_{n}}|^{1+\zeta}\big)$.

The fourth line of~(\ref{FineMess}) is $\mathit{O}\big( |K_{\tau_{n}}|^{\zeta-\frac{\beta}{2} }  \big)$.  By the above analysis, the term $\big|\mathbb{E}\big[ H_{m}\,\big|\,\widetilde{\mathcal{F}}_{\tau_{m}^{-}}   \big] \big|$  is bounded by a constant multiple of $|K_{\tau_{m}}|^{1+\zeta}$ (since $\zeta<\frac{1}{4}$).  Thus,  (ii$'$) is bounded by a constant multiple of     
\begin{align}\label{Juice}
t^{-\frac{5}{4}}\mathbb{E}\Big[\sum_{m=1}^{\mathbf{N}_{t}-1}|K_{\tau_{m}}|^{1+\zeta}\chi\big(|K_{\tau_{m}}|\leq \epsilon^{-1}t^{\frac{1}{2}} \big)  \Big]& \leq \epsilon^{-\zeta } t^{-\frac{5}{4}+\frac{\zeta}{2}  }\mathbb{E}\Big[\sum_{m=1}^{\mathbf{N}_{t}-1}|K_{\tau_{m}}|  \Big]\nonumber \\ & \leq 2\nu \epsilon^{-\zeta } t^{-\frac{5}{4}+\frac{\zeta}{2}  }\mathbb{E}\Big[ \sum_{m=1}^{\mathbf{N}_{t}-1}\mathbb{E}\big[\Delta\tau_{m}\,\big|\,\widetilde{\mathcal{F}}_{\tau_{m}^{-}}\big]   \Big]\nonumber  \\ &  = 2\nu \epsilon^{-\zeta } t^{-\frac{5}{4}+\frac{\zeta}{2}  }\mathbb{E}\Big[ \sum_{m=1}^{\mathbf{N}_{t}-1}\Delta\tau_{m}  \Big]\nonumber   \\ & \leq  2\nu \epsilon^{-\zeta } t^{-\frac{1}{4}+\frac{\zeta}{2}  },
\end{align}
where the inequalities follow by the standard method using  Part (1) of Prop.~\ref{TimeFlip} and  $\sum_{m=1}^{\mathbf{N}_{t}-1}\Delta\tau_{m}<t$. 
Hence, the expression in (ii$'$) goes to zero for large $t$.

\vspace{.4cm}

\noindent (ii$''$). \hspace{.15cm} A similar argument as in (ii$'$) yields that 
$$ t^{-\frac{5}{2}}\mathbb{E}\Big[\sum_{m=1}^{\mathbf{N}_{t}-1}\mathbb{E}\big[H_{m}\,\big|\,\widetilde{\mathcal{F}}_{\tau_{m}^{-}} \big]^{2}\chi\big(|K_{\tau_{n} } |>\epsilon^{-1}t^{\frac{1}{2}} \big) \Big]$$
 is bounded by a constant multiple of $\epsilon^{-2\zeta}t^{-1+\zeta}$.  Thus, this term also vanishes as $t\rightarrow \infty$.     The expression (ii$''$) requires an analysis of $\mathbb{E}\big[ H_{\tau_{m} }^{2}\,\big|\,\widetilde{\mathcal{F}}_{\tau_{m}^{-}}\big]$.  I begin with the inequality 
 $$\big|\mathbb{E}\big[ H_{\tau_{m} }^{2}\,\big|\,\widetilde{\mathcal{F}}_{\tau_{m}^{-}}\big]\big|\leq 2\Big|\mathbb{E}\Big[\sum_{1\leq n'< n< L_{m} }G_{n',n}\,\Big|\,\widetilde{\mathcal{F}}_{\tau_{m}^{-}}\Big]\Big|+\Big|\mathbb{E}\Big[\sum_{1\leq n< L_{m} }G_{n,n}\,\Big|\,\widetilde{\mathcal{F}}_{\tau_{m}^{-}}\Big]\Big|,   $$
where $G_{n',n}$ has the form
\begin{align}\label{Budapest}
G_{n',n}:=& \chi\Big(\Delta \tau_{m}+\tau_{m}>\Gamma_{m,n}\Big) \big(|K_{\underline{\Gamma_{m,n^{\prime}} } }|-  |K_{\Gamma_{m,n^{\prime}-1} }|\big)\, \big(|K_{\underline{\Gamma_{m,n} } }|-  |K_{\Gamma_{m,n-1} }|\big) \nonumber \\ &\times\mathbb{E}\Big[ \big(\Delta \tau_{m}+\tau_{m}-\Gamma_{m,n}\big)^{2}+(\Gamma_{m,n}-\Gamma_{m,n'})\big(\Delta \tau_{m}+\tau_{m}-\Gamma_{m,n}\big)\, \Big|\,\mathcal{F}_{\underline{\Gamma_{m,n} } }   \Big]. \, 
\end{align}

The conditional expectation in~(\ref{Budapest}) can be  written as
\begin{align}\label{Gospel}
\mathbb{E}\Big[  \big(\Delta \tau_{m}+\tau_{m}-&\Gamma_{m,n}\big)^{2}+(\Gamma_{m,n}-\Gamma_{m,n'})\big(\Delta \tau_{m}+\tau_{m}-\Gamma_{m,n}\big) \Big|\,\mathcal{F}_{\underline{\Gamma_{m,n} } } \Big]\nonumber  \\  = & \Big(2\nu^{-2} K^{2}_{\tau_{m} }+\nu^{-1} (\Gamma_{m,n}-\Gamma_{m,n'})\, |K_{\tau_{m} }|\Big)  \nonumber \\ &+ \Big( 2\nu^{-2}\big( K^{2}_{\Gamma_{m,n} }- K^{2}_{\tau_{m} } \big)  +\nu^{-1} \big(\Gamma_{m,n}-\Gamma_{m,n'}\big)\, \big( |K_{\Gamma_{m,n}}|- |K_{\tau_{m} }| \big)\Big)\nonumber \\ &+\frak{Er}, 
\end{align}
where the error term is bounded by 
$$|\frak{Er}|\leq  C\big(\frac{3}{2}\big)^{1+\zeta} |K_{\tau_{m} }|^{1+\zeta}+ C\big(\frac{3}{2}\big)^{\zeta}\big|\Gamma_{m,n}-\Gamma_{m,n'}\big|\,|K_{\tau_{m} }|^{\zeta},  $$
for $C,\zeta>0$ defined as before.  This is analogous to~(\ref{Belief}) and~(\ref{Faith}) in  (i$'$), and follows by  Part (1) of Lem.~\ref{FineMess} and  because $|K_{\underline{\Gamma_{m,n}}}|\leq \frac{3}{2}|K_{\tau_{m}}|$ for  $\Gamma_{m,n}<\Delta\tau_{m}+\tau_{m}$.  I will not go through the details for the bounds of the three terms on the right side of~(\ref{Budapest}), since they admit the same procedure to break them down as applied before (except, for instance, the inequality $\sum_{n=1}^{\infty}\mathbb{P}[X>n]\leq \mathbb{E}[X]$ is replaced by $\sum_{n=1}^{\infty}n\mathbb{P}[X>n]\leq \mathbb{E}[X^{2}]$). The end result is that $\mathbb{E}\big[ H_{\tau_{m} }^{2}\,\big|\,\widetilde{\mathcal{F}}_{\tau_{m}^{-}}\big]$ is smaller than a constant multiple of $|K_{\tau_{m}}|^{3}$, where the leading term comes from the diagonal sum $n=n'$.  By the same argument in~(\ref{Juice}), the sum of terms $|K_{\tau_{m}}|^{3}$ is bounded by
 $$ t^{-\frac{5}{2}}\mathbb{E}\Big[\sum_{m=1}^{\mathbf{N}_{t}-1}|K_{\tau_{m}}|^{3}\chi\big(\big|K_{\tau_{m}}|\leq \epsilon^{-1}t^{\frac{1}{2}} \big)  \Big]< 2\nu \epsilon^{-2}t^{-\frac{1}{2}},
$$ 
 which goes to zero for large $t$. 

\vspace{.5cm}

\noindent (iii).\hspace{.15cm}  I will first show that there is a vanishing error in replacing the terms $\mathbb{E}\big[ \Delta\tau_{m}\,\big|\,\widetilde{\mathcal{F}}_{\tau_{m}^{-}}\big]$ in the expression by $\nu^{-1} |K_{\tau_{m}}|$.     To bound the difference, I apply  Part (1) of Prop.~\ref{TimeFlip} to get the first and second inequalities below for $C>0$ depending on my choice of $0<\zeta<\frac{1}{2}$:  
\begin{align*}
\mathbb{E}\Big[ \sup_{0\leq s\leq 1} t^{-\frac{5}{4}}&\sum_{m=1}^{\mathbf{N}_{st}-1}  |K_{\tau_{m}}|\, \big|\mathbb{E}\big[ \Delta\tau_{m}\,\big|\,\widetilde{\mathcal{F}}_{\tau_{m}^{-}}\big]-\nu^{-1} |K_{\tau_{m}}   |\big| \Big]\\ &\leq C\mathbb{E}\Big[t^{-\frac{5}{4}}\sum_{m=1}^{\mathbf{N}_{t}-1} |K_{\tau_{m}}|^{1+\zeta}  \Big] \leq  2C\nu \mathbb{E}\Big[t^{-\frac{5}{4}}\sum_{m=1}^{\mathbf{N}_{t}-1}|K_{\tau_{m}}|^{\zeta}   \mathbb{E}\big[ \Delta\tau_{m}\,\big|\,\widetilde{\mathcal{F}}_{\tau_{m}^{-}}\big]  \Big] \\ & \leq \frac{2C\nu }{ t^{\frac{1}{4}}}\mathbb{E}\Big[\sup_{0\leq r\leq t} |K_{r}|^{\zeta}  \Big]\leq \frac{2C\nu}{ t^{\frac{1}{4}}}\mathbb{E}\Big[\sup_{0\leq r\leq t} |K_{r}|^{2}  \Big]^{\frac{\zeta }{2}} \\ & < \frac{4C\nu}{ t^{\frac{1}{4}}}\mathbb{E}\big[ \mathcal{E}_{t}^{2}  \big]^{\frac{\zeta }{2}}=  \frac{4C\nu (\mathbb{E}[\mathcal{E}_{0}^{2}]+ \sigma t )^{\frac{\zeta}{2} }  }{ t^{\frac{1}{4}}} \longrightarrow 0.
\end{align*}
The third inequality follows by removing the nested conditional expectations and  $\sum_{m=1}^{\mathbf{N}_{t}-1}\Delta\tau_{m}<t$. The fourth inequality is Jensen's, and  the fifth  employs $|K_{r}|\leq \mathcal{E}_{r}$ and Doob's maximal inequality to the positive submartingale $\mathcal{E}_{r}$.  The equality is Part (1) of Prop.~\ref{Basics}.    

I am left to bound the sum $\sum_{m=1}^{\mathbf{N}_{st}-1} K_{\tau_{m}} |K_{\tau_{m}}|$.  Let $\mathbf{G}\subset \R_{+}$ be the set of all times $\tau_{m}$ such that  $\tau_{m}\in[\varsigma_{j},\,\varpi_{j+1})$ and $\tau_{m}$ is an even-numbered  flip time following $\varsigma_{j}$ (and this includes $\tau_{m}=\varsigma_{j}$).   Denote the number of sign-flips in the interval  $(\varsigma_{j},\varpi_{j+1})$  by $\mathbf{n}_{j}$.   The sum of terms $ K_{\tau_{m}} |K_{\tau_{m}}|$ can be written as
\begin{align}\label{HakiSak}
\sum_{j=1}^{\Upsilon_{st} }&\sum_{\substack{\tau_{m}\in [\varsigma_{j},\,\varpi_{j+1})\\  \tau_{m+1}\leq st }  } \, K_{\tau_{m}} |K_{\tau_{m}}|\nonumber  \\ & \approx \sum_{\substack{m=1 \\ \tau_{m}\in \mathbf{G} }   }^{\mathbf{N}_{st}-1}\,\Big( K_{\tau_{m}  } |K_{\tau_{m}}|+K_{\tau_{m}+\Delta \tau_{m}  } |K_{\tau_{m}+\Delta \tau_{m} }|\Big)-   \sum_{\substack{j=1, \\ \varsigma_{j}=\tau_{m}  } }^{\Upsilon_{st}-1}\chi(\mathbf{n}_{j}\text{ even}   )K_{\tau_{m}+\Delta \tau_{m}  }|K_{\tau_{m}+\Delta \tau_{m}}|,
\end{align}
 where I have neglected a single extra boundary term $K_{\tau_{m}}| K_{\tau_{m}}|$ that may occur in the last incomplete excursion.

 I can immediately treat the boundary sum on the right side of~(\ref{HakiSak}).  Since $\tau_{m}+\Delta \tau_{m}$ occurs during an incursion, I have that  $|K_{\tau_{m}+\Delta \tau_{m}}|\leq t^{\frac{3}{8}}$, and thus the  first inequality below holds:
$$t^{-\frac{5}{4}}\mathbb{E}\Big[ \sum_{\substack{j=1, \\ \varsigma_{j}=\tau_{m}  }  }^{\Upsilon_{t}-1}|K_{\tau_{m}+\Delta \tau_{m}}|^{2}  \Big]\leq t^{-\frac{1}{2}}\mathbb{E}\big[ \Upsilon_{t} \big]\leq 2\sigma^{\frac{1}{2}}t^{-\frac{3}{8}}.  $$
The second inequality is from~(\ref{Numero}) in the proof of Lem.~\ref{EnergyLemma} and relies on the submartingale upcrossing inequality.  

For the first sum on the right side of~(\ref{HakiSak}), it is convenient to write $K_{\tau_{m} } |K_{\tau_{m}}|+K_{\tau_{m+1}} |K_{\tau_{m+1}}|$ as a sum of   the two terms
\begin{align}\label{KrisKros}
 \left(S(K_{\tau_{m} })+S(K_{\tau_{m}+\Delta\tau_{m}  })   \right) \big|K_{\tau_{m}+\Delta\tau_{m} } \big|^{2}\quad \text{and}\quad  -S(K_{\tau_{m}  })\left( |K_{\tau_{m}+\Delta\tau_{m}}|^{2}-|K_{\tau_{m}}|^{2}     \right). 
 \end{align}  
  I can bound the sum of  terms on the left of~(\ref{KrisKros}) through the inequality 
\begin{align}
\sup_{0\leq s\leq 1}t^{-\frac{5}{4}}\Big|\sum_{\substack{m=1 \\ \tau_{m}\in \mathbf{G} }   }^{\mathbf{N}_{st}-1} \, &\big(S(K_{\tau_{m}  })+S(K_{\tau_{m}+\Delta\tau_{m}  })   \big) \big|K_{\tau_{m}+\Delta\tau_{m}    } \big|^{2}\Big| \nonumber   \\ &\leq  2t^{-\frac{5}{4}}\Big(\sup_{0\leq r\leq t}|K_{r}|^{2}\Big) \sum_{m=1    }^{\mathbf{N}_{t}-1} \chi\big(S(K_{\tau_{m}  })=S(K_{\tau_{m}+\Delta\tau_{m}  })    \big). 
\end{align}
For any $\delta>0$, the factor $t^{-1-\delta}\sup_{0\leq r\leq t}|K_{r}|^{2}$ will go to zero for large $t$.  By definition, $S(K_{\tau_{m}  })=S(K_{\tau_{m}+\Delta\tau_{m}  }) $ occurs  when $|K_{\tau_{m}+\Delta\tau_{m}}|$ jumps out of the interval  $\big[\frac{1}{2}|K_{\tau_{m}  }|,\, \frac{3}{2}|K_{\tau_{m}  }| \big]$. By inserting nested conditional expectations, I have the equality below
\begin{align}\label{Itakeskus}
t^{-\frac{1}{4}+\delta}\mathbb{E}\Big[\sum_{m=1  }^{\mathbf{N}_{t}-1} \chi\big(S(K_{\tau_{m}  })=S(K_{\tau_{m}+\Delta\tau_{m}  })    \big)\Big]& = t^{-\frac{1}{4}+\delta}\mathbb{E}\Big[\sum_{m=1  }^{\mathbf{N}_{t}-1} \mathbb{P}\Big[|K_{\tau_{m}+\Delta\tau_{m} }|\notin \big[\frac{1}{2}|K_{\tau_{m}  }|,\, \frac{3}{2}|K_{\tau_{m}  }|\big]\,\Big|\, \widetilde{\mathcal{F}}_{\tau_{m}  ^{-}} \Big]   \Big]\nonumber \\ & \leq \frac{3C}{8 } \frac{\log(t)}{t^{1-\delta}}   \mathbb{E}\big[\mathbf{N}_{t}   \big],
\end{align}
where the inequality follows since $|K_{\tau_{m}}|\geq t^{\frac{3}{8}} $ for all times $\tau_{m}$ and by Part (4) of Prop.~\ref{TimeFlip}. The expectation $\mathbb{E}[\mathbf{N}_{t}]$ has a rough bound given by routine arguments: 
\begin{align*}
\mathbb{E}\big[\mathbf{N}_{t}-1\big]& \leq t^{-\frac{3}{8}}\mathbb{E}\Big[\sum_{m=1  }^{\mathbf{N}_{t}-1}|K_{\tau_{m}}| \Big] \leq 2\nu t^{-\frac{3}{8}}\mathbb{E}\Big[\sum_{m=1  }^{\mathbf{N}_{t}-1}\mathbb{E}\big[\Delta\tau_{m}\,\big|\,\widetilde{\mathcal{F}}_{\tau_{m}^{-}} \big] \Big]\\ &\leq t^{-\frac{3}{8}}\mathbb{E}\Big[\sum_{m=1  }^{\mathbf{N}_{t}-1}\Delta\tau_{m}\Big]\leq t^{\frac{5}{8}}. 
\end{align*}
The first inequality is from $|K_{\tau_{m}}|>t^{\frac{3}{8}}$.  For the second and third inequalities, I have applied Part (1) of Prop.~\ref{TimeFlip} and $\sum_{m=1  }^{\mathbf{N}_{t}-1}\Delta \tau_{m}\leq t$, respectively.  Hence, the right side of~(\ref{Itakeskus}) will go to zero for any choice of $\delta<\frac{3}{8}$.

The sum of terms 
$$\sum_{\substack{m=1 \\ \tau_{m}\in \mathbf{G} }   }^{\mathbf{N}_{st}-1} \, -S(K_{\tau_{m}  })\big( |K_{\tau_{m}+\Delta\tau_{m}}|^{2}-|K_{\tau_{m}}|^{2} \big)  $$
on the right side of~(\ref{KrisKros}) is easier, since I can approximate $|K_{r}|$ by the submartingale  $\mathcal{E}_{r}=M_{r}+A_{r}$ as  before.  It is convenient to write $|K_{\tau_{m}+\Delta\tau_{m}}|^{2}-|K_{\tau_{m}}|^{2}$ as a sum of $\big(|K_{\tau_{m}+\Delta\tau_{m}}|-|K_{\tau_{m}}|\big)^{2}$ and $2\big(|K_{\tau_{m}+\Delta\tau_{m}}|-|K_{\tau_{m}}|\big)|K_{\tau_{m}}| $ and to treat the two terms separately.

\end{proof}

\begin{proof}[Proof of Lem.~\ref{LemMartApproxQuad}]
I will show Part (1).  Part (2) follows by similar analysis relying on Part (5) of Prop.~\ref{TimeFlip}.  Part (3) is a consequence of the convergence in law of $t^{-\frac{1}{2}}M_{st}$ over $s\in [0,1]$ as $t\rightarrow \infty$ by Thm.~\ref{SubMartCLT}, the Lindberg condition for $t^{-\frac{1}{2}}M_{st}$ in Lem.~\ref{LemLindberg},  and~\cite[Cor.VI.6.7]{Jacod}. 

The quadratic variation of $\mathbf{m}^{(t)}$ has the following form:
 $$
  [\mathbf{m}^{(t)}]_{s}= t^{-\frac{5}{2}}\sum_{m=1}^{\mathbf{N}_{st}}K_{\tau_{m}}^{2}\big(\Delta\tau_{m}- \mathbb{E}\big[\Delta\tau_{m}\,\big|\, \widetilde{\mathcal{F}}_{\tau_{m}^{-}}\big]   \big)^{2} .$$  
For technical convenience, I will work with the expression on the right with the upper summand $\mathbf{N}_{st}$ replaced by $\mathbf{N}_{st}-1$.  The difference is negligible by Lem.~\ref{LemLindberg}.  I will show the following convergences in probability:
\begin{enumerate}[(i).]  
\item  $\sup_{0\leq s\leq 1}\Big|  t^{-\frac{5}{2}}\sum_{m=1}^{\mathbf{N}_{st}-1}  \Big(K_{\tau_{m}}^{2}\big(\Delta\tau_{m}- \mathbb{E}\big[\Delta\tau_{m}\,\big|\, \widetilde{\mathcal{F}}_{\tau_{m}^{-}}\big]   \big)^{2}- \nu^{-1} |K_{\tau_{m} }|^{3}\Delta\tau_{m}  \Big)\Big| \Longrightarrow 0,  $

\item  $\sup_{0\leq s\leq 1}\Big|   \nu^{-1} t^{-\frac{5}{2}}\sum_{m=1}^{\mathbf{N}_{st}-1}  |K_{\tau_{m} }|^{3}\Delta\tau_{m}  - \nu^{-1}\int_{0}^{s}dr\big|t^{-\frac{1}{2} }K_{rt}  \big|^{3} \Big| \Longrightarrow 0. $
\end{enumerate}

\vspace{.4cm}

\noindent (i).  \hspace{.15cm} For small $\epsilon$, $\mathbb{P}\big[\sup_{0\leq r\leq t}|t^{-\frac{1}{2}}K_{r}|>\epsilon^{-1}  \big]$ is small.   Hence, for $\hat{K}_{r}=K_{r}\chi(|K_{r}|\leq \epsilon^{-1})$, 
\begin{align}\label{Littlewood}
\sum_{m=1}^{\mathbf{N}_{st}-1}  K_{\tau_{m}}^{2}\big(\Delta\tau_{m}- \mathbb{E}\big[\Delta\tau_{m}\,\big|\, \widetilde{\mathcal{F}}_{\tau_{m}^{-}}\big]   \big)^{2}
\end{align}
 is typically equal for all $s\in [0,1]$ to the same expression with $K_{r}$ replaced by $\hat{K}_{r}$.  The advantage of working with $\hat{K}_{r}$ is that it has arbitrarily many moments.  The following is a martingale with respect to the filtration $\widetilde{\mathcal{F}}_{s}^{(t)}$:
 \begin{align}\label{Hardy}
 \mathbf{W}_{s}^{(t),\epsilon} :=t^{-\frac{5}{2}}\sum_{m=1}^{\mathbf{N}_{st}}  \hat{K}_{\tau_{m}}^{2}\Big(\big(\Delta\tau_{m}- \mathbb{E}\big[\Delta\tau_{m}\,\big|\, \widetilde{\mathcal{F}}_{\tau_{m}^{-}}\big]   \big)^{2}- \mathbb{E}\Big[\big(\Delta\tau_{m}- \mathbb{E}\big[\Delta\tau_{m}\,\big|\, \widetilde{\mathcal{F}}_{\tau_{m}^{-}}\big]   \big)^{2}\,\Big|\, \widetilde{\mathcal{F}}_{\tau_{m}^{-}}\Big] \Big). 
\end{align}
  Using Part (1) of Prop.~\ref{TimeFlip} for the first two inequalities below,  the second moment of the above martingale is bounded through the inequalities 
\begin{align*}
\mathbb{E}\Big[t^{-5} \sum_{m=1}^{\mathbf{N}_{st}-1} & \hat{K}_{\tau_{m}}^{4}\mathbb{E}\Big[\Big(\big(\Delta\tau_{m}- \mathbb{E}\big[\Delta\tau_{m}\,\big|\, \widetilde{\mathcal{F}}_{\tau_{m}^{-}}\big]   \big)^{2}- \mathbb{E}\big[\big(\Delta\tau_{m}- \mathbb{E}\big[\Delta\tau_{m}\,\big|\, \widetilde{\mathcal{F}}_{\tau_{m}^{-}}\big]   \big)^{2}\,\big|\, \widetilde{\mathcal{F}}_{\tau_{m}^{-}}\big] \Big)^{2}\,\Big|\,\widetilde{\mathcal{F}}_{\tau_{m}^{-}}\Big]   \Big] \\ & \leq \frac{32}{ \nu^{4}} \mathbb{E}\Big[t^{-5} \sum_{m=1}^{\mathbf{N}_{st}-1}  \hat{K}_{\tau_{m}}^{8}   \Big]\\ &\leq  \frac{64}{ \nu^{3}}\mathbb{E}\Big[t^{-5} \sum_{m=1}^{\mathbf{N}_{st}-1}  |\hat{K}_{\tau_{m}}|^{7}\mathbb{E}\big[\Delta\tau_{m}  \,\big|\,\widetilde{\mathcal{F}}_{\tau_{m}^{-}} \big]   \Big]\\ & \leq \frac{64}{\nu^{3}}\epsilon^{-7}t^{-\frac{1}{2}}\longrightarrow 0.
\end{align*}
 I have multiplied the bounds in the first two inequalities by $2$ to cover the error terms for the approximations of the moments  $\mathbb{E}\big[(\Delta\tau_{m})^{m}\,\big|\, \widetilde{\mathcal{F}}_{\tau_{m}^{-}}\big]$ from Prop. 6.3.  The last inequality follows by $|\hat{K}_{\tau_{m}}|\leq \epsilon^{-1}t^{\frac{1}{2}}$ and removing the nested conditional expectations and  $\sum_{m=1}^{\mathbf{N}_{t}-1}\Delta\tau_{m}<t$.  By Doob's maximal inequality,   
$ \mathbb{E}\big[\sup_{0\leq s\leq 1} \big|\mathbf{W}_{s}^{(t),\epsilon}\big|^{2} \big]$  tends to zero for any fixed $\epsilon$.  By similar applications of Part (1) of Prop.~\ref{TimeFlip} as above, the difference 
$$\mathbb{E}\Big[ \sup_{0\leq s\leq 1} \Big| t^{-\frac{5}{2}}\sum_{m=1}^{\mathbf{N}_{st}-1}  \hat{K}_{\tau_{m}}^{2}\Big( \mathbb{E}\Big[\big(\Delta\tau_{m}- \mathbb{E}\big[\Delta\tau_{m}\,\big|\, \widetilde{\mathcal{F}}_{\tau_{m}^{-}}\big]   \big)^{2}\,\Big|\, \widetilde{\mathcal{F}}_{\tau_{m}^{-}}\Big]-\nu^{-1}|\hat{K}_{\tau_{m}}|  \mathbb{E}\big[\Delta\tau_{m}\,\big|\, \widetilde{\mathcal{F}}_{\tau_{m}^{-}}\big]   \Big) \Big| \Big]=\mathit{O}(t^{\zeta-\frac{1}{2}})$$
tends to zero.  The difference between 
$$\nu^{-1} t^{-\frac{5}{2}}\sum_{m=1}^{\mathbf{N}_{st}}  |\hat{K}_{\tau_{m}}|^{3}\mathbb{E}\big[\Delta\tau_{m}\,\big|\, \widetilde{\mathcal{F}}_{\tau_{m}^{-}}\big] \quad\text{and}\quad \nu^{-1} t^{-\frac{5}{2}}\sum_{m=1}^{\mathbf{N}_{st}}  |\hat{K}_{\tau_{m}}|^{3}\Delta\tau_{m}$$ is a $\widetilde{\mathcal{F}}_{s}^{(t)}$-martingale and tends to zero by a similar (but simpler) argument as for~(\ref{Hardy}).  Finally, the process $\nu^{-1} t^{-\frac{5}{2}}\sum_{m=1}^{\mathbf{N}_{st}-1}  |\hat{K}_{\tau_{m}}|^{3}\Delta\tau_{m} $ is equal to the same expression with $|\hat{K}_{\tau_{m}}|$ replaced by $|K_{\tau_{m}}|$ with probability arbitrarily close to one as $\epsilon\rightarrow 0$.

\vspace{.5cm} 
 
 \noindent (ii).\hspace{.15cm} First, I show  the sum over $|K_{\tau_{m}}|^{3}\Delta\tau_{m}$ can be replaced by a sum over $\int_{\tau_{m}}^{\tau_{m}+\Delta \tau_{m}}dr |K_{r}|^{3}$.  This approximation reduces to some martingale analysis using that $|K_{r}|\approx \mathcal{E}_{r}$:  
\begin{align}\label{Trip}
\Big| \nu^{-1} t^{-\frac{5}{2}}\sum_{m=1}^{\mathbf{N}_{st}-1}& \int_{\tau_{m}}^{\tau_{m}+\Delta\tau_{m}}dr |K_{r}|^{3}-\nu^{-1} t^{-\frac{5}{2}}\sum_{m=1}^{\mathbf{N}_{st}-1}  |K_{\tau_{m}}|^{3}\Delta\tau_{m}    \Big|\nonumber \\ &\leq \nu^{-1}  \sup_{0\leq r\leq t}|K_{r}|^{2} \Big( t^{-\frac{5}{2}}\sum_{m=1}^{\mathbf{N}_{st}-1} \int_{\tau_{m}}^{\tau_{m}+\Delta\tau_{m}}dr\big| |K_{r}|-|K_{\tau_{m}}| \big|\Big). 
\end{align} 
For any $\delta>0$, the random variables $ t^{-1-\delta}\sup_{0\leq r\leq t}|\hat{K}_{r}|^{2}$  tends to zero, since $t^{-\frac{1}{2}}|K_{st}|$ converges to the absolute value of a Brownian motion in the uniform metric.   The quantity $|K_{r}| $ in the sum on the right side of~(\ref{Trip}) can be replaced by $\mathcal{E}_{r}$ with an error $\mathit{O}(|K_{r}|^{-1})$. Moreover, I have the inequality
\begin{align}\label{Delic}
 t^{-\frac{3}{2}+\delta}\mathbb{E}\Big[\sup_{0\leq s\leq 1} \sum_{m=1}^{\mathbf{N}_{st}-1} \int_{\tau_{m}}^{\tau_{m}+\Delta\tau_{m}}dr\big| \mathcal{E}_{r}-\mathcal{E}_{\tau_{m}} \big| \Big] \leq &  t^{-\frac{3}{2}+\delta}\mathbb{E}\Big[ \sum_{m=1}^{\mathbf{N}_{t}-1} \mathbb{E}\Big[\int_{\tau_{m}'}^{\tau_{m}''}dr \big|M_{r}-M_{\tau_{m}}\big|  \,\Big|\, \mathcal{F}_{\tau_{m}'}\Big] \Big]
\nonumber \\ &+ t^{-\frac{3}{2}+\delta}\mathbb{E}\Big[ \sum_{m=1}^{\mathbf{N}_{t}-1} \int_{\tau_{m}}^{\tau_{m}+\Delta\tau_{m} }dr \big|A_{r}-A_{\tau_{m}}\big| \Big],
\end{align}
where $\tau_{n}',\tau_{n}''$ are the Poisson times following $\tau_{n}$ and  $\tau_{m}+\Delta\tau_{m}$, respectively.  Both $\tau_{n}',\tau_{n}''$ are hitting times with respect to the  filtrations $\mathcal{F}_{r},\widetilde{\mathcal{F}}_{r}$.  For the inequality above, I have used the triangle inequality with $\mathcal{E}_{r}=M_{r}+A_{r}$ and introduced nested conditional expectations for the martingale term.  Changing the lower bounds of the integration from $\tau_{m}$ to $\tau_{m}'$ for the martingale term doesn't change the value, since $M_{r}=M_{\tau_{m}}$ for $r\in [\tau_{m},\tau_{m}')$.  

For the drift term, I can use that $\sigma= \frac{d}{ds}\langle M,M\rangle_{r}+2\mathcal{E}_{s}\frac{d}{dr}A_{r}$ to get the bound $A_{r}-A_{\tau_{m}}\leq \int_{\tau_{m}}^{r}dv\frac{2\sigma}{\mathcal{E}_{v}}$. However, since $\mathcal{E}_{v}\geq t^{-\frac{3}{8}}$ over the excursion periods, the first inequality below holds:
$$
\mathbb{E}\Big[ \sum_{m=1}^{\mathbf{N}_{t}-1} \int_{\tau_{m}}^{\tau_{m}+\Delta\tau_{m} }dr \big|A_{r}-A_{\tau_{m}}\big| \Big]\leq  \sigma t^{-\frac{3}{8}}\mathbb{E}\Big[ \sum_{m=1}^{\mathbf{N}_{t}-1} (\Delta\tau_{m} )^{2}\Big]\leq \frac{2}{\nu }  t^{\frac{5}{8}}\mathbb{E}\Big[ \sup_{0\leq r\leq t}|K_{r}|\Big]=\mathit{O}(t^{\frac{9}{8}}).  
$$
The second inequality uses that $\sum_{m=1}^{\mathbf{N}_{t}-1}\Delta\tau_{m}<t$ and Part (1) of Prop.~\ref{TimeFlip} through the 
 standard tricks:
\begin{align*}
\mathbb{E}\Big[ \sum_{m=1}^{\mathbf{N}_{t}-1} (\Delta\tau_{m} )^{2}\Big]&=\mathbb{E}\Big[ \sum_{m=1}^{\mathbf{N}_{t}-1}  \mathbb{E}\big[ \big(\Delta\tau_{m}\big)^{2}  \,\big|\,\widetilde{\mathcal{F}}_{\tau_{m}^{-}} \big] \Big]\leq \frac{4}{\nu^{2}}\mathbb{E}\Big[ \sum_{m=1}^{\mathbf{N}_{t}-1}  K_{\tau_{m}}^{2} \Big]\leq  \frac{8}{\nu}\mathbb{E}\Big[ \sum_{m=1}^{\mathbf{N}_{t}-1} K_{\tau_{m}} \mathbb{E}\big[ \Delta\tau_{m}  \,\big|\,\widetilde{\mathcal{F}}_{\tau_{m}^{-}}\big]  \Big]\\ &= \frac{8}{\nu}\mathbb{E}\Big[ \sum_{m=1}^{\mathbf{N}_{t}-1} \Delta\tau_{m} K_{\tau_{m} } \Big]\leq \frac{8 t}{\nu} \mathbb{E}\Big[ \sup_{0\leq r\leq t}|K_{r}|\Big], 
\end{align*}
where the first and second inequalities hold for large enough $t$.  Thus, the second line of~(\ref{Delic}) vanishes  as $t\rightarrow \infty$.  The martingale term is handled by similar arguments.  


By the above, I can work with the sum of the integrals $\int_{\tau_{m}}^{\tau_{m}+\Delta \tau_{m}}dr |K_{r}|^{3}$.  By the definition of the low energy incursions, the momentum has the upper bound $|K_{r}|\leq 2t^{\frac{3}{8}}$ and thus
$$\sup_{0\leq s\leq 1} \Big| \nu^{-1} t^{-\frac{5}{2}}\sum_{m=1}^{\mathbf{N}_{st}-1} \int_{\tau_{m}}^{\tau_{m}+\Delta\tau_{m}}dr |K_{r}|^{3}- \nu^{-1} t^{-\frac{5}{2}}\int_{0}^{\tau_{\mathbf{N}_{st} } }dr |K_{r}|^{3} \Big|\leq \frac{8}{\nu } t^{-\frac{3}{8}}.  $$ 
 The final remainder $\sup_{0\leq s\leq 1}t^{-\frac{5}{2}}\int_{\tau_{\mathbf{N}_{st} } }^{st}dr\, |K_{r}|^{3}$  for the difference between $t^{-\frac{5}{2}}\int_{0}^{\tau_{\mathbf{N}_{st} } }dr |K_{r}|^{3}$ and  the final expression $t^{-\frac{5}{2}}\int_{0 }^{st}dr |K_{r}|^{3}$ is smaller than  $(\sup_{\tau_{m}\leq t}\Delta\tau_{m}) \sup_{0\leq r\leq t}|K_{r}|^{3} $, which vanishes  by the same argument as for the proof of the Lindberg condition for $\mathbf{m}_{s}^{(t)}$ in Lem.~\ref{LemLindberg}.

\end{proof}

\section*{Acknowledgments}
I thank Mark Fannes and Christian Maes for useful discussions.   This work is supported by the Belgian Interuniversity Attraction Pole P6/02, the Marie Curie funded Research training network project MRTN-CT-2006-035651, Acronym CODY, of the European Commission, and European Research Council grant No. 227772.  I also benefited from NSF FRG grant DMS-0757581 during a visit to the Department of Mathematics at U.C. Davis.

\begin{appendix}

\section{Existence and uniqueness of the quantum dynamical semigroup}\label{AppendixSemigroup}

Lindblad equations with unbounded generators can pose technical difficulties regarding the existence, construction, and uniqueness of  their corresponding quantum dynamical semigroups.  This situation is analogous to that for  Kolmogorov equations in classical Markovian dynamics.  A quantum problem, however, would be considered ``solved" if it was somehow reduced to a classical problem, such as in the case of models that are fully translation invariant~\cite{Holevo}.

As before let $\textup{D}(H)\subset L^{2}(\R)$ denote the domain of the Hamiltonian $H$.  The mathematical definition of the quantum dynamics will require the form generator $ \mathcal{L}:\textup{D}(H)\times \mathcal{B}_{1}\big(L^{2}(\R)\big)\times \textup{D}(H)\rightarrow \C$ given by
 \begin{align*} 
 \mathcal{L}\big(\psi_{1};\, \rho;\, \psi_{2}\big):=\big\langle \Big(  \frac{\textup{i}}{\lambda}H-2^{-1}\Psi^{*}(I)\Big) \psi_{1} \big| \rho\,\psi_{2}\big\rangle
+ \big\langle  \psi_{1} \big|  \rho \Big(  \frac{\textup{i}}{\lambda}H-2^{-1}\Psi^{*}(I)\Big)\psi_{2}\big\rangle  + \big\langle \psi_{1};\,\Psi(\rho) \psi_{2}\big\rangle   .  
\end{align*}
The form generator is designed  to draw the operation of the unbounded terms away from $\rho$ and on to the vectors $\psi_{1},\psi_{2}$, which have a restricted domain. In my case, the map $\Psi$ has finite operator norm, since by the complete positivity of $\Psi$, its operator norm is equal to the norm of $\Psi^{*}(I)\in \mathcal{B}\big(L^{2}(\R)\big)$.  Conveniently, $\Psi^{*}(I)=\mathcal{R}I$ has norm $\mathcal{R}$.  If $\Psi$ were an unbounded map, I would have to look for a convenient Kraus decomposition $\Psi(\rho)=\sum_{j}A_{j}^{*}\rho A_{j} $, and the last term  above would be $\sum_{j}  \big\langle  A_{j}\psi_{1};\,\rho\, A_{j} \psi_{2}\big\rangle $.

 A semigroup of maps $\Phi_{t}$ on $\mathcal{B}_{1}\big(L^{2}(\R)\big)$ is said to be \textit{conservative} if $\Tr[\Phi_{t}(\rho)]=\Tr[\rho]$ for all $t\in \R_{+}$ and positive $\rho \in \mathcal{B}_{1}\big(L^{2}(\R)\big) $.  Alternatively, this can be expressed by the adjoint semigroup as $\Phi_{\lambda,t}^{*}(I)=I$.  The representation of the semigroup 
 $\Phi_{\lambda,t}$  in (2) of Lemma~\ref{LindbladTech}  is a pseudo-Poisson property in analogy with classical Markovian semigroups~\cite[X.1]{Feller}.  The map $\Phi_{\lambda,t}$ is an expectation over the Hamiltonian flow  interrupted at discrete random times by the operation of a transition map $\mathbf{T}:=\mathcal{R}^{-1}\Psi$, where the times occur according to a Poisson clock with rate $\mathcal{R}$.

\begin{lemma}\label{LindbladTech}  There is a unique strongly continuous, conservative semigroup of maps $\Phi_{\lambda,t}$ on $\mathcal{B}_{1}\big(L^{2}(\R)\big)$ satisfying
\begin{align}\label{Blahh}
\langle \psi_{1}|\Phi_{\lambda,t}(\rho)\,\psi_{1}\rangle= \langle \psi_{1}|\rho\,\psi_{1}\rangle +\int_{0}^{t}dr\,  \mathcal{L}\big(\psi_{1}; \Phi_{\lambda,r}(\rho); \psi_{2} \big)  
\end{align}
for all $\psi_{1},\psi_{2}\in \textup{D}(H)$ and $\rho \in \mathcal{B}_{1}\big(L^{2}(\R)\big)$.  
\begin{enumerate}
\item The semigroup $\Phi_{\lambda,t}$ can be written in the form $\Phi_{\lambda,t}(\rho)=  \mathbb{E}\big[U_{\lambda,t}(\xi)\rho U_{\lambda,t}^{*}(\xi) \big]$,
where $U_{\lambda,t}(\xi)= e^{-\frac{\textup{i}(t-t_{n} ) }{\lambda}H}e^{\textup{i}v_{n}X}\cdots    e^{-\frac{i(t_{2}-t_{1}) }{\lambda}H} e^{\textup{i}v_{1}X}$, the expectation is with respect to a L\'evy process with rate density $j(v)$, and $\xi=\big(t_{1},v_{1}; \dots   ;t_{n},v_{n}\big) $ is the realization of the process over the time interval $[0,t]$.  
\item
Alternatively, $\Phi_{\lambda,t}(\rho)=  \mathbb{E}\big[\Phi_{t,\xi}^{(\lambda)}(\rho)\big]$, where the expectation is with respect to a Poisson clock with rate $\mathcal{R}$, the Poisson times over the interval $[0,t]$ are $\xi=(t_{1},\dots,t_{n})$, and $\Phi_{t,\xi}^{(\lambda)}:\mathcal{B}_{1}\big(L^{2}(\R)\big)$ is defined by
$$\Phi_{t,\xi}^{(\lambda)}(\rho):=\mathcal{R}^{-n}    e^{-\frac{\textup{i}(t-t_{n})}{\lambda}H}  \Psi(   \cdots  e^{-\frac{\textup{i}(t_{2}-t_{1})}{\lambda}H}\Psi(e^{-\frac{\textup{i}t_{1}}{\lambda}H}\rho e^{\frac{\textup{i}t_{1}}{\lambda}H})e^{\frac{\textup{i}(t_{2}-t_{1})}{\lambda}H}\cdots)e^{\frac{\textup{i}(t-t_{n })}{\lambda}H}.  $$
 .   

\end{enumerate}

\end{lemma}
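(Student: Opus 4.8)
\textbf{Proof proposal for Lemma~\ref{LindbladTech}.}

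The plan is to build the semigroup explicitly by the formula in Part (1), verify that it is strongly continuous and conservative, check that it solves the form equation~(\ref{Blahh}), and then argue uniqueness via a standard Gr\"onwall/Dyson-series argument exploiting that the ``hard'' term in $\mathcal{L}$ involving $\Psi$ is bounded. First I would fix the candidate: define $\Phi_{\lambda,t}(\rho) := \mathbb{E}\big[U_{\lambda,t}(\xi)\,\rho\, U_{\lambda,t}^{*}(\xi)\big]$ with $U_{\lambda,t}$ as in~(\ref{LaughableMan}) and the expectation over the L\'evy process with rate density $j(v)$. Since for each realization $\xi$ the map $\rho\mapsto U_{\lambda,t}(\xi)\rho U_{\lambda,t}^{*}(\xi)$ is a unitary conjugation, hence trace-preserving, completely positive, and a contraction on $\mathcal{B}_{1}(L^{2}(\R))$, the expectation (a norm-convergent average, since $\mathcal{R}=\int j(v)\,dv<\infty$ so the number of Poisson events in $[0,t]$ is a.s.\ finite with finite mean) is again completely positive, trace-preserving, and contractive; in particular $\Phi_{\lambda,t}^{*}(I)=I$, giving conservativity. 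The semigroup property $\Phi_{\lambda,t+s}=\Phi_{\lambda,t}\circ\Phi_{\lambda,s}$ follows from the independent-increments/strong-Markov structure of the L\'evy process together with the factorization $U_{\lambda,t+s}(\xi)=U_{\lambda,t}(\theta_{s}\xi)\,U_{\lambda,s}(\xi)$ after conditioning on the events in $[0,s]$. Strong continuity at $t=0$ reduces to showing $\mathbb{E}\big[\|U_{\lambda,t}(\xi)\rho U_{\lambda,t}^{*}(\xi)-\rho\|_{1}\big]\to 0$; one splits on whether any Poisson event occurs in $[0,t]$ (probability $\mathit{O}(t)$, contributing $\mathit{O}(t)$ since the integrand is bounded by $2\|\rho\|_{1}$) and on the no-event part, where $U_{\lambda,t}(\xi)=e^{-\frac{\textup{i}t}{\lambda}H}$ and strong continuity of the Hamiltonian group plus dominated convergence (approximating $\rho$ by finite-rank operators) does the job.

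Next I would verify~(\ref{Blahh}). Fix $\psi_{1},\psi_{2}\in\textup{D}(H)$, which is invariant under the unitaries $e^{-\textup{i}rH/\lambda}$ and $e^{\textup{i}vX}$ (the latter because $e^{-\textup{i}vX}He^{\textup{i}vX}=(P+v)^{2}+V$ by Wigner--Weyl, relatively bounded to $H$; this point is already used around Prop.~\ref{StochLem}). Condition on the first Poisson time $t_{1}$ of the rate-$\mathcal{R}$ clock, which is exponential with density $\mathcal{R}e^{-\mathcal{R}r}$: for $r<t_{1}$ the evolution is purely Hamiltonian and $\frac{d}{dr}\langle\psi_{1}|e^{\textup{i}rH/\lambda}\rho e^{-\textup{i}rH/\lambda}\psi_{2}\rangle = \langle\frac{\textup{i}}{\lambda}H\psi_{1}|\rho\psi_{2}\rangle+\langle\psi_{1}|\rho\frac{\textup{i}}{\lambda}H\psi_{2}\rangle$ holds in the strong sense on $\textup{D}(H)$; at $t_{1}$ one applies $\mathbf{T}=\mathcal{R}^{-1}\Psi$ and the compensator of the Poisson jumps produces exactly the $-\frac12\langle\Psi^{*}(I)\psi_{1}|\rho\psi_{2}\rangle-\frac12\langle\psi_{1}|\rho\Psi^{*}(I)\psi_{2}\rangle+\langle\psi_{1}|\Psi(\rho)\psi_{2}\rangle$ terms (using $\Psi^{*}(I)=\mathcal{R}I$). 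Assembling the Dyson-type series in the number of Poisson events and differentiating in $t$ gives~(\ref{Blahh}); alternatively one can write the integral identity directly from the series and check it term by term, which avoids differentiability subtleties. Part (2) then follows from Part (1) by the elementary observation that a L\'evy process with total jump rate $\mathcal{R}$ has its jump times distributed as a rate-$\mathcal{R}$ Poisson process, with i.i.d.\ marks of density $\mathcal{R}^{-1}j(v)$; pushing the mark-average through each conjugation $e^{\textup{i}vX}(\cdot)e^{-\textup{i}vX}$ turns it into $\mathbf{T}=\mathcal{R}^{-1}\Psi$, which is precisely the definition of $\Phi_{t,\xi}^{(\lambda)}$.

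For uniqueness, suppose $\Phi_{\lambda,t}'$ is another strongly continuous conservative semigroup satisfying~(\ref{Blahh}). Set $\Delta_{t}:=\Phi_{\lambda,t}'-\Phi_{\lambda,t}$. Rewriting~(\ref{Blahh}) in mild form against the free (Hamiltonian-only) evolution $\mathcal{U}_{t}(\rho):=e^{-\textup{i}tH/\lambda}\rho e^{\textup{i}tH/\lambda}$ with generator containing the commutator plus the $-\frac12\{\Psi^{*}(I),\cdot\}$ term (both of which, when paired with $\textup{D}(H)$ vectors, give an isometric group up to the bounded piece $\Psi^{*}(I)=\mathcal{R}I$ that commutes with everything), one obtains a closed Duhamel identity $\Phi_{\lambda,t}(\rho)=e^{-\mathcal{R}t}\mathcal{U}_{t}(\rho)+\int_{0}^{t}e^{-\mathcal{R}(t-r)}\mathcal{U}_{t-r}\big(\Psi(\Phi_{\lambda,r}(\rho))\big)\,dr$, valid for both $\Phi_{\lambda,t}$ and $\Phi_{\lambda,t}'$ because the equation~(\ref{Blahh}) determines exactly this. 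Subtracting, $\|\Delta_{t}(\rho)\|_{1}\le \int_{0}^{t}\|\Psi\|\,\|\Delta_{r}(\rho)\|_{1}\,dr = \mathcal{R}\int_{0}^{t}\|\Delta_{r}(\rho)\|_{1}\,dr$, and Gr\"onwall forces $\Delta_{t}=0$. The main obstacle — and the place where I would be most careful — is the passage from the weak form~(\ref{Blahh}) (which only tests against $\textup{D}(H)\times\textup{D}(H)$) to the mild/Duhamel identity on all of $\mathcal{B}_{1}(L^{2}(\R))$: one must justify that the form equation pins down the semigroup uniquely rather than merely constraining it, which is exactly where the boundedness of $\Psi$ (so that the unbounded part reduces to the well-understood isometric group $\mathcal{U}_{t}$ twisted by the scalar $e^{-\mathcal{R}t}$) makes the argument go through cleanly, in contrast to genuinely unbounded-noise Lindblad equations where conservativity itself can fail.
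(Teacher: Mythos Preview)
Your proposal is correct and uses essentially the same ingredients as the paper's proof: both rely on $\Psi^{*}(I)=\mathcal{R}I$ so that the unbounded part reduces to the scalar-damped unitary group $e^{-\mathcal{R}t}\,e^{-\textup{i}tH/\lambda}(\cdot)e^{\textup{i}tH/\lambda}$, boundedness of $\Psi$, and the Dyson/Duhamel series identified with the Poisson/L\'evy expectation. The only difference is organizational---the paper starts from the Duhamel form (citing Davies for its equivalence with~(\ref{Blahh})), builds the Dyson series, and then reads off the stochastic representations, whereas you start from the stochastic formula in Part~(1), verify the semigroup properties and~(\ref{Blahh}) directly, and invoke the Duhamel identity only for uniqueness; both routes close the same loop.
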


\begin{proof}
By~\cite{Davies}, solving the integral equation~(\ref{Blahh}) is  equivalent to solving
\begin{align*}
\langle \psi_{1}|\Phi_{\lambda,t}(\rho)\,\psi_{1}\rangle= e^{-\mathcal{R}t}\langle  e^{\frac{\textup{i}t}{\lambda} H}\psi_{1}|\rho\,e^{\frac{\textup{i}t}{\lambda} H}\psi_{1}\rangle +\int_{0}^{t}dr\,  e^{-\mathcal{R}(t-r)}\Big\langle e^{\frac{\textup{i}(t-r)}{\lambda} H}\psi_{1};  \Psi\big(\Phi_{\lambda,r}(\rho)\big) e^{\frac{\textup{i}(t-r)}{\lambda} H}\psi_{2} \Big\rangle . 
\end{align*}
In most unbounded cases, there is a technical issue in checking that $A_{\lambda}:-\frac{\textup{i}}{\lambda}H-2^{-1}\Psi^{*}(I)$ defines an $m$-accretive operator over an appropriate domain, and therefore generates a contractive semigroup on $L^{2}(\R)$.  However, since $ \Psi^{*}(I)=\mathcal{R}I$, this is trivial given that $H$ is self-adjoint, and I can factor $e^{tA_{\lambda}}=e^{-t\frac{\mathcal{R}}{2} }e^{ -\frac{\textup{i}t}{\lambda}H }  $.  Since $\Psi$ and $ e^{\frac{\textup{i}t}{\lambda} H}$ are bounded,  a  solution to the above integral equation can be constructed by the Dyson series
\begin{align}\label{Pygmy}
 \Phi_{\lambda,t}=e^{-t\mathcal{R}}\sum_{n=0}^{\infty}\mathcal{R}^{n}\int_{0\leq t_{1} \dots t_{n}\leq t} \Phi_{t,\xi}^{(\lambda)},  
 \end{align}
 where $\Phi_{t,\xi}^{(\lambda)}$ and $\xi$ are defined as in statement (2) of the lemma.  The semigroup  $\Phi_{\lambda,t}$ is strongly continuous, since $\Psi$ is bounded and the group $e^{ -\frac{\textup{i}t}{\lambda}H } $ is strongly continuous.  To check conservativity, it can be computed that $\Phi_{\lambda,t}^{*}(I)=I$ using  $\Psi^{*}(I)=\mathcal{R}I$ and that $e^{ -\frac{\textup{i}t}{\lambda}H }$ is unitary.

  The summation on the right side of~(\ref{Pygmy}) is equal to $ \mathbb{E}\big[\Phi_{t,\xi}^{(\lambda)}(\rho)\big]$, where the expectation is with respect to the a Poisson process with rate $\mathcal{R}$.  The other stochastic representation for $\Phi_{\lambda,t}$ is obtained by expanding  $\Psi$ in its integral form.  The resulting integrals can be commuted, because the integrands are completely positive maps.

\end{proof}

\section{The fiber decomposition for one-dimensional periodic Schr\"odinger equations  }\label{AppendixFiber}

The reader is directed to~\cite{Reed} for a more detailed discussion of the structure of periodic Schr\"odinger equations.   A Schr\"odinger Hamiltonian $P^{2}+V(X)$ with a period-$2\pi$ potential satisfies 
$$  e^{\textup{i}2\pi P} (P^{2}+V(X))e^{-\textup{i}2\pi P}=P^{2}+V(X+2\pi)=P^{2}+V(X).   $$
Commuting with $e^{\textup{i}2\pi P}$ implies that  $P^{2}+V(X)$ must have  invariant spaces corresponding to the spectral values for $e^{\textup{i}2 \pi P}$ (i.e. the unit circle in $\C$).   The Hilbert space $\mathcal{H}=L^{2}(\R)$ admits a fiber decomposition
\begin{align}\label{FiberInt}
\mathcal{H}= \int_{ [-\frac{1}{2},\frac{1}{2}) }^{\oplus}d\phi\,\calH_{\phi},  \quad  \quad \quad  \mathcal{H}_{\phi}\cong L^{2}\big([-\pi,\pi) \big), 
\end{align}
with fiber-maps sending $f\in L^{2}(\R)$ to $[f]_{\phi}\in L^{2}\big([-\pi,\pi) \big)$ and formally defined according to the partial Fourier transform 
$$[f]_{\phi}(x)=\frac{1}{(2\pi)^{\frac{1}{2}}}\sum_{n\in \Z}e^{-\textup{i}2\pi n \phi}f(x+2\pi n),\quad \quad x\in [-\pi,\pi).       $$
From the Fourier transform formula above,  the eigen space property for  $e^{\textup{i}2\pi P}$ is clear: $ [e^{\textup{i}2\pi P}f]_{\phi}(x)=e^{ \textup{i}2\pi \phi } [f]_{\phi}   $.  The parameter $\phi$ is called the \textit{quasimomentum} or the \textit{crystal momentum}, and its domain $[-\frac{1}{2},\frac{1}{2})$ is the \textit{first Brillouin zone}.  The operation of the Hamiltonian $H$ on the $\phi$-fiber is   
$  [Hf]_{\phi}= H_{\phi}[f]_{\phi}$ for   
$$H_{\phi}= -\big(\frac{d^{2}}{dx^{2}}\big)_{\phi}+V(x),$$
where  $V(x)$ is interpreted as a multiplication operator on $L^{2}\big([-\pi,\pi) \big)$, and
 $(\frac{d^{2}}{dx^{2}})_{\phi}$ is the Laplacian with boundary conditions
$$g(-\pi )=e^{\textup{i}2\pi\phi}g(\pi)\quad \quad\text{and} \quad \quad \frac{d g}{dx}(-\pi)=e^{\textup{i}2\pi\phi}\frac{dg}{dx}(\pi). $$

 For each $\phi\in [-\frac{1}{2},\frac{1}{2}) $, the self-adjoint operator $H_{\phi}$ has compact resolvent.  The eigenvalues are  non-degenerate for $\phi \neq -\frac{1}{2},0 $  and are labeled progressively as $E_{n,\phi}$ by a parameter $n\in \mathbb{N}$ called the \textit{band index}.   When $\phi \neq -\frac{1}{2},0 $, the pair $ (n,\phi)$ is related in the extended-zone scheme to a parameter $k\in \R-\frac{1}{2}\Z$ through the relations 
\begin{align}\label{Extended}
k=\phi\,\textup{mod}\, 1, \hspace{1cm}\text{and}\hspace{1cm}   n= \left\{  \begin{array}{cc}  2|k-\phi|    &  S(k)=S(\phi) ,  \\  \quad & \quad \\   2|k-\phi|-1   &  S(k)=-S(\phi),    \end{array} \right.  
\end{align}
where $S:\R\rightarrow \{\pm 1\}$ is the sign function. The assignment of $ (n,\phi)$ for  $\phi \in \{-\frac{1}{2},0\}$ is a matter of convention that I am not concerned with, since the set $ \{-\frac{1}{2},0\}$ has measure zero in the direct integral~(\ref{FiberInt}).  The dispersion relation $E:\R\rightarrow \R_{+}$ is defined as $E(k)=E_{n,\phi}$ for $\phi \neq -\frac{1}{2},0$ and $k$ related to $(n,\phi)$ as above, and I hold the convention that $E(k)$ is symmetric and left-continuous for $k\geq 0$.

For $\phi \neq -\frac{1}{2},0$,  let $\psi_{n,\phi}$ be a normalized eigenvector for $H_{\phi}$ with eigenvalue $E_{n,\phi}$.    I can pick the eigenvectors    $\psi_{n,\phi}$ to  vary continuously (and, in fact, smoothly~\cite[Thm.XIII.90]{Reed}) as elements in $L^{2}\big([-\pi,\pi)\big)$ for $\phi \in (-\pi,0)$ and $\phi\in (0,\pi)$.   Given $f\in \mathcal{H}$, I can assign an extend-zone scheme representation $\widehat{f}\in L^{2}(\R)$ through
$$   \widehat{f}(k)= \langle \psi_{n,\phi}| [f]_{\phi}\rangle       , \hspace{2cm}   k \in \R-\frac{1}{2}\Z,   $$
where $n$,$\phi$ are determined by $k$ as above. Again, the assignment of $\widehat{f}(k)$ for $k\in \frac{1}{2}\Z$ is arbitrary.  In analogy with the position and momentum operators, I can define a self-adjoint operator  $P_{\scriptscriptstyle{Q}}$ that acts on the domain 
$ \big\{ f\in \mathcal{H}\,|\, \int_{\R}dk|\widehat{f}(k)|^{2}<\infty\big\} $ as multiplication in the extended-zone scheme representation
$$   \widehat{(P_{\scriptscriptstyle{Q}}f)}(k)=k\widehat{f}(k).       $$
By standard operator calculus, I can define functions of  $P_{\scriptscriptstyle{Q}}$, and the Hamiltonian is given by $H=E(P_{\scriptscriptstyle{Q}})$.  For $k\in \R-\frac{1}{2}\Z$, I define a tempered distribution $| k\rangle_{\scriptscriptstyle{Q}} $  such that for an element $f\in \mathcal{S}(\R)$ in Schwartz space, 
\begin{align*}
\langle  f| k\rangle_{\scriptscriptstyle{Q}} := & \sum_{n}e^{\textup{i}2\pi n\phi}\int_{[-\pi,\pi)}dx   \overline{f}(x+2\pi n)  \psi_{n,\phi}( x  )  \\ =&\int_{[-\pi,\pi)}dx\overline{[f]}_{\phi}(x)\psi_{n,\phi}( x  )  :=\overline{\widehat{f}(k)},  
\end{align*}
where $(n,\phi)$ is determined  by $k$.  In the usual senses, I have the formal relations ${  }_{\scriptscriptstyle{Q}}\langle k'| k\rangle_{\scriptscriptstyle{Q}} =\delta(k'-k)$ and $P_{\scriptscriptstyle{Q}}| k\rangle_{\scriptscriptstyle{Q}}=k | k\rangle_{\scriptscriptstyle{Q}}$.

\subsection{Eigenket conventions for the Dirac comb }

There remains much choice in the phase for the eigenvectors $\psi_{n,\phi}\in L^{2}\big([-\pi,\pi)\big)$ determining the extend-zone scheme of the last section.  For the Dirac comb, I will fix the definition with a specific closed expression for the eigenvectors.  With the correspondence of $(n,\phi)\in \mathbb{N}\times (-\frac{1}{2},\frac{1}{2})$ with $k\in \R-\frac{1}{2}\Z$ in~(\ref{Extended}), I will replace the subscript by $k$: $\widetilde{\psi}_{k}:=\psi_{n,\phi}$.  The eigenfunctions $\widetilde{\psi}_{k}\in L^{2}\big([-\pi,\pi)\big)$ for $k\in \R-\frac{1}{2}\Z$ are given by
\begin{align*}
\widetilde{\psi}_{k}(x)= N_{k}^{-\frac{1}{2}}\left\{  \begin{array}{cc} \frac{e^{\textup{i}2\pi (\mathbf{q}(k)-k)  }-1    }{ e^{\textup{i}2\pi(k +\mathbf{q}(k))}  -1 } e^{-\textup{i} x \mathbf{q}(k) }+e^{\textup{i}2\pi( \mathbf{q}(k)-k) }  e^{\textup{i} x \mathbf{q}(k)  } &  -\pi\leq x\leq 0  ,  \\  \quad & \quad \\ \frac{e^{\textup{i}2\pi(\mathbf{q}(k)-k)  }-1     }{1-e^{-\textup{i}2 \pi(k +\mathbf{q}(k))} }e^{-\textup{i} x \mathbf{q}(k)  }+   e^{\textup{i}x  \mathbf{q}(k)  }     & 0\leq x< \pi ,   \end{array} \right.  
\end{align*}
where $N_{k}>0$ is a normalization, and $  \mathbf{q}:\R\rightarrow \R $ is defined as in~(\ref{Energies}).  For $|k|\gg 1$, I approximately have that $\mathbf{q}(k)\approx k$, and the Bloch function $\widetilde{\psi}_{k}$ is approximately the plane wave  $(2\pi)^{-\frac{1}{2}}e^{\textup{i}xk}$, except when $k+q(k)\approx 2k$ is near an integer.

\subsection{The diagonal in the extended-zone scheme representation}\label{SecDiaTech} 

A density matrix $\rho\in \mathcal{B}_{1}\big(\mathcal{H}\big)$ determines a probability density $[\rho]_{\scriptscriptstyle{D}}\in L^{1}(\R)$ corresponding to the distribution in the extended-zone scheme variable.  Intuitively, this is given by  the diagonal of the integral kernel $\rho(k_{1},k_{2}):={ }_{\scriptscriptstyle{Q}}\langle k_{1}|\rho |k_{2}\rangle_{\scriptscriptstyle{Q}}$, although kernels are only defined a.e. $\R\times \R$ for Hilbert-Schmidt operators, so this does not offer a rigorous  definition without some additional condition on the kernel such as continuity.  For a rigorous definition,  notice that there is a unique probability measure $\mu_{\rho}$ on $\R$  such that  for all $g\in L^{\infty}(\R)$
$$ 
 \Tr\big[  g(P_{\scriptscriptstyle{Q}}) \rho \big]  =   \int_{\R}d\mu_{\rho}(k)\,g(k).      $$
This follows by the Riesz representation theorem, since the left side is positive for $g\geq 0$, bounded in absolute value by $\|g\|_{\infty}$, and equal to one for $g=1_{\R}$.  By the continuity of the spectrum of $P_{\scriptscriptstyle{Q}}$, the measure must be continuous with respect to Lebesgue measure, and I denote the Radon-Nikodym derivative of $\mu_{\rho}$ by $[\rho]_{\scriptscriptstyle{D}}\in L^{1}(\R)$.

\subsection{Invariant fibers for the Lindblad dynamical semigroup  }

The rate of kicks from the gas is invariant of the spatial location of the particle.  The total dynamics is thus invariant under spatial shifts by $2\pi$.  For the dynamical maps $\Phi_{\lambda, t}:\mathcal{B}_{1}\big(L^{2}(\R)\big)$, this feature is expressed as the covariance 
\begin{align}\label{2PIShift}
\Phi_{\lambda,t}\big(e^{\textup{i}2\pi P }\rho e^{-\textup{i} 2\pi P }    \big)= e^{\textup{i}2\pi P }\Phi_{\lambda, t}(\rho)e^{-\textup{i} 2\pi P }. 
\end{align}
Not surprisingly, this implies that the Banach space $\mathcal{B}_{1}\big(L^{2}(\R)\big)$ decomposes into invariant fibers indexed by the Brillouin zone.  This can be understood formally by the statement that the kernel values $  { }_{\scriptscriptstyle{Q}}\langle k|\rho |k+n+\phi  \rangle_{\scriptscriptstyle{Q}}$ for $k\in \R$, $n\in \Z$ do not interact dynamically for different  $\phi\in [-\frac{1}{2},\frac{1}{2})$.  This holds also with  the kets $|k \rangle_{\scriptscriptstyle{Q}}$ replaced by the standard momentum kets $|k\rangle$.  
  Mathematically, it is easiest to discuss the invariant spaces for the adjoint semigroup  $\Phi_{\lambda,t}^{*}$. Let $\mathcal{A}_{\mathbb{T}}\subset \mathcal{B}\big(L^{2}(\R)\big)$ be the algebra of all bounded operators commuting with $e^{\textup{i} 2\pi P }$.  
By the analogous covariance property~(\ref{2PIShift}) for $\Phi_{\lambda,t}^{*}$,  it follows that $\Phi_{t}^{*}(\mathcal{A}_{\mathbb{T}})\subset \mathcal{A}_{\mathbb{T}}$.  More generally, the Banach spaces $e^{\textup{i}\phi X }\mathcal{A}_{\mathbb{T}}$ for $\phi\in [-\frac{1}{2},\frac{1}{2})$ will also be invariant by the Weyl commutation relation $$e^{\textup{i}2\pi  P }e^{\textup{i}\phi X } e^{-\textup{i} 2\pi P }= e^{\textup{i}2\pi \phi } e^{\textup{i}\phi X }.$$

\section{Dispersion without the Dirac comb }\label{AppendixNoComb}

The elementary lemma below implies that the spatial dispersion for the particle scales as $t^{\frac{3}{2}}$ for times $t\gg 1$ when the Dirac comb is not present.  This scaling agrees with the classical case.

\begin{lemma}
Let $\rho_{\lambda,t}\in \mathcal{B}_{1}\big(L^{2}(\R)\big) $ satisfy the Lindblad equation~(\ref{TheModel}) with $\alpha=0$.  Also let  $Q_{\lambda,t}(x):=\langle x| \rho_{\lambda,t}|x\rangle  $ be the position distribution for the particle.   As $t\rightarrow \infty$, the renormalized density $  t^{\frac{3}{2}} Q_{\lambda,t}(t^{\frac{3}{2}}x)$ converges in law to a Gaussian with variance $\frac{4\sigma}{3\lambda^{2}}$.

\end{lemma}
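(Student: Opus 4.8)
The plan is to work with the momentum distribution $D_{\lambda,t}(k)=\langle k|\rho_{\lambda,t}|k\rangle$ and the full kernel of $\rho_{\lambda,t}$ in the momentum basis, and to exploit the fact that with $\alpha=0$ the Hamiltonian $H=P^{2}$ is diagonal in the momentum representation, so the evolution simplifies dramatically. The noise map $\Psi$ acts as in~(\ref{ShyKid}), i.e.\ $\langle k'|\Psi(\rho)|k\rangle=\int_{\R}dv\,j(v)\langle k'-v|\rho|k-v\rangle$, and the free Hamiltonian term just multiplies $\langle k'|\rho|k\rangle$ by the phase $e^{-\frac{\textup{i}t}{\lambda}(k'^{2}-k^{2})}$. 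First I would use the representation~(\ref{Couture}) (or equivalently Lemma~\ref{LindbladTech}): the momentum process $P$ undergoes an unbiased compound Poisson / L\'evy random walk (the classical analog~(\ref{NoiseAnalog})), so that if $K_{r}$ denotes the position at momentum-``time'' $r$, then $\langle k|\rho_{\lambda,t}|k\rangle$ is exactly the law of $K_{t}$, a mean-zero L\'evy process with increment variance $\sigma$ per unit time, started from the initial momentum distribution $\langle k|\rho|k\rangle$. This is the diagonal; the novelty here relative to the classical case is the spatial distribution, which sees the off-diagonal kernel.

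The second and main step is to compute $Q_{\lambda,t}(x)=\langle x|\rho_{\lambda,t}|x\rangle$ in terms of the momentum kernel. Writing the position marginal as a Fourier transform of the kernel along the anti-diagonal, $Q_{\lambda,t}(x)=\frac{1}{2\pi}\int_{\R}d\xi\int_{\R}dk\, e^{\textup{i}\xi x}\,\langle k+\tfrac{\xi}{2}|\rho_{\lambda,t}|k-\tfrac{\xi}{2}\rangle$, I would track how the off-diagonal value $\langle k+\tfrac{\xi}{2}|\rho_{\lambda,t}|k-\tfrac{\xi}{2}\rangle$ evolves. Between kicks it picks up the phase $e^{-\frac{\textup{i}t}{\lambda}((k+\xi/2)^{2}-(k-\xi/2)^{2})}=e^{-\frac{2\textup{i}t}{\lambda}k\xi}$, i.e.\ a phase linear in $k$ with rate $\propto \xi/\lambda$; this is exactly the free-particle relation $x\sim 2Pt/\lambda$ (group velocity $2k/\lambda$). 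The kicks translate $k\mapsto k-v$ with rate density $j(v)$, and crucially they act identically on the $+\xi/2$ and $-\xi/2$ arguments, so the coherence $|\langle k+\xi/2|\rho|k-\xi/2\rangle|$ is \emph{not} destroyed — only the momentum label performs the random walk while the accumulated phase is $\exp(-\frac{2\textup{i}}{\lambda}\xi\int_{0}^{t}K_{r}\,dr)$ where $K_{r}$ is the running momentum. Thus $\langle k+\tfrac{\xi}{2}|\rho_{\lambda,t}|k-\tfrac{\xi}{2}\rangle=\mathbb{E}\big[e^{-\frac{2\textup{i}}{\lambda}\xi Y_{t}}\,\langle K_{t}+\tfrac{\xi}{2}|\rho|K_{t}-\tfrac{\xi}{2}\rangle\big]$ in the appropriate sense, with $Y_{t}=\int_{0}^{t}K_{r}\,dr$ and $K_{r}$ the L\'evy momentum walk from the initial distribution. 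Inserting this into the Fourier representation and carrying out the $\xi$-integral shows $Q_{\lambda,t}$ is, up to an $\mathit{O}(1)$-width smearing from the initial state, the law of $X_{t}:=\tfrac{2}{\lambda}Y_{t}=\tfrac{2}{\lambda}\int_{0}^{t}K_{r}\,dr$.

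The third step is a classical central limit theorem for $t^{-3/2}Y_{t}$. Since $K_{r}$ is a mean-zero L\'evy process with $\mathrm{Var}(K_{r})=\sigma r$ (using $\sigma=\int dv\,j(v)v^{2}<\infty$; the initial second moment plays no role in the scaling limit), one has $\mathrm{Var}(Y_{t})=\mathrm{Var}\big(\int_{0}^{t}K_{r}dr\big)=2\int_{0}^{t}dr\int_{0}^{r}ds\,\mathrm{Cov}(K_{r},K_{s})=2\sigma\int_{0}^{t}dr\int_{0}^{r}ds\,s=\frac{\sigma t^{3}}{3}$, and by the functional CLT (or directly via Donsker for $K$ and continuity of integration) $t^{-3/2}Y_{t}\Rightarrow \int_{0}^{1}\sigma^{1/2}\mathbf{B}_{r}\,dr=\mathcal{N}(0,\sigma/3)$. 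Hence $t^{-3/2}X_{t}=\tfrac{2}{\lambda}t^{-3/2}Y_{t}$ converges in law to a Gaussian with variance $\frac{4}{\lambda^{2}}\cdot\frac{\sigma}{3}=\frac{4\sigma}{3\lambda^{2}}$, which is precisely the claim once one checks that $Q_{\lambda,t}$ is the density of $X_{t}$ (modulo negligible initial-state broadening) and that the rescaling $t^{3/2}Q_{\lambda,t}(t^{3/2}x)$ is the density of $t^{-3/2}X_{t}$. I expect the main obstacle to be the rigorous justification of the interchange of the $\xi$-integration with the expectation over the L\'evy path and the control of the contribution of the initial coherences $\langle k+\xi/2|\rho|k-\xi/2\rangle$ for large $|\xi|$ — i.e.\ making precise that $Q_{\lambda,t}$ really is, to leading order, the pushforward of the L\'evy law of $\tfrac{2}{\lambda}Y_{t}$; one clean route is to prove convergence of characteristic functions $\mathbb{E}[e^{\textup{i}\eta t^{-3/2}X_{t}}]=\mathrm{Tr}[\rho_{\lambda,t}e^{\textup{i}\eta t^{-3/2}X}]=\mathrm{Tr}[\rho\,\Phi^{*}_{\lambda,t}(e^{\textup{i}\eta t^{-3/2}X})]$ directly, using the Heisenberg evolution $e^{-\textup{i}vX}$-conjugation structure of~(\ref{LaughableMan}) to reduce $\Phi^{*}_{\lambda,t}(e^{\textup{i}\eta t^{-3/2}X})$ to $\mathbb{E}[e^{\textup{i}\eta t^{-3/2}(X+\frac{2}{\lambda}\int_{0}^{t}(P+\text{kicks})dr)}]$ and then applying the scalar CLT uniformly on the support of $\rho$.
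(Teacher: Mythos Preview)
Your proposal is correct and arrives at the right limiting variance $\tfrac{4\sigma}{3\lambda^{2}}$, but it takes a genuinely different route from the paper. The paper's proof is a three-line characteristic-function computation: it observes that for $\alpha=0$ the full Weyl characteristic function has the closed form
\[
\Tr\big[e^{\textup{i}vX+\textup{i}qP}\rho_{\lambda,t}\big]
=\exp\Big(\int_{0}^{t}dr\big(\phi\big(q+\tfrac{2}{\lambda}(t-r)v\big)-\phi(0)\big)\Big)\,\Tr\big[e^{\textup{i}vX+\textup{i}qP}\rho\big],
\]
with $\phi(q)=\int j(w)e^{\textup{i}wq}\,dw$, sets $q=0$, rescales $v\mapsto t^{-3/2}v$, Taylor-expands $\phi$ at $0$ (using $\phi'(0)=0$, $\phi''(0)=-\sigma$), and reads off the Gaussian limit directly. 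No pathwise interpretation, no CLT, no interchange of limits.

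Your approach is the probabilistic unpacking of that same formula: the exponent $\int_{0}^{t}(\phi(\cdot)-\phi(0))\,dr$ is precisely the L\'evy--Khinchine representation of the characteristic function of $\int_{0}^{t}L_{r}\,dr$, so your identification of $Q_{\lambda,t}$ with the law of $\tfrac{2}{\lambda}\int_{0}^{t}K_{r}\,dr$ (modulo the initial state) and the subsequent Donsker/continuous-mapping argument are the stochastic counterpart of the paper's algebra. Your route gives a clearer physical picture of why the $t^{3/2}$ scale appears; the paper's route is shorter, handles the initial-state factor $\Tr[e^{\textup{i}vX}\rho]\to 1$ automatically, and entirely avoids the interchange-of-expectation-and-$\xi$-integral issue you correctly flag as the main obstacle. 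In fact the ``clean route'' you sketch in your final sentence \emph{is} the paper's proof. One small slip: in your off-diagonal kernel formula the argument of the initial kernel should be $K_{0}\pm\tfrac{\xi}{2}$ rather than $K_{t}\pm\tfrac{\xi}{2}$, though after the $k$-integration this is immaterial.
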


\begin{proof}
I will show pointwise convergence as $t\rightarrow \infty$ for the characteristic functions $\varphi_{\lambda,t}$ of  $  t^{\frac{3}{2}} Q_{\lambda,t}(t^{\frac{3}{2}}x)$.  The function $\varphi_{\lambda,t}$ can be written in terms of a trace formula involving $ \rho_{\lambda,t} $ by the following:
$$\varphi_{\lambda,t}(v):= \int_{\R}dx\, t^{\frac{3}{2}} Q_{\lambda,t}(t^{\frac{3}{2}}x)e^{\textup{i}vx }=  \int_{\R}dx\,  Q_{\lambda,t}(x)e^{\textup{i}t^{-\frac{3}{2}}v x }=\Tr\big[e^{\textup{i}t^{-\frac{3}{2}}v X}  \rho_{\lambda,t}  \big],  $$
where $X$ is the position operator.  The right side is a special case of the quantum characteristic function for the matrix $\rho_{\lambda, t}$.  The quantum characteristic function has a closed factored form  given by
$$ \Tr\big[e^{\textup{i}v X+\textup{i}q P}  \rho_{\lambda,t}  \big]= \Tr\big[ \Phi_{\lambda,t}^{*}\big(e^{\textup{i}v X+\textup{i}q P} \big) \rho  \big] =e^{\int_{0}^{t}dr\big( \phi (q+ \frac{2}{\lambda}(t-r)v  ) -\phi(0) \big)  } \Tr\big[e^{\textup{i}v X+\textup{i}q P}  \rho \big],$$
where $v,q\in \R$ and $\phi(q):=\int_{\R}dvj(v)e^{\textup{i} vq}   $.  The second equality above can be seen through the expressions for $\Phi_{\lambda,t}$ in Lem.~\ref{LindbladTech},  Weyl's intertwining relations, and the relation  $e^{\frac{\textup{i}t}{\lambda}P^{2}}f(X)e^{-\frac{\textup{i}t}{\lambda}P^{2} }  = f(X+\frac{2t}{\lambda}P)$ for bounded functions $f:\R\rightarrow \C$.  
 With the above formula, 
\begin{align*}
\varphi_{\lambda,t}(v)=e^{\int_{0}^{t}dr\big( \phi (t^{-\frac{3}{2}} \frac{2}{\lambda}(t-r)v  ) -\phi(0) \big)  } \Tr\big[e^{\textup{i}t^{-\frac{3}{2}}v X}  \rho \big]\longrightarrow e^{-\frac{2\sigma}{3\lambda^{2}}v^{2}},
\end{align*}
since $\phi'(0)=0$ and $\phi''(0)=\sigma $.  Hence, $  t^{\frac{3}{2}} Q_{\lambda,t}(t^{\frac{3}{2}}x)$ converges in distribution to a Gaussian with variance $\frac{4\sigma}{3\lambda^{2}}$.

\end{proof}

\end{appendix}

\end{document}